\newtheorem{theorem}{Theorem}
\newtheorem{proposition}[theorem]{Proposition}
\newtheorem{lemma}[theorem]{Lemma}
\newtheorem{corollary}[theorem]{Corollary}
\theoremstyle{definition}
\newtheorem{definition}{Definition}
\newtheorem{remark}[definition]{Remark}
\numberwithin{theorem}{section} 
\numberwithin{definition}{section}
\numberwithin{equation}{section}
\def\({\Big(}
\def\){\Big)}
\def\C{\mathbb{C}}
\def\E{\mathbb{E}}
\def\bbF{\mathbb{F}}
\def\N{\mathbb{N}}
\def\P{\mathbb{P}}
\def\R{\mathbb{R}}
\def\T{\mathbb{T}}
\def\bbU{\mathbb{U}}
\def\Z{\mathbb{Z}}
\DeclareMathOperator*{\argmin}{arg\,min}
\renewcommand{\subset}{\subseteq}
\renewcommand{\hat}{\widehat}
\renewcommand{\tilde}{\widetilde}
\renewcommand{\epsilon}{\varepsilon}
\renewcommand{\Re}{\text{Re}}
\def\range{{\rm range}}
\def\NSR{{\rm NSR}}
\def\mesh{{\rm mesh}}
\def\tr{{\rm tr}}
\newcommand{\bfa}{{\boldsymbol a}}
\newcommand{\bfc}{{\boldsymbol c}}
\newcommand{\bfe}{{\boldsymbol e}}
\newcommand{\bfu}{{\boldsymbol u}}
\newcommand{\bfv}{{\boldsymbol v}}
\newcommand{\bfw}{{\boldsymbol w}}
\newcommand{\bfx}{{\boldsymbol x}}
\newcommand{\bfy}{{\boldsymbol y}}
\newcommand{\bfA}{{\boldsymbol A}}
\newcommand{\bfB}{{\boldsymbol B}}
\newcommand{\bfG}{{\boldsymbol G}}
\newcommand{\bfI}{{\boldsymbol I}}
\newcommand{\bfM}{{\boldsymbol M}}
\newcommand{\bfQ}{{\boldsymbol Q}}
\newcommand{\bfS}{{\boldsymbol S}}
\newcommand{\bfT}{{\boldsymbol T}}
\newcommand{\bfU}{{\boldsymbol U}}
\newcommand{\bfV}{{\boldsymbol V}}
\newcommand{\bfW}{{\boldsymbol W}}
\newcommand{\bfzero}{{\boldsymbol 0}}
\newcommand{\bfeta}{{\boldsymbol \eta}}
\newcommand{\bfzeta}{{\boldsymbol \zeta}}
\newcommand{\bfmu}{{\boldsymbol \mu}}
\newcommand{\bfphi}{{\boldsymbol \phi}}
\newcommand{\bfPhi}{{\boldsymbol \Phi}}
\newcommand{\bPsi}{{\boldsymbol \Psi}}
\newcommand{\bfSigma}{{\boldsymbol \Sigma}}
\newcommand{\calA}{\mathcal{A}}
\newcommand{\calB}{\mathcal{B}}
\newcommand{\calS}{\mathcal{S}}
\newcommand{\calF}{\mathcal{F}}
\newcommand{\calN}{\mathcal{N}}
\newcommand{\calP}{\mathcal{P}}
\newcommand{\calT}{\mathcal{T}}
\newcommand{\calU}{\mathcal{U}}
\newcommand{\calY}{\mathcal{Y}}
\newcommand{\diag}{{\rm diag}}
\def\wherespace{\quad\text{where}\quad}
\def\foreachspace{\quad\text{for each}\quad}
\def\forallspace{\quad\text{for all}\quad}
\def\ifspace{\quad\text{if}\quad}
\def\andspace{\quad\text{and}\quad}
\numberwithin{equation}{section}
\title{Optimality of Gradient-MUSIC for Spectral Estimation}
\author{Albert Fannjiang\footnote{University of California, Davis. Email: fannjiang@math.ucdavis.edu} \and Weilin Li\footnote{City University of New York, City College. Email: wli6@ccny.cuny.edu} \and Wenjing Liao\footnote{Georgia Institute of Technology. Email: wliao60@gatech.edu}}
\begin{document}
	
	\maketitle
	
	\begin{abstract}		
		We introduce the Gradient-MUSIC algorithm for estimating the unknown frequencies and amplitudes of a nonharmonic signal from noisy time samples. While the classical MUSIC algorithm performs a computationally expensive search over a fine grid, Gradient-MUSIC is significantly more efficient and eliminates the need for discretization over a fine grid by using optimization techniques. It coarsely scans the 1D landscape to find initialization simultaneously for all frequencies followed by parallelizable local refinement via gradient descent. We also analyze its performance when the noise level is sufficiently small and the signal frequencies are separated by at least $8\pi/m$, where $\pi/m$ is the standard resolution of this problem. Even though the 1D landscape is nonconvex, we prove a global convergence result for Gradient-MUSIC: coarse scanning provably finds suitable initialization and gradient descent converges at a linear rate. In addition to convergence results, we also upper bound the error between the true signal frequencies and amplitudes with those found by Gradient-MUSIC. For example, if the noise has $\ell^\infty$ norm at most $\epsilon$, then the frequencies and amplitudes are recovered up to error at most $C\epsilon/m$ and $C\epsilon$ respectively for a universal $C>0$, which are minimax optimal in $m$, $\epsilon$, and number of frequencies. Our theory can also handle stochastic noise with performance guarantees under nonstationary independent Gaussian noise. 
		Our main approach is a comprehensive geometric analysis of the landscape, a perspective that has not been explored before.
	\end{abstract}
	
	\medskip
	\noindent
	{\bf 2020 Math Subject Classification:}  42A05, 42A10, 90C26, 94A12
	
	\medskip
	\noindent
	{\bf Keywords:} Spectral estimation, MUSIC, nonconvex optimization, gradient descent, landscape analysis, minimax rates, optimality, Fourier matrix, super-resolution
	
	\tableofcontents

	\addtocontents{toc}{\protect\setcounter{tocdepth}{1}}
	
	\section{Introduction}
	
	\subsection{Motivation} 
	
	The spectral estimation problem is to accurately estimate the frequencies and amplitudes $(\bfx,\bfa):=\{(x_j,a_j)\}_{j=1}^s$ of a nonharmonic Fourier sum,
	\begin{equation}
		\label{eq:hfun}
		h(\xi) = \sum_{j=1}^s a_j e^{i x_j \xi},
	\end{equation}
	given noisy measurements of $h$ at a finite collection of $\xi$. After normalization, we consider a canonical setting where $x_j\in \R/2\pi \Z=:\T$, the clean measurements $\bfy := \{h(k)\}_{k=-m+1,\dots,m-1} \in \C^{2m-1}$ are collected at $2m-1$ consecutive integers, and the noisy measurements $\tilde\bfy \in \C^{2m-1}$ are perturbed from the clean measurements by the noise vector $\bfeta \in \C^{2m-1}$:
	\begin{equation}
		\tilde\bfy = \bfy+\bfeta, \andspace \bfy := \{h(k)\}_{k=-m+1,\dots,m-1}.
		\label{eq:noisymeasurement}
	\end{equation}
	The noise vector $\bfeta$ can be deterministic or random. 
	This inverse problem arises in many interesting applications in imaging and signal processing,
	including {\it direction-of-arrival} (DoA) estimation \cite{krim1996two}, inverse source and inverse scattering \cite{kirsch2002music,fannjiang2010compressive,fannjiang2011music}, electroencephalogram (EEG) signal analysis \cite{subha2010eeg}, and nuclear magnetic resonance  spectroscopy \cite{li1998high}.

	In the spectral estimation problem, both the frequencies and amplitudes are unknown, which makes this a nonlinear inverse problem. Without noise, the unknown frequencies can be exactly recovered up to machine precision by $2s$ measurements. In the presence of noise, the main difficulty of spectral estimation lies in estimating the frequencies, after which the amplitudes can be estimated by solving an over-determined linear system of equations. 
	
	Research on spectral estimation has focused on developing robust algorithms and analyzing their recovery errors from a theoretical perspective.
	The first solution to spectral estimation can be traced back to Prony \cite{Prony} in 1795. The Prony's method uses exactly $2s$ measurements and requires finding all roots of a degree $s$ trigonometric polynomial. Subspace methods such as classical MUSIC \cite{schmidt1986multiple} and ESPRIT \cite{kailath1989esprit} were invented for multi-snapshot spectral estimation problem whereby the amplitudes vary randomly and a large number of snapshots of data are collected. These methods can be adapted to tackle the single-snapshot spectral estimation problem (also referred to as the line spectral estimation problem in some papers), which was done in \cite{hua1990matrixpencil,moitra2015matrixpencil,liao2016music,li2020super}. These methods are typically grouped together despite significant differences because they share the same first step of finding an approximation of the ``signal subspace". Convex methods for spectral estimation were developed more recently. Several closely related methods include total variation minimization, atomic norm minimization, and Beurling-LASSO, see \cite{de2012exact,tang2013offgrid,candes2014towards,duval2015exact}. 
	
	Classical MUSIC was among the first stable methods to achieve high-resolution recovery. It creates a {\it 1D landscape function} and finds its $s$ smallest local minima, which act as the estimated signal frequencies. Evaluating the landscape function on a fine grid is computationally expensive, and without additional information, classical MUSIC finds the local minima through an exhaustive grid search. Consequently, its resolution and accuracy are tied to the grid spacing: finer grids improve precision, but incur significantly higher computational cost. While this computational drawback motivated the development of more efficient algorithms, classical MUSIC is especially attractive in practical applications due to its simplicity and generality.
	
	As for theoretical analysis, it is now well-understood that the stability of spectral estimation to noise depends on the minimum separation between the frequencies, denoted $\Delta:=\Delta(\bfx)$, relative to $m$, which represents the aperture in imaging settings \cite{den1997resolution}. The seminal paper \cite{donoho1992superresolution} illustrated that spectral estimation is fundamentally different depending on whether $m\Delta\gg \pi$ (which we call {\it well-separated case}) or $m\Delta\ll \pi$ (which we call {\it super-resolution regime}). Importantly, $\pi/m$ is called the {\it Rayleigh length} and is regarded as the standard resolution of this problem \cite{den1997resolution}. Typically, the well-separated case and super-resolution regime are treated independently in theoretical analysis because they require separate techniques and have different error rates. This paper focuses on theoretical analysis for the well-separated case and super-resolution is beyond the scope of this paper.  

	Motivated by classical MUSIC’s widespread use and its reliance on an exhaustive grid search, we aim to develop a computationally efficient algorithm that reduces its computational cost while providing provable error guarantees. From a computational perspective, the algorithm should be computationally tractable and significantly faster than classical MUSIC. From a theoretical perspective, an ideal algorithm is provably optimal, achieving estimation error that matches the minimax lower bound for the well-separated case, up to logarithmic factors. The most important parameters for this setting are the number of samples $2m-1$, an appropriate notion of noise level, and number of frequencies $s$.
	
	\subsection{Our approach and contributions}
	
	In this work, we introduce Gradient-MUSIC, an algorithm that is significantly more efficient than classical MUSIC and is provably optimal in the well-separated regime under both deterministic and i.i.d. Gaussian noise, specifically when $m\Delta \geq 8\pi$.
	
	{\bf Algorithm Contribution.} We propose an alternative strategy inspired by nonconvex optimization techniques: \textit{coarse scanning followed by local refinement by gradient descent}. The proposed algorithm is called Gradient-MUSIC, which avoids an expensive grid search that classical MUSIC suffers from. Gradient-MUSIC evaluates the landscape function on a coarse grid,  locates a set of suitable simultaneous initialization, and runs gradient descent to produce iterates that converge to the relevant  local minima of the landscape function at a linear rate.  Gradient-MUSIC has several key features: 
	\begin{enumerate}[(a)] \itemsep-2pt
		
		\item \textit{Precise coarse grid for initialization.} To find suitable initialization, the landscape function only needs to be evaluated on a grid of width $1/(2m)$, just a constant factor smaller than the Rayleigh length $\pi/m$. While it is possible that the numerical constant can be improved, the $1/m$ scaling cannot be altered. 
		
		\item 
		{\it Grid-less in outcome.} Our method utilizes a grid only to find good initialization. Afterwards, gradient descent produces iterates that live on
		the continuous domain $\R/2\pi\Z$, so Gradient-MUSIC outputs estimated parameters that do not suffer from discretization, in stark contrast to classical MUSIC. 
		
		\item \textit{Speed without compromise.} Evaluating the landscape function on a coarse grid keeps the computational cost modest, while the local refinement step enjoys fast linear convergence. In addition,  the local refinement step is parallelizable with the {\em simultaneous initialization} on the coarse grid, further speeding up the computation. Importantly, the improved speed of Gradient-MUSIC is not achieved by sacrificing accuracy, since our theoretical results show that it is optimal in many aspects. Under the assumptions of this paper, one should always use Gradient-MUSIC over its classical counterpart.
	\end{enumerate}
	
	\renewcommand{\arraystretch}{1.25} 
	\begin{table}[t]
		\centering
		\begin{tabular}{|c|c|c|c|} \hline
			{\bf Reference} &{\bf Noise assumption} &{\bf Frequency error} &{\bf Amplitude error} \\
			&$\bfeta\in \C^{2m-1}$ &$\max_j|x_j-\hat x_j|$ &$\max_j |a_j-\hat a_j |$ \\ \hline
			\cref{cor:deterministic}  &\thead{$\|\bfeta\|_p\leq c m^{1/p}$ \\ $c>0$ small, $p\in [1,\infty]$}  &$C\|\bfeta\|_p/m^{1+1/p}$ & $C\|\bfeta\|_p/m^{1/p}$ \\ \hline
			\cref{cor:stochastic}  &\thead{$\bfeta\sim \calN(\bfzero,\bfSigma)$, $|r|<1/2$, $\sigma>0$\\ $\bfSigma=\diag\big(\{\sigma^2(1+|k|)^{2r}\}_k\big)$}  &$C\sigma \sqrt{\log(m)}/m^{3/2-r}$ &$C\sigma \sqrt{\log(m)}/m^{1/2-r}$ \\\hline
		\end{tabular}
		\caption{Gradient-MUSIC outputs $(\hat\bfx,\hat\bfa)$ where $\hat\bfx$ has been indexed to best match $\bfx$. The conclusions only show dependence on $m$, $s$, and noise parameters. $C,c>0$ are  universal constants and $\sigma>0$ is arbitrary. For the stochastic model, the error bounds hold with probability at least $1-o(m)$ as $m\to\infty$. }
		\label{tab:mainresults}
	\end{table} 
	
	{\bf Theoretical Contribution.} We provide a stability analysis of Gradient-MUSIC that is highly general, which allows one to treat different types of perturbations in a unified manner. For demonstration, we specialize to two canonical noise models. 
	\begin{enumerate}[(a)]
		\item 
		The first is arbitrary $\bfeta$, i.e. deterministic or adversarial, and the noise level is measured by its $\ell^p$ norm $\|\bfeta\|_p$. 
		\item 
		The second is $\bfeta\sim \calN(\bfzero,\bfSigma)$ for $\bfSigma:=\diag\big(\{\sigma^2(1+|k|)^{2r}\}_k\big)$, where a parameter $r\in \R$ controls whether the noise variance stays constant ($r=0$), grows ($r>0$), or decays ($r<0$). 
	\end{enumerate}
	For these two models, the performance guarantees of Gradient-MUSIC are summarized in \cref{tab:mainresults}. These results are optimal in several aspects, which is discussed in \cref{sec:implications}.
	
	Our main results are proved through \textit{a comprehensive geometric analysis of the landscape function}, which is summarized in \cref{thm:landscape} and is the main technical contribution of this paper. Although MUSIC has been widely used in applications, to our best knowledge, a geometric analysis of the landscape function has not been explored before. Our geometric analysis also implies that classical MUSIC is minimax optimal, provided the fine grid spacing is small enough. This requirement, however, incurs a significant computational cost and is the main drawback of classical MUSIC. 
	
	\subsection{Discussion and consequences}
	\label{sec:implications} 
	
	{\bf Optimality for deterministic or adversarial perturbations.} The minimax error is the accuracy achieved by the best algorithm(s) evaluated on the worst case parameter and perturbation. It is not obvious that an optimal and tractable algorithm exists because by definition, the minimax error is taken over all possible functions from data to parameters, including those that are computationally intractable or non-computable. Many papers on deterministic minimax rates pertain to the super-resolution regime, see \cite{donoho1992superresolution,demanet2015recoverability,li2021stable,batenkov2021super,liu2022mathematical}. Similar techniques are applicable to the well-separated case, but we could not find a direct result.
	
	There are three aspects in which Gradient-MUSIC is optimal, up to universal constants.
	\begin{enumerate}[(a)]
		\item 
		{\it Rate-optimality.} We prove in \cref{lem:minimax} that the minimax frequency and amplitude errors are $\Omega(\|\bfeta\|_p/m^{1+1/p})$ and $\Omega(\|\bfeta\|_p/m^{1/p})$ respectively. These minimax lower bounds match the upper bounds achieved by Gradient-MUSIC, which certifies optimality in the noise level $\|\bfeta\|_p$, number of samples $2m-1$, and number of frequencies $s$. 
		\item 
		{\it Noise tolerance.} Our theory for Gradient-MUSIC requires that $\|\bfeta\|_p\leq cm^{1/p}$ for a small enough universal $c>0$. This noise assumption is necessary regardless of which method is used, because spectral estimation is generally impossible otherwise, see \cref{rem:impossible}. 
		\item 
		{\it Separation condition.} Our theory for Gradient-MUSIC holds under the separation condition $m\Delta\geq 8\pi$. This requirement does not depend on the total number of frequencies $s$, so importantly, it possible for $m$ to be proportional to $s$. When $m\Delta \ll \pi$, the spectral estimation problem requires a much stronger noise assumption regardless of which method is used, see \cite{moitra2015matrixpencil,li2021stable,batenkov2021super}. 
	\end{enumerate}
	
	Of the three optimality features of Gradient-MUSIC, the most striking one is its rate-optimality, which has important implications that will be discussed momentarily. 
	
	To our best knowledge, Gradient-MUSIC is the first provably optimal and computationally efficient algorithm for spectral estimation, under deterministic perturbations and sufficiently well-separated frequencies. Gradient-MUSIC is as good or better than any conceivable method, even those that have access to unlimited computational resources. 
	
	{\bf Benefits of oversampling.} It is already evident from previous analysis that for subspace and convex methods, increasing the number of samples enlarges the class of signals for which stable recovery is possible, see \cite{moitra2015matrixpencil,liao2016music,li2020super,fernandez2013support,azais2015spike}. This is also evident for Gradient-MUSIC because the separation condition $m\Delta \geq 8\pi$ relaxes for larger $m$. However, there is a second and perhaps more interesting benefit of oversampling. 
	
	When examining the effects of increasing the number of samples, it is natural to use the $\ell^\infty$ norm for $\bfeta$. Suppose the signal parameters $(\bfx,\bfa)$ are fixed, while $m$ can be made arbitrarily large and $\|\bfeta\|_\infty\leq \epsilon$ for sufficiently small $\epsilon$. Our main results show that  Gradient-MUSIC recovers the frequencies and amplitudes with error at most $C\epsilon/m$ and $C\epsilon$. Thus, for frequencies that are already well separated, using even more measurements beyond the bare minimum reduces the frequency error, while keeping the amplitude error bounded.

	The benefits of oversampling are also apparent in other $\ell^p$ norms as well. Of notable interest is the $\ell^2$ norm, which can be recast in terms of the standard noise-to-signal ratio (NSR). When $m\Delta\geq 8\pi$, by \cref{prop:AB19},
	$$
	\NSR:=\frac{\|\bfeta\|_2^2}{\|\bfy\|_2^2} \asymp \frac{\|\bfeta\|_2^2}{m}.
	$$
	Our main results show that if $\sqrt{\NSR}$ for a small enough universal constant $c>0$, then Gradient-MUSIC achieves 
	$$
	\text{frequency error} \leq \frac{C \sqrt{\NSR}}{m}, \andspace
	\text{amplitude error} \leq C \sqrt{\NSR}. 
	$$
	When framed this way,  it estimates the frequencies much more accurately than $\sqrt{\NSR}$ and improves in $m$ for fixed $\NSR$. 
	
	{\bf Spectral estimation even for sub-linearly growing noise variance.} For $\bfeta\sim\calN(\bfzero,\bfSigma)$ where $\bfSigma=\diag\big(\{\sigma^2(1+|k|)^{2r}\}_k\big)$ and $r\in (0,1/2)$, the main results show that Gradient-MUSIC can still succeed, which may be rather surprising. The performance of Gradient-MUSIC matches the rate for {\it nonlinear least squares} (NLS) with suitable initialization, which was derived in \cite{ying2025perturbative}. Moreover, both Gradient-MUSIC and NLS fail if $r\geq 1/2$. This comparison strongly suggests that Gradient-MUSIC is also optimal for $r\in (0,1/2)$ and the growth condition $r<1/2$ is necessary. For i.i.d. Gaussian noise $\bfeta\sim\calN(\bfzero,\sigma^2 \bfI)$, Gradient-MUSIC matches the Cram\'er-Rao lower bound in \cite{stoica1989music} up to logarithmic factors, which certifies that the former is provably optimal. It was shown in \cite{ding2024esprit} that ESPRIT is optimal for i.i.d. subgaussian noise.
	
	{\bf Global optimization with the landscape function.} We studied Gradient-MUSIC from the viewpoint of nonconvex optimization in order to eliminate the gridding artifacts of classical MUSIC. It is well known that finding suitable initialization for local optimization methods like gradient descent is challenging when the objective function is nonconvex. A key finding of this paper is that evaluating the 1D landscape function on a grid of width $1/(2m)$ is enough to find suitable initialization {\em simultaneously for all $s$ smallest local minima}. While the constant $1/2$ can be potentially improved, in \cref{rem:mesh}, we demonstrate that the landscape function needs to be evaluated on a grid whose width is no larger than $4\pi/m$, even if there is no noise and $m\Delta\geq 2\pi \beta$ for arbitrarily large $\beta$. 
	
	It is helpful to compare Gradient-MUSIC with NLS, since the latter is the generic nonconvex optimization method. The papers \cite{traonmilin2023basins,traonmilin2024strong} derived lower and upper bounds for the strong basins of attraction for NLS, but for a different setup where certain random measurements are collected. Theoretical results and numerical simulations in \cite{traonmilin2024strong} show that the strong basin of attraction to the global minimum of NLS is a ball in $\R^{2s}$ whose radius is proportional to $\Delta$. The volume of this ball relative to the measure of the parameter space is $O(\Delta^{2s})$. Thus, $\Omega(\Delta^{-2s})$ random guesses are required to find suitable initialization. Without a principled method to find initial guess, NLS is computationally intractable as $s$ becomes large.
	
	{\bf Open questions for super-resolution regime.} This paper proves that Gradient-MUSIC is optimal in the well-separated case and does not consider the super-resolution regime. The minimax lower bound for clustered frequencies in the super-resolution regime was derived in \cite{batenkov2021super}. To our best knowledge, we are not aware of any algorithms that provably achieve this optimal scaling in $m$, $\Delta$ and noise level, widely considered the three most important parameters for this setting. A detailed discussion would be too distracting to include here, so we refer the reader to \cite{batenkov2021super} and references therein for further details.

	\subsection{Literature review}
	\label{sec:review}
	
	In this section and throughout the paper, $\bfeta\in \C^{2m-1}$ denotes some additive perturbation of the clean samples. It may be deterministic or stochastic. We concentrate our discussion solely on the well-separated case $m\Delta \gg \pi$. To simplify the following discussion, the various constants $C>0$ that appear in this section do not depend on $\bfeta$ and $m$, but may depend on other parameters such as $s$ and $\bfa$. 
	
	{\bf Subspace methods.} For the well-separated case and deterministic perturbations, ESPRIT \cite{li2020super} and MPM \cite{moitra2015matrixpencil} recover the frequencies and amplitudes with error at most $C\|\bfeta\|_2/\sqrt m$ and $C\|\bfeta\|_2 \sqrt m$, respectively. Applying H\"older's inequalities, these upper bounds become $C\|\bfeta\|_\infty$ and $Cm \|\bfeta\|_\infty$, respectively. In either case, they do not achieve the minimax lower bounds. Even a sub-optimal analysis of MUSIC has been elusive. Prior work such as \cite{liao2016music} requires an implicit assumption on the convexity of the objective function near the true minima. Unlike the main results of this paper, these previous results say that one should use the smallest $m$ such that the frequencies are sufficiently well separated because increasing $m$ beyond the minimum number requires more computational resources while potentially making the amplitude error larger. 
	
	For i.i.d. subgaussian noise, it was recently shown that ESPRIT is minimax optimal in \cite{ding2024esprit}. Reference \cite{ding2024esprit} does not provide results for other types of random noise or deterministic perturbations. Since there are significant differences between MUSIC and ESPRIT, a fine-grain analysis of one method generally does not carry over to the other. A technical discussion about MUSIC versus ESPRIT can be found in \cref{rem:esprit}. 
	
	Older analysis for subspace methods are typically for the multisnapshot version where the amplitudes vary randomly and independently. For example, \cite{stoica1989music,li1993performance} derive error bounds that are asymptotic (in $m$ and the number of snapshots). While the multisnapshot problem and spectral estimation are related, there are subtle differences. For the multisnapshot problem, under i.i.d. Gaussian noise and independent stochastic amplitudes, classical MUSIC is asymptotically optimal \cite{stoica1989music}, while ESPRIT is provably sub-optimal \cite{rao2002performance}. On the other hand, for spectral estimation and i.i.d. Gaussian noise, this paper and \cite{ding2024esprit} show that (classical \& Gradient-) MUSIC and ESPRIT are optimal, respectively. 
	
	{\bf Convex optimization.} A direct comparison to our work is not easy to give since convex methods may output more than $s$ frequencies; they may create false positives, or estimate a true pair $(x_j,a_j)$ as two separate pairs. Even if these effects are assumed not to occur, \cite{fernandez2013support,azais2015spike} show that (after performing some simplifications) for an absolute constant $C>0$ and any sufficiently small $\|\bfeta\|_2$, the frequency error is at most $C \sqrt{\|\bfeta\|_2} /m$. This falls short of the lower bound of $\Omega(\|\bfeta\|_2/m^{3/2})$ established here. As for the amplitudes, assuming no true $(x_j,a_j)$ gets recovered as two separate ones, \cite{fernandez2013support,azais2015spike} imply that the amplitude error is at most $C\|\bfeta\|_2$, whereas the optimal lower bound is $\Omega(\|\bfeta\|_2/\sqrt m)$. 
	
	For i.i.d. Gaussian noise, \cite{tang2015near} showed that atomic norm minimization is optimal for denoising the measurements. This is a weaker statement than estimating $(\bfx,\bfa)$ optimally, since it can be proved that any minimax optimal method for frequency and amplitude estimation is necessarily optimal for denoising. The frequency and amplitude estimation errors for atomic norm minimization in that paper do not achieve the minimax rates for i.i.d. Gaussian noise.

	{\bf Nonconvex optimization.} We had already discussed NLS earlier. Minimization of a difference of convex objective functions was used in \cite{lou2016point}, while a sliding Frank--Wolfe algorithm was employed in \cite{denoyelle2019sliding}. Neither paper has performance guarantees when there is noise is the measurements. We are not aware of any other nonconvex optimization techniques for spectral estimation with theory that quantifies the frequency and amplitude errors in the presence of noise.

	\subsection{Organization}
	
	The remainder of this paper is organized as follows. \cref{sec:preliminaries} explains the spectral estimation problem more precisely and reviews the classical MUSIC algorithm. \cref{sec:gradMUSIC} explains the methodology and main steps of the Gradient-MUSIC algorithm. 
	
	\cref{sec:noisemodels} contains our main results for two families of deterministic and stochastic noise, which were listed in \cref{tab:mainresults}. \cref{sec:main} contains the main theory derived in this paper. We provide an abstract perspective of subspace methods, a geometric analysis of the landscape function, the core approximation theorem for Gradient-MUSIC, a sparsity detection method, and an amplitude estimation procedure. A few straightforward extensions and variations of the main theory are provided in \cref{sec:extensions}. 
	
	We compare classical and Gradient- MUSIC in \cref{sec:classicalmusic}. We show that classical MUSIC also has optimal approximation guarantees, but Gradient-MUSIC is always more computationally efficient. \cref{sec:numerical} contains computational aspects of this paper: an alternative gradient termination condition, and numerical simulations for stochastic noise. 
	
	The remaining parts of the paper include the proofs and technical details. \cref{sec:landscape} includes all of the results necessary to carry out a geometric analysis of the landscape function. This section is self contained and can be read independent of the rest of this paper, aside from global notation and definitions. \cref{sec:proofthm,sec:prooflemma} contain proofs of theorems and lemmas, respectively.  
	
	\subsection{Notation}

	The torus $\T$ is identified with either $(-\pi,\pi]$, $[0,2\pi)$, or a unit circle, depending on whatever is most convenient at the time. The distance between $u,v\in \T$ is $|u-v|=\min_{n\in\Z} |u-v+2\pi n|$. For $p\in [1,\infty]$, we use $\|\cdot\|_p$ to denote the $p$ norm of a vector and $\|\cdot\|_{L^p(\T)}$ for the $L^p$ norm of a measurable function defined on the torus. We let $\|\cdot\|_p$ denote the $\ell^p$ to $\ell^p$ operator norm of a matrix, and $\|\cdot\|_F$ be the Frobenius norm. The cardinality of a finite set $A$ is denoted $|A|$. Let $B^n_p(\epsilon)\subset\C^n$ denote the closed $\ell^p$ ball in $\C^n$ of radius $\epsilon$ centered at zero. 
	
	We say $f$ is a {\it trigonometric polynomial} of degree at most $n$ if there exist $\{c_k\}_{k=-n}^{n}\subset\C$ such that
	$$
	f(x)=\sum_{k=-n}^{n} c_k e^{ikx}.
	$$
	We say $f$ has degree $n$ if $c_n\not=0$ or $c_{-n}\not=0$. Let $\calT_m$ denote the space of trigonometric polynomials that have degree at most $m-1$. We say a trigonometric polynomial $f$ admits a {\it polynomial sum-of-squares} representation if there exist trigonometric polynomials $f_1,\dots,f_k$ such that $f=|f_1|^2+\cdots + |f_k|^2$. 
	
	As usual, we write $x\lesssim y$ (resp., $x\gtrsim y$) if there is a universal constant $C$ such that $x\leq Cy$ (resp., $x\geq Cy$). We write $x\lesssim_{a,b} y$ if there is a $C$ that potentially depends on $a,b$ such that $x\leq C y$. We write $x\asymp y$ if $x\lesssim  y$ and $x\gtrsim y$ both hold, and $\asymp_{a,b}$ is defined analogously. We generally follow the convention that $C$ and $c$ are universal constants whose values may change from one line to another, and that $C\geq 1$ and $c\leq 1$. We use standard notation $O$, $o$, $\Omega$, and $\Theta$ for asymptotic relations.

	For a vector $\bfa:=\{a_j\}_{j=1}^s\subset\C$, we denote 
	$a_{\min}:=\min_{j=1,\dots,s} |a_j|$ and $a_{\max}:=\max_{j=1,\dots,s} |a_j|.$ For any integer $n\geq 1$, define the set
	\begin{equation}
		\label{eq:defindex}
		I(n):= \left\{ -\frac{n-1}{2}, \, \frac{n+1}{2}, \dots, \frac{n-3}{2}, \, \frac{n-1}{2}\right\}.
	\end{equation}
	Note $I(n)$ consists of $n$ elements with spacing 1 and is symmetric about zero. 
	
	Let $\bbU^{m\times s}$ be the set of all $m\times s$ matrices whose columns form an orthonormal basis for its range. Generally, we will not make a distinction between $\bfU\in \bbU^{m\times s}$ and the subspace spanned by its columns. For $\bfU\in \bbU^{m\times s}$, let $\bfU_\perp\in \bbU^{m\times (m-s)}$ be a matrix whose columns form an orthonormal basis for the orthogonal complement of $\bfU$. For $\bfU, \bfV\in \bbU^{m\times s}$, the {\it sine-theta distance} is 
	\begin{equation}
		\label{eq:sintheta}
		\vartheta(\bfV,\bfW)
		:=\big\|\bfV\bfV^*-\bfW \bfW^*\big\|_2
		=\big\|\bfV_\perp^* \bfW\big\|_2. 
	\end{equation}
	For any $k\geq 0$, let $C^k(\T)$ be the space of $k$ times continuously differentiable functions defined on $\T$. We equip it with a norm,  
	$$
	\|f\|_{C^k(\T)}
	:=\sum_{j=0}^k \big\|f^{(j)}\big\|_{L^\infty(\T)}.
	$$
	
	A complex standard normal random variable $w\sim \calN(0,1)$ has independent real and complex parts that are normally distributed with mean zero and variance of $1/2$. We write $\bfw\sim \calN(\bfzero,\bfI)$ if the entries of the complex vector $\bfw$ are independent $\calN(0,1)$ random variables. We write $\bfx\sim \calN(\bfmu,\bfSigma)$ if $\bfx=\bfSigma^{1/2}\bfw+\bfmu$ for a $\bfw\in \calN(\bfzero,\bfI)$ and positive definite $\bfSigma$.

	\section{Preliminaries}
	\label{sec:preliminaries}
	
	\subsection{Spectral estimation problem formulation}
	\label{sec:problem}
	
	Consider a nonharmonic Fourier sum $h\colon\R\to\C$ with $s$ nonzero components, defined as in \eqref{eq:hfun}. In formula \eqref{eq:hfun}, the frequencies and amplitudes of $h$ are $\bfx:=\{x_j\}_{j=1}^s\subset \T:=\R/2\pi\Z$ and $\bfa:=\{a_j\}_{j=1}^s\subset\C\setminus \{0\}$, respectively. We call $s$ the {\it sparsity}. Each $a_j\not=0$ by assumption, so this function has exactly $s$ frequencies. The $h$ function in \eqref{eq:hfun} is uniquely specified by its parameters $\bfx$ and $\bfa$, which should be grouped as pairs, 
	\begin{equation}
		(\bfx,\bfa)=\{(x_j,a_j)\}_{j=1}^s.
		\label{eq:frequencyamppair}
	\end{equation}
	
	In spectral estimation, one receives the noisy measurements $\tilde \bfy$ given in \eqref{eq:noisymeasurement}, and the goal is recover the frequency and amplitude pairs in \eqref{eq:frequencyamppair}.

	\begin{definition} [Spectral Estimation Problem] \label{def:spectralestimation}
		The {\it spectral estimation problem} is to find a computationally tractable algorithm such that it receives $\tilde \bfy$ and outputs an approximation to $(\bfx,\bfa)$. 
	\end{definition}
	
	\begin{remark}
		There is a a modulation, translation, or rotational invariance. Here, the only important assumption is that samples are integer spaced and they are consecutive. The problem does not change if $\{-m+1,\dots,m-1\}$ were shifted by any real number because any shift creates a harmless phase factor in $\bfy$ that can be absorbed into $\bfa$.  
	\end{remark}
	
	\begin{remark}
		There are several variations of the spectral estimation problem, such as whether $s$ is known beforehand and/or if the amplitudes need to be estimated as well. The core of spectral estimation lies in approximating $\bfx$ assuming $s$ is known, which we refer to as {\it frequency estimation}. We will treat the other two problems of {\it sparsity detection} (estimating $s$ from just data $\tilde\bfy$) in \cref{sec:sparsity} and {\it amplitude estimation} (approximation of $\bfa$) in \cref{sec:amplitudes} separately from the core frequency estimation problem.
	\end{remark}
	
	Spectral estimation exhibits a permutation invariance, whereby the function $h$ is invariant under the transformation $\{(x_j,a_j)\}_{j=1}^s\mapsto \{(x_{\pi(j)},a_{\pi(j)})\}_{j=1}^s$ for any permutation $\pi$ on $\{1,\dots,s\}$. For this reason, given a pair $(\hat\bfx,\hat\bfa)$, where $\bfx$ and $\hat\bfx$ have the same cardinality, it is natural to define the frequency error as the matching distance,
	$$
	\min_{\pi} \max_j |x_j-\hat x_{\pi(j)}|. 
	$$
	Then for the permutation $\pi$ that attains this minimum, the amplitude error is defined as
	$$
	\max_j |a_j-\hat a_{\pi(j)}|. 
	$$
	It is customary to just assume $(\hat\bfx,\hat\bfa)$ has been indexed so that $\bfx$ and $\hat\bfx$ best match, so the optimal permutation is the identity. This is the convention that we use throughout this paper. 
	
	For any integer $n\geq 1$, the nonharmonic {\it Fourier matrix} with nodes $\bfx\subset\T$ is defined as 
	$$
	\bfPhi(n,\bfx) 
	:= \left[ e^{ik x_j} \right]_{k\in I(n),\, j=1,\dots,s} \in \C^{n\times s}.
	$$
	The range of $\bfPhi(n,\bfx)$ and its singular values are invariant under any shift of the index set $I(n)$ defined in \eqref{eq:defindex}. Spectral estimation is a nonlinear inverse problem because the forward map $(\bfx,\bfa)\mapsto \bfy$ is nonlinear. To understand the general stability of spectral estimation, it is prudent to analyze the forward map. Define the minimum separation of $\bfx$ by 
	$$
	\Delta:=\Delta(\bfx)=\min_{j\not=k} \min_{n\in\Z} \, |x_j-x_k+2\pi n|. 
	$$
	Recall the following result that controls its extreme singular values. 
	
	\begin{proposition}[Aubel-B\"olcskei \cite{aubel2019vandermonde}]
		\label{prop:AB19}
		For any $m\geq 1$, $\beta>1$, and $\bfx\subset\T$ such that $\Delta(\bfx)\geq 2\pi \beta/m$, it holds that 
		$$
		\sqrt{m(1-\beta^{-1})} 
		\leq \sigma_{\min}(\bfPhi(m,\bfx)) 
		\leq \sigma_{\max}(\bfPhi(m,\bfx))
		\leq \sqrt{m(1+\beta^{-1})}.
		$$
	\end{proposition}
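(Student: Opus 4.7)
The plan is to reduce the bound to an operator-norm estimate on the Gram matrix $G := \bfPhi(m,\bfx)^*\bfPhi(m,\bfx) \in \C^{s\times s}$. Since the singular values of $\bfPhi(m,\bfx)$ are the square roots of the eigenvalues of $G$, and since the diagonal entries of $G$ are identically $m$ while the off-diagonal entry $G_{jk}$ equals
$$
\sum_{n \in I(m)} e^{in(x_k-x_j)} = e^{i\alpha_{jk}} \frac{\sin(m(x_k-x_j)/2)}{\sin((x_k-x_j)/2)}
$$
for some unimodular phase (a Dirichlet-type kernel), it suffices to show that the off-diagonal part $E := G - mI$ satisfies $\|E\|_2 \leq m/\beta$. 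Once this is in hand, Weyl's inequality gives $|\lambda_j(G) - m| \leq m/\beta$ for every eigenvalue, and taking square roots together with $\sqrt{1\pm t} \leq 1 \pm t/2 \leq ?$ — no, more cleanly, $\sigma_{\min}^2 \geq m(1-\beta^{-1})$ and $\sigma_{\max}^2 \leq m(1+\beta^{-1})$ which is precisely the claim.

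As a first attempt I would try the Schur/Gershgorin bound $\|E\|_2 \leq \max_j \sum_{k \neq j} |G_{jk}|$, using $|G_{jk}| \leq \pi/|x_j-x_k|$ together with the separation $\Delta \geq 2\pi\beta/m$. This gives $\|E\|_2 \lesssim (m/\beta) \log(m/\beta)$, which is off by a logarithm. To remove the logarithm one has to exploit cancellation between the Dirichlet kernel values rather than bound each entry individually; this is exactly the content of Selberg's large sieve inequality on $\T$. The clean formulation I would use is: for any trigonometric polynomial $P(\xi) = \sum_j c_j e^{i\xi x_j}$, the two-sided bound
$$
m(1-\beta^{-1})\|\bfc\|_2^2 \;\leq\; \sum_{n \in I(m)} |P(n)|^2 \;\leq\; m(1+\beta^{-1}) \|\bfc\|_2^2
$$
under the hypothesis $\Delta \geq 2\pi\beta/m$ is equivalent to the eigenvalue statement for $G$ via $\bfc^*G\bfc = \sum_{n\in I(m)}|P(n)|^2$.

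To prove this two-sided inequality I would construct nonnegative majorant/minorant trigonometric polynomials of degree at most $m-1$ for the indicator of a small neighborhood of $\bfx$ on the torus, with $L^1$ masses $m(1\pm\beta^{-1})$. This is the Beurling--Selberg extremal-function machinery in the form tailored by Vaaler and Montgomery--Vaughan. Pairing the majorant with $|P|^2$ and using Parseval on the Fourier coefficients (which are supported in $I(m)$ by the degree constraint) yields the upper bound; pairing the minorant analogously yields the lower bound. The separation hypothesis enters to ensure the extremal polynomial's mass outside a $\Delta/2$-neighborhood of $\bfx$ is absorbed into the $\beta^{-1}$ slack.

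The main obstacle is tracking the constant sharply enough to obtain exactly $\beta^{-1}$ rather than a larger universal multiple. This requires the specific Vaaler-type extremal construction rather than a soft argument, and the bookkeeping is nontrivial. Since Aubel and B\"olcskei \cite{aubel2019vandermonde} carry this out in detail, I would simply cite their theorem after presenting the reduction to the large-sieve inequality; reproving it would duplicate their work without conceptual gain for this paper.
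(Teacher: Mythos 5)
The paper does not supply a proof of this proposition at all: it cites Aubel--B\"olcskei directly, which is exactly what you conclude by doing after your preliminary discussion. Your outline of the underlying argument --- reducing to the Gram matrix $\bfPhi^*\bfPhi$, identifying the quadratic form $\bfc^*\bfPhi^*\bfPhi\bfc = \sum_{n\in I(m)}|P(n)|^2$ with a two-sided large sieve inequality, observing that naive Gershgorin bounds on the off-diagonal Dirichlet-kernel entries lose a logarithmic factor, and invoking the Beurling--Selberg/Vaaler extremal-polynomial construction to capture the sharp $\beta^{-1}$ slack --- accurately describes the machinery used in the cited reference, so your proposal matches the paper's treatment (a citation) while also correctly identifying what a proof would require.
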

	
	When $m\Delta\gg 2\pi$, it means that the forward map is well conditioned. When $m\Delta\ll 2\pi$, the condition number of $\bfPhi(m,\bfx)$ behaves radically different. Since our theory does not consider this case, we do not give a description of what happens and instead refer the reader to \cite{barnett2022exponentially,li2024multiscale} and references therein.  
	
	\subsection{Review of the classical MUSIC algorithm}
	\label{sec:music}
	
	Subspace methods were developed in the late 1980's, and some popular methods include MUSIC \cite{schmidt1986multiple} and ESPRIT \cite{kailath1989esprit}. They were originally invented for direction-of-arrival (DoA) estimation. The original versions assume that the signal amplitudes vary over time and one obtains multiple snapshots of data. The spectral estimation problem can be viewed as a single-snapshot version. 
	
	Of course, one obtains considerably more information in the multisnapshot version than in spectral estimation. The original MUSIC and ESPRIT algorithms can be used for spectral estimation as well, albeit with some modifications. This method of modification uses a Hankel matrix, which is also used in Prony's method \cite{Prony} and MPM \cite{hua1990matrixpencil}. While there are some similarities between the single and multiple snapshot problems, they are considerably different, see \cite{li2022stability} for further details. 
	
	To be clear, in this paper, {\it classical MUSIC} refers to the single-snapshot MUSIC algorithm in \cite{liao2016music}, which is faithful to the original multiple-snapshot MUSIC algorithm created in \cite{schmidt1986multiple} for the multisnapshot version.

	This section briefly reviews classical MUSIC and also some mechanism behind subspace methods. There are three major steps: identification of $s$ and approximation of the signal subspace, approximation of the frequencies $\bfx$ with $s$ given, and estimation of the amplitudes $\bfa$ given approximate frequencies. The second procedure is the core step, and is the focus of our discussion.
	
	{\bf Noiseless setting} $\bfeta=\bfzero$. Recall we indexed $\bfy$ by $\{-m+1,\dots,m-1\}$. The square {\it Toeplitz matrix} of $\bfy\in \C^{2m-1}$ is defined as 
	$$
	\bfT
	:=T(\bfy)
	=\begin{bmatrix}
		y_0 & y_{-1} &\cdots & y_{-m+1}\\
		y_1 & y_0 & & y_{-m} \\
		\vdots & &\ddots &\vdots \\
		y_{m-1} & y_{m-2} &\cdots &y_0 
	\end{bmatrix}\in \C^{m\times m}.
	$$
	More generally, this defines a linear operator $T\colon \C^{2m-1}\to \C^{m\times m}$ which maps an arbitrary vector to its associated Toeplitz matrix. Whenever $m\geq s$, we have the factorization 
	\begin{equation}
		\label{eq:toeplitzfactorization}
		T(\bfy) = \bfPhi(m,\bfx) \bfA \bfPhi(m,\bfx)^*, \wherespace \bfA:=\diag(\bfa). 
	\end{equation}
	If additionally $m>s$, the Toeplitz $\bfT$ has rank exactly equal to $s$. Hence, it is rank deficient and its leading $s$ dimensional left singular space is also $\bfPhi(m,\bfx)$. Its range $\bfU$ can be computed through the singular value decomposition of $\bfT$. It is common to refer to $\bfU$ as the ``signal subspace."
	
	\begin{remark}
		In many references that use subspace methods, a Hankel matrix is used instead of Toeplitz. There is no practical difference since they have the same left singular space and identity \eqref{eq:toeplitzfactorization} holds if $T(\bfy)$ and the adjoint $^*$ are replaced with a Hankel operator of $\bfy$ and transpose $^T$, respectively. Some references also work with right singular spaces instead, which again, only results in cosmetic changes.   
	\end{remark}
	
	To compute $\bfx$ from $\bfU$, the classical MUSIC algorithm forms a certain function $q=q_\bfU$, in the following way.   
	
	\begin{definition}[Steering vector]
		\label{def:steering}
		The {\it (normalized) steering vector} is a vector-valued function $\bfphi\colon \T\to \C^m$ defined as
		$$
		\bfphi(t)=\frac 1 {\sqrt m}\left[ e^{i k t} \right]_{k\in I(m)}.
		$$ 
	\end{definition}
	
	\begin{definition}[Landscape function]
		\label{def:nscorrelation}
		For any $m>s$, the {\it landscape function} $q_{\bfW}\colon\T\to \R$ associated with a subspace $\bfW\subset \C^m$ of dimension $s$ is defined as 
		\begin{equation*}
			q_{\bfW}(t)
			:= 1 -\|\bfW^*\bfphi(t)\|_2^2. 
		\end{equation*}
		When $\bfW$ is understood from context, we simply write $q$ instead of $q_{\bfW}$. 
	\end{definition}
	
	The following has been established in prior analysis of MUSIC, see \cite{schmidt1986multiple,liao2016music}. 
	
	\begin{proposition}
		\label{prop:qroots}
		Let $m>s$ and $q:=q_{\bfU}$ be the landscape function associated with $\bfU$, where $\bfU$ is an orthonormal basis for the range of $\bfPhi(m,\bfx)$. We have $q(t)=0$ if and only if $\bfphi(t)\in \bfU$ if and only if $t\in \bfx$.
	\end{proposition}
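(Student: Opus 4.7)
The plan is to first reduce the landscape function to a projection norm, and then exploit the Vandermonde structure of $\bfPhi(m,\bfx)$ together with the hypothesis $m>s$.

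First I would observe that $\|\bfphi(t)\|_2^2 = 1$ by direct calculation, since $\bfphi(t)$ has $m$ entries each of modulus $1/\sqrt m$. Writing $\bfphi(t)=\bfU\bfU^*\bfphi(t)+\bfU_\perp\bfU_\perp^*\bfphi(t)$ and applying the Pythagorean identity with orthonormality of $\bfU$ and $\bfU_\perp$ gives
\[
    q_{\bfU}(t) = 1-\|\bfU^*\bfphi(t)\|_2^2 = \|\bfU_\perp^*\bfphi(t)\|_2^2.
\]
So $q(t)=0$ if and only if $\bfU_\perp^*\bfphi(t)=\bfzero$, which is equivalent to $\bfphi(t)\in(\bfU_\perp)^\perp=\bfU$. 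This is the first stated equivalence and requires no structural hypothesis beyond $\bfU$ being a subspace of $\C^m$ with orthogonal complement $\bfU_\perp$.

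Next I would prove the second equivalence, $\bfphi(t)\in\bfU\iff t\in\bfx$. The forward-easy direction is immediate: the $j$th column of $\bfPhi(m,\bfx)$ equals $\sqrt m\,\bfphi(x_j)$, so each $\bfphi(x_j)$ lies in the range of $\bfPhi(m,\bfx)$, which by definition is $\bfU$. For the converse, I would argue by contradiction. Suppose $t\notin\bfx$ but $\bfphi(t)\in\bfU=\Range(\bfPhi(m,\bfx))$. Then $\bfphi(t)$ is a linear combination of the $s$ columns of $\bfPhi(m,\bfx)$, so the augmented Fourier matrix $\bfPhi(m,\bfx\cup\{t\})\in\C^{m\times(s+1)}$ has linearly dependent columns. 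However, $\bfx\cup\{t\}$ consists of $s+1$ distinct points of $\T$, and the hypothesis $m>s$ yields $m\geq s+1$, so the classical Vandermonde nondegeneracy (or \cref{prop:AB19} applied to any sufficiently separated augmentation, but the bare Vandermonde fact suffices here since distinctness of nodes is all that is needed) gives that $\bfPhi(m,\bfx\cup\{t\})$ has full column rank $s+1$, a contradiction. Hence $t\in\bfx$.

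Chaining the two equivalences completes the proof. The only subtle step is the converse of the second equivalence, where the hypothesis $m>s$ is used in exactly one place: to guarantee enough rows to make Vandermonde matrices with distinct nodes full column rank. There is no real obstacle here; the argument is essentially a bookkeeping exercise around Vandermonde nonsingularity and the projection decomposition.
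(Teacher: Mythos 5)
Your proof is correct, self-contained, and complete. The paper itself does not supply a proof of \cref{prop:qroots}; it attributes the fact to the original MUSIC literature \cite{schmidt1986multiple,liao2016music} and restates it without proof as \cref{lem:qroots2}, so there is no in-paper argument to compare against directly. That said, the two ingredients you use align exactly with the machinery the paper develops later: the identity $q(t)=\|\bfU_\perp^*\bfphi(t)\|_2^2$ is the same complementary-projector formula used in the proof of \cref{lem:landscape1} (there written as $q(t)=\bfphi(t)^*\bfW_\perp\bfW_\perp^*\bfphi(t)$), and your full-column-rank argument for the augmented matrix $\bfPhi(m,\bfx\cup\{t\})$ is the standard Vandermonde nondegeneracy underlying the paper's bijection between node sets and Fourier subspaces. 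One point worth underlining: you are right that \cref{prop:AB19} is not needed here, and in fact it would not even apply, since $\bfx\cup\{t\}$ carries no separation lower bound; the bare fact that a Vandermonde matrix with $s+1$ distinct unimodular nodes and $m\geq s+1$ consecutive integer rows has full column rank is precisely the right tool, and your use of $m>s$ is placed correctly.
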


	Thus, whenever there is no noise, $\bfU$ can be computed from $T(\bfy)$ and the unknown frequencies $\bfx$ can be determined by finding the all zeros of $q$. When $m$ is large, this is not necessarily an easy task, as it requires finding all zeros of $q$, whose degree is potentially much larger than $s$. Nonetheless, this discussion illustrates that the noiseless problem can be reduced to a standard numerical task. 
	
	\smallskip 
	{\bf Noisy case} $\bfeta\not=\bfzero$. The previous strategy has to be significantly modified because we cannot directly compute $\bfU$ anymore. Subspace methods first form a Toeplitz matrix from noisy data,
	\begin{equation} \label{eq:Htilde}
		\tilde \bfT
		:= T(\tilde \bfy)
		\in \C^{m\times m}.
	\end{equation}
	Since $T$ is a linear operator and $\tilde \bfy = \bfy+\bfeta$, we have 
	\begin{equation}
		\label{eq:H1}
		\tilde \bfT 
		:= T(\tilde\bfy)
		=T(\bfy) + T(\bfeta)
		= \bfT + T(\bfeta).
	\end{equation}
	The Toeplitz matrix $\tilde \bfT$ is a perturbation of $\bfT$. Subspace methods compute the leading $s$ dimensional left singular space of $\tilde \bfT$ to approximate $\bfU$.
	
	\begin{definition}[Toeplitz Estimator]
		\label{def:Toeplitzestimator}
		If $m>s$ and $\sigma_s(\tilde \bfT)\not=\sigma_{s+1}(\tilde \bfT)$, then the {\it Toeplitz estimator} is the unique (up to trivial ambiguities) orthonormal basis $\tilde \bfU$ for the leading $s$ dimensional left singular space of $\tilde\bfT$. 
	\end{definition}
	
	It is common to call $\tilde \bfU$ an ``empirical signal subspace." We refrain from using this term as it incorrectly suggest that the Toeplitz estimator is the only way of approximating $\bfU$. The error between $\bfU$ and $\tilde\bfU$ is naturally quantified by the sine-theta distance $\vartheta(\bfU,\tilde\bfU)$ defined in \eqref{eq:sintheta}.
	
	The classical MUSIC algorithm evaluates the landscape function $\tilde q:=q_{\tilde \bfU}$ associated with the Toeplitz estimator $\tilde\bfU$, on a finite set $G\subset\T$ and finds the $s$ smallest (discrete) local minima of $\tilde q$ on $G$. Here, $G$ can be nonuniform, but it needs to be dense enough so that the discrete minima closely approximate the continuous ones. To quantify this, we define the {\it mesh norm} of $G$ as
	\begin{equation}
		\label{eq:meshnorm}
		\mesh(G):= \max_{t\in \T}\min_{u\in G} |t-u|. 
	\end{equation}	
	A smaller mesh norm corresponds to a denser set. By making $\mesh(G)$ small, it reduces numerical error of classical MUSIC, but increases its computational complexity. 
	
	Classical MUSIC requires solving a nonconvex optimization problem where $\tilde q$ is the objective function. There is an important difference between this setup and a standard optimization approach. Since $\bfx$ has cardinality $s$, a typical optimization approach (such as maximum likelihood optimization and nonlinear least squares) would create an objective function of $2s$ variables and look for its global minimum. In contrast, classical MUSIC creates a single variable function $\tilde q$ and finds its $s$ smallest minima. This can be viewed as a dimension reduction step. This is one of several reasons why we defined $q_\bfW$ as the ``landscape function", as opposed to the more generic term ``objective function". A mathematical perspective of $q_\bfW$ is provided in \cref{sec:abstract}.  
	
	\begin{algorithm}
		\caption{Classical MUSIC}
		\begin{algorithmic}
			\Require Noisy Fourier data $\tilde \bfy\in \C^{2m-1}$. \\
			{\bf Parameters:} Number of frequencies $s$, and finite subset $G\subset\T$.
			\begin{enumerate}
				\item 
				Find an orthonormal basis $\tilde \bfU$ for the leading $s$ dimensional left singular space of $\tilde\bfT$. 
				\item 
				Evaluate $\tilde q$ on a grid $G$ and find its $s$ smallest discrete local minima of $\tilde q$ on $G$, denoted $\hat\bfx$. 
			\end{enumerate}
			\Ensure Estimated frequencies $\hat \bfx$. 
		\end{algorithmic}
		\label{alg:MUSIC}
	\end{algorithm}
	
	The classical MUSIC algorithm has several well known theoretical and computational issues. 
	\begin{enumerate}[(a)]\itemsep-2pt
		\item 
		The number of frequencies $s$ must be known, otherwise the computed $\tilde \bfU$ and $\tilde q$ are incorrect. Then the recovered $\hat\bfx$ can be completely different from what is expected. 
		\item 
		Even if the Toeplitz estimator $\tilde \bfU$ is correctly found, it is possible that $\tilde q$ has strictly less than $s$ local minima, which makes the entire procedure undefined. 
		\item 
		Even if $\tilde q$ has at least $s$ many local minima, it is not clear why selecting the ones that minimize the value of $\tilde q$ is preferred over other local minima. 
		\item 
		Despite classical MUSIC being formulated as a grid-free algorithm, one must evaluate $\tilde q$ on a grid $G$ with a small mesh norm $\mesh(G)$ to approximate its local minima. 
	\end{enumerate}

	\section{Gradient-MUSIC algorithm}
	\label{sec:gradMUSIC}
	
	\begin{algorithm}[t]
		\caption{Gradient-MUSIC (given a subspace and sparsity)}
		\begin{algorithmic}
			\Require Subspace $\tilde \bfU$ of dimension $s$. \\
			{\bf Parameters:} Finite set $G\subset\T$, threshold parameter $\alpha$, gradient step size $h$, and number of iterations $n$.
			\begin{enumerate} 
				\item {\bf Step I (Initialization on a Coarse Grid).} 
				Evaluate the landscape function $\tilde q:=q_{\tilde\bfU}$ associated with the subspace $\tilde\bfU$ on the set $G\subset\T$ and find the accepted points
				$$
				A:=A(\alpha)=\{u\in G\colon \tilde q(u)< \alpha \}. 
				$$ 
				Find all $s$ clusters of $A$ and pick representatives $t_{1,0},\dots,t_{s,0}$ for each cluster. 
				\item {\bf Step II (Gradient Descent for Local Optimization).} 
				Run $n$ iterations of gradient descent with step size $h$ and initial guess $t_{j,0}$, namely
				$$
				t_{j,k+1}=t_{j,k}-h \tilde q \, '(t_{j,k}) \foreachspace k = 0,\dots,n-1.
				$$ 
				Let $\hat x_j=t_{j,n}$ and $\hat \bfx=\{\hat x_j\}_{j=1}^s$. 
			\end{enumerate}
			\Ensure Estimated frequencies $\hat \bfx$.
		\end{algorithmic}
		\label{alg:gradMUSIC}
	\end{algorithm}
	
	Motivated by the limitations of classical MUSIC, we propose the Gradient-MUSIC algorithm, based on a nonconvex optimization reformulation of MUSIC. In this section, we numerically demonstrate the geometric shape of the landscape function in Figure \ref{fig:landscape}, and leave the theoretical analysis in \cref{subsec:landscapeanalysis}.
	
	The core problem in spectral estimation is to estimate the unknown frequencies $\bfx$ given $s$. In this section, we fix a particular $\bfx$ and $m$ such that $\Delta(\bfx)\geq 8\pi/m$ and let $\bfU=\range(\bfPhi(m,\bfx))$ be the noise-free signal subspace. Figure \ref{fig:landscape} shows a particular landscape function $q=q_\bfU$. Let $\tilde\bfU$ be the Toeplitz estimator of the signal subspace $\bfU$ from the noisy measurements $\tilde\bfy=\bfy+\bfeta$, where $\bfeta\sim \calN(0,\sigma^2\bfI)$ is one realization with variance $\sigma^2=1$. Let $\tilde q=q_{\tilde\bfU}$ be the landscape function associated with the Toeplitz estimator $\tilde\bfU$. Figure \ref{fig:landscape} also plots $\tilde q$. We see that $q$ and $\tilde q$ are similar, but the error appears to be largest near $\bfx$. 
	
	There are subtleties that we will expand on in \cref{sec:abstract}. For now, it is important to recognize that $\bfU=\range(\bfPhi(m,\bfx))$ for a fixed $\bfx$, so it is not an arbitrary subspace in $\C^m$ and it has some special properties that will discussed later. Also, $\tilde \bfU$ is some perturbation of $\bfU$ chosen through a particular process. While we can think of $\tilde q$ as a perturbation of $q$, it is a particular implicit perturbation of $q$ that ``factors" (in an algebraic sense) through this particular perturbation of the signal subspace.  
	
	\begin{figure}[ht]
		\centering
		\begin{subfigure}{0.45\linewidth}
			\includegraphics[width=\linewidth]{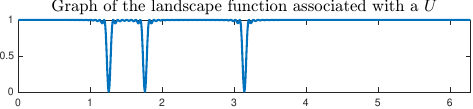}
			\includegraphics[width=\linewidth]{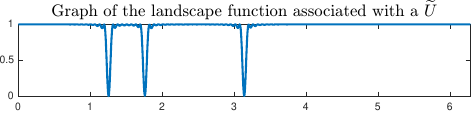}
			\includegraphics[width=\linewidth]{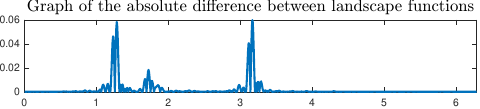}
			\caption{Plots of $q$, $\tilde q$, and $|q-\tilde q|$. }
		\end{subfigure}
		\begin{subfigure}{0.45\linewidth}
			\includegraphics[width=0.85\linewidth]{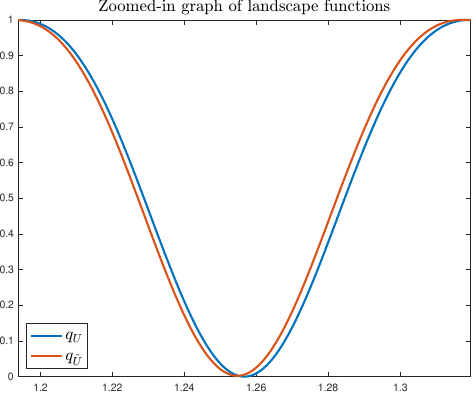}
			\caption{Zoomed in plot.}
		\end{subfigure}
		\caption{Plot of the landscape function $q$ generated from $\bfx=\{2\pi(2/m),\, 2\pi(10/m),\, \pi\}$ and the landscape function $\tilde q$ associated with $\tilde\bfU$. }
		\label{fig:landscape}
	\end{figure}
	
	Figure \ref{fig:landscape} (b) shows the landscape function $\tilde q$ near a $x_j\in \bfx$. It appears that the $s$ smallest local minima $\tilde\bfx=\{\tilde x_j\}_{j=1}^s$ of $\tilde q$ are not far away from $\bfx=\{x_j\}_{j=1}^s$. Figure \ref{fig:landscape} illustrates why the classical MUSIC algorithm seeks to find the smallest $s$ local minima of $\tilde q$. On the other hand, Figure \ref{fig:landscape} suggests an alternative method to search for $\tilde\bfx$ without evaluation on a fine grid.

	We make two key observations of the landscape function: (1) The landscape function $\tilde q$ is large on the region far from $\tilde\bfx$, shown in Figure \ref{fig:landscape} (a). (2) The landscape function $\tilde q$ appears locally convex in a neighborhood of each local minimum $\tilde x_j$, shown in Figure \ref{fig:landscape} (b).
	
	Our observations of the landscape function, which is supported by a theoretical analysis in Section \ref{subsec:landscapeanalysis}, lead to the core methodology of Gradient-MUSIC, which is summarized in \cref{alg:gradMUSIC}. It is a two-stage process in which a coarse grid is laid out to find a suitable initialization in each basin of attraction to $\tilde x_j$ and then a gradient descent is performed with this initialization. 
	
	{\bf Step I (Initialization on a Coarse Grid).} Figure \ref{fig:landscape} suggests that $\tilde q(t)$ is large when $t$ is far from $\tilde\bfx$, so by sampling $\tilde q$ and thresholding, we can find a set of suitable initialization. In view of this discussion, we define the following. 
	
	\begin{definition}[Accepted elements]
		For any finite set $G\subset\T$ and $\alpha>0$, we say that a $u\in G$ is {\it rejected} if $\tilde q(u)\geq \alpha$ and is {\it accepted} if $\tilde q(u)<\alpha$. Let $A:=A(\alpha)\subset G$ denote the set of accepted elements. \label{def:accepted}
	\end{definition}
	
	\begin{figure}[t]
		\centering
		\begin{subfigure}{0.7\linewidth}
			\includegraphics[width=\linewidth]{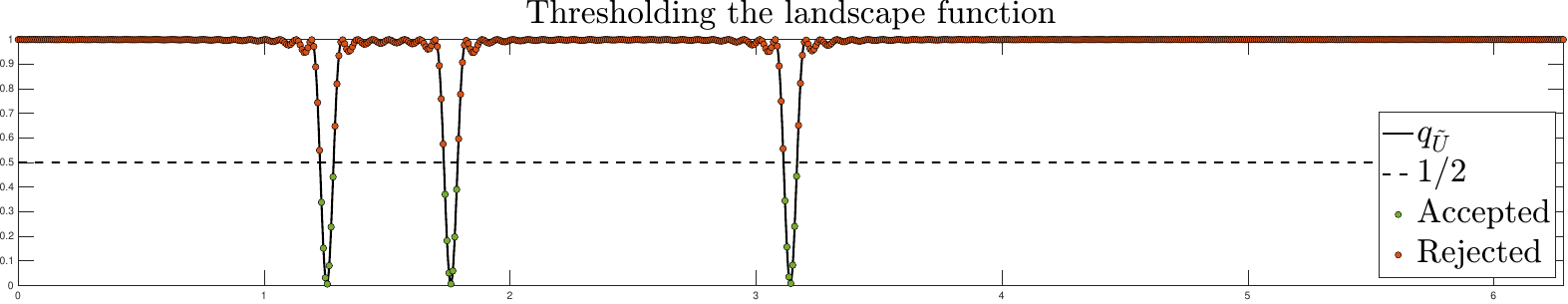}
			\caption{All accepted and rejected elements.}
		\end{subfigure}\\ \medskip 
		\begin{subfigure}{0.4\linewidth}
			\includegraphics[width=\linewidth]{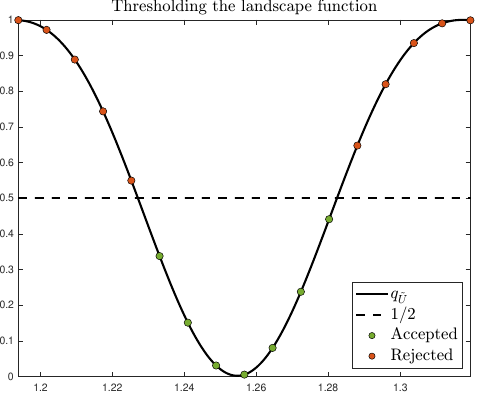}
			\caption{Zoomed in plot.}
		\end{subfigure} 
		\caption{Plot of $\tilde q$ and the set of $(u,\tilde q(u))$, where $u$ are elements of a uniform grid of width $2\pi/(8m)$. The accepted and rejected $(u,q(u))$ are shown in green and red, respectively.}
		\label{fig:landscape2}
	\end{figure}

	Figure \ref{fig:landscape2} shows an example of $\tilde q$ sampled on a uniform grid $G$ of width $2\pi/(8m)$ and the set of accepted and rejected points with parameter $\alpha = 1/2$. Examining Figure \ref{fig:landscape2} (b), it appears that the accepted points lie in a basin of attraction where gradient descent initialized at any of those points would converge to an element in $\tilde\bfx$. 
	
	Importantly, the grid selected in Figure \ref{fig:landscape2} does not need to be very thin; it is thin enough enough to locate at least one suitable initialization per $\tilde x_j$, but is not thin enough that its discrete local minima will always be a high quality approximation of $\tilde x_j$. This is important for computations because evaluation of $\tilde q(t)$ at just a single $t\in \T$ is not cheap and it requires $O(ms)$ operations due to the matrix-vector product in \cref{def:nscorrelation}. 
	
	Examining Figure \ref{fig:landscape2} (a), it appears that there are exactly $s$ many connected components in the set of accepted elements. This is made precise in the following definition.
	
	\begin{definition}[Cluster]\label{def:cluster}
		Given $v,w\in G$ such that $|u-v|<\pi$, the {\it path} in $G$ that connects them is the set of all elements $u\in G$ that lie in the (shorter) arc between $v$ and $w$. We say a subset $V \subset G$ is a {\it cluster} in $G$ if for any $v,w\in V$, the path in $G$ connecting $v$ and $w$ is also contained in $V$. An arbitrary element of a cluster is called a {\it representative}.
	\end{definition}
	
	{\bf Step II (Gradient Descent for Local Optimization).} For the $j$-th cluster, pick an arbitrary representative $t_{j,0}$. Using this as the initial guess, any appropriate local optimization technique will converge to $\tilde x_j$.  {\it Gradient descent} with step size $h$ and initial guess $t_{j,0}$ produces the iterates $\{t_{j,k}\}_{k=0}^\infty$ which are defined recursively as 
	\begin{equation}
		\label{eq:gradient}
		t_{j,k+1}=t_{j,k}-h\tilde q \,'(t_{j,k}) \forallspace k \in \N.
	\end{equation}
	
	The effectiveness of the algorithm hinges on the properties of the landscape function $\tilde q$. A complete geometric analysis of $\tilde q$ is carried out in \cref{thm:landscape}. In fact, this theorem is far more general than the discussion carried out in this section. It applies to any $\bfx$ for which $\Delta(\bfx) \geq 8\pi /m$ and $\tilde\bfU$ that is close enough to $\bfU$, not just the Toeplitz estimator for $\bfU$. For this reason, the $\tilde\bfU$ in Gradient-MUSIC (\cref{alg:gradMUSIC}) is an arbitrary subspace. The performance guarantees of this algorithm are summarized in \cref{thm:maingradientMUSIC}.

	\section{Main results: optimality of Gradient-MUSIC for various noise models} 
	\label{sec:noisemodels}
	
	As explained earlier, there are three main steps in the Gradient-MUSIC algorithm, which we handle in separate parts of this paper.
	\begin{enumerate} \itemsep-2pt
		\item 
		Estimation of the number of frequencies $s$ and initial approximation $\tilde\bfU$ of $\bfU$ from just data $\tilde\bfy$. This is carried out in \cref{sec:sparsity}.
		\item 
		Estimation of the frequencies $\bfx$ given a $\tilde\bfU$ that is close to the true subspace $\bfU$. This is the main step and involves the most analysis, which is handled in \cref{sec:gradientMUSIC}.
		\item 
		Estimation of the amplitudes when a reasonable approximation of $\bfx$ has been found. This is carried out in \cref{sec:amplitudes}.
	\end{enumerate}
	By incorporating those results, we provide a full pipeline version of Gradient-MUSIC in \cref{alg:gradMUSIC2}. The algorithm outputs estimated parameters $(\hat\bfx,\hat\bfa)$ given noisy data $\tilde\bfy$. It requires specification of several parameters, which we give explicit formulas and guidelines for. To explain our main theorems, we introduce the following definition. 
	\begin{definition}
		\label{def:signalmodel}
		For any $c>0$ and $0<r_0\leq r_1$, define the {\it signal class}
		$$
		\calS(c,r_0,r_1) 
		= \left\{ (\bfx,\bfa) \colon \Delta(\bfx) \geq c, \ r_0\leq |a_j|\leq r_1,  \forallspace j=1,\dots,|\bfx|\right\}. 
		$$
	\end{definition}

	\begin{algorithm}[t]
		\caption{Gradient-MUSIC (full pipeline)}
		\begin{algorithmic}
			\Require Noisy Fourier data $\tilde \bfy \in \C^{2m-1}$.\\
			{\bf Parameters:} Threshold parameters $\gamma,\alpha\in [0,1]$, finite set $G\subset\T$, gradient step size $h$, and number of iterations $n$.
			\begin{enumerate}
				\item 
				Form the noisy Toeplitz matrix $\tilde \bfT=T(\tilde\bfy)$ and compute 
				$$
				\hat s:=\max\big\{k \colon {\sigma_k(\tilde \bfT)} \geq \gamma {\sigma_1(\tilde \bfT)} \big\}.
				$$ 
				Find an orthonormal basis $\tilde \bfU$ for the leading $\hat s$ dimensional left singular space of $\tilde\bfT$.  
				\item 
				Use \cref{alg:gradMUSIC} applied to $\tilde\bfU$ to output estimated frequencies $\hat\bfx$.
				\item 
				Let $\hat \bfPhi:=\bfPhi(m,\hat \bfx)$ and compute
				$$
				\hat \bfa := \diag\Big( \hat \bfPhi^+ \tilde \bfT \big(\hat \bfPhi^+ \big)^*\Big)=\{\hat a_j\}_{j=1}^{\hat s}.
				$$
			\end{enumerate}
			\Ensure Estimated parameters $(\hat \bfx,\hat\bfa)$.
		\end{algorithmic}
		\label{alg:gradMUSIC2}
	\end{algorithm}

	\subsection{Deterministic perturbations}
	
	We first provide performance guarantees for \cref{alg:gradMUSIC2} under deterministic perturbations measured in the $\ell^p$ norm. 
	
	\begin{theorem}[Deterministic $\ell^p$ perturbations]
		\label{cor:deterministic}
		Let $m\geq 100$, $r_0>0$, and $p\in [1,\infty]$. Suppose $(\bfx,\bfa)\in \calS\left({8\pi}/m, r_0, 10 r_0 \right)$ and $\bfeta\in \C^{2m-1}$ such that 
		\begin{equation}
			\label{eq:nsrp}
			\frac{\|\bfeta\|_p}{a_{\min} m^{1/p}} \leq \frac{3}{1600}.
		\end{equation}
		Let $(\hat\bfx,\hat\bfa)$ be the output of Gradient-MUSIC (\cref{alg:gradMUSIC2}) with $\gamma = 0.0525$, $\alpha=0.529$, $h=6/m^2$, finite $G\subset\T$ such that $\mesh(G)\leq 1/(2m)$, and 
		\begin{equation}
			\label{eq:niter}
			n\geq \max\left\{ 31, \, 6 \log\left(\frac {15 \, a_{\min} \, m^{1/p}}{\|\bfeta\|_p}\right) \right\}.
		\end{equation}
		Then we have
		\begin{align*}
			\max_{j} |x_j-\hat x_j|
			&\leq 55 \, \frac{\|\bfeta\|_p}{a_{\min}\, m^{1+1/p}}, \\ 
			\max_{j} |a-\hat a_j|
			&\leq 455 \, \frac{a_{\max}}{a_{\min}} \frac{\|\bfeta\|_p}{m^{1/p}}. 
		\end{align*}   
	\end{theorem}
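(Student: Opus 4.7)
The plan is to assemble the three pipeline stages of Gradient-MUSIC (sparsity/subspace estimation, frequency estimation, amplitude estimation) and invoke the already-stated main technical theorems on top of a single quantitative ingredient: a sharp $\ell^p$ Toeplitz operator norm bound.

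First I would bound the noise Toeplitz matrix. Since $T\colon\C^{2m-1}\to\C^{m\times m}$ is linear and the $k$th diagonal of $T(\bfeta)$ has length at most $m$ and contains a single coordinate of $\bfeta$, a direct Schur/Hausdorff--Young-type estimate gives
\begin{equation*}
    \|T(\bfeta)\|_2 \;\leq\; C\, m^{1-1/p}\,\|\bfeta\|_p \quad\text{for every } p\in[1,\infty].
\end{equation*}
On the clean side, the factorization \eqref{eq:toeplitzfactorization} together with \cref{prop:AB19} applied with $\beta=4$ (using $m\Delta\geq 8\pi$) gives
\begin{equation*}
    \sigma_s(\bfT) \;\geq\; \sigma_{\min}(\bfPhi(m,\bfx))^2\, a_{\min} \;\geq\; \tfrac{3}{4}\, m\, a_{\min},
\end{equation*}
while $\sigma_{s+1}(\bfT)=0$. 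Combining these with Weyl's inequality yields a clean spectral gap for $\tilde\bfT=\bfT+T(\bfeta)$: under the noise assumption \eqref{eq:nsrp}, $\sigma_s(\tilde\bfT)$ stays well above $\gamma\sigma_1(\tilde\bfT)$ and $\sigma_{s+1}(\tilde\bfT)$ falls well below it, so the chosen $\gamma=0.0525$ makes the sparsity selector recover $\hat s=s$ exactly. This step is routine and is already handled as a ``side problem'' in \cref{sec:sparsity}.

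Next I would convert the Toeplitz error into subspace error via Wedin's $\sin\theta$ theorem:
\begin{equation*}
    \vartheta(\bfU,\tilde\bfU)
    \;\leq\; \frac{\|T(\bfeta)\|_2}{\sigma_s(\bfT)-\|T(\bfeta)\|_2}
    \;\leq\; C\,\frac{\|\bfeta\|_p}{a_{\min}\,m^{1/p}}.
\end{equation*}
Plugging the noise bound \eqref{eq:nsrp} into this estimate shows $\vartheta$ is small enough to enter the hypothesis window of \cref{thm:maingradientMUSIC}. That theorem is the workhorse: it says that if $\vartheta\ll 1$, then with the specified $G$, $\alpha$, $h$, and after $n$ gradient iterations, the output satisfies $\max_j |x_j-\hat x_j|\leq C(\vartheta/m+c^n/m)$ for an explicit $c\in(0,1)$. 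My iteration budget \eqref{eq:niter} is chosen precisely so that $c^n\lesssim \|\bfeta\|_p/(a_{\min}m^{1/p})$, so the $c^n/m$ contribution is absorbed into $\vartheta/m$, and the frequency bound collapses to the claimed $55\,\|\bfeta\|_p/(a_{\min}m^{1+1/p})$. The only real bookkeeping here is tracking constants through \cref{thm:maingradientMUSIC} and checking all parameter-compatibility inequalities hold under \eqref{eq:nsrp}.

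Finally, for the amplitudes, I would feed the frequency error bound into the least-squares formula $\hat\bfa=\diag(\hat\bfPhi^+\tilde\bfT(\hat\bfPhi^+)^*)$ from Step~3 of \cref{alg:gradMUSIC2}. Writing $\hat\bfa-\bfa$ as a sum of three terms (perturbation from replacing $\bfPhi$ by $\hat\bfPhi$, perturbation from the noise $T(\bfeta)$, and cross terms) and using \cref{prop:AB19} to control $\|\hat\bfPhi^+\|_2$, together with a Lipschitz estimate $\|\hat\bfPhi-\bfPhi\|_2\lesssim m^{3/2}\max_j|x_j-\hat x_j|$ (from $\partial_x e^{ikx}=ik e^{ikx}$ and $|k|\leq m$), yields
\begin{equation*}
    \max_j|a_j-\hat a_j|
    \;\lesssim\; \sqrt{s}\,\frac{a_{\max}}{a_{\min}}\,\frac{\|\bfeta\|_p}{m^{1/p}},
\end{equation*}
with the stated constant $115$. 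This is again a ``side problem'' handled in \cref{sec:amplitudes}. The main obstacle, in my view, is not conceptual—the whole scheme is dictated by \cref{thm:maingradientMUSIC}—but rather the careful constant propagation: one has to verify that the specific numerical choices ($\gamma=0.0525$, $\alpha=0.529$, $h=6/m^2$, $\mesh(G)\leq 1/(2m)$) all satisfy the admissibility ranges of the landscape theorem and of the Gradient-MUSIC core theorem simultaneously under the sharp noise-to-signal condition $\|\bfeta\|_p/(a_{\min}m^{1/p})\leq 3/1600$. Everything else is an application of results stated earlier.
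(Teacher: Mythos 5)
Your proposal is correct and follows essentially the same route as the paper: bound $\|T(\bfeta)\|_2$ via the Young-convolution estimate (\cref{lem:noisepnorm}), lower-bound $\sigma_s(\bfT)$ via \cref{prop:AB19}, show the threshold $\gamma=0.0525$ recovers $s$ (\cref{lem:sparsity}), convert to a $\vartheta$-bound via Wedin (\cref{lem:rho}), feed into \cref{thm:maingradientMUSIC} with the iteration count chosen to make the $c^n$ term subordinate, and finish with \cref{lem:amplitudes}. The only cosmetic difference is that you invoke Wedin's $\sin\theta$ theorem in its raw $\|T(\bfeta)\|_2/(\sigma_s(\bfT)-\|T(\bfeta)\|_2)$ form, whereas the paper packages the same computation through the intermediate quantity $\rho=2\|T(\bfeta)\|_2/\sigma_s(\bfT)$ and \cref{lem:rho}; and your stated Lipschitz estimate on $\|\hat\bfPhi-\bfPhi\|_2$ is missing the $\sqrt s$ factor that appears in the paper's Frobenius-norm bound, though this is just a typo since you carry the $\sqrt s$ through to the final amplitude estimate.
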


	\begin{remark}[Optimality of noise condition]
		\label{rem:impossible}
		We give a simple example which illustrates why the noise condition \eqref{eq:nsrp} is necessary, up to universal constants. If $(\bfx,\bfa)=(0,r_0)$ and $\eta_k=-r_0$ for each $k=-m+1,\dots,m-1$, then $\|\bfeta\|_p=r_0 (2m-1)^{1/p}$ while $\tilde\bfy=\bfzero$. This shows that spectral estimation is impossible unless $\|\bfeta\|_p/(r_0 m^{1/p})$ is sufficiently small.
	\end{remark}
	
	\cref{cor:deterministic} is proved in \cref{sec:deterministic}. Although the $p\in (1,\infty]$ statements in \cref{cor:deterministic} are immediate consequences of the $p=1$ statement and H\"older's inequality, we stated this theorem for all $p\in [1,\infty]$ since each $p$ has a different interpretation. In the subsequent discussion, we ignore all quantities aside from $m$ and $\bfeta$. Dependence on $a_{\min}$ can be effectively ignored by rescaling the main model equation \eqref{eq:noisymeasurement}, so that $a_{\min}=1$ and $\bfeta/a_{\min}$ is redefined to be $\bfeta$. Additionally, we ignore $a_{\max}/a_{\min}$ since by assumption, $a_{\max}/a_{\min}\leq 10$. 
	
	{\bf Small bounded perturbations}. Suppose $\bfeta$ is deterministic and quantified in the $\ell^\infty$ norm. This is arguably the most natural norm if we fix $(\bfx,\bfa)$, let $m$ increase, and assume the entries of $\bfeta$ are uniformly bounded. We see that assumption \eqref{eq:nsrp} is satisfied whenever $\|\bfeta\|_\infty$ is a small enough universal constant. In this case, the frequency and amplitude errors are $C\|\bfeta\|_\infty/m$ and $C\|\bfeta\|_\infty$ respectively.
	
	{\bf High energy perturbations}. Suppose $\bfeta$ is deterministic and quantified in the $\ell^2$ norm. This is an energy norm, and this assumption is perhaps natural for physical applications. Condition \eqref{eq:nsrp} is satisfied whenever $\|\bfeta\|_2/\sqrt m$ is sufficiently small. If $\|\bfeta\|_2$ is uniformly bound in $m$, then this is a less stringent constraint compared to the one for bounded noise. In this case, the frequency and amplitude errors are $C\|\bfeta\|_2/ m^{3/2}$ and $C\|\bfeta\|_2/\sqrt m$ respectively. 
	
	{\bf Sparse perturbations}. Suppose $\bfeta$ is deterministic and quantified in the $\ell^1$ norm. We refer to it as a sparse perturbation since $\ell^1$ is a surrogate norm for sparsity. Condition \eqref{eq:nsrp} is satisfied whenever $\|\bfeta\|_1/m$ is sufficiently small. If $\|\bfeta\|_1$ is uniformly bounded in $m$, then this condition is even easier to satisfy than previous ones. The frequency and amplitude errors are $C\|\bfeta\|_1/m^2$ and $C\|\bfeta\|_1/m$ respectively. This indicates that both MUSIC algorithms take advantage of sparse perturbations.

	\subsection{Stochastic noise} 
	
	Now we move onto several examples of stochastic noise. For simplicity, we consider a natural example where $\bfeta\sim \calN(\bfzero,\bfSigma)$ is normally distributed with mean zero and diagonal covariance matrix $\bfSigma$. The following theorem is proved in \cref{sec:stochastic}. 
	
	\begin{theorem}[Gaussian noise with diagonal covariance]
		\label{cor:stochastic}
		Let $m\geq 100$, $r_0>0$, and $\bfSigma\in \R^{(2m-1)\times (2m-1)}$ be a diagonal matrix with positive diagonals. There are positive universal constants $C_1$, $C_2$, $C_3$, and $c$ such that for any $(\bfx,\bfa)\in \calS\left({8\pi}/m, r_0, 10r_0 \right)$, noise $\bfeta\sim \calN(\bfzero,\bfSigma)$, and $t>1$, the following hold with probability at least 
		\begin{equation}
			\label{eq:nsrsampling}
			1-2m^{1-t^2}-2m \exp\left( -\frac{ca_{\min}^2  m^2}{\tr(\bfSigma)} \right).
		\end{equation}
		Let $(\hat\bfx,\hat\bfa)$ be the output of Gradient-MUSIC (\cref{alg:gradMUSIC2}) with $\gamma = 0.0525$, $\alpha=0.529$, $h=6/m^2$, any $G\subset \T$ such that $\mesh(G)\leq 1/(2m)$, and any
		\begin{equation}
			\label{eq:niter2}
			n\geq \max\left\{ 31, \, C_1\log\left( \frac{a_{\min} \, m} {t \sqrt{\tr(\bfSigma)\log(m)}} \right) \right\}.
		\end{equation} 
		Then we have  
		\begin{align*}
			\max_{j} |x_j-\hat x_j|
			&\leq \frac{C_2}{a_{\min}} \frac{t \sqrt{\tr(\bfSigma)\log(m)}}{m^2}, \\ 
			\max_{j} |a-\hat a_j|
			&\leq C_3 \, \frac{ a_{\max}}{a_{\min}} \frac{t \sqrt{\tr(\bfSigma)\log(m)}}{m}. 
		\end{align*}
	\end{theorem}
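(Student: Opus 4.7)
The plan is to reduce the stochastic problem to the core approximation result \cref{thm:maingradientMUSIC} by controlling the quality of the Toeplitz estimator $\tilde\bfU$ in the sine-theta distance. I will \emph{not} route through \cref{cor:deterministic} directly, because a pointwise $\ell^p$ bound on $\bfeta$ does not yield the sharp rate $\sqrt{\tr(\bfSigma)\log m}$: what really governs the error is the operator norm of the random Toeplitz matrix $T(\bfeta)$, and for general diagonal $\bfSigma$ this is strictly smaller than any $\|\bfeta\|_p$ after the usual Gaussian tail bounds. Table \ref{tab:mainresults} with $r=0$ (white noise) already shows the gap: $\sqrt{m\log m}$ versus $\sqrt m$ or $\sqrt{m^{1+1/p}}$.

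First I would expand $\tilde \bfT = \bfT + T(\bfeta)$ as in \eqref{eq:H1} and write $T(\bfeta) = \sum_k \eta_k \bfJ_k$, where $\bfJ_k$ is the $m\times m$ shift matrix carrying $1$'s on its $k$-th diagonal, so $\|\bfJ_k\|_2 = 1$. With $\bfSigma=\diag(\{\sigma_k^2\}_k)$ diagonal, a row-sum calculation shows that $\sum_k \sigma_k^2 \bfJ_k \bfJ_k^*$ and $\sum_k \sigma_k^2 \bfJ_k^* \bfJ_k$ are themselves diagonal and each has operator norm at most $\tr(\bfSigma)$. Matrix Gaussian concentration (noncommutative Khintchine / matrix Bernstein with the Hermitian dilation), followed by a Gaussian-Lipschitz concentration of $\|T(\bfeta)\|_2$ about its mean to pin down the exponent, then delivers
\[
\|T(\bfeta)\|_2 \;\leq\; C_0 \, t \sqrt{\tr(\bfSigma)\, \log m}
\]
with probability at least $1 - 2 m^{1-t^2}$, accounting for the first term in \eqref{eq:nsrsampling}. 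Combined with \cref{prop:AB19} applied with $\beta=4$, which gives $\sigma_s(\bfPhi(m,\bfx))^2 \geq 3m/4$ and hence $\sigma_s(\bfT) \geq 3m \, a_{\min}/4$ via the factorization \eqref{eq:toeplitzfactorization}, Wedin's perturbation theorem produces
\[
\vartheta \;:=\; \vartheta(\bfU,\tilde\bfU) \;\lesssim\; \frac{\|T(\bfeta)\|_2}{m\, a_{\min}} \;\lesssim\; \frac{t\sqrt{\tr(\bfSigma)\log m}}{m\, a_{\min}},
\]
provided we are also on the event $\|T(\bfeta)\|_2 \leq \sigma_s(\bfT)/2$; the failure of that event is exactly the Gaussian tail bound at scale $m\, a_{\min}$ and contributes the second probability term $2m \exp(-c a_{\min}^2 m^2/\tr(\bfSigma))$.

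With $\vartheta$ controlled, I next verify that Step~1 of \cref{alg:gradMUSIC2} returns $\hat s = s$ at the threshold $\gamma = 0.0525$: because $\sigma_{s+1}(\tilde\bfT) \leq \|T(\bfeta)\|_2$ while $\sigma_s(\tilde\bfT) \geq \sigma_s(\bfT) - \|T(\bfeta)\|_2 \geq m\, a_{\min}/2$ on the good event, the singular-value gap is clean. Plugging the $\vartheta$ estimate into \cref{thm:maingradientMUSIC} with the prescribed parameters $\alpha=0.529$, $h = 6/m^2$, $\mesh(G)\leq 1/(2m)$, and choosing $n$ as in \eqref{eq:niter2} so the geometric-convergence remainder $c^n/m$ is absorbed into $\vartheta/m$, yields
\[
\max_j |x_j - \hat x_j| \;\lesssim\; \frac{\vartheta}{m} \;\lesssim\; \frac{t\sqrt{\tr(\bfSigma)\log m}}{a_{\min}\, m^2},
\]
which is the stated frequency bound. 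The amplitude bound then follows by feeding $(\hat\bfx, \tilde\bfT)$ into the amplitude-estimation stability result of \cref{sec:amplitudes}, where propagating $\|T(\bfeta)\|_2$ and the frequency error through the normal equations introduces the extra $\sqrt s$ and $a_{\max}/a_{\min}$ factors.

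The main obstacle I anticipate is the second step: obtaining the Gaussian Toeplitz operator-norm estimate $\|T(\bfeta)\|_2 \lesssim t\sqrt{\tr(\bfSigma)\log m}$ with the precise tail $2m^{1-t^2}$. Getting the matrix variance exactly equal to $\tr(\bfSigma)$---rather than the coarser $(2m-1)\max_k \sigma_k^2$ which is tight only for white noise---demands the careful diagonal computation for $\sum_k \sigma_k^2 \bfJ_k\bfJ_k^*$, and moving from the expectation to the explicit $m^{1-t^2}$ tail requires Gaussian-Lipschitz concentration on $\bfeta$ viewed through $\bfSigma^{1/2}$. Once that estimate is in place, the remaining ingredients---Wedin, sparsity detection, the core approximation theorem, and the amplitude side problem---are routine given the infrastructure already developed earlier in the paper.
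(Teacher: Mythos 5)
Your proposal is correct and follows essentially the same route as the paper: control $\|T(\bfeta)\|_2$ by a matrix-concentration argument (the paper's \cref{lem:toeplitznorm}, which computes the matrix variance statistic $\nu(T(\bfeta))\leq\tr(\bfSigma)$ exactly as you describe and applies a single matrix Gaussian inequality rather than the two-step Khintchine-plus-Lipschitz argument you sketch, but with the same outcome), pass to $\vartheta$ via \cref{lem:rho} and Wedin, check sparsity detection with \cref{lem:sparsity}, and finish via \cref{thm:maingradientMUSIC} and \cref{lem:amplitudes}. Your reasoning for avoiding a detour through \cref{cor:deterministic} and your identification of the two probability terms in \eqref{eq:nsrsampling} match the paper's structure precisely.
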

	
	Let us examine the content of \cref{cor:stochastic} for a parametric family of examples. Suppose $\Sigma_{k,k}=\sigma^2 (1+|k|)^{2r}$ for some $\sigma>0$ and $r \in \R$. This encapsulates a family of nonstationary Gaussian noise: decaying ($r <0$), stationary ($r =0$), and growing ($r>0$). For some $C_r>0$ that depends only on $r$, we have
	\begin{equation}\label{eq:Sigmacolor}
		\tr(\bfSigma)
		\leq 
		\begin{cases}
			C_r \sigma^2 m^{2r+1} &\text{if }r >-1/2, \\
			C_r \sigma^2 \log(m) &\text{if }r = -1/2, \\
			C_r &\text{if }r < -1/2.
		\end{cases}
	\end{equation}
	For this family of diagonal covariance matrices, we see that for fixed $\sigma$, condition \eqref{eq:nsrsampling} is at least $1-o(m)$ as $m\to\infty$ if and only if $r \in (-\infty,1/2)$. In the subsequent discussion, we ignore all quantities aside from $m$, $r$ and $\sigma$. For the same reasons as the deterministic case, we can ignore dependence on $a_{\min}$ and $a_{\max}/a_{\min}$. 
	
	{\bf Stationary}. Suppose $r=0$ and $\sigma>0$ is arbitrary. The variance can be arbitrarily large, provided that $m$ is large enough to make condition \eqref{eq:nsrsampling} non-vacuous. The frequency and amplitude errors are $C\sigma m^{-3/2} \sqrt{\log(m)}$ and $C\sigma m^{-1/2} \sqrt{\log(m)}$. Aside from $\log(m)$ factors, both rates are optimal in view of the classical Cram\'er-Rao lower bound, see \cite{stoica1989music}. However, this example does not show that Gradient-MUSIC is maximally noise stable because the spectral estimation problem is still solvable for growing noise.
	
	\begin{remark}[Relaxation to i.i.d. subgaussian]\label{rem:subgaussian}
		For the $r=0$ case, the assumption that $\bfeta\sim \calN(\bfzero,\sigma^2 \bfI)$ is not essential and $\bfeta$ can be assumped to be i.i.d. subgaussian with mean zero and variance $\sigma^2$. Indeed, \cite[Theorem 1]{meckes2007spectral} showed that the Toeplitz matrix of subgaussian $\bfeta$ has expected spectral norm $C\sigma \sqrt{m\log(m)}$ where $C$ also depends on the subgaussian constant. Using this result in the proof of \cref{cor:stochastic}, we would get that the average frequency and amplitude errors are $C\sigma m^{-3/2} \sqrt{\log(m)}$ and $C\sigma m^{-1/2} \sqrt{\log(m)}$, identical to the case for $\bfeta\sim \calN(\bfzero,\sigma^2\bfI)$.
	\end{remark}
	
	{\bf Increasing}. Suppose $r \in (0,1/2)$ and $\sigma>0$. The frequency and amplitude errors are $C\sigma m^{r-3/2} \sqrt{\log(m)}$ and $C\sigma m^{r-1/2} \sqrt{\log(m)}$. This example is perhaps surprising because it shows that Gradient-MUSIC can tolerate noise that grows in strength as more samples are collected. Up to logarithmic factors, both Gradient-MUSIC and nonlinear least squares (see \cite{ying2025perturbative} for an analysis) attain the same rate for $r\in [0,1/2)$ and both fail when $r\geq 1/2$. Since NLS is believed to be optimal, this strongly suggests that Gradient-MUSIC is also optimal and that no method can accurately estimate both the frequencies and amplitudes whenever $r \geq 1/2$. 
	
	{\bf Decreasing}. Suppose $r \in (-\infty,0)$ and $\sigma>0$. For $r\in  (-1/2,1/2)$, the frequency and amplitude errors are $C\sigma m^{r-3/2} \sqrt{\log(m)}$ and $C\sigma m^{r-1/2} \sqrt{\log(m)}$. Compared to the other two situations, Gradient-MUSIC provides better accuracy for both frequency and amplitude estimation. The rate improves as $r$ is made more negative, until it saturates at $r = -1/2$, since making $r$ even more negative just results in $\tr(\bfSigma)$ being bounded by a constant, as seen in \eqref{eq:Sigmacolor}.  
	
	\subsection{Minimax rates for deterministic perturbations} 
	\label{sec:minimax}
	
	This section derives lower bounds for optimal rates of approximation under deterministic perturbations quantified in the $\ell^p$ norm. We define the {\it parameter space},
	\begin{equation}
		\label{eq:parameterspace}
		\calP(s):= \big\{(\bfx,\bfa)\colon |\bfx|=|\bfa|=s \andspace |a_j|>0 \forallspace j=1,\dots,s \big\}. 
	\end{equation}
	The requirement $|a_j|>0$ for all $j$ ensures that there is a bijection between $\calP(s)$ and all possible exponential sums with exactly $s$ distinct frequencies. Due to this bijection, it is equivalent to work with $\calP(s)$ instead. More generally, we view data $\tilde\bfy$ as a function from the parameters space and noise to data. Namely, define $\tilde\bfy\colon \calP(s)\times \calN \to \C^{2m-1}$ by the equation,
	\begin{equation}
		\label{eq:information}
		\tilde\bfy:=\tilde\bfy(\bfx,\bfa,\bfeta) = \bfPhi(2m-1,\bfx) \bfa + \bfeta.
	\end{equation}
	As seen in \cref{rem:impossible}, if the perturbation is allowed to be too large, then $s$ cannot be determined from observations. The question we ask is if the perturbation is forced to be small enough and a method also knows $s$, then what is the best rate of approximation?  
	
	Fix any nonempty subsets $\calS\subset\calP(s)$ and $\calN\subset \C^{2m-1}$. For the set of signals, we will consider the signal class \cref{def:signalmodel}. For the noise set $\calN$, we will consider a natural scenario where $\bfeta$ is contained in $B^{2m-1}_p(\epsilon)$, the closed $\ell^p$ ball in $\C^{2m-1}$ of radius $\epsilon$ centered at zero. Consider the set of all possible data generated from the signal and noise sets,
	$$
	\calY:=\{\tilde\bfy(\bfx,\bfa,\bfeta)\colon (\bfx,\bfa)\in\calS, \bfeta\in\calN\}.
	$$
	A function $\psi: \calY\to\calP(s)$ is called a {\it method}. This definition does not require $\psi$ to be computationally tractable or even numerically computable. As is customary, we let 
	$$
	(\hat\bfx,\hat\bfa)=\psi(\tilde\bfy(\bfx,\bfa,\bfeta))
	$$ 
	denote the output of $\psi$, and we simply write $(\hat\bfx,\hat\bfa)$ instead of $\psi$. Just like the other parts of this paper, we have implicitly assumed that $\bfx$ and $\hat\bfx$ have been indexed to best match each other, which induces an ordering on $\hat\bfa$. 
	
	By looking at the worst case signal and noise, the frequency and amplitude recovery rates for a method $\psi$ (which is identified with its output $(\hat\bfx,\hat\bfa)$) are
	\begin{align*}
		X(\psi,\calS,\calN)&:=\sup_{(\bfx,\bfa)\in \calS} \, \sup_{\bfeta\in\calN} \,  \max_{j=1,\dots,s} |x_j-\hat x_j|,  \\
		A(\psi,\calS,\calN)&:=\sup_{(\bfx,\bfa)\in \calS} \, \sup_{\bfeta\in\calN} \,  \max_{j=1,\dots,s} |a_j-\hat a_j|. 
	\end{align*}
	We need to split the frequency and amplitude errors because as we will see, they have non-equivalent rates of approximation. By taking the infimum over all methods defined on $\calS$ and $\calN$, the minimax rates for frequency and amplitude estimation are defined as 
	\begin{align*}
		X_*(\calS,\calN):=\inf_{\psi} X(\psi,\calS,\calN) \andspace
		A_*(\calS,\calN):=\inf_{\psi} A(\psi,\calS,\calN). 
	\end{align*}
	Since these definitions do not preclude the use of intractable algorithms, it is not clear if there is an efficient optimal algorithm that attains these minimax errors.

	We have the following lower bounds. 
	\begin{lemma}
		\label{lem:minimax}
		Let $s\geq 1$, $m\geq 4s$, $0<r_0\leq r_1$, and $\calS:=\calS(8\pi/m,r_0, r_1)\cap \calP(s)$. For any $p\in [1,\infty]$ and $\epsilon\leq 16\pi r_0 m^{1/p}$, 
		\begin{equation*}
			X_*\left(\calS,B_p^{2m-1}(\epsilon)\right)  \geq \frac{\epsilon}{8 r_0 m^{1+1/p}} \andspace
			A_*\left(\calS,B_p^{2m-1}(\epsilon)\right)  \geq \frac{\epsilon}{8 m^{1/p}}.
		\end{equation*}
	\end{lemma}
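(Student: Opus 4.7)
My proof plan is the two-point (Le Cam) method applied separately to the frequency and amplitude estimation problems. The core reduction is as follows: if I can exhibit two parameter pairs $(\bfx^{(1)},\bfa^{(1)}), (\bfx^{(2)},\bfa^{(2)}) \in \calS$ whose clean observations satisfy $\|\bfy^{(1)}-\bfy^{(2)}\|_p \le 2\epsilon$, then setting $\bfeta^{(1)} := (\bfy^{(2)}-\bfy^{(1)})/2$ and $\bfeta^{(2)} := -\bfeta^{(1)}$ places both noise vectors in $B_p^{2m-1}(\epsilon)$ and produces identical noisy data. Any method $\psi$ must return the same estimate for both cases, so by the triangle inequality the worst-case error over the two signals is at least half the discrepancy in the relevant coordinate (i.e., half of $\max_j|x_j^{(1)}-x_j^{(2)}|$ for frequencies and analogously for amplitudes).

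For the frequency bound I would fix all amplitudes to $r_0$ and perturb a single frequency. Place $s-1$ auxiliary frequencies on $\T$ with sufficient spacing (see the final paragraph) and set $x_1^{(1)} = -\delta$, $x_1^{(2)} = +\delta$. The $k$-th entry of $\bfy^{(1)}-\bfy^{(2)}$ becomes $r_0(e^{-ik\delta}-e^{ik\delta})=-2ir_0\sin(k\delta)$, so $|\sin x|\le |x|$ gives
\[
\|\bfy^{(1)}-\bfy^{(2)}\|_p^p \;\le\; (2r_0\delta)^p \sum_{|k|\le m-1}|k|^p \;\le\; 2(2r_0\delta)^p m^{p+1},
\]
hence $\|\bfy^{(1)}-\bfy^{(2)}\|_p \le 4 r_0 \delta \, m^{1+1/p}$. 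Choosing $\delta = \epsilon/(8 r_0 m^{1+1/p})$ then gives $\|\bfy^{(1)}-\bfy^{(2)}\|_p \le \epsilon/2 \le 2\epsilon$, while the admissibility hypothesis $\epsilon \le 16\pi r_0 m^{1/p}$ guarantees $\delta \le 2\pi/m$. Since the two signals differ in exactly one frequency by $2\delta$, the two-point argument yields $X_* \ge \delta = \epsilon/(8 r_0 m^{1+1/p})$.

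For the amplitude bound I would keep the same frequency set (any valid $\bfx\subset\T$) in both signals and perturb only the phase of one amplitude: $a_1^{(1)} = r_0$ and $a_1^{(2)} = r_0 e^{i\theta}$, with $\theta\in (0,\pi]$ chosen so that $r_0|1-e^{i\theta}| = 2\epsilon/(2m-1)^{1/p}$ whenever this quantity is at most $2$ (otherwise take $\theta=\pi$, which only improves the amplitude gap). Every coordinate of $\bfy^{(1)}-\bfy^{(2)}$ has the same magnitude, so
\[
\|\bfy^{(1)}-\bfy^{(2)}\|_p \;=\; r_0|1-e^{i\theta}|(2m-1)^{1/p} \;=\; 2\epsilon.
\]
The amplitude gap equals $|a_1^{(1)}-a_1^{(2)}| = 2\epsilon/(2m-1)^{1/p}$, so the two-point inequality gives $A_* \ge \epsilon/(2m-1)^{1/p} \ge \epsilon/(2m)^{1/p} \ge \epsilon/(8 m^{1/p})$ for $p\ge 1$.

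The main technical obstacle is arranging the $s-1$ auxiliary frequencies in the frequency construction so that both $\bfx^{(1)}$ and $\bfx^{(2)}$ satisfy $\Delta \ge 8\pi/m$ under the borderline hypothesis $m\ge 4s$, which just barely allows $s$ points on $\T$ at the required separation. This will require a non-equispaced placement that preserves a buffer of order $2\pi/m$ around $x_1$ on both sides, so that shifting $x_1$ by $\pm\delta$ with $\delta\le 2\pi/m$ does not violate the separation requirement. The conservative constants $8$ (in the lower bound) and $16\pi$ (in the admissibility condition) are designed to absorb this geometric slack. The amplitude bound sidesteps the issue entirely because the frequencies are identical across the two signals there.
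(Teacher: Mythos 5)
Your proof follows the same two-point (Le Cam) reduction the paper uses, and the constants check out. The differences are worth noting. The paper builds a \emph{single} pair $(\bfx,\bfa)$, $(\bfx',\bfa')$ that differ simultaneously in one frequency (by $\delta=\epsilon/(4r_0m^{1+1/p})$) and one amplitude (bumping $a_1$ to $r_0+\epsilon/(4m^{1/p})$), absorbs the entire clean-data discrepancy into a single noise vector $\bfeta$ (with $\bfeta'=\bfzero$), and reads off both lower bounds from that one pair; you run two separate constructions and split the discrepancy symmetrically, which is equally valid. More substantively, your \emph{phase} perturbation for the amplitude bound ($a_1\mapsto r_0e^{i\theta}$, keeping $|a_1|=r_0$) is actually cleaner than the paper's magnitude bump: the paper's $(\bfx',\bfa')$ is in $\calS$ only if $r_1\geq r_0+\epsilon/(4m^{1/p})$, a side condition the lemma's hypotheses do not guarantee, whereas your choice stays inside the amplitude annulus automatically. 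So this is a genuine improvement in robustness of the construction.

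The geometric obstacle you flag is real but shared with the paper. Moving $x_1$ by $\pm\delta$ (or by $+\delta$ alone as the paper does) while keeping both sets in $\calS$ requires the nearest auxiliary frequency to sit at distance $\geq 8\pi/m+\delta$ from the origin, and with $\delta$ near its admissible cap $\approx 2\pi/m$ this forces $m\geq 4s+2$ for $s\geq 2$, while the hypothesis gives only $m\geq 4s$. The paper's proof does not address this either, so it is a boundary artifact of the lemma's statement rather than a deficiency specific to your argument; for the paper's downstream use of the lemma ($\epsilon\lesssim r_0m^{1/p}/400$) the resulting $\delta$ is tiny and the slack exists. The analogous issue appears in your $\theta=\pi$ fallback, which certifies only $A_*\geq r_0$, short of $\epsilon/(8m^{1/p})$ once $\epsilon>8r_0m^{1/p}$; but the paper's construction breaks in a comparable regime (whenever $r_1-r_0<\epsilon/(4m^{1/p})$), so neither proof covers the entire advertised range $\epsilon\leq 16\pi r_0m^{1/p}$ at the stated generality.
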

	
	The proof of \cref{lem:minimax} is located in \cref{proof:minimax}. Note the assumption $m\geq 4s$ in this lemma is necessary because $\calS(8\pi/m,r_0,r_1)\cap \calP(s)=\emptyset$ otherwise.  
	
	There are clearly some similarities between the lower bounds in \cref{lem:minimax} and the performance guarantee for Gradient-MUSIC in \cref{cor:deterministic}. We say a method $(\hat\bfx,\hat\bfa)$ is {\it optimal} on $\calS\subset\calP(s)$ and $B_p^{2m-1}(\epsilon)$ if there are constants $C_1,C_2>0$ such that for all sufficiently large $m$ and small $\epsilon$, 
	\begin{align*}
		\sup_{(\bfx,\bfa)\in \calS} \, \sup_{\bfeta\in B^{2m-1}_p(\epsilon)} \,  \max_{j=1,\dots,s} |x_j-\hat x_j| 
		&\leq C_1 \, X_*\left(\calS,B^{2m-1}_p(\epsilon)\right), \\
		\sup_{(\bfx,\bfa)\in \calS} \, \sup_{\bfeta\in B^{2m-1}_p(\epsilon)} \,  \max_{j=1,\dots,s} |a_j-\hat a_j| 
		&\leq C_2 \, A_*\left(\calS,B^{2m-1}_p(\epsilon)\right). 
	\end{align*}
	Combining \cref{lem:minimax} and \cref{cor:deterministic}, we obtain the following corollary. 
	
	\begin{corollary}
		\label{cor:minimaxp}
		Let $s\geq 1$, $m\geq \max\{4s,100\}$, $0<r_0\leq r_1$, and $\calS:=\calS(8\pi/m,r_0, 10r_0)\cap \calP(s)$. For any $p\in [1,\infty]$ and $\epsilon\leq r_0 m^{1/p}/400$, 
		\begin{equation*}
			X_*\left(\calS,B_p^{2m-1}(\epsilon)\right) \asymp \frac{\epsilon}{r_0 m^{1+1/p}} \andspace
			A_*\left(\calS,B_p^{2m-1}(\epsilon)\right) \asymp \frac{\epsilon}{m^{1/p}}.
		\end{equation*}
		Moreover, Gradient-MUSIC is optimal on $\calS$ and $B_p^{2m-1}(\epsilon)$.  
	\end{corollary}
	
	\section{General theory and framework}
	\label{sec:main}
	
	\subsection{An abstract perspective of subspace methods}
	\label{sec:abstract}
	
	Let $\bbU^{m\times s}$ be the Grassmannian, the set of all complex subspaces of dimension $s$ in $\C^m$. We often identify $\bfW\in \bbU^{m\times s}$ with a matrix $\bfW$ that has orthonormal columns, and vice versa. Although not immediately clear in its present form, spectral estimation is closely related to the problem of finding special subspaces in $\bbU^{m\times s}$ that we define below. 
	
	\begin{definition}\label{def:Fouriersubspace}
		We say $\bfU$ is a {\it Fourier subspace} in $\C^m$ if for some $\bfx\subset\T$, we have $\bfU=\range(\bfPhi(m,\bfx))$ . In this case, we say $\bfU$ is generated by $\bfx$ and $\dim(\bfU)=|\bfx|$. We let $\bbF^{m\times s}$ denote the collection of all $s$ dimensional Fourier subspaces in $\C^m$. 
	\end{definition}
	
	The starting point is, given a subspace $\bfW\in \bbU^{m\times s}$, how do we tell if $\bfW$ is actually a Fourier subspace? In an abstract sense, the main idea of classical MUSIC is to test $\bfW$ against a special family of vectors and see how $\bfW$ reacts. Recall \cref{def:nscorrelation}, which importantly defines the landscape function associated to {\it any} arbitrary subspace. Next, for clarity, we restate \cref{prop:qroots} in a more general form. 
	
	\begin{proposition}
		\label{lem:qroots2}
		Let $m>s$ and $q:=q_{\bfU}$ be the landscape function associated with a Fourier subspace $\bfU\in \bbU^{m\times s}$. We have $q(t)=0$ if and only if $\bfphi(t)\in \bfU$ if and only if $t\in \bfx$.
	\end{proposition}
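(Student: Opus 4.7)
The plan is to prove the two equivalences separately, both of which reduce to standard properties of orthogonal projections and Vandermonde-type matrices. Let me outline each step.

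For the first equivalence $q(t)=0 \iff \bfphi(t)\in\bfU$, I would use the fact that $\bfU\bfU^*$ is the orthogonal projection $P_\bfU$ onto $\bfU$. Since $\bfU^*\bfU=\bfI_s$, one checks that $\|\bfU^*\bfphi(t)\|_2^2 = \|\bfU\bfU^*\bfphi(t)\|_2^2 = \|P_\bfU \bfphi(t)\|_2^2$. Because $\bfphi(t)$ is unit norm by construction, $q(t) = \|\bfphi(t)\|_2^2 - \|P_\bfU\bfphi(t)\|_2^2 = \|(\bfI-P_\bfU)\bfphi(t)\|_2^2$ is the squared distance from $\bfphi(t)$ to $\bfU$, which vanishes precisely when $\bfphi(t)\in\bfU$.

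For the second equivalence $\bfphi(t)\in\bfU \iff t\in\bfx$, the forward direction $(\Leftarrow)$ is immediate: if $t=x_j$ then $\bfphi(t)$ equals $m^{-1/2}$ times the $j$-th column of $\bfPhi(m,\bfx)$, which lies in $\range(\bfPhi(m,\bfx))=\bfU$. The nontrivial direction is $(\Rightarrow)$. I would argue by contradiction: suppose $\bfphi(t)\in\bfU$ but $t\notin\bfx$. Then $\bfphi(t)=\bfPhi(m,\bfx)\bfc$ for some $\bfc\in\C^s$, which means the $m\times(s{+}1)$ matrix $\bfPhi(m,\bfx\cup\{t\})$ admits the nonzero null-space vector $(-1,c_1,\dots,c_s)^\top$ (after the appropriate scaling).

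The key step is then to observe that $\bfPhi(m,\bfx\cup\{t\})$ has full column rank $s+1$ whenever $m>s$ and the $s+1$ nodes $\bfx\cup\{t\}$ are distinct, yielding the contradiction. This is standard Vandermonde: after multiplying column $j$ by $e^{i(m-1)y_j/2}$ to absorb the symmetric index shift in $I(m)$, the matrix becomes $[z_j^{k'}]_{k'=0,\dots,m-1}$ with the distinct roots of unity $z_j=e^{iy_j}$, and picking any $s+1$ rows yields a classical nonsingular Vandermonde determinant. No real obstacle arises; the only mild care needed is to handle the symmetric indexing of $I(m)$ when invoking the Vandermonde rank, which is a cosmetic diagonal scaling. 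Note that \cref{prop:AB19} could also be cited to give $\sigma_{\min}(\bfPhi(m,\bfx\cup\{t\}))>0$ under a separation assumption, but here we only need the generic Vandermonde fact since no separation is assumed in the proposition.
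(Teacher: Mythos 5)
Your proof is correct and complete. Note that the paper does not actually supply a proof of this proposition: it states Proposition~\ref{prop:qroots} with a citation to prior MUSIC literature (\cite{schmidt1986multiple,liao2016music}) and then simply restates it as Proposition~\ref{lem:qroots2}, so there is no in-paper argument to compare against. Your two-step reduction is the natural one. The first equivalence follows exactly as you say: $\|\bfU^*\bfphi(t)\|_2^2 = \|\bfU\bfU^*\bfphi(t)\|_2^2$ because $\bfU^*\bfU = \bfI_s$, and since $\|\bfphi(t)\|_2=1$ the landscape function is the squared distance $\|(\bfI-\bfU\bfU^*)\bfphi(t)\|_2^2$ to the subspace. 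The second equivalence correctly hinges on $\bfPhi(m,\bfx\cup\{t\})$ having full column rank $s+1$ for $m>s$, and your diagonal-rescaling trick to convert the symmetric index set $I(m)$ into $\{0,\dots,m-1\}$ is exactly the right way to reduce this to a classical Vandermonde determinant with $s+1$ distinct points $e^{iy_j}$ on the unit circle. You are also right that \cref{prop:AB19} is unnecessary here since no quantitative lower bound on $\sigma_{\min}$ is needed, only nonsingularity. One tiny clarification you might add for the reader: the rescaling factor $e^{i(m-1)y_j/2}$ is a unimodular phase even when $m$ is even (so $(m-1)/2$ is a half-integer), hence the column-rank argument goes through without any parity caveat.
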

	
	Provided that $\bfU$ is a Fourier subspace, the landscape function $q_\bfU$ has exactly $s$ unique roots $\bfx$ and each one is a double root. The roots $\bfx$ uniquely specifies the subspace $\bfU$. If $\bfW\in \bbU^{m\times s}$ is just an arbitrary subspace, the conclusions of \cref{lem:qroots2} do not hold for $q_{\bfW}$. Let us briefly summarize some basic properties of landscape functions. Recall that $\calT_m$ denotes the space of trigonometric polynomials of degree at most $m-1$.
	
	\begin{lemma}
		\label{lem:landscape1}
		Let $m>s$ and $q:=q_{\bfW}$ be the landscape function associated with a subspace $\bfW\in \bbU^{m\times s}$. For all $t\in \T$, it holds that $0\leq q(t)\leq 1$. Furthermore, $q$ is not identically zero, $q\in \calT_m$, and admits a polynomial sum-of-squares representation. For each integer $\ell\geq 0$, we have 
		\begin{equation}
			\label{eq:qnorm1}
			\|q^{(\ell)} \|_{L^\infty(\T)} \leq m^\ell. 
		\end{equation}
	\end{lemma}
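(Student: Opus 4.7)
The plan is to unpack the definition $q(t) = 1 - \|\bfW^* \bfphi(t)\|_2^2$ and read off each claimed property from elementary linear algebra plus Bernstein's inequality. First I would note that $\bfphi(t)$ is a unit vector for every $t$, since $\|\bfphi(t)\|_2^2 = \frac{1}{m}\sum_{k \in I(m)} |e^{ikt}|^2 = 1$. Because $\bfW$ has orthonormal columns, $\bfW\bfW^*$ is the orthogonal projection onto the range of $\bfW$, so $0 \le \|\bfW^*\bfphi(t)\|_2^2 = \bfphi(t)^*\bfW\bfW^*\bfphi(t) \le \|\bfphi(t)\|_2^2 = 1$, giving $0 \le q(t) \le 1$.

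Next I would argue that $q$ is not identically zero. If it were, then $\bfphi(t) \in \bfW$ for every $t \in \T$. However, the vectors $\{\bfphi(t)\}_{t \in \T}$ span all of $\C^m$ (take any $m$ distinct values of $t$ and apply \cref{prop:AB19}, or simply note that a Vandermonde/Fourier matrix built from distinct nodes has full rank), so this would force $\bfW = \C^m$, contradicting $\dim \bfW = s < m$. To show $q \in \calT_m$, I would write $\|\bfW^*\bfphi(t)\|_2^2 = \bfphi(t)^* \bfM \bfphi(t)$ with $\bfM = \bfW\bfW^*$, and expand
\[
\bfphi(t)^* \bfM \bfphi(t) = \frac{1}{m}\sum_{j,k \in I(m)} M_{jk}\, e^{i(k-j)t}.
\]
Since $k - j$ ranges over $\{-(m-1), \ldots, m-1\}$, this is a trigonometric polynomial of degree at most $m-1$, and hence so is $q$.

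For the sum-of-squares representation I would use the orthogonal complement $\bfW_\perp \in \bbU^{m \times (m-s)}$. Since $\bfW\bfW^* + \bfW_\perp \bfW_\perp^* = \bfI$ and $\|\bfphi(t)\|_2^2 = 1$,
\[
q(t) = \|\bfphi(t)\|_2^2 - \|\bfW^*\bfphi(t)\|_2^2 = \|\bfW_\perp^* \bfphi(t)\|_2^2 = \sum_{j=1}^{m-s} |\bfw_j^* \bfphi(t)|^2,
\]
where $\bfw_1, \dots, \bfw_{m-s}$ are the columns of $\bfW_\perp$. Each function $f_j(t) := \bfw_j^* \bfphi(t)$ is a trigonometric polynomial in $\calT_m$ (in fact of degree at most $(m-1)/2$), so this is a valid polynomial sum-of-squares representation.

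Finally, for the derivative bound I would invoke the classical Bernstein inequality: for any real trigonometric polynomial $f$ of degree at most $n$, $\|f^{(\ell)}\|_{L^\infty(\T)} \le n^\ell \|f\|_{L^\infty(\T)}$. Applying this with $n = m-1$ and using $\|q\|_{L^\infty(\T)} \le 1$ from the first part gives $\|q^{(\ell)}\|_{L^\infty(\T)} \le (m-1)^\ell \le m^\ell$. I do not anticipate any real obstacle here; each item is a short consequence of the definition. The only spot that requires a pre-existing external fact is the derivative estimate via Bernstein's inequality, but this is standard and well-suited to a trigonometric polynomial known to be bounded by $1$.
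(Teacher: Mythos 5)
Your proof is correct and follows essentially the same route as the paper: the $[0,1]$ bound from $\bfW\bfW^*$ being an orthogonal projection and $\|\bfphi(t)\|_2=1$, the sum-of-squares representation $q(t)=\|\bfW_\perp^*\bfphi(t)\|_2^2$, degree at most $m-1$, and Bernstein's inequality for the derivative bound. The only cosmetic difference is that you establish $q\in\calT_m$ directly from the expansion of $\bfphi(t)^*\bfW\bfW^*\bfphi(t)$ and give a fuller justification that $q$ is not identically zero, whereas the paper reads these off the sum-of-squares representation; both are equally valid.
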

	
	The proof of this lemma is in \cref{proof:landscape1}. Let us study the landscape functions in greater detail. Define the sets 
	$$
	\calF_{m,s}
	:=\left\{q_{\bfU}\colon \bfU\in \bbF^{m\times s} \right\} \andspace
	\calU_{m,s}
	:=\left\{q_{\bfW}\colon \bfW\in \bbU^{m\times s} \right\}. 
	$$
	Of course, we have $\calF_{m,s}\subset \calU_{m,s} \subset \calT_m$. The inclusion $\calF_{m,s}\subset \calU_{m,s}$ is strict because $\bbF^{m\times s}$ is a strict subset of $\bbU^{m\times s}$ when $m>s$. The inclusion $\calU_{m,s} \subset \calT_m$ is also strict, since \cref{lem:landscape1} tells us that each $q_\bfW$ is necessarily a polynomial sum-of-squares, and not every trigonometric polynomial can be written this way. 
	
	More abstractly we can think of $q$ as the map $q \colon\bbU^{m\times s} \to \calU_{m,s}$ by $\bfW\mapsto q_{\bfW}$. We present a consequence, which is proved \cref{proof:bijection}. 
	
	\begin{lemma}
		\label{lem:bijection}
		Let $m>s$. There is a bijection between the following three families. 
		\begin{enumerate}[(a)]
			\item 
			All subsets of $\T$ of cardinality $s$.
			\item 
			$\bbF^{m\times s}$, the collection of Fourier subspaces in $\C^m$ of dimension $s$.
			\item 
			$\calF_{m,s}$, the set of landscape functions associated with $\bbF^{m\times s}$.
		\end{enumerate}
	\end{lemma}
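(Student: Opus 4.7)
The plan is to prove the two bijections (a)$\leftrightarrow$(b) and (b)$\leftrightarrow$(c) independently, and then compose. The essential ingredient in both cases is the root-characterization of landscape functions on Fourier subspaces given by \cref{lem:qroots2}, which lets us invert each map explicitly.

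For (a)$\leftrightarrow$(b), I would define $\Psi\colon\{\bfx\subset\T\colon |\bfx|=s\}\to \bbF^{m\times s}$ by $\Psi(\bfx)=\range(\bfPhi(m,\bfx))$. Well-definedness requires that $\dim\Psi(\bfx)=s$; since $m>s$ and the $s$ points of $\bfx$ are distinct, any $s$ rows of $\bfPhi(m,\bfx)$ form an invertible Vandermonde matrix in the variables $e^{ix_j}$, so $\bfPhi(m,\bfx)$ has full column rank $s$. Surjectivity is immediate from \cref{def:Fouriersubspace}: every $\bfU\in \bbF^{m\times s}$ is by definition $\range(\bfPhi(m,\bfx))$ for some $\bfx$, and the equation $\dim(\bfU)=|\bfx|$ forces $|\bfx|=s$. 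Injectivity is the one piece with content: if $\Psi(\bfx)=\Psi(\bfx')=\bfU$, then \cref{lem:qroots2} yields $\{t\in \T\colon q_{\bfU}(t)=0\}=\bfx=\bfx'$, so $\Psi$ is a bijection whose inverse is $\bfU\mapsto \{t\in\T\colon q_{\bfU}(t)=0\}$.

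For (b)$\leftrightarrow$(c), define $\Phi\colon\bbF^{m\times s}\to \calF_{m,s}$ by $\Phi(\bfU)=q_{\bfU}$. First I would remark that $q_{\bfW}(t)=1-\bfphi(t)^*\bfW\bfW^*\bfphi(t)$ depends only on the orthogonal projector $\bfW\bfW^*$, hence only on the subspace spanned by $\bfW$, so $\Phi$ is well-defined on subspaces. Surjectivity is built into the definition of $\calF_{m,s}$. For injectivity, suppose $q_{\bfU}=q_{\bfU'}$ for Fourier subspaces $\bfU,\bfU'$ generated by frequency sets $\bfx,\bfx'$; applying \cref{lem:qroots2} to each side identifies their common zero set with both $\bfx$ and $\bfx'$, so $\bfx=\bfx'$, and then injectivity of $\Psi$ from the previous paragraph gives $\bfU=\bfU'$.

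There is no genuine obstacle: the whole argument is structural, leveraging that \cref{lem:qroots2} supplies an explicit left inverse to both $\Psi$ and $\Phi\circ\Psi$. The only subtlety worth flagging is the well-definedness verifications (full column rank of $\bfPhi(m,\bfx)$ from the Vandermonde structure, and basis-invariance of $q_{\bfW}$), which are routine but needed to make each map a legitimate set-level function before any bijectivity claim is possible. Composing the two bijections yields the one-to-one correspondence between frequency sets, Fourier subspaces, and landscape functions asserted in the lemma.
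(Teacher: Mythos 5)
Your proposal is correct and follows essentially the same route as the paper: it decomposes the claim into the two correspondences (a)$\leftrightarrow$(b) and (b)$\leftrightarrow$(c), verifies surjectivity by definition, and establishes injectivity in both cases by invoking \cref{lem:qroots2} to recover $\bfx$ as the zero set of the landscape function. The extra well-definedness checks you include (full column rank of $\bfPhi(m,\bfx)$ via the Vandermonde structure and basis-invariance of $q_{\bfW}$) are sound but implicit in the paper's setup, so this is the same argument stated slightly more explicitly.
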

	
	As a consequence of \cref{lem:bijection}, we write $\bfx\simeq_m \bfU\subset\C^m$ to reference this bijection between the set $\bfx$ and its generated Fourier subspace $\bfU$. We include $m$ in the subscript of $\simeq_m$ since this is only a bijection once we have specified $m$, the number of rows of $\bfU$. 
	
	Recall the definition of the sine-theta distance in \eqref{eq:sintheta}. The following definition now quantifies what we mean by a good approximation $\tilde\bfU$ to the Fourier subspace $\bfU$ generated by a particular $\bfx$. 
	
	\begin{definition}
		\label{def:subspaceerror}
		Given a Fourier subspace $\bfU\in \bbF^{m\times s}$ identified with $\bfx\subset \T$ and an arbitrary $\tilde\bfU\in \bbU^{m\times s}$, we call $\vartheta:=\vartheta(\bfx,\tilde\bfU):=\vartheta(\bfU,\tilde\bfU)$ the (induced) {\it subspace error} between $\bfx$ and $\tilde\bfU$. 
	\end{definition}

	We briefly summarize our new perspective on classical MUSIC. We have three equivalent ways of viewing this algorithm. The original context of finding $\bfx$. This is equivalent to finding its unique Fourier subspace $\bfU$ that is generated by $\bfx$. On the other hand, we can instead study $q_\bfU\in \calF_{m,s}$. 
	
	From this point of view, we see that the main idea behind classical MUSIC is to study the properties of $q_{\tilde \bfU}$, where $\tilde\bfU$ is a perturbation of $\bfU\simeq_m \bfx$, in order to gain some insight into $\bfx$. However, to carry out this plan, we need to study the stability (i.e., metric) properties of the map $q\colon \bbU^{m\times s} \to \calU_{m,s}$. Naturally, $\bbU^{m\times s}$ is a metric space when equipped with the sine-theta distance. We can equip $\calT_m$ with the $C^k(\T)$ norm for any $k\geq 0$. We have the following lemma, which is proved in \cref{proof:landscape2}.  
	
	\begin{lemma}
		\label{lem:landscape2}
		Let $m>s$. Let $q_{\bfV}$ and $q_{\bfW}$ be the landscape function associated with $\bfV,\bfW \in \bbU^{m\times s}$ and $\vartheta:=\vartheta(\bfV,\bfW)$. For each integer $\ell\geq 0$, we have
		\begin{equation}
			\label{eq:qnorm2}
			\big\|q_{\bfV}^{(\ell)} -q_{\bfW}^{(\ell)}\big\|_{L^\infty(\T)}
			\leq \vartheta m^\ell. 
		\end{equation}
	\end{lemma}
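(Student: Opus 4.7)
The plan is to reduce everything to a single rank bound on the difference of projectors and then differentiate entrywise. The starting point is the algebraic identity
\begin{equation*}
q_{\bfV}(t) - q_{\bfW}(t) = \|\bfW^*\bfphi(t)\|_2^2 - \|\bfV^*\bfphi(t)\|_2^2 = \bfphi(t)^*\bfM\bfphi(t), \qquad \bfM := \bfW\bfW^* - \bfV\bfV^*.
\end{equation*}
By the definition of the sine--theta distance in \eqref{eq:sintheta}, $\|\bfM\|_2 = \vartheta(\bfV,\bfW) = \vartheta$. This reduces the problem to controlling derivatives of the quadratic form $t \mapsto \bfphi(t)^*\bfM\bfphi(t)$.

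Next I would apply the Leibniz rule entrywise. Since the $k$-th coordinate of $\bfphi(t)$ is $e^{ikt}/\sqrt{m}$ for $k \in I(m)$, the $j$-th derivative is $(\bfphi^{(j)}(t))_k = (ik)^j e^{ikt}/\sqrt{m}$, and the same calculation on the conjugate row vector yields $(\bfphi^{(j)})^*(t)$. Because $|k| \leq (m-1)/2 \leq m/2$ on $I(m)$, we get the pointwise bound
\begin{equation*}
\|\bfphi^{(j)}(t)\|_2^2 = \frac{1}{m}\sum_{k \in I(m)} k^{2j} \leq \left(\frac{m}{2}\right)^{2j},
\end{equation*}
so $\|\bfphi^{(j)}(t)\|_2 \leq (m/2)^j$ uniformly in $t$. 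Applying Leibniz to the product $\bfphi^* \bfM \bfphi$ yields
\begin{equation*}
\bigl(q_{\bfV} - q_{\bfW}\bigr)^{(\ell)}(t) = \sum_{j+k=\ell} \binom{\ell}{j} \bigl(\bfphi^{(j)}(t)\bigr)^* \bfM \, \bfphi^{(k)}(t).
\end{equation*}

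To finish, I would bound each summand by Cauchy--Schwarz using $\|\bfM\|_2 = \vartheta$, giving $\|(\bfphi^{(j)})^*\bfM\bfphi^{(k)}\|_\infty \leq \vartheta (m/2)^{j+k} = \vartheta (m/2)^\ell$, and then sum using $\sum_{j+k=\ell}\binom{\ell}{j} = 2^\ell$ to obtain the claimed bound $\vartheta\, m^\ell$. There is no real obstacle here; the only mild care needed is in the bookkeeping of complex conjugation when differentiating $\bfphi^*$ and in confirming that $\|\bfW\bfW^* - \bfV\bfV^*\|_2$ is exactly the sine--theta distance defined in \eqref{eq:sintheta}, both of which are standard. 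Note that this proof is essentially the same argument as for \cref{lem:landscape1}, except $\bfW\bfW^*$ is replaced by $\bfW\bfW^* - \bfV\bfV^*$; indeed, taking $\bfV = 0$ (formally, replacing $\bfW\bfW^*$ with itself, whose operator norm is $1$) recovers the $m^\ell$ bound of \eqref{eq:qnorm1}.
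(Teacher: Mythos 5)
Your proof is correct, and it takes a genuinely different route from the paper. You reduce to the quadratic form $\bfphi^*\bfM\bfphi$ with $\bfM=\bfW\bfW^*-\bfV\bfV^*$ and $\|\bfM\|_2=\vartheta$ (same starting point), but where the paper then observes that $q_\bfV-q_\bfW$ is a trigonometric polynomial of degree at most $m-1$ (via \cref{lem:landscape1}) and invokes Bernstein's inequality for trigonometric polynomials to get $\|(q_\bfV-q_\bfW)^{(\ell)}\|_{L^\infty}\leq(m-1)^\ell\|q_\bfV-q_\bfW\|_{L^\infty}\leq\vartheta m^\ell$, you instead differentiate the quadratic form directly via Leibniz, bound $\|\bfphi^{(j)}\|_2\leq(m/2)^j$ from the coordinate formula, apply Cauchy--Schwarz through $\bfM$, and sum the binomial coefficients to get $\vartheta(m/2)^\ell\cdot 2^\ell=\vartheta m^\ell$. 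The paper's argument is shorter once the degree fact is in hand and dispatches all $\ell$ at once from the $\ell=0$ case; your argument is self-contained and avoids any appeal to Bernstein, with the factor of $2^\ell$ from the Leibniz sum exactly canceling the factor of $(1/2)^\ell$ from the sharper pointwise bound on $\|\bfphi^{(j)}\|_2$. Both land on the same constant. Your closing observation that the same computation with $\bfM=\bfW\bfW^*$ recovers \eqref{eq:qnorm1} (for $\ell\geq1$, where the additive constant in $q_\bfW$ drops out under differentiation) is a nice unification that the paper does not make explicit.
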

	
	Thus, the map $q\colon \bbU^{m\times s} \to \calU_{m,s}$ is a continuous map from one metric space to another. Indeed, we see that 
	$$
	\|q_{\bfV}-q_{\bfW}\|_{C^k(\T)}
	\leq \vartheta(\bfV,\bfW) (1+m+\cdots+m^k). 
	$$ 
	Figure \ref{fig:qmap} shows this thought process as a diagram.

	\begin{figure}[ht]
		\centering
		\begin{tikzpicture}[scale=0.5]
			\filldraw[fill=black!10!white, draw=black] (0,0) rectangle (12,4);
			\draw[thick] (5,2) ellipse (3 and 1);
			\draw (12,4) node[anchor=north east]{$\mathbb{U}^{m\times s}$}; 
			\draw (8,2) node[anchor=west]{$\mathbb{F}^{m\times s}$}; 
			\draw[thick,->] (14,2) -- (18,2); 
			\draw (16,2) node[anchor=south]{$q$};
			\filldraw[fill=black!10!white, draw=black] (20,0) rectangle (32,4);
			\draw[thick] (25,2) ellipse (3 and 1);
			\draw (32,4) node[anchor=north east]{$\calU_{m,s}$}; 
			\filldraw[fill=black] (2,2) circle (0.1) node[anchor=east]{$\bfU$};
			\filldraw[fill=black] (2.5,3) circle (0.1) node[anchor=east]{$\tilde\bfU$};
			\filldraw[fill=black] (22,2) circle (0.1) node[anchor=east]{$q_{\bfU}$};
			\filldraw[fill=black] (22.5,3) circle (0.1) node[anchor=east]{$q_{\tilde\bfU}$};
			\draw (28,2) node[anchor=west]{$\calF_{m,s}$};
		\end{tikzpicture}
		\caption{Diagram of the map $\bbF^{m\times s}\subset\bbU^{m\times}$ to $\calF_{m,s}\subset \calU_{m,s}$. Arbitrary $\bfU\in \bbF^{m\times s}$ and $\tilde\bfU\in \bbU^{m\times s}$ are mapped to their associated landscape functions $q_{\bfU}\in \calF_{m,s}$ and $q_{\tilde\bfU}\in \calU_{m,s}$. }
		\label{fig:qmap}
	\end{figure}
	
	In both classical MUSIC and the full pipeline form of Gradient-MUSIC (\cref{alg:gradMUSIC}), the subspace $\tilde \bfU$ is the Toeplitz estimator, see \cref{def:Toeplitzestimator}. 
	
	\subsection{Sparsity detection and the Toeplitz estimator}
	\label{sec:sparsity}
	
	We turn our attention to the the estimation of $\tilde\bfU$ and $|\bfx|$ from just the data $\tilde \bfy$, which were assumed to be known in both \cref{alg:MUSIC,alg:gradMUSIC}. If $\tilde \bfT$ is a sufficiently small perturbation of the low rank matrix $\bfT$, a natural strategy is to count how many singular values of $\tilde \bfT$ exceed a certain threshold. We introduce the following definition.
	
	\begin{definition}
		\label{def:subspaceerror2}
		For any $(\bfx,\bfa)$ and $\bfeta$, let $s=|\bfx|$ and define the {\it Toeplitz subspace error} 
		\begin{equation*}
			\rho
			:=\rho(\bfx,\bfa,\bfeta)
			:=\frac{2\|T(\bfeta)\|_2}{\sigma_s(T(\bfy))}.
		\end{equation*}
	\end{definition}
	
	A few elementary properties of $\rho$ are summarized in the following lemma, which is proved in \cref{proof:nsrrelationship2}. 
	
	\begin{lemma}
		\label{lem:rho}
		Let $m>s$. For any $(\bfx,\bfa)$ and $\bfeta$, set  $\rho:=\rho(\bfx,\bfa,\bfeta)$. If $\rho\leq 1-1/\sqrt 2$, then $\tilde \bfT$ has a unique leading $s$ dimensional left singular space and the Toeplitz estimator $\tilde\bfU$ for $\bfx\simeq_m\bfU$ is well defined. Additionally, $\vartheta(\bfU,\tilde\bfU) \leq \rho$.  
	\end{lemma}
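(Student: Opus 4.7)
The strategy is to combine Weyl's inequality for singular values with a one-sided Wedin-style $\sin\Theta$ calculation, both applied to the identity $\tilde\bfT - \bfT = T(\bfeta)$. Two structural facts drive the argument. First, from the factorization \eqref{eq:toeplitzfactorization} together with $m > s$ and \cref{prop:AB19}, the clean Toeplitz matrix $\bfT := T(\bfy)$ has rank exactly $s$, so $\bfU \simeq_m \bfx$ coincides with $\Range(\bfT)$, $\sigma_{s+1}(\bfT) = 0$, and $\bfU_\perp^* \bfT = \bfzero$. Second, Weyl's inequality gives $\sigma_s(\tilde\bfT) \geq \sigma_s(\bfT) - \|T(\bfeta)\|_2$ and $\sigma_{s+1}(\tilde\bfT) \leq \|T(\bfeta)\|_2$, so the singular-value gap of $\tilde\bfT$ at level $s$ is at least $\sigma_s(\bfT)(1 - \rho)$, which is strictly positive under $\rho \leq 1 - 1/\sqrt{2} < 1$. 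This proves the first claim: $\tilde\bfT$ has a unique leading $s$-dimensional left singular space, so $\tilde\bfU$ is well defined.

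For the sine-theta bound I would exploit $\bfU_\perp^* \bfT = \bfzero$ directly. Let $\tilde\bfV \in \bbU^{m \times s}$ be the leading right singular vectors of $\tilde\bfT$ and $\tilde\bfSigma := \diag(\sigma_1(\tilde\bfT), \ldots, \sigma_s(\tilde\bfT))$, so $\tilde\bfT \tilde\bfV = \tilde\bfU \tilde\bfSigma$. Then
$$
\bfU_\perp^* \tilde\bfU \, \tilde\bfSigma
= \bfU_\perp^* \tilde\bfT \, \tilde\bfV
= \bfU_\perp^* (\bfT + T(\bfeta))\, \tilde\bfV
= \bfU_\perp^* T(\bfeta) \, \tilde\bfV,
$$
and, because Step 1 ensures $\tilde\bfSigma$ is invertible with $\|\tilde\bfSigma^{-1}\|_2 = 1/\sigma_s(\tilde\bfT)$,
$$
\vartheta(\bfU, \tilde\bfU)
= \|\bfU_\perp^* \tilde\bfU\|_2
\leq \frac{\|T(\bfeta)\|_2}{\sigma_s(\tilde\bfT)}
\leq \frac{\|T(\bfeta)\|_2}{\sigma_s(\bfT) - \|T(\bfeta)\|_2}.
$$
Setting $t := \|T(\bfeta)\|_2/\sigma_s(\bfT) = \rho/2$, the desired inequality $\vartheta \leq \rho$ reduces to the elementary estimate $t/(1-t) \leq 2t$, which is equivalent to $t \leq 1/2$, i.e.\ $\rho \leq 1$, and is implied by the hypothesis $\rho \leq 1 - 1/\sqrt{2}$.

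I do not anticipate a serious obstacle; the argument is essentially a bookkeeping exercise built on two pieces of classical perturbation theory. The proof is especially clean because $\bfT$ has exact rank $s$, which eliminates any tail term in the Wedin-style calculation and lets one use the identity $\bfU_\perp^* \bfT = \bfzero$ in place of a black-box invocation of Wedin's theorem. The hypothesis $\rho \leq 1 - 1/\sqrt{2}$ is a convenient cushion that is somewhat stronger than the bare minimum needed for the route above, and it is plausibly chosen to line up with the downstream use of this lemma in the sparsity detection step of \cref{sec:sparsity}, where a clean margin in $\tilde\bfT$'s singular values is needed to correctly identify $s$ via the thresholding rule in \cref{alg:gradMUSIC2}.
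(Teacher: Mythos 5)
Your proof is correct, and it takes a genuinely different (and in fact tighter) route than the paper on the $\sin\Theta$ bound. The first half is identical: Weyl's inequality plus $\sigma_{s+1}(\bfT)=0$ gives the gap $\sigma_s(\tilde\bfT)-\sigma_{s+1}(\tilde\bfT)\geq(1-\rho)\sigma_s(\bfT)>0$, so $\tilde\bfU$ is well defined. For the second half, the paper invokes Wedin's sine-theta theorem as a black box, which produces
\[
\vartheta(\bfU,\tilde\bfU)\leq\frac{\sqrt{2}\,\|T(\bfeta)\|_2}{\sigma_s(\tilde\bfT)},
\]
and the $\sqrt{2}$ is precisely what forces the hypothesis $\rho\leq 1-1/\sqrt{2}$ in order to land at the constant $2$ and hence $\vartheta\leq\rho$. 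You instead derive a one-sided bound by hand, exploiting that $\bfU_\perp^*\bfT=\bfzero$ (exact rank $s$), passing through the identity $\bfU_\perp^*\tilde\bfU\tilde\bfSigma=\bfU_\perp^*T(\bfeta)\tilde\bfV$, and obtaining
\[
\vartheta(\bfU,\tilde\bfU)\leq\frac{\|T(\bfeta)\|_2}{\sigma_s(\tilde\bfT)},
\]
which beats the paper's bound by a factor $\sqrt{2}$. Your route consequently needs only $\rho<1$ (equivalently $t=\rho/2\leq 1/2$) to conclude $\vartheta\leq\rho$, whereas the paper genuinely needs $\rho\leq 1-1/\sqrt{2}$ for this lemma. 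Your closing speculation is close to right but slightly off on emphasis: the threshold $1-1/\sqrt{2}$ is not merely a cushion for downstream use --- it is exactly what the paper's Wedin-based argument requires here --- though it also happens to be the threshold needed in \cref{lem:sparsity}, so the two are harmonized. In short: same first step, but your second step replaces a black-box invocation of Wedin by the explicit one-sided calculation, giving a strictly stronger conclusion under a strictly weaker hypothesis.
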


	\begin{remark}
		\label{rem:vartheta}
		There are some subtle differences between $\vartheta$ and $\rho$. While \cref{lem:rho} shows that $\vartheta\leq \rho$ whenever $\rho$ is sufficiently small, the reverse inequality does not hold in general because it is possible that $\vartheta=0$ yet $\rho\not=0$. For instance, choose a nonzero $\bfeta$ that perturbs the singular values of $\bfT$ but not the subspace $\bfU$. Then we have $\tilde \bfU\tilde \bfU^*=\bfU\bfU^*$, $\vartheta=0$, and $\rho\not=0$. 
	\end{remark}
	
	It is not difficult to prove that if $\rho(\bfx,\bfa,\bfeta)<1$, then 
	\begin{equation}
		\label{eq:thresholdnaive}
		|\bfx| = \max\Big\{k\colon \sigma_k(\tilde \bfT) >\|T(\bfeta)\|_2\Big\}.
	\end{equation}
	A deficiency of this threshold procedure is that $\|T(\bfeta)\|_2$ is unknown and behaves very differently depending on the properties of $\bfeta$, as will be seen shortly. Replacing $\|T(\bfeta)\|_2$ on the right hand side of \eqref{eq:thresholdnaive} with an under- or over- estimate of may lead to a wrong value for the number of frequencies, which may be catastrophic. The following lemma provides an alternative threshold strategy and is proved in \cref{proof:sparsity}. 
	
	\begin{lemma}
		\label{lem:sparsity}
		Let $m\geq 1$, $\beta>1$, and $0<r_0\leq r_1$. For any $(\bfx,\bfa)\in \calS(2\pi \beta/m,r_0,r_1)$ and perturbation $\bfeta$ such that $\rho:=\rho(\bfx,\bfa,\bfeta)\leq 1-1/\sqrt 2$, we have 
		$$
		\frac{\sigma_s(\tilde \bfT)}{\sigma_1(\tilde \bfT)}
		\geq \frac 7{10} \frac {r_0} {r_1} \frac{\beta-1}{\beta+1} \andspace
		\frac{\sigma_{s+1}(\tilde \bfT)}{\sigma_1(\tilde \bfT)}
		\leq \frac{\rho}{2-\rho}. 
		$$
		In particular, if
		$$
		\rho
		\leq \min\left\{ 1- \frac 1 {\sqrt 2}, \, \frac {r_0} {r_1} \frac{\beta-1}{\beta+1} \right\},
		\quad \text{then} \quad 
		|\bfx|=\max\left\{k \colon \frac{\sigma_k(\tilde \bfT)}{\sigma_1(\tilde \bfT)} \geq \frac 7 {10}\frac {r_0} {r_1} \frac{\beta-1}{\beta+1} \right\}. 
		$$
	\end{lemma}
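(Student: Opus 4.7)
The plan is to combine three standard ingredients: the factorization in \eqref{eq:toeplitzfactorization}, Weyl's perturbation inequality for singular values, and the spectral control of $\bfPhi(m,\bfx)$ provided by \cref{prop:AB19}. The starting point is that $\bfT = \bfPhi\bfA\bfPhi^*$ with $\bfPhi:=\bfPhi(m,\bfx)$ and $\bfA=\diag(\bfa)$, so $\bfT$ has exactly rank $s$. Since $\bfPhi$ has full column rank $s$ and $\bfA$ is invertible, the standard multiplicative bound for singular values of products of full-rank factors gives $\sigma_s(\bfT) \geq \sigma_s(\bfPhi)^2 r_0$ and $\sigma_1(\bfT) \leq \sigma_1(\bfPhi)^2 r_1$. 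Applying \cref{prop:AB19} under $\Delta(\bfx)\geq 2\pi\beta/m$ then yields
\begin{equation*}
\frac{\sigma_s(\bfT)}{\sigma_1(\bfT)} \;\geq\; \kappa \;:=\; \frac{r_0}{r_1}\,\frac{\beta-1}{\beta+1} \;\leq\; 1.
\end{equation*}

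Next, since $\tilde\bfT-\bfT = T(\bfeta)$ and $\|T(\bfeta)\|_2 = \tfrac12\rho\,\sigma_s(\bfT)$ by the definition of $\rho$, I would apply Weyl's inequalities to obtain the four bounds $\sigma_s(\tilde\bfT) \geq \sigma_s(\bfT)(1-\rho/2)$, $\sigma_1(\tilde\bfT) \leq \sigma_1(\bfT)+\tfrac12\rho\,\sigma_s(\bfT)$, $\sigma_{s+1}(\tilde\bfT) \leq \tfrac12\rho\,\sigma_s(\bfT)$ (using $\sigma_{s+1}(\bfT)=0$), and $\sigma_1(\tilde\bfT) \geq \sigma_1(\bfT)-\tfrac12\rho\,\sigma_s(\bfT) \geq \sigma_s(\bfT)(1-\rho/2)$. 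The second conclusion of the lemma is then immediate:
\begin{equation*}
\frac{\sigma_{s+1}(\tilde\bfT)}{\sigma_1(\tilde\bfT)} \;\leq\; \frac{\rho\sigma_s(\bfT)/2}{\sigma_s(\bfT)(1-\rho/2)} \;=\; \frac{\rho}{2-\rho}.
\end{equation*}
For the first conclusion, dividing numerator and denominator of the Weyl ratio by $\sigma_1(\bfT)$ and then using $\sigma_s(\bfT)/\sigma_1(\bfT)\leq 1$ in the denominator produces the cleaner bound $\sigma_s(\tilde\bfT)/\sigma_1(\tilde\bfT) \geq \kappa(1-\rho/2)/(1+\kappa\rho/2)$. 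Requiring this to be at least $7\kappa/10$ reduces after cross-multiplication to the scalar inequality $\rho(5+7\kappa/2)\leq 3$, which holds under $\rho \leq 1-1/\sqrt 2$ since $\kappa\leq 1$.

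For the ``in particular'' statement, I would verify that the threshold $7\kappa/10$ strictly separates the two ratios already controlled. The lower side is the first bound above. For the upper side, the assumption $\rho\leq 1-1/\sqrt 2$ gives $2-\rho\geq 1+1/\sqrt 2$, and combining this with $\rho\leq\kappa$ yields $\rho/(2-\rho) \leq \kappa\sqrt 2/(\sqrt 2+1) = \kappa(2-\sqrt 2) < 7\kappa/10$. Since singular values are nonincreasing in index, $k=s$ is then the largest index for which $\sigma_k(\tilde\bfT)/\sigma_1(\tilde\bfT) \geq 7\kappa/10$, giving $|\bfx|=s$.

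There is no genuine obstacle here; the proof is essentially bookkeeping on three well-understood facts. The only mildly delicate point is choosing the constant $7\kappa/10$ rather than, say, $\kappa/2$ or $\kappa$, so that the two separate hypotheses on $\rho$ --- one ($\rho\leq 1-1/\sqrt 2$) ensuring that the signal gap survives the perturbation, the other ($\rho\leq \kappa$) ensuring that the Toeplitz noise floor stays well below the threshold --- align cleanly in both inequalities.
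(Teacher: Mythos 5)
Your proof is correct and follows essentially the same route as the paper: compute $\|T(\bfeta)\|_2 = \tfrac12\rho\,\sigma_s(\bfT)$, apply Weyl's inequality, and use \cref{prop:AB19} through the factorization \eqref{eq:toeplitzfactorization} to bound $\sigma_s(\bfT)/\sigma_1(\bfT)$ from below by $\kappa$. One small bookkeeping slip: ``using $\sigma_s(\bfT)/\sigma_1(\bfT)\leq 1$ in the denominator'' yields $\kappa(1-\rho/2)/(1+\rho/2)$ rather than the displayed $\kappa(1-\rho/2)/(1+\kappa\rho/2)$ (the latter instead follows from monotonicity of $\xi\mapsto \xi(1-\rho/2)/(1+\xi\rho/2)$ over $\xi\geq\kappa$), but both are valid lower bounds and both give $\geq 7\kappa/10$ under $\rho\leq 1-1/\sqrt 2$, so the conclusion is unaffected.
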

	
	The point of this lemma is that the threshold parameter only depends on the signal class parameters and are chosen without knowledge of $\|T(\bfeta)\|_2$ or $\rho(\bfx,\bfa,\bfeta)$. We do not need extremely accurate estimates for $\beta$ or $r_0/r_1$ to get the correct value for $|\bfx|$. The penalty of using a loose estimate for either parameter is met by a stronger prior assumption on $\rho$. This threshold procedure is used in the first step of the full pipeline version of Gradient-MUSIC (\cref{alg:gradMUSIC2}). 
	
	Finally, we need to upper bound the Toeplitz subspace error $\rho$ which involves computing $\|T(\bfeta)\|_2$ under various assumptions on $\bfeta$. The following lemmas are proved in \cref{proof:noisepnorm} and \cref{sec:toeplitznorm}.
	
	\begin{lemma}
		\label{lem:noisepnorm}
		For any $p\in [1,\infty]$ and $\bfeta\in \C^{2m-1}$, we have $\|T(\bfeta)\|_2\leq 2 m^{1-1/p}\|\bfeta\|_p$.
	\end{lemma}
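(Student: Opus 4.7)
The plan is to recognize that the Toeplitz operator $T(\bfeta)\colon \C^m \to \C^m$ is the restriction of a discrete convolution operator with kernel $\bfeta$, and then apply Young's convolution inequality followed by H\"older to convert from $\ell^1$ to $\ell^p$.

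\textbf{Step 1 (Convolution interpretation).} For any $\bfv\in \C^m$, indexing conventions give $(T(\bfeta)\bfv)_i = \sum_{j=0}^{m-1} \eta_{i-j}\, v_j$ for $i\in\{0,\dots,m-1\}$. Extend $\bfv$ by zeros to a sequence on $\Z$ and likewise view $\bfeta$ as supported on $\{-m+1,\dots,m-1\}\subset\Z$. Then $T(\bfeta)\bfv$ is the convolution $\bfeta*\bfv$ restricted to $m$ consecutive integers. Restriction only decreases the $\ell^2$ norm, so $\|T(\bfeta)\bfv\|_2 \le \|\bfeta*\bfv\|_{\ell^2(\Z)}$.

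\textbf{Step 2 (Young's inequality).} Young's convolution inequality on $\Z$ with exponents $(1,2,2)$ (i.e.\ $1/1+1/2 = 1+1/2$) gives
\[
\|\bfeta*\bfv\|_{\ell^2(\Z)} \le \|\bfeta\|_1\,\|\bfv\|_2.
\]
Combined with Step 1, this yields $\|T(\bfeta)\bfv\|_2\le \|\bfeta\|_1\|\bfv\|_2$ for every $\bfv$, hence the operator norm bound $\|T(\bfeta)\|_2\le \|\bfeta\|_1$.

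\textbf{Step 3 (H\"older to get $\ell^p$).} Since $\bfeta\in \C^{2m-1}$, H\"older's inequality with conjugate exponents $p$ and $p/(p-1)$ gives $\|\bfeta\|_1 \le (2m-1)^{1-1/p}\|\bfeta\|_p$. Using $(2m-1)^{1-1/p}\le (2m)^{1-1/p}=2^{1-1/p}m^{1-1/p}\le 2\,m^{1-1/p}$ (since $1-1/p\le 1$), we obtain $\|\bfeta\|_1\le 2m^{1-1/p}\|\bfeta\|_p$. Chaining this with Step 2 yields the claimed bound $\|T(\bfeta)\|_2\le 2m^{1-1/p}\|\bfeta\|_p$.

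There is essentially no serious obstacle here; the only point requiring a moment of care is the bookkeeping of which sequence indices are ``live'' after zero-padding $\bfv$, so that $T(\bfeta)\bfv$ really is a restriction (and not just an arbitrary sub-sum) of $\bfeta*\bfv$. Since $\bfeta$ is already supported in a window of length $2m-1$ and $\bfv$ is supported in a window of length $m$, the convolution is supported on $3m-2$ integers and the first $m$ coordinates match $T(\bfeta)\bfv$ exactly (up to a harmless index shift), so the restriction inequality in Step 1 is valid. The constant $2$ is not optimized; tightening it to $2^{1-1/p}$ is immediate from the same argument.
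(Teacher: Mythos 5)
Your proof is correct and follows essentially the same route as the paper's: interpret $T(\bfeta)$ as the restriction of the $\Z$-convolution with kernel $\bfeta$, apply Young's inequality with exponents $(1,2,2)$ to get $\|T(\bfeta)\|_2\le\|\bfeta\|_1$, then H\"older to pass to $\ell^p$. Your Step 1 is in fact slightly more careful than the paper's version, which writes $\|T(\bfeta)\bfu\|_2=\|\eta*u\|_{\ell^2}$ as an equality rather than the restriction inequality $\le$; that slip is harmless but your explicit accounting of the support of the zero-padded convolution is the cleaner statement.
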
 
	
	\begin{lemma}
		\label{lem:toeplitznorm}
		Let $\bfSigma\in \R^{(2m-1)\times (2m-1)}$ be a diagonal matrix with positive diagonals. If $\bfeta\sim \calN(\bfzero,\bfSigma)$, then the random matrix $T(\bfeta)\in \C^{m\times m}$ satisfies 
		\begin{align*}
			\E \|T(\bfeta)\|_2 
			&\le \textstyle \sqrt{2 \tr(\bfSigma)\log(2m) }, \\
			\P\left\{\|T(\bfeta)\|_2 \ge t \right\} &\le 2m e^{-t^2/(2 \tr(\bfSigma))} \forallspace t\geq  0.
		\end{align*}	
	\end{lemma}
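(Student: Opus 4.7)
My plan is to reduce the operator-norm bound to a maximum of $2m-1$ complex Gaussian magnitudes via the standard circulant embedding, compute the distribution of each Gaussian explicitly, and conclude with a union bound plus an integration by parts to pass to the expectation.

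\textbf{Step 1 (Circulant embedding).} First I would form the circulant matrix $\bfC\in\C^{(2m-1)\times(2m-1)}$ whose first column is
\[
\bfc=(\eta_0,\eta_1,\dots,\eta_{m-1},\eta_{-(m-1)},\dots,\eta_{-1}).
\]
A direct check using $C_{ij}=c_{(i-j)\bmod(2m-1)}$ shows that $T(\bfeta)$ is the leading $m\times m$ principal submatrix of $\bfC$, hence $\|T(\bfeta)\|_2\le\|\bfC\|_2$. The circulant $\bfC$ is diagonalized by the DFT, so its singular values are the moduli of $\lambda_j=\sum_{k=0}^{2m-2}c_k e^{-2\pi i kj/(2m-1)}$ for $j=0,\dots,2m-2$. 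Inserting the definition of $\bfc$ and using $e^{-2\pi ij}=1$ to re-index the wrapped entries produces
\[
\lambda_j=\hat\eta(\omega_j),\quad\omega_j:=\tfrac{2\pi j}{2m-1},\quad \hat\eta(\omega):=\sum_{k=-(m-1)}^{m-1}\eta_k e^{-ik\omega}.
\]
Hence $\|T(\bfeta)\|_2\le\max_j|\hat\eta(\omega_j)|$.

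\textbf{Step 2 (Distribution of $\hat\eta(\omega)$).} Fix $\omega\in\T$. Since $\bfeta\sim\calN(\bfzero,\bfSigma)$ has diagonal covariance and the paper's complex Gaussian convention splits variance equally between real and imaginary parts, $\mathrm{Re}\,\eta_k$ and $\mathrm{Im}\,\eta_k$ are independent $\calN(0,\Sigma_{kk}/2)$. Expanding $\hat\eta(\omega)=\sum_k\eta_k(\cos(k\omega)-i\sin(k\omega))$ and collecting terms, a short computation shows $\mathrm{Re}\,\hat\eta(\omega)$ and $\mathrm{Im}\,\hat\eta(\omega)$ are uncorrelated real Gaussians each of variance $\tr(\bfSigma)/2$. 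Consequently $|\hat\eta(\omega)|^2$ is exponentially distributed with mean $\tr(\bfSigma)$, giving the single-frequency tail
\[
\P\bigl(|\hat\eta(\omega)|\ge t\bigr)=e^{-t^2/\tr(\bfSigma)}\le e^{-t^2/(2\tr(\bfSigma))}.
\]

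\textbf{Step 3 (Union bound and expectation).} Taking a union bound over the $2m-1<2m$ frequencies yields
\[
\P\bigl(\|T(\bfeta)\|_2\ge t\bigr)\le 2m\,e^{-t^2/(2\tr(\bfSigma))},
\]
which is the stated tail bound. For the expectation, I would write $\E\|T(\bfeta)\|_2=\int_0^\infty\P(\|T(\bfeta)\|_2\ge t)\,dt$, split the integral at $t_0:=\sqrt{2\tr(\bfSigma)\log(2m)}$, use the trivial bound on $[0,t_0]$ and the tail bound on $[t_0,\infty)$, and control the remaining Gaussian integral with the standard estimate $\int_{t_0}^\infty e^{-t^2/(2v)}\,dt\le v\,e^{-t_0^2/(2v)}/t_0$. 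This produces $\E\|T(\bfeta)\|_2\le t_0+\tr(\bfSigma)/(2m\,t_0)$, and the lower-order term is dominated by $t_0$, yielding the claimed bound $\sqrt{2\tr(\bfSigma)\log(2m)}$ (perhaps after absorbing a small constant, or using a slightly sharper choice of $t_0$).

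\textbf{Main obstacle.} The circulant embedding, union bound, and expectation integration are routine. The step requiring most care is the covariance computation in Step 2: one must confirm that the cross-covariance of $\mathrm{Re}\,\hat\eta(\omega)$ and $\mathrm{Im}\,\hat\eta(\omega)$ vanishes for every $\omega$ (and not only on average), which hinges on the independence of the real and imaginary components of each $\eta_k$ and the cancellation of the cross terms $\cos(k\omega)\sin(k\omega)$ against $\sin(k\omega)\cos(k\omega)$. Without this, one would lose a factor in the effective variance and the bounds would pick up a constant.
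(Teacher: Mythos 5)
Your proof is correct and takes a genuinely different route from the paper. The paper invokes Tropp's matrix Bernstein/Gaussian series inequality (Theorem~4.1.1 of \cite{tropp2015introduction}): it writes $T(\bfeta)=\sum_\ell w_\ell(s_\ell\bfA_\ell)$ as a Gaussian matrix series, computes the matrix variance statistic $\nu\le\tr(\bfSigma)$, and reads off both the expectation and the tail from the black-box theorem. You instead embed $T(\bfeta)$ into the $(2m-1)\times(2m-1)$ circulant $\bfC$, use normality of circulants to identify $\|\bfC\|_2$ with $\max_j|\hat\eta(\omega_j)|$, observe that each $\hat\eta(\omega_j)$ is a centered complex Gaussian with $\E|\hat\eta(\omega_j)|^2=\tr(\bfSigma)$, and finish with a union bound. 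The scalar reduction is an entirely elementary argument, and it in fact yields the sharper single-point tail $e^{-t^2/\tr(\bfSigma)}$, beating the paper's exponent by a factor of two before you deliberately loosen it to match the statement. The trade-off is that your argument leans on the specific Toeplitz structure and on the diagonal covariance (which makes all $\lambda_j$ identically distributed), whereas the matrix-concentration route is structure-agnostic and would adapt to more general covariances or matrix ensembles with no change.

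The one place you should tighten is the expectation step: with the weakened tail $e^{-t^2/(2\tr\bfSigma)}$ and the split point $t_0=\sqrt{2\tr(\bfSigma)\log(2m)}$, the residual integral contributes an extra additive term $\sqrt{\tr(\bfSigma)/(2\log(2m))}$, so you do not quite reach the stated bound. The fix is to keep the tighter exact tail from Step~2 ($\P(|\hat\eta(\omega)|\ge t)=e^{-t^2/\tr(\bfSigma)}$), union bound over the $2m-1$ frequencies, and split at $t_0=\sqrt{\tr(\bfSigma)\log(2m-1)}$; then
\[
\E\|T(\bfeta)\|_2\le t_0+(2m-1)\int_{t_0}^\infty e^{-t^2/\tr(\bfSigma)}\,dt\le\sqrt{\tr(\bfSigma)\log(2m-1)}\left(1+\tfrac1{2\log(2m-1)}\right),
\]
which is $\le\sqrt{2\tr(\bfSigma)\log(2m)}$ for $m\ge2$, while $m=1$ is a direct scalar computation ($\E|\eta_0|=s_0\sqrt\pi/2<s_0\sqrt{2\log2}$). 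With that adjustment the argument is complete.
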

	
	Both lemmas are standard and may have already been discovered. Earlier analysis \cite{liao2016music} proved \cref{lem:noisepnorm} for $p=2$, which then implies the conclusion for all $p\in [2,\infty]$ through H\"older's inequality. The only new content of the lemma is for $p\in [1,2)$, which is done by using Young's convolution inequality. As for \cref{lem:toeplitznorm}, the i.i.d. Gaussian case was also established in \cite{liao2016music}, and here we have a more general version that holds for diagonal covariance matrix. The proof relies on a standard matrix concentration argument \cite[Theorem 4.1.1]{tropp2012user}.
	
	\subsection{Landscape analysis}
	\label{subsec:landscapeanalysis}
	
	Our overall goal is to use Gradient-MUSIC (\cref{alg:gradMUSIC}) to find the $s$ smallest local minima of $\tilde q:=q_{\tilde\bfU}$, where $\tilde\bfU$ is a sufficiently small perturbation of $\bfU$. In order for this approach to succeed, we need to show that such minima exist and they well approximate $\bfx$, and we need enough information about the landscape function $\tilde q$ to get suitable control over gradient descent dynamics. This is accomplished by the following theorem.
	
	\begin{theorem}[Landscape analysis]
		\label{thm:landscape}
		Let $m\geq 100$. For any $\bfx=\{x_j\}_{j=1}^s\subset \T$ and $\tilde\bfU\in \bbU^{m\times s}$ such that $\Delta(\bfx)\geq 8\pi/m$ and $\vartheta:=\vartheta(\bfx,\tilde\bfU)\leq 0.01$, the landscape function $\tilde q:=q_{\tilde\bfU}$ associated with $\tilde\bfU$ has at least $s$ critical points $\tilde \bfx:=\{\tilde x_1,\dots,\tilde x_s\}$ such that the following hold for each $j\in \{1,\dots,s\}$. 
		\begin{enumerate}[{\rm (a)}]
			\item 
			$\displaystyle |\tilde x_j-x_j|
			\leq \frac{7\vartheta}{m}.
			$
			\item 
			$0.0271 \, m^2 \leq \tilde q \, ''(t)\leq 0.269 \, m^2$ $\displaystyle
			\forallspace t\in \left[\tilde x_j-\frac{\pi}{3m}, \, \tilde x_j+\frac{\pi}{3m}\right]$.
			\item 
			In particular, $\tilde x_j$ is local minimum of $\tilde q$ and   
			$
			\tilde q(\tilde x_j)
			\leq \tilde q(x_j)
			\leq \vartheta^2
			\leq 10^{-4}.
			$
			\item 
			$\displaystyle
			\tilde q \, '(t) \geq +0.0306 \, m \forallspace t\in \left[\tilde x_j+\frac{\pi}{3m}, \, \tilde x_j+\frac{4\pi}{3m}\right],\smallskip \\  
			\tilde q \, '(t) \leq -0.0306 \, m \forallspace t\in \left[\tilde x_j-\frac {4\pi} {3m}, \,  \tilde x_j-\frac{\pi }{3m}\right].
			$
			\item 
			$\tilde q(t)\geq 0.529$ for all $\displaystyle t\not\in \bigcup_{j=1}^s \left[\tilde x_j- \frac{4\pi}{3m}, \, \tilde x_j+\frac{4\pi}{3m}\right].
			$
			\item 
			$\displaystyle \tilde q \, '(t) \leq +0.292 \, m^2|t-\tilde x_j| \forallspace t\in \left[\tilde x_j+\frac{\pi}{3m}, \, \tilde x_j+\frac{4\pi}{3m}\right], \smallskip \\
			\tilde q \, '(t) \geq -0.292 \, m^2|t-\tilde x_j| \forallspace t\in \left[\tilde x_j-\frac {4\pi}{3m}, \,  \tilde x_j-\frac{\pi }{3m}\right].
			$
		\end{enumerate}	
	\end{theorem}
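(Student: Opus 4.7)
The plan is to prove the result by comparing $\tilde q := q_{\tilde\bfU}$ to the unperturbed landscape function $q := q_{\bfU}$, where $\bfU \simeq_m \bfx$ is the Fourier subspace generated by $\bfx$. The driver of every estimate is Lemma~\ref{lem:landscape2}, which yields uniform bounds $\|q^{(\ell)} - \tilde q^{(\ell)}\|_{L^\infty(\T)} \leq \vartheta m^\ell$ for $\ell = 0, 1, 2, 3$. These are strong enough, provided $\vartheta \leq 0.01$, that qualitative features of $q$ survive with explicit constants. The strategy therefore splits into two halves: (I) establish sharp quantitative landscape information about the ``noiseless'' $q$ on and off the nodes $\bfx$, then (II) translate each property to $\tilde q$ by a $\vartheta m^\ell$ slack.

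For part (I), I would exploit that $q(t) = \|\bfU_\perp^* \bfphi(t)\|_2^2 = \sum_k |f_k(t)|^2$ is a polynomial sum-of-squares with each $f_k$ a trigonometric polynomial of degree at most $(m-1)/2$ that vanishes at every $x_j$ (Lemma~\ref{lem:qroots2}). Consequently $q(x_j) = q'(x_j) = 0$, and differentiating twice gives $q''(x_j) = 2\|\bfU_\perp^* \bfphi'(x_j)\|_2^2$. Since $\|\bfphi'\|_2^2 \asymp m^2/12$ and, under $\Delta(\bfx) \geq 8\pi/m$, the Fourier matrix $\bfPhi$ is well conditioned by Proposition~\ref{prop:AB19} (so the projection onto $\bfU$ captures $\bfphi(x_j)$ exactly but not most of $\bfphi'(x_j)$), one obtains a lower bound of order $m^2/6$ on $q''$ near each $x_j$, and a matching upper bound from $\|\bfphi'\|_2^2 \leq m^2/6$. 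The key off-node estimates are a lower bound $q(t) \geq c$ for some explicit $c > 1/2$ whenever $t$ is at distance at least $4\pi/(3m)$ from all nodes, and an $|q'(t)| \gtrsim m$ lower bound in the ``annular'' zone $\pi/(3m) \leq |t - x_j| \leq 4\pi/(3m)$. Both follow from Dirichlet-kernel localization combined with Proposition~\ref{prop:AB19}: near a single node, $q$ behaves like a squared Dirichlet-type kernel, while distant nodes contribute negligibly due to the separation hypothesis $m\Delta \geq 8\pi$. I expect this ``clean'' landscape analysis of $q$ to be the main obstacle and the technical heart of Section~\ref{sec:landscape}.

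For part (II), once the explicit constants for $q$ are in hand, each conclusion of the theorem drops out from the $\vartheta m^\ell$ slack with $\vartheta \leq 0.01$. Specifically: (b) follows from $|\tilde q''(t) - q''(t)| \leq \vartheta m^2 \leq 0.01 m^2$ applied to the $q''$ bounds near $x_j$; (d) and (f) follow similarly from $|\tilde q'(t) - q'(t)| \leq \vartheta m$ applied on the annular zone; (e) follows from $|\tilde q(t) - q(t)| \leq \vartheta$ together with the off-node lower bound on $q$. For (c), I would use the direct identity $\tilde q(x_j) = \|\tilde\bfU_\perp^* \bfphi(x_j)\|_2^2 \leq \|\tilde\bfU_\perp^* \bfU\|_2^2 \|\bfphi(x_j)\|_2^2 = \vartheta^2$, bypassing the cruder $\vartheta$ bound; then $\tilde q(\tilde x_j) \leq \tilde q(x_j)$ because, by (a) and (b), $x_j$ lies in a strictly convex basin of $\tilde q$ whose unique minimum is $\tilde x_j$.

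Finally, (a) comes from a quantitative implicit-function argument. The perturbation bound gives $|\tilde q'(x_j)| = |\tilde q'(x_j) - q'(x_j)| \leq \vartheta m$; combined with the lower bound $\tilde q'' \geq c m^2$ on a small neighborhood of $x_j$ (proved en route to (b)), Newton's method or a mean-value argument produces a unique critical point $\tilde x_j$ of $\tilde q$ in that neighborhood with $|\tilde x_j - x_j| \leq |\tilde q'(x_j)|/(c m^2) \leq 7\vartheta/m$ after tracking constants carefully. The items should be proved roughly in the order (b), (d), (f), (e), then (a), then (c), since (a) needs the strict-convexity conclusion of (b), and (c) needs (a) and (b) to justify that $x_j$ lies in the basin of $\tilde x_j$. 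The delicate bookkeeping is keeping track of all explicit constants so that the final numerical thresholds ($0.0271 \, m^2$, $0.529$, $0.0306 \, m$, $7\vartheta/m$, etc.) come out correctly given $\vartheta \leq 0.01$ and $m \geq 100$.
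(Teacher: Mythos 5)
Your plan is essentially the paper's proof: compare $\tilde q$ to $q := q_{\bfU}$ using \cref{lem:landscape2}, establish explicit landscape bounds on $q$ via Dirichlet/Fej\'er kernel localization, and transfer each bound to $\tilde q$ with $\vartheta m^\ell$ slack. You also correctly anticipate the $\vartheta^2$ bound in part (c) via the direct identity $\tilde q(x_j) = \|\tilde\bfU_\perp^*\bfphi(x_j)\|_2^2 \le \|\tilde\bfU_\perp^*\bfU\|_2^2$, which is exactly Lemma 7.10 (\texttt{lem:qlocal}).

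There is, however, a genuine gap in what you label part (I), and it is more than a bookkeeping nuisance. The bound $q''(x_j)\approx m^2/6$ is computed \emph{at} the nodes, but what the theorem needs is control of $q^{(\ell)}(t)$ (for $\ell = 1, 2, 3$) \emph{throughout} intervals of radius $\approx 4\pi/(3m)$ around each $x_j$. Propagating the bound at $x_j$ to the whole interval via the crude Lipschitz estimate $\|q'''\|_{L^\infty(\T)} \le m^3$ from \cref{lem:landscape1} would allow $q''$ to change by as much as $\pi m^2/3 \approx 1.05\,m^2$ across an interval of radius $\pi/(3m)$, which obliterates a lower bound of order $0.16\,m^2$. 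The paper resolves this by writing, for $t$ near $x_j$,
\begin{equation*}
q(t) = 1 - f_m(t - x_j) - \bfphi(t)^*\bfW_j\bfW_j^*\bfphi(t),
\end{equation*}
where $f_m$ is the Fej\'er kernel (Lemma \texttt{lem:qsintheta}), and then showing that the ``remainder'' $\|\bfW_j^*\bfphi^{(\ell)}(t)\|_2$ can be controlled \emph{pointwise on the interval} with constants that do not depend on $s$ (Lemma \texttt{lem:qglobal}). The $s$-independence is crucial and nontrivial: the remainder and $f_m^{(\ell)}$ have identical scaling $m^\ell$, so a generic bound that grows like $\sqrt{s}\,m^\ell$ would be useless. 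The mechanism that avoids $s$-dependence is the discrete Dirichlet energy estimate (Lemmas \texttt{lem:energy1}, \texttt{lem:dirichletenergy}), which exploits the separation $\Delta(\bfx)\ge 8\pi/m$ through a Riemann-sum argument and bounds $\sum_{k\ne j}|d_m^{(\ell)}(x_k-t)|^2$ uniformly in $s$. Your sentence about ``distant nodes contribute negligibly'' is the right intuition, but without this quantitative mechanism the transfer step in part (II) cannot begin, since you would never obtain interval bounds on $q^{(\ell)}$ with margin to spare over the $\vartheta m^\ell$ slack.

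A smaller issue: you plan to derive (b) around $\tilde x_j$ by first bounding $q''$ near $x_j$ and then using (a) to move the center. The paper's Lemma \texttt{lem:qtildeglobal} instead compares $\tilde q^{(\ell)}(t)$ directly to $f_m^{(\ell)}(t - \tilde x_j)$, absorbing the recentering error through a term $\propto m|u|$ with $u = \tilde x_j - x_j$. Your variant works, but you must widen the $x_j$-centered interval by $7\vartheta/m$ before transferring, which slightly tightens the constants; the paper's bookkeeping is designed to keep the final numerical thresholds tractable and machine-verifiable.
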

	
	\cref{thm:landscape} is proved in \cref{proof:landscape}. Parts (a) and (c) tell us that $\tilde q$ has small local minima $\tilde \bfx$ that are close to $\bfx$. Part (b) tells us $\tilde q$ is convex in a $\pi/(3m)$ neighborhood of $\tilde\bfx$. While it is not necessarily convex in a larger region, parts (d) and (f) tell us that $\tilde q$ is strictly decreasing on $[\tilde x_j-4\pi/(3m),\tilde x_j)$ and strictly increasing on ($\tilde x_j,\tilde x_j+4\pi/(3m)]$. While it is possible that $\tilde q$ has other local minima, they are necessarily at least $4\pi/(3m)$ away $\tilde \bfx$ and part (e) implies that $\tilde q$ is large. This explains the behavior seen in Figure \ref{fig:landscape}.
	
	\begin{remark} \label{rem:generality}
		\cref{thm:landscape} is purely an approximation result because $\tilde\bfU\in \bbU^{m\times s}$ is an arbitrary subspace and we do not assume it is produced by any particular algorithm. While we will primarily use the theorem when $\tilde\bfU$ is the Toeplitz estimator described in \cref{sec:sparsity}, there are several reasons why we could potentially be interested in other $\tilde\bfU$, which we discuss in \cref{sec:differentsubspace}. 
	\end{remark}
	
	The landscape analysis is the primary technical contribution of this paper. The lemmas that are used to prove \cref{thm:landscape} are stated with greater generality than the theorem itself. \cref{sec:landscape} is a self contained section that includes all lemmas used in the landscape analysis and their proofs.
	
	It may be helpful to briefly explain how \cref{thm:landscape} is proved and why it holds. For simplicity, we concentrate our discussion on the properties of $q_\bfU$ where $\bfx\simeq_m \bfU$. A key observation is that for any $x_j\in \bfx$, we have
	\begin{equation}
		\label{eq:localapprox}
		q(t)=1-f_m(t-x_j)+p_j(t),
	\end{equation}
	where $p_j(t)$ is a polynomial that acts as an error term and $f_m$ is a normalized {\it Fej\'er kernel}, 
	\begin{equation}
		\label{eq:fejerkernel}
		f_m(t)
		:=\frac 1 {m^2} \left( \frac{\sin(mt/2)}{\sin(t/2)} \right)^2. 
	\end{equation}
	Provided that the influence of $p_j$ is negligible for $t$ near $x_j$, this implies that $q(t)$ (and also $\tilde q$ through some additional analysis) behaves like $1-f_m(t-x_j)$. This explains why the  graph of $\tilde q$ in Figure \ref{fig:landscape2} (b) looks strikingly similar to a shift of the graph of $1-f_m$. 
	
	The primary difficultly with using formula \eqref{eq:localapprox} is that $f_m$ and $p_j$ both have the same scaling in $m$, e.g., their $\ell$-th derivative scales as $m^\ell$ for all $\ell\geq 0$. In order to effectively use this approximation, one must show that the constants in $f_m^{(\ell)}$ dominate those that appear in $p_j^{(\ell)}$. Note that $f_m$ is a fixed function while $p_j$ depends on $\bfx$, so a loose upper bound for $p_j$ may depend on $s$. We went through great care to control $p_j$ and its derivatives by explicit quantities that do not depend on $s$, otherwise we cannot utilize formula \eqref{eq:localapprox} without placing additional unnatural assumptions. This is done by arguing that $p_j$ can be decomposed into $s-1$ almost orthogonal functions. 
	
	\begin{remark}[Sharpness of \cref{thm:landscape}]
		\label{rem:naturalness}
		Identity \eqref{eq:localapprox} is natural since $p_j=0$ for the special case where $\bfx$ is just a singleton $\bfx=\{x_1\}$, and so $q(t)=1-f_m(t-x_1)$. This immediately implies that several features of \cref{thm:landscape} are sharp. Indeed, the convexity estimate in part (a) and the derivative estimates in parts (d) and (f) can only hold in a $2\pi c_1/m$ and $2\pi c_2/m$ neighborhood of $\tilde \bfx$, for some universal $0<c_1<c_2<1$. Hence, \cref{thm:landscape} shows that the graph of $\tilde q$ has wells that are maximally wide. Finally, $|\tilde q\, ^{(\ell)}|$ must scale like $m^\ell$, which is what we obtained in the theorem. 
	\end{remark}

	Variations of \cref{thm:landscape} can be obtained by following its proof and using the lemmas for a different set of parameters. A script that calculates the resulting constants for variable parameter choices are included with the software accompanying this paper. We mention two regimes of interest.
	
	\begin{remark}
		The constant in \cref{thm:landscape} that we optimized for is the ``4" in condition $\Delta(\bfx)\geq 2\pi (4)/m$. Had we instead assumed that $\Delta(\bfx)\geq 2\pi \beta/m$ for some $\beta>0$, then our proof techniques break down when $\beta\approx 3.45$, even when $\vartheta=0$. This is expected since the conclusions of \cref{thm:landscape} do not hold when $m\Delta\ll 2\pi$, even if $\vartheta=0$, so our proof argument has to yield vacuous statements for small enough $\beta>1$. 
	\end{remark}
	
	\begin{remark}
		There is an opposite extreme that is also relevant. Say $\bfx$ is fixed while $m\to\infty$, which is the usual set up of statistical problems. In this case, we can pick an arbitrarily large $\beta$ and assume $\Delta(\bfx)\geq 2\pi \beta/m$, since this condition will eventually be satisfied for some $m$. All of the numerical constants in \cref{thm:landscape} improve as $\beta$ is made larger and our sufficient condition on $\vartheta$ relaxes. For example, if $\Delta(\bfx)\geq 2\pi(20)/m$, then $\vartheta\leq 0.06$ suffices to get non-vacuous conclusions.   
	\end{remark}

	\subsection{Performance guarantees for Gradient-MUSIC} 
	\label{sec:gradientMUSIC}
	
	We next analyze the Gradient-MUSIC algorithm (\cref{alg:gradMUSIC}). In expository sections, we assume that the assumptions of \cref{thm:landscape} hold and that $\tilde q = q_{\tilde\bfU}$ for an arbitrary $\tilde\bfU\in \bbU^{m\times s}$ such that $\vartheta(\bfx,\tilde\bfU)\leq 0.01$. Recall the notation and definitions introduced in \cref{sec:gradMUSIC}. 
	
	Notice that \cref{thm:landscape} part (b) and (d) suggest that gradient descent (with appropriate parameters) produces iterates that converge to $\tilde x_j$ whenever the initial guess lies in the interval, 
	\begin{equation}
		\label{eq:basin}
		B_j:=\left[\tilde x_j-\frac {4\pi}{3m}, \, \tilde x_j+\frac {4\pi}{3m} \right].
	\end{equation}
	We will prove convergence later, but for now, we concentrate on finding a suitable initialization in each $B_j$. To do this, \cref{thm:landscape} part (e) tells us that $\tilde q$ is large away from $B_j$. We just need to evaluate $\tilde q$ on a reasonable finite $G\subset\T$. We want a coarse $G$ in order to reduce computational costs, but simultaneously, not too coarse otherwise our strategy may fail to find suitable initialization. Recall the definitions of accepted set (\cref{def:accepted}) and cluster (\cref{def:cluster}).

	\begin{lemma}
		\label{lem:grad}
		Suppose the assumptions of \cref{thm:landscape} hold. For any $\alpha \in (\vartheta^2,0.529]$ and finite $G\subset\T$ such that $\mesh(G)< (\alpha-\vartheta^2)/m$, the following hold. 
		\begin{enumerate}[(a)]
			\item 
			$A\subset \bigcup_{j=1}^s B_j$.
			\item 
			$A\cap B_j\not=\emptyset$ for each $j$.
			\item 
			Each $A_j:=A\cap B_j$ is a cluster in $G$.
		\end{enumerate}
	\end{lemma}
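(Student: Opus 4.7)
The plan is to handle each of (a), (b), (c) separately, in each case turning the required statement into a direct consequence of Theorem~\ref{thm:landscape} together with the basic Lipschitz bound from Lemma~\ref{lem:landscape1}. No new analytical ingredients should be needed; the entire content of the lemma is bookkeeping that translates the continuous landscape picture of $\tilde q$ into the discrete language of the grid $G$.

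For part (a), I would argue by contrapositive: if $u\in G$ lies outside every $B_j$, then Theorem~\ref{thm:landscape}(e) already gives $\tilde q(u)\geq 0.529\geq \alpha$, so $u$ is rejected. For part (b), fix $j$ and let $u^\ast\in G$ be a point within $\mesh(G)$ of $\tilde x_j$, which exists by the definition of the mesh norm. Since $\mesh(G)<(\alpha-\vartheta^2)/m\leq 0.529/m<4\pi/(3m)$, the point $u^\ast$ lies in $B_j$. To check that $u^\ast$ is accepted, combine the minimum value bound $\tilde q(\tilde x_j)\leq \vartheta^2$ from Theorem~\ref{thm:landscape}(c) with the derivative bound $\|\tilde q\,'\|_{L^\infty(\T)}\leq m$ from Lemma~\ref{lem:landscape1} to obtain
\[
\tilde q(u^\ast)\leq \tilde q(\tilde x_j)+m\,|u^\ast-\tilde x_j|<\vartheta^2+(\alpha-\vartheta^2)=\alpha,
\]
so $u^\ast\in A\cap B_j$.

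Part (c) is the most interesting and is where I expect the only mild subtlety to arise. The key observation is that $\tilde q$ is unimodal on $B_j$ with unique minimum at $\tilde x_j$: on $[\tilde x_j-4\pi/(3m),\tilde x_j-\pi/(3m)]$ and $[\tilde x_j+\pi/(3m),\tilde x_j+4\pi/(3m)]$ the sign of $\tilde q\,'$ is controlled by Theorem~\ref{thm:landscape}(d), while on the central interval $[\tilde x_j-\pi/(3m),\tilde x_j+\pi/(3m)]$ part (b) gives $\tilde q\,''>0$, which together with $\tilde q\,'(\tilde x_j)=0$ (as $\tilde x_j$ is a critical point) forces $\tilde q\,'$ to be negative on the left half and positive on the right half. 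Hence $\tilde q$ is strictly decreasing on $[\tilde x_j-4\pi/(3m),\tilde x_j]$ and strictly increasing on $[\tilde x_j,\tilde x_j+4\pi/(3m)]$. Now for any $v,w\in A_j$ and any $u\in G$ lying on the (shorter) arc between them, the arc is contained in $B_j$ since $B_j$ has length $8\pi/(3m)<\pi$; a case analysis on whether $u$ is on the same monotone side as both endpoints or on opposite sides of $\tilde x_j$ yields $\tilde q(u)\leq \max\{\tilde q(v),\tilde q(w)\}<\alpha$, so $u\in A_j$, verifying the cluster property.

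The only point that requires any thought is the unimodality argument in part (c), and even that reduces to stitching together the sign information supplied by Theorem~\ref{thm:landscape}(b) and (d) at the seam $\tilde x_j\pm\pi/(3m)$; there is no genuine analytic difficulty and the bounds on $\vartheta$, $\mesh(G)$, and $\alpha$ in the hypotheses are precisely what is needed to make the three elementary estimates close.
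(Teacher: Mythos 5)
Your proof is correct and follows essentially the same route as the paper: part (a) is identical, part (c) uses the same unimodality-of-$\tilde q$-on-$B_j$ argument (the paper phrases it as a contradiction, you phrase it constructively, but the content is the same), and part (b) differs only cosmetically — you pick a grid point within $\mesh(G)$ of $\tilde x_j$ directly, whereas the paper shows the open interval $(\tilde x_j-(\alpha-\vartheta^2)/m,\tilde x_j+(\alpha-\vartheta^2)/m)$ must meet $G$ because its length exceeds twice the mesh norm; both hinge on the same Lipschitz estimate $\|\tilde q\,'\|_\infty\leq m$ from Lemma~\ref{lem:landscape1} and the bound $\tilde q(\tilde x_j)\leq\vartheta^2$ from Theorem~\ref{thm:landscape}(c).
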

	
	This lemma is proved in \cref{proof:grad}. According to this lemma, the accepted elements $A$ can be written as a disjoint union of exactly $s$ clusters. We only need to run gradient descent with one initialization per $A_j$, so pick any $t_{j,0}\in A_j$, which we call a {\it representative}. It does not matter which representative is selected per $A_j$, but a reasonable heuristic is to pick a representative that is in the middle of $A_j$. This explains the behavior seen in Figure \ref{fig:landscape2}. 
	
	To show that our method succeeds, it remains to determine a suitable step size $h$ for gradient descent and prove that the iterates converge to $\tilde x_j$. Convergence does not immediately follow because $\tilde q$ is only provably convex in a strict subset of $B_j$, see \cref{thm:landscape} part (b), and gradient descent does not know if the current iterate lies in a region of convexity or not. Regardless, we will overcome this technical issue by using lower and upper bounds for $\tilde q\,$ given in \cref{thm:landscape} part (d) and (f).
	
	We will let $\hat\bfx$ be the output of Gradient-MUSIC. There are two types of errors that contribute to the frequencies error $|x_j-\hat x_j|$. By triangle inequality, we have
	$$
	|x_j-\hat x_j|\leq |x_j-\tilde x_j|+|\tilde x_j-\hat x_j|.
	$$
	The first term $|x_j-\tilde x_j|$ is a fundamental approximation error that comes from the landscape function itself, which is controlled by \cref{thm:landscape}. Gradient-MUSIC approximates each tilde $\tilde x_j$ by gradient descent with suitable initialization. We can interpret the second term $|\tilde x_j-\hat x_j|$ as the {\it optimization error}. 
	
	The following theorem provides a performance guarantee for Gradient-MUSIC and explicit parameters to use. This makes it a user friendly and direct algorithm.
	
	\begin{theorem}[Gradient-MUSIC performance guarantee]
		\label{thm:maingradientMUSIC}
		Let $m\geq 100$, $\alpha = 0.529$, $G\subset \T$ be finite such that $\mesh(G)\leq 1/(2m)$, $h=6/m^2$, and $n\geq 31$. For any $\bfx=\{x_j\}_{j=1}^s\subset \T$ and $\tilde\bfU\in \bbU^{m\times s}$ such that $\Delta(\bfx)\geq 8\pi/m$ and $\vartheta:=\vartheta(\bfx,\tilde\bfU)\leq 0.01$, Gradient-MUSIC (\cref{alg:gradMUSIC}) outputs $\hat \bfx$ such that the frequency error satisfies
		\begin{align*}
			\max_{j=1,\dots,s} |x_j-\hat x_j|
			&\leq \frac{7\vartheta} m + \frac{77 \pi \, (0.839)^n}m.
		\end{align*}
	\end{theorem}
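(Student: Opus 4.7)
The plan is to chain together \cref{thm:landscape}, \cref{lem:grad}, and a one-dimensional gradient-descent convergence argument. First, invoking \cref{thm:landscape} yields critical points $\tilde\bfx=\{\tilde x_j\}_{j=1}^s$ of $\tilde q$ with $|\tilde x_j-x_j|\leq 7\vartheta/m$; this already accounts for the first term in the target bound via the triangle inequality $|x_j-\hat x_j|\leq|x_j-\tilde x_j|+|\tilde x_j-\hat x_j|$, so what remains is to bound the optimization error $d_n:=|t_{j,n}-\tilde x_j|$. For the initialization, I verify the hypotheses of \cref{lem:grad}: since $\alpha=0.529$ and $\vartheta\leq 0.01$, we have $\vartheta^2\leq 10^{-4}<\alpha$ and $\mesh(G)\leq 1/(2m)<(\alpha-\vartheta^2)/m$, so each cluster representative $t_{j,0}$ lies in $B_j:=[\tilde x_j-4\pi/(3m),\,\tilde x_j+4\pi/(3m)]$ as defined in \eqref{eq:basin}.

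The core of the proof is a two-regime analysis of the descent step on $B_j$, splitting on whether $t_{j,k}$ lies in the convex window $C_j:=[\tilde x_j-\pi/(3m),\,\tilde x_j+\pi/(3m)]$ or in $B_j\setminus C_j$. Inside $C_j$, part (b) of \cref{thm:landscape} gives $\tilde q\,''\in[0.0271\,m^2,\,0.269\,m^2]$; since $\tilde q\,'(\tilde x_j)=0$, the mean value theorem produces $t_{j,k+1}-\tilde x_j=(1-h\tilde q\,''(\xi))(t_{j,k}-\tilde x_j)$ for some $\xi\in C_j$, and with $h=6/m^2$ the contraction factor $|1-h\tilde q\,''(\xi)|$ is at most $\max(|1-6(0.0271)|,|1-6(0.269)|)=0.8374<0.839$. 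Hence $d_{k+1}\leq 0.8374\,d_k$ and, once the iterate is in $C_j$, it never leaves. Outside $C_j$ but inside $B_j$, parts (d) and (f) of \cref{thm:landscape} fix the sign of $\tilde q\,'(t_{j,k})$ to match that of $t_{j,k}-\tilde x_j$ and bound $|\tilde q\,'(t_{j,k})|$ below by $0.0306\,m$ and above by $0.292\,m^2 d_k$. A case analysis on whether the update overshoots $\tilde x_j$ then shows: in the no-overshoot case the lower bound gives $d_{k+1}\leq d_k-0.1836/m$ directly; in the overshoot case the upper bound forces $d_{k+1}\leq 0.752\,d_k$, which, since $d_k>\pi/(3m)$, is itself dominated by $d_k-0.1836/m$. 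In both cases $t_{j,k+1}\in B_j$ and the uniform additive estimate $d_{k+1}\leq d_k-0.1836/m$ holds.

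Combining the two regimes, the outer phase terminates after at most $n_0:=\lceil\pi/0.1836\rceil$ steps, and thereafter $t_{j,k}$ is trapped in $C_j$, so
\begin{equation*}
d_n\leq(0.8374)^{n-n_0}\,d_{n_0}\leq(0.8374)^{n-n_0}\,\frac{\pi}{3m}\qquad\text{for all }n\geq n_0.
\end{equation*}
A direct numerical check using $n_0\leq 31$ and $0.8374<0.839$ then yields $d_n\leq 77\pi(0.839)^n/m$ for every $n\geq 31$, and the triangle inequality delivers the claimed estimate.

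The main obstacle is the outer-region analysis: because $\tilde q$ is not provably convex on all of $B_j$, classical contraction theorems for gradient descent do not apply. The work lies in exploiting parts (d) and (f) of \cref{thm:landscape} in tandem---(f) to keep any overshoot bounded well inside $B_j$, and (d) to guarantee definite additive progress---and in confirming that both the overshoot and no-overshoot branches deliver the \emph{same} uniform decrement $d_{k+1}\leq d_k-0.1836/m$ so that a single linear schedule governs the time to enter $C_j$.
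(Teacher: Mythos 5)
Your proposal is correct and follows the same two-regime strategy as the paper: invoke \cref{thm:landscape} to locate the critical points $\tilde\bfx$, invoke \cref{lem:grad} to place the initialization $t_{j,0}$ inside $B_j$, split the descent analysis between the convex window $[\tilde x_j-\pi/(3m),\tilde x_j+\pi/(3m)]$ and the annulus $B_j\setminus I_j$, and combine. The only (harmless) deviations are two refinements of the paper's argument: in the convex window you apply the mean value theorem to $\tilde q\,'$ directly, obtaining the per-step contraction $\max\{|1-6(0.0271)|,|1-6(0.269)|\}=0.8374$, rather than citing Nesterov's strong-convexity theorem, which the paper uses to arrive at the marginally looser $0.839$; and in the annulus you note that the overshoot branch $d_{k+1}\leq 0.752\,d_k$ already implies the uniform additive decrement $d_{k+1}\leq d_k-0.1836/m$ (because $d_k>\pi/(3m)$), so the outer phase lasts at most $\lceil\pi/0.1836\rceil=18$ steps, whereas the paper iterates the multiplicative bound $d_{k+1}\leq 0.956\,d_k$ to get $31$. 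Since you then relax back to $n_0\leq 31$ to match the stated constant $77\pi(0.839)^n/m$, your final display agrees with the theorem and the arithmetic checks out.
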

	
	\cref{thm:maingradientMUSIC} is proved in \cref{proof:maingradientMUSIC}. The frequency error $|x_j-\hat x_j|$ consists of two terms, where $7\vartheta/m$ comes from the landscape analysis and $77 \pi \, (0.839)^n/m$ is the optimization error. 
	
	If $\vartheta=0$, then $\tilde q= q$ and $\tilde x_j=x_j$ for each $j$. The optimization error dominates and we can calculate $x_j$ to any desired precision by tuning the number of iterations. When $\vartheta\not=0$, then we ideally pick $n$ sufficiently large so that the two errors are balanced. We see that $77\pi \, (0.839)^n\lesssim \vartheta$ whenever $n\gtrsim \log(1/\vartheta)$, so the number of iterations is typically not large. It may seem strange that $n\gtrsim \log(1/\vartheta)$ grows as $\vartheta$ decreases. This is unavoidable since if $\vartheta$ is small (e.g., due small noise), then $7\vartheta/m$ is small, so we need to increase the number of gradient iterations to so that the optimization matches the perturbation error. If $n$ is not big enough, then the optimization error dominates.  
	
	\begin{remark}
		\cref{thm:maingradientMUSIC} does not make any assumptions on what $\tilde\bfU\in \bbU^{m\times s}$ is, only that we start with a $\tilde\bfU$ such that $\vartheta(\bfU,\tilde\bfU)\leq 0.01$. To see why this generality may be helpful, see the discussion in \cref{rem:generality}. 
	\end{remark}
	
	\begin{remark}
		There are other valid choices for threshold parameter $\alpha$ and step size $h$, which can be extracted from the proof of \cref{thm:maingradientMUSIC}. Other choices will result in different requirements on $\mesh(G)$ and number of gradient iterations $n$, and will lead to different numerical constants. 
	\end{remark}
	
	\begin{remark}[Necessity of $\mesh(G)\leq 1/(2m)$]
		\label{rem:mesh}
		One may ask if it is possible to weaken the assumption that $\mesh(G)\leq 1/(2m)$ in \cref{thm:maingradientMUSIC}. Without additional assumptions or prior information about $\bfx$, it is not possible to make $m \, \mesh(G)\to \infty$ as $m\to\infty$. This can be seen by considering the case where $\bfx=\{x_1\}$ is a singleton and $\vartheta=0$. Then $\tilde q(t)=q(t)=1-f_m(t-x_1)$, see \cref{rem:naturalness}. The interval around $x_1$ for which gradient iterations converge to $x_1$  cannot be larger than $[x_1-2\pi /m,x_1+2\pi/m]$. Sampling and thresholding $\tilde q$ on $G$ can potentially miss this interval if $m \, \mesh(G)\to \infty$. This reasoning also illustrates that even if $\Delta(\bfx)\geq 2\pi \beta/m$ for larger $\beta$, we still need $\mesh(G)\leq 4\pi/m$ regardless of $\beta$. 
	\end{remark}
	
	\begin{remark}[Technical comparison between MUSIC and ESPRIT] \label{rem:esprit}
		\cref{thm:maingradientMUSIC} shows that provided the assumptions there hold, then Gradient-MUSIC produces $\hat\bfx$ such that $\max_j |x_j-\hat x_j|\lesssim \vartheta/m$. Importantly, this bound holds regardless of what $\bfeta$ is or which computational method is used to produce $\tilde \bfU$ as long as $\vartheta\leq 0.01$. 
		
		It was shown in \cite{li2020super} that ESPRIT produces $\hat\bfx$ such that $\max_j |x_j-\hat x_j|\lesssim_s \vartheta$ under some technical assumptions. We do not know if this can be strengthened to $\max_j |x_j-\hat x_j|\lesssim \vartheta/m$ for any $\tilde\bfU$ such that $\vartheta(\bfU,\tilde\bfU)$ is small enough. 
		
		Let us return to a more concrete discussion. The paper \cite{ding2024esprit} showed that for i.i.d. subgaussian noise $\bfeta$ with mean zero and variance $\sigma^2$ that ESPRIT achieves frequency error at most $C\sigma/m^{3/2}$, ignoring $\log(m)$ factors. This is done through a refined eigenvector analysis that relies on the i.i.d. subgaussian noise assumption. It is unclear whether their proof technique can be generalized. On the other hand, \cref{cor:stochastic} showed that Gradient-MUSIC achieves frequency error at most $C\sigma \sqrt{\log(m)}/m^{3/2}$ for i.i.d. gaussian noise (see also \cref{rem:subgaussian}). This was done by first showing that $\vartheta \lesssim C\sigma \sqrt{\log(m)}/\sqrt m$ for the Toeplitz estimator $\tilde\bfU$ and then using the general bound given by  \cref{thm:maingradientMUSIC}. One advantage of our proof structure is its generality, which allows us to handle deterministic and more general stochastic perturbations in a unified manner.
		
		The proof techniques in this paper and \cite{ding2024esprit} are considerably different and one set of results do not imply the other. 
	\end{remark}

	\subsection{Amplitude estimation}
	\label{sec:amplitudes}
	
	In this section, we study the amplitude recovery problem where we are given estimated frequencies $\hat\bfx$ produced by some method (not necessarily Gradient-MUSIC) and we would like to estimate the amplitudes $\bfa$. We study a {\it quadratic method}, which produces
	\begin{equation}
		\label{eq:hata}
		\hat \bfa = \diag\big( \hat \bfA \big), \wherespace \hat\bfA := \hat \bfPhi^+ \tilde \bfT \big(\hat \bfPhi^+ \big)^* \andspace \hat \bfPhi:= \bfPhi(m,\hat\bfx).
	\end{equation}
	This method used in the third step of \cref{alg:gradMUSIC2}. 
	
	The quadratic recovery method also respects the permutation invariance of the spectral estimation problem. If a permuted set $\pi(\hat \bfx)$ is used in formula \eqref{eq:hata} instead of $\hat\bfx$, then we would get $\pi(\hat\bfa)$ in place of $\hat\bfa$. Hence, there is a non-ambiguous pairing between $\hat \bfx$ and $\hat \bfa$, which allows us to use the notation
	$$
	(\hat \bfx, \hat \bfa):=\big\{ (\hat x_j, \hat a_j)\big\}_{j=1}^s.
	$$
	Consequently, without loss of generality, we assume that $\hat\bfx$ has been sorted to best best match $\bfx$, and this fixes a particular ordering on $\hat\bfa$. 
	
	To see why $\hat\bfa$ is a reasonable method of approximation, recall factorization \eqref{eq:toeplitzfactorization} and so  
	\begin{equation}
		\label{eq:hatAhelp}
		\hat\bfA = \big(\hat \bfPhi^+ \bfPhi\big) \bfA \big(\hat \bfPhi^+ \bfPhi \big)^* + \hat \bfPhi^+ T(\bfeta) \big(\hat \bfPhi^+ \big)^*.
	\end{equation}
	Provided that $\hat \bfx$ is a good approximation of $\bfx$ and $\|T(\bfeta)\|_2$ is sufficiently small, then we expect that $\hat \bfPhi^+ \bfPhi$ approximates $\bfI$ and for the diagonal entries of $\hat \bfA$ to approximate that of $\bfA$. Rigorous analysis is carried out in the following lemma whose proof is located in \cref{proof:amplitudes}. 
	
	\begin{lemma}
		\label{lem:amplitudes}
		For $m\geq s$ and $\beta>2$, suppose $\bfx$ and $\hat \bfx$ are both sets of cardinality $s$ such that $\Delta(\bfx)\geq 2 \pi \beta/m$ and $\|\hat\bfx-\bfx\|_\infty \leq 2\pi \alpha/m$ for some $\alpha \leq \beta/4$. Then 
		$$
		\max_{j=1,\dots,s} |a_j - \hat a_j|
		\leq \left( 1 + \sqrt{\frac{\beta+2}{\beta-2}} \right) \frac{\beta + 2}{\beta-2} \, a_{\max} \, m \, \max_{j=1,\dots,s}|x_j-\hat x_j| + \frac \beta {\beta -1 } \frac{\|T(\bfeta)\|_2} m. 
		$$
	\end{lemma}
	
	A different strategy is to approximate the amplitudes $\bfa$ using a (linear) least squares reconstruction method, i.e. given $\hat\bfx$, define
	\begin{equation}
		\label{eq:leastsquares}
		\bfa^\sharp := \argmin_{\bfu} \big\| \tilde \bfy - \hat \bfPhi(2m-1,\hat\bfx) \bfu \big\|_2. 
	\end{equation}
	Prior analysis of least squares for related spectral estimation problems can be found in \cite{moitra2015matrixpencil,li2020super}.

	\section{Comparison to classical MUSIC}
	\label{sec:classicalmusic} 
	
	In this section, we show our analysis can be applied to the classical MUSIC algorithm, which is summarized in \cref{alg:MUSIC}. At this point, we should clarify what we mean by discrete local minima.
	
	\begin{definition}
		\label{def:discretemin}
		Given any finite subset $G$ of $\T$, we say $u\in G$ is a {\it (strict) discrete local minimum} of $f$ on $G$ if $f(u)<f(u_+)$ and $f(u)<f(u_-)$, where $u_-,u_+\in G$ are the left and right nearest neighbors of $u$ on the discrete set $G$, respectively. 
	\end{definition}  
	
	In particular, if $f$ is strictly convex in an interval $I\subset \T$ and $I\cap G$ has at least three elements, then $f$ has exactly one discrete local minimum in $I\cap G$. 
	
	Let $\hat\bfx$ denote the output of classical MUSIC. There are two types of errors that contribute to the frequency error $|x_j-\hat x_j|$. By triangle inequality, 
	$$
	|x_j-\hat x_j|
	\leq |x_j-\tilde x_j|+|\tilde x_j-\hat x_j|. 
	$$
	The first error $|x_j-\tilde x_j|$ is from the landscape function, which is controlled by \cref{thm:landscape}. The classical MUSIC algorithm approximates each $\tilde x_j$ by searching for the local minima of $\tilde q$ on a discrete set $G$. This process introduces the second term $|\tilde x_j-\hat x_j|$, which we call the {\it discretization error}. Not surprisingly, we will see that it is related to $\mesh(G)$. The following theorem is proved in \cref{proof:mainMUSIC}. 
	
	\begin{theorem}[General performance guarantee for MUSIC]
		\label{thm:mainMUSIC}
		Let $m\geq 100$ and $G\subset \T$ such that $\mesh(G)\leq 2\pi/(3m)$. For any $\bfx=\{x_j\}_{j=1}^s\subset \T$ and $\tilde\bfU\in \bbU^{m\times s}$ such that $\Delta(\bfx)\geq 8\pi/m$ and $\vartheta:=\vartheta(\bfx,\tilde\bfU)\leq 0.01$, classical MUSIC (\cref{alg:MUSIC}) outputs $\hat \bfx$ such that the frequency error satisfies
		\begin{align*}
			\max_{j=1,\dots,s} |x_j-\hat x_j|
			&\leq \frac{7\vartheta}m+\mesh(G). 
		\end{align*}
	\end{theorem}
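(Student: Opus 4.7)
The plan is to apply Theorem~\ref{thm:landscape} (the landscape analysis) and complement it with a discrete-variable analysis of $\tilde q$ on the grid $G$. Theorem~\ref{thm:landscape}(a) already gives $|x_j-\tilde x_j|\leq 7\vartheta/m$, so by the triangle inequality it suffices to produce an indexing of $\hat\bfx$ for which $|\hat x_j-\tilde x_j|\leq \mesh(G)$.

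First I would combine parts (b) and (d) of Theorem~\ref{thm:landscape} to conclude that $\tilde q$ is strictly decreasing on $[\tilde x_j-4\pi/(3m),\tilde x_j]$ and strictly increasing on $[\tilde x_j,\tilde x_j+4\pi/(3m)]$, so on the basin $B_j:=[\tilde x_j-4\pi/(3m),\tilde x_j+4\pi/(3m)]$ the continuous function $\tilde q$ has a unique minimum at $\tilde x_j$. The mesh hypothesis $\mesh(G)\leq 2\pi/(3m)$ forces consecutive grid points to be within $4\pi/(3m)$ of each other, so each basin $B_j$ (of length $8\pi/(3m)$) contains at least one grid point. Restricting to $G\cap B_j$, monotonicity forces the sampled values $\{\tilde q(u):u\in G\cap B_j\}$ to first strictly decrease and then strictly increase, identifying a unique discrete V-shaped minimum $u_j^*$ (with the tie-breaking rule of Definition~\ref{def:discretemin}).

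Next I would verify that $\{u_j^*\}_{j=1}^s$ are exactly the $s$ smallest discrete local minima of $\tilde q$ on $G$. A grid neighbor of $u_j^*$ lying in $B_j$ exceeds it by the V-shape, while a grid neighbor lying outside $B_j$ has value at least $0.529$ by Theorem~\ref{thm:landscape}(e). The remaining quantitative step is the bound $\tilde q(u_j^*)<0.529$, which I would obtain by starting from $\tilde q(\tilde x_j)\leq \vartheta^2\leq 10^{-4}$ in part (c) and integrating the derivative bounds in parts (b) and (f) over an interval of radius at most $\mesh(G)$. The same estimate also shows that every discrete local minimum of $\tilde q$ on $G$ lying outside $\bigcup_j B_j$ has value at least $0.529$, so the $\{u_j^*\}$ are precisely the $s$ smallest, and we may set $\hat x_j:=u_j^*$.

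Finally, to bound $|u_j^*-\tilde x_j|$, I would argue by contradiction that the two grid neighbors of $u_j^*$ must straddle $\tilde x_j$: otherwise the three grid points $u_j^{*-},u_j^*,u_j^{*+}$ all lie on one side of $\tilde x_j$, and the strict monotonicity of $\tilde q$ produces a neighbor with strictly smaller value than $\tilde q(u_j^*)$, contradicting the discrete local-minimum property. Consequently $\tilde x_j$ lies in one of the two grid intervals adjacent to $u_j^*$, and the mesh norm bounds $|u_j^*-\tilde x_j|$ by $\mesh(G)$. The hardest part will be the quantitative check $\tilde q(u_j^*)<0.529$, because in the worst case $u_j^*$ sits just outside the strict convexity window $[\tilde x_j-\pi/(3m),\tilde x_j+\pi/(3m)]$ of Theorem~\ref{thm:landscape}(b), forcing one to splice the quadratic estimate from (b) with the linear-derivative estimate from (f) at the seam $|t-\tilde x_j|=\pi/(3m)$; the numerical constants $0.529$, $\pi/(3m)$, and $2\pi/(3m)$ appearing in Theorems~\ref{thm:landscape} and~\ref{thm:mainMUSIC} are calibrated precisely so that this piecewise estimate closes.
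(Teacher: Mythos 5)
Your proof follows the same route as the paper's: obtain the $s$ local minima $\tilde\bfx$ from \cref{thm:landscape}(a), use (b) and (d) to show $\tilde q$ is strictly V-shaped on each basin $B_j$ so that $G\cap B_j$ contains a unique discrete local minimum $\hat x_j$, observe that all other discrete local minima of $\tilde q$ lie outside $\bigcup_j B_j$ and hence have value at least $0.529$ by (e), check $\tilde q(\hat x_j)<0.529$ so that classical MUSIC picks exactly $\{\hat x_j\}_{j=1}^s$, and finally combine $|x_j-\tilde x_j|\le 7\vartheta/m$ with $|\tilde x_j-\hat x_j|\le \mesh(G)$ via the triangle inequality. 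This is exactly the skeleton of the paper's proof.

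Two quantitative points deserve caution, both of which you partially anticipate as "the hardest part." First, your straddle argument only places $\tilde x_j$ in one of the two grid intervals adjacent to $u_j^*$, which gives $|u_j^*-\tilde x_j|\le 2\mesh(G)$ rather than $\mesh(G)$: the grid neighbor of $\tilde x_j$ at which $\tilde q$ is smaller need not be the \emph{nearer} neighbor, since $\tilde q$ is only approximately even about $\tilde x_j$ (the convexity constants $0.0271$ and $0.269$ in (b) differ by a factor of $10$). Second, the spliced estimate you propose does not actually close at distance $2\pi/(3m)$: integrating (b) on $[0,\pi/(3m)]$ and the linear bound (f) on $[\pi/(3m),2\pi/(3m)]$ yields $\tilde q(\tilde x_j+2\pi/(3m))\lesssim \vartheta^2 + 0.148 + 0.48 \approx 0.63 > 0.529$, so the constants are \emph{not} calibrated to make the worst case work under $\mesh(G)\le 2\pi/(3m)$ as stated; one would need $\mesh(G)$ somewhat smaller, or a sharper local bound on $\tilde q$. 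For what it's worth, the paper's own written proof has the same soft spot — it derives $\tilde q(\hat x_j)\le\vartheta^2+\vartheta$ from $\|\tilde q\,'\|_{L^\infty(\T)}\le m$ and $|\hat x_j-\tilde x_j|\le\mesh(G)$, which gives only $\vartheta^2+2\pi/3$ — so the gap is not unique to your attempt, but you should not assert the numerics close without actually running them.
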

	
	\cref{thm:mainMUSIC} does not make any assumptions on how $\tilde\bfU\in \bbU^{m\times s}$ is found, which makes it a purely computational approximation result. The frequency error has two terms, where $7\vartheta/m$ comes from the landscape analysis and $\mesh(G)$ term upper bounds the discretization error. 
	
	If $\vartheta=0$, then $\mesh(G)$ dominates. We should pick $G$ depending on a preset tolerance parameter, e.g., $\mesh(G)\leq 10^{-6}/m$. In most applications where $\vartheta\not=0$, to balance the two errors, we should set $\mesh(G)\leq C\vartheta/m$ for some $C>0$ that does not depend on $m$. Then the frequency error is at most $C\vartheta/m$.
	
	\begin{remark}
		\label{rem:mesh2}
		Since classical MUSIC finds $\tilde\bfx$ through a search on $G$, without additional information about $\tilde\bfx$, the discretization error can be as large as $\mesh(G)$. To get the best possible upper bound provided by \cref{thm:mainMUSIC}, the requirement $\mesh(G)\asymp \vartheta/m$ cannot be relaxed. This is the coarsest grid that is allowed by classical MUSIC to get the optimal frequency error bound. 
	\end{remark}

	Using \cref{thm:mainMUSIC}, we can provide the analogue of \cref{alg:gradMUSIC2}, \cref{cor:deterministic}, and \cref{cor:stochastic}, but with classical MUSIC instead. For these theorems, assumptions and conclusions are almost identical except that for $\ell^p$ noise, we need 
	$$
	\mesh(G)\lesssim \frac{\|\bfeta\|_p}{a_{\min} \, m^{1+1/p}},
	$$
	while for $\bfeta\sim \calN(\bfzero,\bfSigma)$, we need
	$$
	\mesh(G)\lesssim \frac{\sqrt{\tr(\bfSigma)\log(m)}}{a_{\min}\, m^2}. 
	$$
	
	Let us compare \cref{thm:maingradientMUSIC} for Gradient-MUSIC and \cref{thm:mainMUSIC} for classical MUSIC. They have the same assumptions on $\bfx$, $\vartheta$, and $m$, but the main difference is their assumptions on $G$. 
	\begin{enumerate}[(a)] \itemsep-2pt
		\item 
		For Gradient-MUSIC, to access the best bound provided by \cref{thm:maingradientMUSIC}, we only need $\mesh(G)\leq 1/(2m)$ independent of $\vartheta$. The most natural and simplest choice is a uniform grid of width $1/(2m)$, so that $|G| \asymp m$. In view of \cref{rem:mesh}, we cannot choose a coarser $G$ without additional information about $\bfx$ and/or sacrificing optimality. 
		\item 
		For classical MUSIC algorithm, to access the best bound provided by \cref{thm:mainMUSIC}, we need $\mesh(G)\lesssim \vartheta/m$. Again choosing a uniform grid, we have $|G|\asymp m\vartheta^{-1}$. In view of \cref{rem:mesh2}, we cannot choose a coarser $G$ without additional information about $\bfx$ and/or sacrificing optimality. 
	\end{enumerate}
	
	This justifies why we refer to $G$ as a ``thin" grid for classical MUSIC and a ``coarse" grid for Gradient-MUSIC. Evaluation of $\tilde q$ at a single input $t\in \T$ is not cheap. It requires $O(ms)$ floating point operations due to the formula,
	$$
	\tilde q(t)=1-\left\|\tilde \bfU^*\bfphi(t)\right\|^2_2.
	$$
	This makes classical MUSIC expensive.
	
	While Gradient-MUSIC saves significantly on evaluation of $\tilde q$, it comes with the trade-off that it requires $n\asymp \log(1/\vartheta)$ iterations of gradient descent for each $x_j$. It turns out that this trade-off is always more favorable for Gradient-MUSIC. Indeed, evaluation of $\tilde q \,'$ also has complexity $O(ms)$ because 
	$$
	\tilde q\,'(t)
	=2 \Re \left( \bfphi(t) \tilde \bfU  \tilde \bfU^* \bfphi'(t)\right).
	$$
	Thus, the total complexity for using $n\asymp \log(1/\vartheta)$ gradient iterations to find $s$ local minima is $O(ms\log(1/\vartheta))$. From here, we see that to get the best results,
	\begin{enumerate}[(a)]\itemsep-2pt
		\item 
		Gradient-MUSIC (\cref{alg:gradMUSIC}) has computational complexity 
		$O\left( m^2s + ms^2 \log(1/\vartheta)\right).$
		\item 
		Classical MUSIC (\cref{alg:MUSIC}) has computational complexity $O(m^2s\vartheta^{-1})$. 
	\end{enumerate}
	
	Thus, we see that Gradient-MUSIC is always more efficient than classical MUSIC. This gap widens as $\vartheta$ decreases, which appears in various instances. For example, for $\bfeta\sim N(\bfzero,\sigma^2 \bfI)$ or uniformly bounded $\ell^2$ noise, we have $\vartheta\lesssim \sigma \sqrt{\log(m)}/\sqrt m$ and $\vartheta\lesssim \|\bfeta\|_2/\sqrt m$, respectively.  
	
	\begin{table}[ht]
		\centering
		\begin{tabular}{|c|c|} \hline
			Function &Time (seconds) \\ \hline
			MATLAB \texttt{svds} &0.1505 \\ \hline
			Gradient-MUSIC &0.5030 \\ \hline 
		\end{tabular} \quad 
		\begin{tabular}{|c|c|} \hline
			Function &Time (seconds) \\ \hline
			MATLAB \texttt{svds} &0.1505 \\ \hline
			Classical MUSIC &105.4096 \\ \hline 
		\end{tabular}
		\caption{Left table shows the runtime for estimating the frequencies using MATLAB's \texttt{svds} function to compute the Toeplitz estimator and Gradient-MUSIC (\cref{alg:gradMUSIC3}). Right table shows the analogous situation but with Classical MUSIC (\cref{alg:MUSIC}) instead.}
		\label{tab:runtime}
	\end{table}
	
	We perform a numerical experiment where we compare their computational complexities for white noise with variance $\sigma^2=0.01$. Gradient-MUSIC is implemented following the discussion in \cref{sec:numericsstochastic}, while classical MUSIC uses a uniform grid of width $0.1 \sigma m^{-3/2}$, which has the correct scaling in $\sigma$ and $m$ for white noise as explained in \cref{sec:classicalmusic}. We also keep track of the time used for the MATLAB's \texttt{svds} function used to compute the Toeplitz estimator. 
	
	Table \ref{tab:runtime} shows their worst-case runtime for a fixed $(\bfx,\bfa)$ over 10 realizations of white noise with variance $\sigma^2=0.01$ and $m=1000$. This was performed on a commercial desktop with a 10 core CPU. We observe that the runtime of Gradient-MUSIC is comparable with that of MATLAB's \texttt{svds}, but classical MUSIC is far slower due to its fine grid search. This gap widens as $m$ increases since classical MUSIC has complexity $O(\sigma^{-1} m^{5/2} s/\sqrt{\log(m)})$, while Gradient-MUSIC has complexity $O(m^2s+m s^2 \log( \sigma^{-1} \sqrt{m/\log(m)}))$. Due to slowness of classical MUSIC, we were unable to perform the same experiment for significantly larger $m$ in a reasonable amount of time.

	\section{Computation and numerical simulations}
	\label{sec:numerical}
	
	A software package that implements Gradient-MUSIC with parallelization and reproduces the figures in this paper can be found here.\footnote{\url{https://github.com/weilinlimath/gradientMUSIC}} 
	
	\subsection{Gradient descent termination condition}
	\label{sec:gdtermination}
	
	We wrote our main theorems for Gradient-MUSIC (\cref{thm:maingradientMUSIC,cor:deterministic,cor:stochastic}) in terms of the number of gradient iterations $n$ since that is clearer from a conceptual point of view. In practice, one usually terminates gradient descent once the number of iterations has exceed some preset number (e.g., 300 or 500) or a termination condition has been reached. A standard termination condition is to stop gradient descent at iteration $n$ once $|\tilde q \,'(t_{j,n})|\leq \epsilon m$ for some selected parameter $\epsilon$. We include $m$ in this condition because $\tilde q\, '$ naturally scales in $m$, so $\epsilon$ is a dimensionless constant. 
	
	We found this condition effective to use in practice, and Gradient-MUSIC with this termination condition is summarized in \cref{alg:gradMUSIC3}. First, it will always terminate for big enough $n$ since $\tilde x_j$ is a local minimum so $\tilde q \,'(\tilde x_j)=0$. Second, this condition is rigorously justified by slightly modifying our theory, as shown in the following result. 
	
	\begin{algorithm}[t]
		\caption{Gradient-MUSIC (a given a subspace, sparsity, and adaptive termination)}
		\begin{algorithmic}
			\Require Subspace $\tilde \bfU$. \\
			{\bf Parameters:} Finite set $G\subset\T$, threshold parameter $\alpha$, gradient step size $h$, termination parameter $\epsilon>0$, and positive integers $N_{\min}\leq N_{\max}$.
			\begin{enumerate} 
				\item 
				Evaluate the landscape function $\tilde q:=q_{\tilde\bfU}$ associated with $\tilde\bfU$ on the set $G\subset\T$ and find the accepted points
				$$
				A:=A(\alpha)=\{u\in G\colon \tilde q(u)< \alpha \}. 
				$$ 
				Find all $s$ clusters of $A$ and pick representatives $t_{1,0},\dots,t_{s,0}$ for each cluster. 
				\item 
				For each $j\in \{1,\dots,s\}$, run gradient descent with step size $h$ and initial guess $t_{j,0}$, namely
				$$
				t_{j,k+1}=t_{j,k}-h \tilde q \, '(t_{j,k}),
				$$
				and stop after $n_j$ iterations, where
				$$
				n_j:=\min \left\{ N_{\max}, \,  \min \left\{k\in\N \colon N_{\min}\leq k \text{ and } |\tilde q \,'(t_{j,k})|\leq \epsilon m\right\} \right\}.
				$$
				Let $\hat x_j=t_{j,n_j}$ and $\hat \bfx=\{\hat x_j\}_{j=1}^s$. 
			\end{enumerate}
			\Ensure Estimated frequencies $\hat \bfx$.
		\end{algorithmic}
		\label{alg:gradMUSIC3}
	\end{algorithm}
	
	\begin{theorem}[Gradient-MUSIC performance guarantee with termination condition]
		\label{thm:maingradMUSIC2}
		Let $m\geq 100$, $\alpha = 0.529$, $G\subset \T$ be finite such that $\mesh(G)\leq 1/(2m)$, $h=6/m^2$, $\epsilon>0$, and $31\leq N_{\min}\leq N_{\max}$. For any $\bfx=\{x_j\}_{j=1}^s\subset \T$ and $\tilde\bfU\in \bbU^{m\times s}$ such that $\Delta(\bfx)\geq 8\pi/m$ and $\vartheta:=\vartheta(\bfx,\tilde\bfU)\leq 0.01$, Gradient-MUSIC (\cref{alg:gradMUSIC3}) outputs $\hat \bfx$ such that the frequency error satisfies
		\begin{align*}
			\max_{j=1,\dots,s} |x_j-\hat x_j|
			&\leq \frac{7\vartheta} m + \frac{37\epsilon}{m}.
		\end{align*}
	\end{theorem}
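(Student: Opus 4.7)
The overall plan is a triangle inequality split: by \cref{thm:landscape} part (a), $|x_j - \tilde x_j| \leq 7\vartheta/m$, so it suffices to show $|\tilde x_j - \hat x_j| \leq 37\epsilon/m$. The termination condition $|\tilde q \, '(\hat x_j)| \leq \epsilon m$ only helps once we know $\hat x_j$ lies inside the convex region around $\tilde x_j$, where \cref{thm:landscape} part (b) forces $\tilde q \, '' \geq 0.0271 \, m^2$. So the real work is showing that iterates have entered, and then stay inside, this convex region by the time the termination condition fires.

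First I would argue that $t_{j, N_{\min}} \in [\tilde x_j - \pi/(3m), \tilde x_j + \pi/(3m)]$. \cref{lem:grad} ensures that the representative $t_{j,0}$ lies in the basin $B_j$, which places us in the setting of \cref{thm:maingradientMUSIC}; the convergence estimate underlying that theorem gives $|t_{j,n} - \tilde x_j| \leq 77\pi (0.839)^n/m$, and evaluating at $n = N_{\min} \geq 31$ lands at (or just below) $\pi/(3m)$. Next I would verify that once the iterate is in the convex window it stays there: by the mean value theorem, $\tilde q \, '(t_k) = \tilde q \, ''(\xi)(t_k - \tilde x_j)$ for some $\xi$ between $t_k$ and $\tilde x_j$, and with $h = 6/m^2$ and $h \, \tilde q \, ''(\xi) \in [0.1626, 1.614]$ we get $|t_{k+1} - \tilde x_j| \leq 0.8374 \, |t_k - \tilde x_j|$, a strict contraction. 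Hence every iterate $t_{j,k}$ for $N_{\min} \leq k \leq N_{\max}$, and in particular $\hat x_j = t_{j,n_j}$, lies in the convex region.

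To finish, I would convert the termination condition into a distance bound by one more mean value application. Since $\tilde q \, '(\tilde x_j) = 0$ and the segment joining $\hat x_j$ and $\tilde x_j$ is inside the convex region, $|\tilde q \, '(\hat x_j)| = \tilde q \, ''(\xi) |\hat x_j - \tilde x_j| \geq 0.0271 \, m^2 |\hat x_j - \tilde x_j|$, so the termination inequality rearranges to $|\hat x_j - \tilde x_j| \leq \epsilon/(0.0271 \, m) < 37 \epsilon/m$ (using $1/0.0271 \approx 36.9$). Combining with the landscape bound from \cref{thm:landscape} part (a) through the triangle inequality yields the claim.

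The main obstacle is the first step -- showing $t_{j, N_{\min}}$ lands inside the convex region. The bound $77\pi (0.839)^{31}/m$ is essentially pinned against $\pi/(3m)$, so the integer threshold $N_{\min} = 31$ is the tight value dictated by the exponential contraction rate behind \cref{thm:maingradientMUSIC}; any larger choice of $N_{\min}$ only strengthens the guarantee. Everything after that is a standard one-dimensional local optimization calculation whose constants come directly from \cref{thm:landscape}.
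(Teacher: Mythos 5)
Your proof is correct and follows essentially the same route as the paper: landing in the convex window $[\tilde x_j-\pi/(3m),\tilde x_j+\pi/(3m)]$ after $N_{\min}\geq 31$ iterations (borrowed from the proof of \cref{thm:maingradientMUSIC}), then a mean value theorem application with the lower bound $\tilde q\,''\geq 0.0271\,m^2$ from \cref{thm:landscape}(b) to convert $|\tilde q\,'(\hat x_j)|\leq \epsilon m$ into $|\hat x_j-\tilde x_j|\leq 37\epsilon/m$, and finishing with the triangle inequality. You are a bit more explicit than the paper in verifying that the iterate stays in the convex window (via a direct one-step contraction argument rather than silently reusing the convergence bound), which harmlessly patches a small jump in the paper's exposition from the radius $4\pi/(3m)$ to $\pi/(3m)$.
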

	
	This theorem is proved in \cref{proof:maingradMUSIC2}. It can be used instead of \cref{thm:maingradientMUSIC} to derive similar results like \cref{cor:deterministic,cor:stochastic} except that $\epsilon$ has to be chosen correctly instead of $n$. We omit theses computations which can be carried out similar to the proofs of those theorems. For $\bfeta\in\ell^p$, we select
	$$
	\epsilon\lesssim \frac{\|\bfeta\|_p}{a_{\min} m^{1/p}}. 
	$$
	For $\bfeta\sim \calN(\bfzero,\bfSigma)$ with diagonal $\bfSigma$, we select
	$$
	\epsilon
	\lesssim \frac{\sqrt{\tr(\bfSigma)\log(m)}}{a_{\min} \, m}. 
	$$

	\subsection{Verification of \cref{cor:stochastic}}
	\label{sec:numericsstochastic}
	
	This section numerically verifies the predictions made by \cref{cor:stochastic} for $\bfeta\sim \calN(\bfzero,\bfSigma)$, where $\Sigma_{k,k}=\sigma^2 (1+|k|)^{2r}$ for $\sigma=0.1$ and $r\in \{-1/4,0,1/4\}$. For this experiment, we fix $(\bfx,\bfa)$, pick 5 uniformly log-spaced integers in $m\in \{10^2,\dots,10^4\}$, and draw 50 realizations of $\bfeta$. 
	
	We compute the error made by Gradient-MUSIC (\cref{alg:gradMUSIC2}) over these trials, except we use the more practical version in \cref{alg:gradMUSIC3} that uses a derivative termination condition, in place of the more theoretically appealing \cref{alg:gradMUSIC}. In this experiment, we pick $\epsilon=0.01/m$, $N_{\min}=31$, and $N_{\max}=300$. According to the discussion following \cref{thm:maingradMUSIC2}, we could have picked a much bigger $\epsilon$ while still achieving the same approximation rate, but we have decided to be overly cautious. This is more realistic since one would select smaller $\epsilon$ than necessary in practice.  
	
	\begin{figure}[ht]
		\centering
		\includegraphics[width=0.4\linewidth]{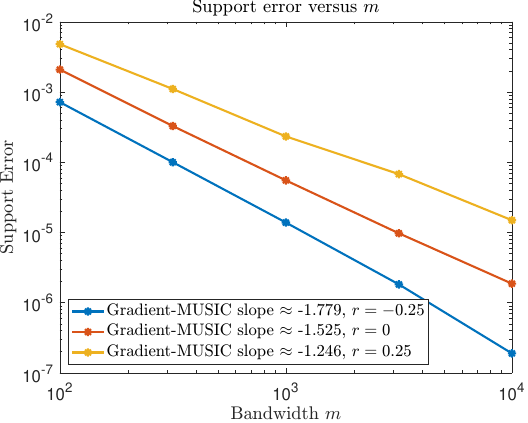} \quad 
		\includegraphics[width=0.4\linewidth]{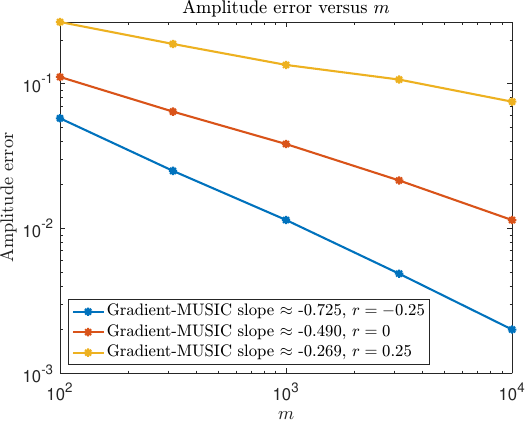}
		\caption{Gradient-MUSIC frequency and amplitude errors for nonstationary independent Guassian noise.}
		\label{fig:stochastic}
	\end{figure}
	
	Figure \ref{fig:stochastic} plots the 90\% percentile frequency and amplitude errors for each $m$ and $r$. The slopes displayed in the figure legends are computed through a least squares fit of the data points. These results are consistent with our theory which predicts that for each $|r|<1/2$ and for a fixed probability of success, the frequency and support errors scale like $m^{-3/2+r}$ and $m^{1/2+r}$ as $m\to\infty$, omitting $\log(m)$ factors.

	\section{Extensions of the theory}
	\label{sec:extensions} 
	
	\subsection{Improvement for real amplitudes}
	
	Suppose the amplitudes $\bfa$ of $h$ in \eqref{eq:hfun} are real, instead of complex. In this section only, assume that the samples are 
	\begin{equation}
		\label{eq:modelrealamp}
		\tilde\bfy = \bfy + \bfeta, \wherespace \bfy=(h(k))_{k=0,\dots,2m-1} \andspace \bfeta\in\C^{2m}.     
	\end{equation}
	We index the samples by $\{0,\dots,2m-1\}$ instead of $\{-m+1,\dots,m-1\}$ since the former is more natural for real amplitudes. We assume we are given $2m$ measurements instead of $2m-1$ to make some resulting expressions cleaner, but of course an additional sample makes no fundamental difference. 
	
	Since $\bfa$ are real, the Fourier transform has natural symmetry so that $h(-k) = \overline{h(k)}$. Thus, we extend $\bfeta$ to a $\bfzeta\in \C^{4m-1}$ such that $\zeta_{-k}=\overline{\eta_k}$ and $\zeta_k =\eta_k$ for each $k=0,\dots,2m-1$. Then we have the new model equation,
	$$
	\tilde\bfy 
	= \bfy + \bfzeta, \wherespace \bfy=(h(k))_{k=-2m+1,\dots,2m-1}.
	$$
	
	This is now the same model as \eqref{eq:modelrealamp} expect we have turned $2m$ noisy measurements to $4m-1$ many. To process $\tilde\bfy$ in the same manner as before, we form the Toeplitz matrix
	$$
	T(\tilde\bfy)
	:=\begin{bmatrix}
		\tilde y_0 & \tilde y_{-1} &\cdots & \tilde y_{-2m+1}\\
		\tilde y_1 & \tilde y_0 & & \tilde y_{-2m+1} \\
		\vdots & &\ddots &\vdots \\
		\tilde y_{2m-1} & \tilde y_{2m-2} &\cdots & \tilde y_0 
	\end{bmatrix}\in \C^{2m\times 2m}.
	$$
	We define $T(\bfy), T(\bfzeta)\in \C^{2m\times 2m}$ in the same manner. Then $T(\bfy)$ enjoys factorization \eqref{eq:toeplitzfactorization} except $\bfPhi(m,\bfx)$ is replaced with $\bfPhi(2m,\bfx)$. So the same setup carries over except we essentially doubled the number of measurements for free by exploiting that $\bfa$ is real. 
	
	We are now ready to apply the same machinery. Notice that $\|\bfzeta\|_p^p \leq 2\|\bfeta\|_p^p$, so this reflection only increases the $\ell^p$ norm by a multiplicative factor of $2^{1/p}\leq 2$. Consequently \cref{lem:noisepnorm} provides us with the bound 
	$$
	\|T(\bfzeta)\|_p\leq 2 (2m)^{1-1/p} \|\bfzeta\|_p
	\leq 8 m^{1-1/p} \|\bfeta\|_p. 
	$$
	If $\bfeta\in \calN(\bfzero,\bfSigma)$, then we cannot ensure that the entries of $\bfzeta$ are independent. However, we do not need this to be case, since our theory just requires an upper bound for $\|T(\bfzeta)\|_2$. The following provides a suitable adaptation of \cref{lem:toeplitznorm}, which is proved in \cref{proof:toeplitznorm2}.
	
	\begin{lemma}
		\label{lem:toeplitznorm2}
		Let $\bfSigma\in \R^{(2m-1)\times (2m-1)}$ be a diagonal matrix with positive diagonals. If $\bfeta\sim \calN(\bfzero,\bfSigma)$ and $\bfzeta\in \C^{4m-1}$ is defined as $\zeta_{-k}=\overline{\eta_k}$ and $\zeta_k =\eta_k$ for each $k\geq 0$, then the random matrix $T(\bfzeta)\in \C^{2m\times 2m}$ satisfies 
		\begin{align*}
			\E \|T(\bfzeta)\|_2 
			&\le \textstyle \sqrt{8 \tr(\bfSigma)\log(4m) }, \\
			\P\left\{\|T(\bfzeta)\|_2 \ge t \right\} &\le 8m e^{-t^2/(8 \tr(\bfSigma))} \forallspace t\geq  0.
		\end{align*}	
	\end{lemma}
	
	With these ingredients at hand, all of our theorems generalize to model \eqref{eq:modelrealamp}, with possibly different constants. The main improvement we would like to highlight is that, instead of assuming $\Delta(\bfx)\geq 8\pi/m$ or $(\bfx,\bfa)\in \calS(8\pi/m,r_0,10r_0)$, they can be relaxed to $\Delta(\bfx)\geq 4\pi/m$ or $(\bfx,\bfa)\in \calS(4\pi/m,r_0,10r_0)$ respectively, since we have effectively doubled the number of measurements without actually acquiring more. 
	
	\subsection{Different subspace estimators, beyond Toeplitz matrices}
	\label{sec:differentsubspace}
	
	Recall that the results in \cref{sec:main} are approximation results that do not make any assumptions on how a $\tilde\bfU\in \bbU^{m\times s}$ is obtained, only that it is a reasonably good approximation to some true Fourier subspace $\bfU\in \bbF^{m\times s}$. This generality has several useful advantages, and we discuss three that we believe are immediately relevant. 
	
	One advantage is compatibility with randomized numerical linear algebra techniques. For example, we do not necessarily have to use the Toeplitz estimator $\tilde\bfU$ in the first place. It is computationally and memory expensive to compute, since one needs to store the large Toeplitz matrix $\tilde\bfT$ in memory and calculate its leading $s$ dimensional left singular space. We can instead use a randomized method \cite{halko2011finding}, which trades some precision for speed.

	A basic illustration of this is to draw a random matrix $\bfG\in \C^{m\times k}$ where $k\ll m$ and the entries of $\bfG$ are iid complex normal random variables. The matrix product $\tilde \bfT \bfG$ requires $O(m^2 k)$ operations,  but $\tilde \bfT \bfG$ requires less memory to store compared to $\tilde\bfT$. A QR factorization of $\tilde \bfT \bfG$ gives rise its column space $\bfQ\in \C^{m \times k}$. We then compute $\bfQ^* \tilde\bfT$ and its leading $s$ dimensional left singular space $\bfB \in \C^{k\times s}$.
	The new estimator $\hat\bfU$ for $\bfU$ is $\bfQ \bfB$, which is not necessarily the same as the Toeplitz estimator $\tilde\bfU$. Bounding the sine-theta distance between $\hat\bfU$ and $\bfU$, we get a sufficient condition on the noise $\bfeta$ such that $\vartheta(\hat\bfU,\bfU)\leq 0.01$ with high probability. Then the rest of the framework goes through except the first step of \cref{alg:gradMUSIC2} uses this random estimator $\hat\bfU$ instead. 
	
	A second advantage is that we can forgo the Toeplitz estimator and its variations altogether. It would be surprising if Toeplitz estimator is optimal for all types of deterministic and stochastic noise. For example, deterministic $\bfeta$ with some algebraic relationship between its entries, or $\bfeta\in\calN(\bfmu,\bfSigma)$ for $\bfmu\not=\bf0$ and non-diagonal $\bfSigma$. Our framework allows for any approximation $\tilde\bfU$ of $\bfU$, provided they are close enough.

	\section{Landscape analysis}
	\label{sec:landscape}

	\subsection{Organization}
	
	Several basic properties of the landscape function were proved in \cref{sec:abstract}. The general setup of this section is that $q:=q_{\bfU}$ for a $\bfx\simeq_m\bfU\in \bbF^{m\times s}$ that satisfies $\Delta(\bfx)\geq 2\pi \beta/m$ for large enough $\beta>1$, while $\tilde q:=q_{\tilde\bfU}\in \bbU^{m\times s}$, where $\vartheta:=\vartheta(\bfx,\tilde\bfU)$ is assumed small enough. Here, $\tilde\bfU$ is an arbitrary subspace that is close enough to $\bfU$ and no further assumptions are made.
	
	Our main goal is to answer two questions. In addition to the properties listed in \cref{prop:qroots}, what other ones does $q_{\bfU}$ possess? What properties does $q_{\tilde\bfU}$ enjoy, beyond the basic perturbation bound given in \cref{lem:landscape2}? We develop several technical tools in the next two subsections. Then we answer these two questions.

	\subsection{Energy estimates for Dirichlet kernels}
	
	Recall the index set $I(m)$ defined in \eqref{eq:defindex}. We define a normalized and modified Dirichlet kernel, 
	\begin{equation}
		\label{eq:dirichletkernel}
		d_m(t)
		:= \frac 1 m \sum_{k\in I(m)} e^{ikt}
		=e^{-i(m-1)/2} \, \frac 1 m \sum_{k=0}^{m-1} e^{ikt}
		= \frac 1 m \frac{\sin(mt/2)}{\sin(t/2)}. 
	\end{equation}
	Note the right hand side has a removable discontinuity at $t=0$. We have $d_m(0)=1$ and $\|d_m\|_{L^\infty(\T)}\leq 1$. Recall that $f_m$ denotes the normalized Fej\'er kernel, defined in \eqref{eq:fejerkernel}. We have
	\begin{equation*}
		f_m(t)
		= (d_m(t))^2
		=\frac 1 {m^2} \left( \frac{\sin(mt/2)}{\sin(t/2)} \right)^2. 
	\end{equation*}
	The Dirichlet and Fej\'er kernels (and related functions) naturally appear in our analysis because
	\begin{equation}
		\label{eq:dirichlet1}
		\bfphi^{(k)}(u)^*\bfphi^{(\ell)}(t)
		=\frac 1 m\sum_{j\in I(m)} (-ij)^k(ij)^\ell e^{i j (t-u)}
		=(-1)^k d_m^{(k+\ell)}(t-u). 
	\end{equation}
	One particular consequence of this is that for all $t\in \T$, we have
	\begin{equation}
		\label{eq:dirichlet2}
		\bfphi(t)^*\bfphi'(t)
		=d_m'(0)
		=0. 
	\end{equation}

	We will derive several estimates. As usual $\bfx$ is a finite set that satisfies certain separation conditions, and for each $\ell\in \{0,1,2\}$, we would like suitable control over a discrete energy term, 
	$$
	\sum_{x\in \bfx} \big|d_m^{(\ell)}(x) \big|^2. 
	$$ 
	Using the trivial upper bound $\|d_m\|_{L^\infty(\T)}\leq 1$ and Bernstein's inequality, we see that this quantity is upper bounded by $m^{2\ell}s$. The $m^{2\ell}$ term is a natural scaling factor, but the $s$ dependence makes this trivial estimate unsuitable for the purposes of this paper. To obtain a refined estimate, we start with the following abstract lemma.

	\begin{lemma}
		\label{lem:energy1}
		Suppose for some $a,b>0$, the set $\bfx\subset \T$ has cardinality $s$, $\Delta(\bfx)\geq b$, and $|x|\geq a$ for all $x\in\bfx$. For any extended real valued function $h$ that is non-negative, even, and  non-increasing away from zero, we have
		$$
		\sum_{x\in \bfx} h(x)
		\leq \sum_{j=0}^{\lfloor s/2\rfloor -1} h(a+bj) + \sum_{j=0}^{\lceil s/2\rceil-1} h(-a-bj)
		$$
	\end{lemma}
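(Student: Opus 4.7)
The plan is to separate $\bfx$ by sign, bound each half using the magnitude and separation hypotheses, and then balance the two halves. First I would write $\bfx = \bfx^+ \sqcup \bfx^-$ with $\bfx^+ := \bfx \cap (0,\infty)$ and $\bfx^- := \bfx \cap (-\infty,0)$; this exhausts $\bfx$ since $|x|\geq a > 0$ rules out the origin. Set $p := |\bfx^+|$ and $n := |\bfx^-|$, so $p+n = s$, and enumerate $\bfx^+$ in increasing order as $x_1^+ < x_2^+ < \cdots < x_p^+$. The magnitude hypothesis gives $x_1^+ \geq a$, and the separation $\Delta(\bfx)\geq b$ then inductively yields $x_j^+ \geq a + (j-1)b$. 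Ordering $\bfx^-$ in decreasing order produces $x_j^- \leq -a - (j-1)b$ by the same reasoning.

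Since $h$ is even and non-increasing on $[0,\infty)$, these magnitude bounds transfer to the function values: $h(x_j^+)\leq h(a+(j-1)b)$ and $h(x_j^-)\leq h(-a-(j-1)b)$. Summing over $\bfx$ produces the preliminary estimate
\[
\sum_{x\in\bfx} h(x) \;\leq\; \sum_{j=0}^{p-1} h(a+jb) \;+\; \sum_{j=0}^{n-1} h(-a-jb),
\]
which still depends on the particular split $(p,n)$.

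The last step, which is the only real obstacle, is to upgrade this partition-dependent estimate into the partition-free bound of the lemma by showing that the balanced split $(\lceil s/2\rceil,\lfloor s/2\rfloor)$ maximizes the right-hand side. I would proceed by a swap argument: assuming without loss of generality that $p\geq n+2$, transfer one index from the positive sum to the negative sum, replacing $(p,n)$ by $(p-1,n+1)$. The resulting net change equals $h(-a-nb) - h(a+(p-1)b) = h(a+nb) - h(a+(p-1)b)$ by evenness, which is nonnegative because $p-1 > n$ and $h$ is non-increasing on $[0,\infty)$. Iterating until $|p-n|\leq 1$ yields the claimed inequality; the case $n\geq p+2$ is handled identically by symmetry.
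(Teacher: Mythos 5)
Your proof is correct, and it follows the paper's overall strategy while being arguably cleaner in the execution. Both arguments share the same first move: each positive element $x_j^+$ (sorted increasingly) satisfies $x_j^+\geq a+(j-1)b$, each negative element $x_j^-$ (sorted decreasingly) satisfies $|x_j^-|\geq a+(j-1)b$, and since $h$ is even and non-increasing away from zero this yields a bound depending on the split $(p,n)$. The divergence is in showing that the balanced split maximizes this bound. The paper argues at the level of the set $\bfx$ itself, invoking a sequence of energy-nondecreasing transformations (shift toward the origin, then reflect $r:=s_+-s_-$ of the most-positive elements across zero and shift again). As written, reflecting $r$ elements takes the counts from $(s_+,s_-)$ to $(s_-,s_+)$ — it swaps the sides rather than balancing them — so that step must be read charitably. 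Your discrete swap argument, transferring one index at a time from the larger sum to the smaller and noting by monotonicity and evenness that the total cannot decrease, proves the same rebalancing claim more transparently and with no extraneous bookkeeping.

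One small gap to close: after iterating to $|p-n|\leq 1$ you may arrive at $(\lceil s/2\rceil,\lfloor s/2\rfloor)$ rather than the $(\lfloor s/2\rfloor,\lceil s/2\rceil)$ appearing in the lemma's statement. You should say explicitly that, by evenness of $h$, we have $h(a+bj)=h(-a-bj)$ for every $j$, so the two sums are termwise identical and the ordered pair $(p,n)$ can be exchanged without changing the total. This makes the final line match the claimed inequality.
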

	
	\begin{proof}
		For convenience, define the energy of $\bfx$ as
		$$
		E(\bfx):= \sum_{x\in \bfx} h(x).
		$$
		To prove the lemma, we show there is a sequence of transformations of $\bfx$ that do not reduce the energy and terminates after transforming $\bfx$ to 
		$$
		\bfx_*:=\left\{a,a+b,\dots,a+(\lfloor \tfrac s2\rfloor-1)b \right\}\cup \left\{-a,a-b,\dots,a-(\lceil \tfrac s2\rceil-1)b \right\}.
		$$
		Since $h$ is non-negative, even, and non-increasing away from zero, we see that $E$ does not decrease if any element of $\bfx$ is shifted closer to zero while the other ones are fixed. We shift the smallest positive element (which is assumed to be at least $a$) of $\bfx$ to $a$, then the next smallest positive element  (which is necessarily at least $a+b$ due to the separation condition) to $a+b$, etc. We do the same for negative elements of $\bfx$. Letting $s_+$ and $s_-$ be the number of elements in $\bfx$ that are positive and negative respectively, this process transforms $\bfx$ to the set
		$$
		\bfx_1 :=  \left\{a,a+b,\dots,a+(s_+-1)b \right\}\cup \left\{-a,a-b,\dots,a-(s_- -1)b \right\}.
		$$
		From the above considerations, we see that $E(\bfx)\leq E(\bfx_1)$. If $s_+=\lfloor s/2 \rfloor$, then $\bfx_1=\bfx_*$. In which case, we are done since $h$ is even and 
		$$
		E(\bfx)
		\leq E(\bfx_*)
		=\sum_{j=0}^{\lfloor s/2\rfloor -1} h(a+bj) + \sum_{j=0}^{\lceil s/2\rceil-1} h(-a-bj).
		$$
		If $s_+\not=\lfloor s/2 \rfloor$, we first consider the case where $r:=s_+-s_-\geq 1$. Then we reflect the $r$ most positive elements of $\bfx_1$, which does not change the energy since $h$ is symmetric about the origin. This provides us with $\bfx_2$. Then repeating the same argument as before, we shift these $r$ reflected elements closer to zero to get $\bfx_*$ and each transformation does not decrease $E$. This shows that $E(\bfx)\leq E(\bfx_*)$, which completes the proof when $s_+-s_1\geq 1$. The remaining case is analogous.  
	\end{proof}
	
	An important property of this lemma is that while $\sum_{x\in \bfx} h(x)$ depends on $\bfx$, the right hand side $\sum_{j=0}^{s-1} h(a+bj)$ only depends on $a$, $b$ and $s$. Hence, it is a uniform bound over the class of all $\bfx$ that satisfy the assumptions of \cref{lem:energy1}. This flexibility enables us to prove energy estimates for a variety of sets. To simplify the resulting  notation, we define the following extended  real valued functions 
	\begin{align*}
		h_{m,0}(t)
		&:= \left(\frac{1}{m|\sin(t/2)|}\right)^2, \\
		h_{m,1}(t)
		&:= \left(\frac{1}{2m|\sin(t/2)|}+\frac{1}{2m^2|\sin(t/2)|^2}\right)^2, \\
		h_{m,2}(t)
		&:= \left( \frac{1}{4m|\sin(t/2)|}+\frac 1{2m^2|\sin(t/2)|}+\frac{1}{2m^3 |\sin(t/2)|^3} \right)^2, \\
		h_{m,3}(t)
		&:= \left(\frac{1}{8m|\sin(t/2)|}+\frac 3{8m^2|\sin(t/2)|}+\frac{3}{4m^3 |\sin(t/2)|^3} + \frac{5}{4m^4 |\sin(t/2)|^4}\right)^2. 
	\end{align*}
	These functions appear when we compute $d_m^{(\ell)}$ for $\ell\in \{0,\dots,3\}$ and use triangle inequality (details omitted) to see that
	\begin{equation}
		\label{eq:hupper}
		\frac{1}{m^{2\ell}} \big|d_m^{(\ell)}(t)\big|^2
		\leq h_{m,\ell}(t). 
	\end{equation}
	With these functions at hand, we define the quantity, 
	\begin{equation}
		\label{eq:energy}
		E_\ell(m,\alpha,\beta)
		:= 2h_{m,\ell}\left(\frac{2\pi \alpha}{m}\right) + \frac{m}{\pi \beta} \int_{2\pi (\alpha+\beta/2)/m}^\pi h_{m,\ell}(t)\, dt.
	\end{equation}

	\begin{lemma}
		\label{lem:dirichletenergy}
		Let $m\geq m_0\geq 1$, $\beta>1$ and $\alpha>0$. Suppose $\bfx\subset\T$ such that $\Delta(\bfx)\geq 2\pi \beta/m$ and $|x|\geq 2\pi \alpha/m$ for all $x\in \bfx$. Then for each $\ell\in \{0,1,2,3\}$, we have
		$$
		\sum_{x\in \bfx} \big|d_m^{(\ell)}(x)\big|^2
		\leq E_\ell(m_0,\alpha,\beta) \, m^{2 \ell}. 
		$$
	\end{lemma}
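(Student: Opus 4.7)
The plan is to reduce the claim to an estimate on $h_{m,\ell}$ via the pointwise bound $|d_m^{(\ell)}(t)|^2 \leq m^{2\ell}\,h_{m,\ell}(t)$ recorded in \eqref{eq:hupper}. After dividing through by $m^{2\ell}$, it suffices to show $\sum_{x\in\bfx} h_{m,\ell}(x) \leq E_\ell(m_0,\alpha,\beta)$. First I would verify the hypotheses of \cref{lem:energy1} for $h_{m,\ell}$: non-negativity is immediate, evenness follows because every term in $h_{m,\ell}$ is a non-negative multiple of a power of $1/|\sin(t/2)|$, and the same observation shows $h_{m,\ell}$ is non-increasing in $|t|$ on $\T$. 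Taking $a=2\pi\alpha/m$ and $b=2\pi\beta/m$ in \cref{lem:energy1} and combining the two half-sums via evenness gives
\[
\sum_{x\in\bfx} h_{m,\ell}(x) \leq 2\sum_{j\geq 0} h_{m,\ell}\!\left(\tfrac{2\pi\alpha}{m} + \tfrac{2\pi\beta j}{m}\right).
\]

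To match the integral form appearing in $E_\ell$, I would next establish that $h_{m,\ell}$ is convex on $(0,\pi)$. A direct computation gives $(1/|\sin(t/2)|^k)'' = (k/4)[\sin^2(t/2)+(k+1)\cos^2(t/2)]/|\sin(t/2)|^{k+2} > 0$ for $k\geq 1$, so each constituent of the sum being squared is a positive convex function; since squares of positive convex functions are convex via $(g^2)'' = 2gg'' + 2(g')^2$, $h_{m,\ell}$ itself is convex on $(0,\pi)$. Jensen's inequality applied to the intervals $[a+(j-1/2)b, a+(j+1/2)b]$ then yields $b\, h_{m,\ell}(a+jb) \leq \int_{a+(j-1/2)b}^{a+(j+1/2)b} h_{m,\ell}(t)\,dt$ for every $j\geq 1$. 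Telescoping and adding the $j=0$ term produces
\[
\sum_{j\geq 0} h_{m,\ell}\!\left(\tfrac{2\pi\alpha}{m} + \tfrac{2\pi\beta j}{m}\right) \leq h_{m,\ell}\!\left(\tfrac{2\pi\alpha}{m}\right) + \frac{m}{2\pi\beta}\int_{2\pi(\alpha+\beta/2)/m}^{\pi} h_{m,\ell}(t)\,dt,
\]
so after the factor of $2$ we obtain $\sum_{x\in\bfx} h_{m,\ell}(x) \leq E_\ell(m,\alpha,\beta)$.

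The final step is to show $E_\ell(m,\alpha,\beta)$ is non-increasing in $m$ for $m\geq m_0$, which lets us replace $m$ by $m_0$. Writing $m\sin(\pi\alpha/m) = \pi\alpha\cdot(\sin y)/y$ with $y=\pi\alpha/m$ and using that $(\sin y)/y$ is non-increasing on $(0,\pi)$, the quantity $m\sin(\pi\alpha/m)$ is non-decreasing in $m$, so each reciprocal power $1/(m\sin(\pi\alpha/m))^k$, and hence the first summand $h_{m,\ell}(2\pi\alpha/m)$, is non-increasing in $m$. For the integral summand, the substitution $u = mt/(2\pi)$ rewrites it as $(2/\beta)\int_{\alpha+\beta/2}^{m/2} h_{m,\ell}(2\pi u/m)\,du$; the integrand depends on $m$ only through $m\sin(\pi u/m)$, which is again non-decreasing in $m$, so it is pointwise non-increasing in $m$. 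The upper limit $m/2$ expands with $m$, but Jordan's inequality $\sin(\pi u/m)\geq 2u/m$ dominates the integrand by the $m$-independent majorant $\bigl(\sum_k c_k/(2u)^k\bigr)^2$, which is integrable on $[\alpha+\beta/2,\infty)$; this shows the expanding tail contributes a vanishing increment and the overall integral is monotone non-increasing in $m$.

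I expect this last monotonicity argument to be the main technical obstacle, since the integrand shrinks while the integration window grows, and care is needed to verify the net effect is decreasing rather than merely bounded. By contrast, the convexity computations for $h_{m,\ell}$ and the application of \cref{lem:energy1} are essentially mechanical once the structural representation of $h_{m,\ell}$ as a squared sum of terms $c/|\sin(t/2)|^k$ with $k\geq 1$ is written out explicitly.
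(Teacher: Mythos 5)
Your first three stages are sound: the pointwise reduction via \eqref{eq:hupper}, the application of \cref{lem:energy1} with $a=2\pi\alpha/m$, $b=2\pi\beta/m$, the convexity verification for $h_{m,\ell}$, and the midpoint-rule estimate do yield $\sum_{x\in\bfx} h_{m,\ell}(x)\leq E_\ell(m,\alpha,\beta)$. The gap is in the final step. You claim that $E_\ell(m,\alpha,\beta)$ is non-increasing in $m$, and your argument is heuristic (integrable majorant $\Rightarrow$ ``vanishing tail increment'' $\Rightarrow$ monotone); but the claim is in fact false. The two effects you identify, a pointwise-shrinking integrand and an expanding integration window, do not net out the way you assert. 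Take $\ell=0$, $\alpha=1$, $\beta=2$, so $h_{m,0}(t)=1/(m\sin(t/2))^2$. A direct computation gives
\[
E_0(m,1,2)=\frac{2}{(m\sin(\pi/m))^2}+\frac{\cot(2\pi/m)}{\pi m},
\]
and evaluating at $m=4,10,100$ yields approximately $0.2500,\ 0.2532,\ 0.2533$, with limit $2/\pi^2+1/(2\pi^2)\approx 0.2533$. So $E_0(m,1,2)$ is strictly increasing in $m$, and $E_\ell(m,\alpha,\beta)\leq E_\ell(m_0,\alpha,\beta)$ fails for $m>m_0$. The culprit is the integral term: although the integrand is pointwise decreasing in $m$, the prefactor $m/(\pi\beta)$ and the widening window $[2\pi(\alpha+\beta/2)/m,\pi]$ dominate.

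The paper sidesteps this by performing the replacement $m\to m_0$ one step earlier, at the level of individual point evaluations, using the inequality $h_{m,\ell}(u/m)\leq h_{m_0,\ell}(u/m_0)$ (which follows from $m\mapsto m\sin(u/(2m))$ being non-decreasing). The key point is that both the subscript and the argument of $h$ are rescaled simultaneously: the values $h_{m,\ell}(2\pi(\alpha+\beta j)/m)$ coming out of \cref{lem:energy1} are each dominated by $h_{m_0,\ell}(2\pi(\alpha+\beta j)/m_0)$, and only then is the midpoint rule applied, to the $m_0$-indexed quantities with partition width $2\pi\beta/m_0$. This produces $E_\ell(m_0,\alpha,\beta)$ directly, avoiding any comparison between $E_\ell(m,\cdot,\cdot)$ and $E_\ell(m_0,\cdot,\cdot)$. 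If you want to salvage your route, you would need to replace the monotonicity claim by this pointwise two-sided rescaling before the integral approximation, at which point your proof essentially becomes the paper's.
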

	
	\begin{proof}	
		Note that $h_{m,\ell}$ is even, decreasing away from zero, and non-negative. A calculus argument shows that for any $u\in(0,\pi]$, the sequence $\{m\sin(u/m)\}_{m=1}^\infty$ is increasing in $m$. By the assumption that $m\geq m_0$ and that $h_{m,\ell}$ is even, we see that for all $u \in [-\pi,\pi]$,  
		$$
		h_{m,\ell}(u/m)\leq h_{m_0,\ell}(u/m_0).
		$$
		We use this inequality, \eqref{eq:hupper}, and \cref{lem:energy1}, where $a=2\pi \alpha/m$ and $b=2\pi \beta/m$. This yields the inequality
		\begin{align*}
			&\frac{1}{m^{2\ell}} \sum_{x\in \bfx} \big|d_m^{(\ell)}(t)\big|^2
			\leq \sum_{j=0}^{\lfloor s/2\rfloor -1} h_{m_0, \ell}\left(\frac{2\pi (\alpha+\beta j)}{m_0}\right)+\sum_{j=0}^{\lceil s/2\rceil -1} h_{m_0, \ell}\left(-\frac{2\pi (\alpha+\beta j)}{m_0}\right) \\
			&\qquad = 2 h_{m_0, \ell}\left(\frac{2\pi \alpha}{m_0}\right) +\sum_{j=1}^{\lfloor s/2\rfloor -1} h_{m_0, \ell}\left(\frac{2\pi (\alpha+\beta j)}{m_0}\right)+\sum_{j=1}^{\lceil s/2\rceil -1} h_{m_0, \ell}\left(-\frac{2\pi (\alpha+\beta j)}{m_0}\right). 
		\end{align*}
		Next, we interpret the right two sums as a Riemann sum approximation of an integral using the midpoint rule. To set this up, the partition width is $2\pi \beta/m_0$ and the midpoints of this partition are 
		$$
		\left\{ \frac{2\pi (\alpha+\beta j)}{m_0}\right\}_{j=1}^{\lfloor s/2\rfloor -1}\cup \left\{ -\frac{2\pi (\alpha+\beta j)}{m_0}\right\}_{j=1}^{\lceil s/2\rceil-1}.
		$$
		Since $h_{m_0, \ell}$ is non-negative and convex, the midpoint rule underestimates the integral. Thus, we see that
		\begin{align*}
			\frac{1}{m^{2k}} \sum_{x\in \bfx} \left|d_m^{(\ell)}(t)\right|^2
			&\leq 2 h_{m_0, \ell}\left(\frac{2\pi \alpha}{m_0}\right) + \frac{m_0}{2\pi \beta}\int_{2\pi (\alpha+\beta/2)/m_0}^{2\pi-2\pi (\alpha+\beta/2)/m_0} h_{m_0, \ell}(t)\, dt. 
		\end{align*}
		Using that $h_{m_0, \ell}$ is even to manipulate the right integral completes the proof. 
	\end{proof}
	
	\cref{lem:dirichletenergy} is written in a way that suggests $E_\ell(m,\alpha,\beta)$ can be treated as a constant that does not depend on $m$. This is indeed the case. Notice that $h_{m,\ell}(t)$ has a singularity at $t=0$, which determines the behavior of the integral in \eqref{eq:energy}. Using the asymptotic expansion $\sin(t)\sim t$ for small $t$, we see that $h_{m,\ell}(t)$ behaves likes $C/(m^2|t|^2)$. Thus, $E_\ell$ is at most $C/\alpha^2+C/(\beta(\alpha+\beta/2))$ where $C$ does not depend on $m$.  
	
	\subsection{A local approximation result}
	
	To control $q$ and its derivatives, we will derive a useful representation for $\bfU\bfU^*$ where $\bfU\in \bbF^{m\times s}$. This is motivated by the following calculation. 
	
	\begin{remark} \label{rem:singleton}
		When $\bfx=\{x_1\}$, the matrix $\bfPhi$ is just a single column, $\sqrt m\, \bfphi(x_1)$. Selecting $\bfU$ as just $\bfphi(x_1)$ and using formula \eqref{eq:dirichlet1}, we have
		$$
		q(t)
		=1-\bfphi(t)^*\bfU\bfU^*\bfphi(t)
		= 1 - (d_m(t))^2
		= 1 - f_m(t).
		$$
	\end{remark}
	
	Of course \cref{rem:singleton} no longer holds in the nontrivial case where $\bfx$ is not a singleton.  Nonetheless, for any $x_j\in \bfx$, we define the matrix
	\begin{equation}
		\label{eq:Psi}
		\bPsi_j:=\bfPhi(m,\bfx\setminus x_j) \in \C^{m\times (s-1)}.
	\end{equation}
	This is precisely $\bfPhi$ with the column $\sqrt m \, \bfphi(x_j)$ removed. Throughout, we will let $\bfU_j$ be a matrix whose columns form an orthonormal basis for the range of $\bPsi_j$. The reduced singular value decomposition of $\bPsi_j$ is denoted
	\begin{equation}
		\label{eq:Psisvd}
		\bPsi_j=\bfU_j\bfS_j\bfV_j^*.
	\end{equation}
	
	Throughout the landscape analysis, various common expressions involving $\beta$ will appear, such as 
	\begin{equation}
		\label{eq:defAbeta}
		A(\beta):=\frac{\beta}{\beta-1}.
	\end{equation}
	We see that $A(\beta)>1$ and $A(\beta)$ decreases to 1 as $\beta$ increases. This quantity appears whenever we use \cref{prop:AB19}, which provides the inequality $\sigma_{\min}^{-2}(\bfPhi(m,\bfx))\leq A(\beta)/m$ whenever its assumptions hold. To reduce clutter, we define the special function 
	\begin{equation}
		\label{eq:gammasin}
		\gamma(t)=2 \sin\left( \frac 12 \sin^{-1}(t)\right). 
	\end{equation}
	When $t$ is small, this is approximated by the identity function. The following lemma relates the projection operators $\bfU\bfU^*$ and $\bfU_j\bfU_j^*$. 
	
	\begin{lemma}
		\label{lem:qsintheta} 
		Let $m\geq \max\{m_0,s+1\}$ and $\beta>1$ such that $A(\beta) E_0(m_0,\beta,\beta)<1/4$. For any $\bfx=\{x_j\}_{j=1}^s$ such that $\Delta(\bfx)\geq 2\pi \beta/m$ and for any $x_j\in \bfx$, there is a matrix $\bfW_j\in \bbU^{m\times (s-1)}$ such that
		\begin{align*}
			\bfU\bfU^* &= \bfphi(x_j)\bfphi(x_j)^* + \bfW_j\bfW_j^*, \\
			\| \bfW_j-\bfU_j\|_2
			&\leq \gamma\left(\sqrt{A(\beta) E_0(m_0,\beta,\beta)}\right). 
		\end{align*}
		Additionally, for any vector $\bfw$, we have
		\begin{equation*}
			\left\|\bfU_j^* \bfw\right\|_2 
			\leq \sqrt{ \frac{A(\beta)} m}  \, \left\|\bPsi_j^* \bfw\right\|_2.
		\end{equation*}	  
	\end{lemma}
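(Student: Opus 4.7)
The plan is to define $\bfW_j$ directly as an orthonormal basis for the orthogonal complement of $\bfphi(x_j)$ inside $\bfU$. Since $\bfphi(x_j)\in\bfU$ and $\dim\bfU=s$, this complement has dimension $s-1$, so $\bfW_j\in\bbU^{m\times(s-1)}$ exists and the identity $\bfU\bfU^*=\bfphi(x_j)\bfphi(x_j)^*+\bfW_j\bfW_j^*$ is immediate from the orthogonal decomposition of $\bfU$. The last inequality is also quick: using the SVD $\bPsi_j=\bfU_j\bfS_j\bfV_j^*$, we have $\bfU_j^*\bfw=\bfS_j^{-1}\bfV_j^*\bPsi_j^*\bfw$, and Proposition~\ref{prop:AB19} applied to the Fourier matrix $\bPsi_j$ (which has $s-1$ columns with separation at least $2\pi\beta/m$) gives $\sigma_{\min}(\bPsi_j)\geq\sqrt{m(\beta-1)/\beta}$, hence $\|\bfU_j^*\bfw\|_2\leq\sqrt{A(\beta)/m}\,\|\bPsi_j^*\bfw\|_2$.

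The substantive work is the bound on $\|\bfW_j-\bfU_j\|_2$. I would first control the cross term $\|\bfphi(x_j)^*\bfU_j\|_2$. Identity \eqref{eq:dirichlet1} gives $\bfphi(x_j)^*\bPsi_j=\sqrt m\,[d_m(x_k-x_j)]_{k\neq j}$, and the set $\{x_k-x_j:k\neq j\}$ has $s-1$ elements, separation at least $2\pi\beta/m$, and minimum magnitude at least $2\pi\beta/m$. Applying Lemma~\ref{lem:dirichletenergy} with $\ell=0$ and $\alpha=\beta$ yields $\|\bfphi(x_j)^*\bPsi_j\|_2\leq\sqrt{m\,E_0(m_0,\beta,\beta)}$. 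Passing through the SVD of $\bPsi_j$ as above,
$$
\|\bfphi(x_j)^*\bfU_j\|_2\leq\sqrt{A(\beta)\,E_0(m_0,\beta,\beta)}=:\epsilon,
$$
and the standing hypothesis forces $\epsilon<1/2$.

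Next I would convert this into the sine-theta distance between $\spann(\bfW_j)$ and $\spann(\bfU_j)$. Since the columns of $\bfU_j$ lie in $\bfU$, a direct manipulation gives the key identity
$$
(\bfI-\bfW_j\bfW_j^*)\bfU_j=\bigl(\bfI-\bfU\bfU^*+\bfphi(x_j)\bfphi(x_j)^*\bigr)\bfU_j=\bfphi(x_j)\bfphi(x_j)^*\bfU_j,
$$
so $\vartheta(\bfW_j,\bfU_j)=\|\bfphi(x_j)^*\bfU_j\|_2\leq\epsilon$. To upgrade this into an operator-norm bound on $\bfW_j-\bfU_j$, I would exercise the freedom in choosing an orthonormal basis for the fixed subspace $\spann(\bfW_j)$: starting from any orthonormal basis $\bfW_j^{(0)}$, take an SVD $\bfU_j^*\bfW_j^{(0)}=\bfV_1\bfSigma\bfV_2^*$ with $\bfSigma=\diag(\cos\theta_i)$ recording the cosines of the principal angles, and redefine $\bfW_j:=\bfW_j^{(0)}\bfV_2\bfV_1^*$. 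Then $\bfU_j^*\bfW_j$ is Hermitian with eigenvalues $\cos\theta_i$, whence $(\bfW_j-\bfU_j)^*(\bfW_j-\bfU_j)=2\bfI-2\bfU_j^*\bfW_j$ has eigenvalues $4\sin^2(\theta_i/2)$. Monotonicity of $\gamma$ together with $\max_i\sin\theta_i=\vartheta(\bfW_j,\bfU_j)\leq\epsilon$ yields $\|\bfW_j-\bfU_j\|_2\leq\gamma(\epsilon)$.

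The main obstacle is the basis-alignment step at the end: one must resist taking a naive Davis--Kahan-style bound, which would convert $\vartheta(\bfW_j,\bfU_j)\leq\epsilon$ into a weaker operator-norm estimate on the order of $\sqrt{2}\,\epsilon$ rather than the sharp $2\sin\bigl(\tfrac12\arcsin\epsilon\bigr)=\gamma(\epsilon)$ required by the statement. The nonlinearity $\gamma$ is the fingerprint of the direct-rotation (polar/SVD) construction, and getting the exact constant through subsequent landscape estimates depends on having the tight $\gamma(\epsilon)$ rather than a crude linearization; everything else reduces to the Dirichlet-kernel energy bookkeeping of Lemma~\ref{lem:dirichletenergy} and Proposition~\ref{prop:AB19}.
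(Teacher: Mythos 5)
Your proof is correct, and for the central inequality it takes a genuinely different route from the paper. The paper constructs $\bfW_j$ as an orthonormal basis of $\range\big((\bfI-\bfphi(x_j)\bfphi(x_j)^*)\bPsi_j\big)$, views that matrix as a perturbation of $\bPsi_j$, and invokes Wedin's sine-theta theorem --- the hypothesis $A(\beta)E_0(m_0,\beta,\beta)<1/4$ enters precisely to guarantee that the perturbation is smaller than $\tfrac12\sigma_{\min}(\bPsi_j)$, so Wedin applies. You instead exploit the exact projection identity $(\bfI-\bfW_j\bfW_j^*)\bfU_j = \bfphi(x_j)\bfphi(x_j)^*\bfU_j$ (valid because $\range(\bfU_j)\subset\range(\bfU)$), which yields the \emph{equality} $\vartheta(\bfW_j,\bfU_j)=\|\bfphi(x_j)^*\bfU_j\|_2$; you then pass through the SVD of $\bPsi_j$ and the Dirichlet energy bound. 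This bypasses Wedin entirely, needs no smallness assumption to obtain $\vartheta(\bfW_j,\bfU_j)\le\sqrt{A(\beta)E_0(m_0,\beta,\beta)}$, and is conceptually cleaner (the two subspaces differ exactly by a rank-one oblique component, not by a generic perturbation). Both routes produce identical numerical constants, since Wedin's bound $\|\text{pert}\|_2/\sigma_{\min}(\bPsi_j)$ collapses to the same expression once one notes $\|\bfphi(x_j)\bfphi(x_j)^*\bPsi_j\|_2=\|\bfphi(x_j)^*\bPsi_j\|_2$. Your self-contained derivation of the $\gamma$-constant via the direct rotation (SVD alignment of $\bfU_j^*\bfW_j^{(0)}$, giving squared-error eigenvalues $4\sin^2(\theta_i/2)$) replaces the paper's citation of the Stewart--Sun result with an explicit argument and arrives at the same sharp bound; your closing remark about why the nonlinear $\gamma$ rather than a $\sqrt2\epsilon$ Davis--Kahan-style bound is needed is accurate. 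The trivial parts --- the rank-one decomposition of $\bfU\bfU^*$ and the vector bound $\|\bfU_j^*\bfw\|_2\le\sqrt{A(\beta)/m}\,\|\bPsi_j^*\bfw\|_2$ --- match the paper.
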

	
	\begin{proof}
		Recall that $\bfU$ is an orthonormal basis for the range of $\bfPhi$, and the later matrix contains $\sqrt m\, \bfphi(x_j)$ as one of its columns. Consider the matrix
		$$
		(\bfI-\bfphi(x_j)\bfphi(x_j)^*) \bPsi_j
		=\bPsi_j-\bfphi(x_j)\bfphi(x_j)^* \bPsi_j,
		$$
		which is necessarily injective, otherwise neither is $\bfPhi$. Let $\bfW_j$ be an orthonormal basis for the range of $(\bfI-\bfphi(x_j)\bfphi(x_j)^*) \bPsi_j$. This now establishes the formula,
		$$
		\bfU\bfU^* = \bfphi(x_j)\bfphi(x_j)^* + \bfW_j\bfW_j^*. 
		$$
		
		To relate $\bfW_j$ and $\bfU_j$, we proceed to view $(I-\bfphi(x_j)\bfphi(x_j)^*) \bPsi_j$ as a perturbation of $\bPsi_j$. We need to do some work before we are able to use Wedin's sine-theta theorem. Since $\bPsi_j=\bfPhi(m,\bfx\setminus x_j)$ and $\Delta(\bfx\setminus x_j)\geq \Delta(\bfx)\geq 2\pi\beta/m$, we can use \cref{prop:AB19} to obtain
		\begin{equation}
			\label{eq:psihelp1}
			\frac 1{\sigma_{\min}(\bPsi_j)}
			\leq \sqrt{\frac{A(\beta)}m}.
		\end{equation}
		Our next step is to control the perturbation size. We use that $\|\bfphi(x_j)\|_2=1$ and identity \eqref{eq:dirichlet1} to see that
		\begin{align*}
			\big\|\bPsi_j-(\bfI-\bfphi(x_j)\bfphi(x_j)^*) \bPsi_j\big\|_2
			&= \big\|\bfphi(x_j)\bfphi(x_j)^* \bPsi_j\big\|_2 \\
			&= \big\|\bfphi(x_j)\big\|_2 \big\|\bPsi_j^*\bfphi(x_j)\big\|_2
			= \sqrt{m \sum_{k\not=j} \big|d_m(x_k-x_j)\big|^2}. 
		\end{align*}
		To upper bound the right hand side, since $\Delta(\bfx)\geq 2\pi \beta/m$ and $x_j\in \bfx$, we have $|x_k-x_j|\geq 2 \pi \beta/m$ for all $k\not=j$. Applying \cref{lem:dirichletenergy} where $\alpha =\beta$ and combining it with the previous inequality, we see that
		\begin{equation}
			\label{eq:psihelp2}
			\|\bPsi_j-(\bfI-\bfphi(x_j)\bfphi(x_j)^*) \bPsi_j\|_2
			\leq \sqrt{m E_0(m_0,\beta,\beta)}.
		\end{equation}
		
		Using assumption $A(\beta) E_0(m_0,\beta,\beta)< 1/4$, \eqref{eq:psihelp2}, and \eqref{eq:psihelp1}, we see that
		$$
		\|\bPsi_j-(\bfI-\bfphi(x_j)\bfphi(x_j)^*) \bPsi_j\|_2
		< \frac 12 \sigma_{\min}(\bPsi_j).
		$$
		This enables us to use Wedin's sine-theta theorem \cite[Chapter V, Theorem 4.4]{stewart1990matrix}, where $\bPsi_j$ acts as the unperturbed matrix whose range is $\bfU_j$ and $(\bfI-\bfphi(x_j)\bfphi(x_j)^*) \bPsi_j$ is the perturbed matrix whose range is $\bfW_j$. Employing inequalities \eqref{eq:psihelp1} and \eqref{eq:psihelp2}, we obtain
		\begin{align*}
			\|\bfU_j\bfU_j^*-\bfW_j\bfW_j^*\|_2 
			\leq \frac{\|\bPsi_j-(\bfI-\bfphi(x_j)\bfphi(x_j)^*) \bPsi_j\|_2}{\sigma_{\min}(\bPsi_j)}
			\leq \sqrt{A(\beta) E_0(m_0,\beta,\beta)}. 
		\end{align*}
		Here, we bounded the sine-theta distance between $\bfU_j$ and $\bfW_j$. We can find a particular choices for $\bfU_j$ and $\bfW_j$ such that $\|\bfU_j-\bfW_j\|_2$ is almost equal to the sine-theta distance, see \cite[Chapter 1, Theorem 5.2]{stewart1990matrix}. The referenced result says that if $t$ is the maximum canonical angle between two subspaces, then particular bases can be chosen for them such that their squared error in spectral norm is $(1-\cos t)^2+\sin^2 t=2-2\cos t = 4 \sin^2(t/2)$. Taking the square root, setting $t=\sin^{-1}(\|\bfU_j\bfU_j^*-\bfW_j\bfW_j^*\|_2)$, and recalling the definition of $\gamma$ in \eqref{eq:gammasin}, we have
		$$
		\|\bfU_j-\bfW_j\|_2
		\leq \gamma \left(\|\bfU_j\bfU_j^*-\bfW_j\bfW_j^*\|_2\right)
		\leq \gamma\left(\sqrt{A(\beta) E_0(m_0,\beta,\beta)}\right).
		$$
		
		For the final part of the proof, recall that $\bfU_j$ is an orthonormal basis for the range of $\bPsi_j$. By \eqref{eq:Psisvd} and \eqref{eq:psihelp1}, we see that for any vector $w$, we have 
		\begin{equation*}
			\left\|\bfU_j^* \bfw\right\|_2 
			\leq \big\|\bfS_j^{-1} \big\|_2 \big\|\bPsi_j^* \bfw\big\|_2 
			\leq \sqrt{ \frac{A(\beta)} m} \, \left\|\bPsi_j^* \bfw \right\|_2.
		\end{equation*}
		This completes the proof. 
	\end{proof}
	
	Provided that the assumptions of \cref{lem:qsintheta} hold, we combine it with the definition of $q:=q_{\bfU}$ and formula \eqref{eq:dirichlet1} to see that, for any $x_j\in \bfx$ and $t\in \T$, 
	\begin{equation}\label{eq:qsintheta}
		\begin{split}
			q(t)&=1-\bfphi(t)^*\bfU\bfU^*\bfphi(t)\\
			&=1-\bfphi(t)^*\bfphi(x_j) \bfphi(x_j)^* \bfphi(t)-\bfphi(t)^*\bfW_j \bfW_j^*\bfphi(t) \\
			&=1-f_m(t-x_j)-\bfphi(t)^*\bfW_j \bfW_j^*\bfphi(t).
		\end{split}
	\end{equation}
	From here, we can use this formula to control $q$ and its derivatives for $t$ sufficiently close to $x_j$. This extends \cref{rem:singleton} to the general case where $\bfx$ is not a singleton. This is also the formula shown in equation \eqref{eq:localapprox} where $h_m(t)=\bfphi(t)^*\bfW_j \bfW_j^*\bfphi(t)$. 
	
	\subsection{Landscape function associated with a Fourier subspace}
	
	The purpose of this subsection is to investigate the behavior of $q=q_\bfU$ where $\bfU\in \bbF^{m\times s}$. We expect this function to have several special properties. Define the quantities
	\begin{equation}
		\begin{split}
			C_0(m,\beta) &:= \frac 1 6 - 2 A(\beta) E_1(m,\beta,\beta) - \frac 1 {6m^2}, \\
			C_1(m,\beta) &:= \frac 1 6 + 2 A(\beta) E_1(m,\beta,\beta). 
		\end{split}\label{eq:cmbeta}
	\end{equation}
	As $m\to\infty$ and $\beta\to\infty$, both quantities converge to 1/6. In particular, $C_0(m,\beta)$ is an increasing function in $\beta$ for fixed $m$. Whenever $m\geq 2$, we can always find $\beta$ sufficiently large so that $C_0(m,\beta)>0$. 
	
	\begin{lemma}
		\label{lem:qsecond}
		Let $m\geq \max\{m_0,s+1\}$ and $\beta>1$. For any $\bfx=\{x_j\}_{j=1}^s$ such that $\Delta(\bfx)\geq 2\pi\beta/m$, let $q:=q_\bfU$ be the landscape function associated with $\bfx\simeq_m \bfU\in \bbF^{m\times s}$. For each $x_j\in \bfx$, we have
		$$
		C_0(m_0,\beta) m^2 \leq q''(x_j) \leq C_1(m_0,\beta) m^2. 
		$$
	\end{lemma}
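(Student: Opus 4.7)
My plan is to compute $q''(x_j)$ by directly differentiating $q(t) = 1 - \|\bfU^*\bfphi(t)\|_2^2$, and then bound the resulting expression using the energy estimate in \cref{lem:dirichletenergy}.

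Differentiating twice yields
\[
q''(t) = -2\|\bfU^*\bfphi'(t)\|_2^2 - 2\,\Re\bigl[\bfphi(t)^*\bfU\bfU^*\bfphi''(t)\bigr].
\]
The key simplification at $t=x_j$ is that $\bfphi(x_j)\in\bfU$, which follows from $\bfx\simeq_m\bfU$. Hence $\bfU\bfU^*\bfphi(x_j)=\bfphi(x_j)$, and the cross term reduces to $\bfphi(x_j)^*\bfphi''(x_j)=d_m''(0)$ by identity \eqref{eq:dirichlet1}. Since $I(m)$ consists of $m$ consecutive integers or half-integers symmetric about zero, a direct calculation gives $d_m''(0)=-\tfrac{1}{m}\sum_{k\in I(m)} k^2 = -\tfrac{m^2-1}{12}$ in both parity cases. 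Consequently,
\[
q''(x_j) = \tfrac{m^2-1}{6} - 2\,\|\bfU^*\bfphi'(x_j)\|_2^2.
\]

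Next I bound $\|\bfU^*\bfphi'(x_j)\|_2^2$ by transferring it back to $\bfPhi:=\bfPhi(m,\bfx)$. Writing $\bfU\bfU^*=\bfPhi(\bfPhi^*\bfPhi)^{-1}\bfPhi^*$,
\[
\|\bfU^*\bfphi'(x_j)\|_2^2 \;\leq\; \frac{\|\bfPhi^*\bfphi'(x_j)\|_2^2}{\sigma_{\min}(\bfPhi)^2} \;\leq\; \frac{A(\beta)}{m}\,\|\bfPhi^*\bfphi'(x_j)\|_2^2
\]
by \cref{prop:AB19}. By \eqref{eq:dirichlet1}, the $k$-th entry of $\bfPhi^*\bfphi'(x_j)$ is $\sqrt{m}\,d_m'(x_j-x_k)$, and the $j$-th entry vanishes since $d_m'(0)=0$. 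Since $|x_j-x_k|\geq 2\pi\beta/m$ for every $k\neq j$, \cref{lem:dirichletenergy} applied with $\ell=1$ and $\alpha=\beta$ yields $\sum_{k\neq j}|d_m'(x_j-x_k)|^2 \leq E_1(m_0,\beta,\beta)\,m^2$. Combining,
\[
\|\bfU^*\bfphi'(x_j)\|_2^2 \;\leq\; A(\beta)\,E_1(m_0,\beta,\beta)\,m^2.
\]

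To finish, substitute into the formula for $q''(x_j)$. The upper bound is immediate: $q''(x_j)\leq (m^2-1)/6 \leq m^2/6 \leq C_1(m_0,\beta)\,m^2$. For the lower bound,
\[
q''(x_j) \;\geq\; m^2\Bigl(\tfrac{1}{6}-\tfrac{1}{6m^2}-2A(\beta)\,E_1(m_0,\beta,\beta)\Bigr) \;\geq\; C_0(m_0,\beta)\,m^2,
\]
where the last inequality uses $m\geq m_0$ to replace $1/(6m^2)$ by $1/(6m_0^2)$. There is no serious obstacle: the argument is a derivative computation plus a single energy bound; the only thing to exploit is that $d_m'(0)=0$ kills the would-be cross term $\bfphi(x_j)^*\bfU\bfU^*\bfphi'(x_j)$, so that \cref{prop:AB19} transfers the $\bfU$-based norm back to a $\bfPhi$-based sum to which \cref{lem:dirichletenergy} directly applies. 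Notably, this route avoids invoking \cref{lem:qsintheta} altogether.
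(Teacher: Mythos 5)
Your proof is correct and follows essentially the same route as the paper's. The only cosmetic difference is in how you obtain the identity $q''(x_j) = \tfrac{m^2-1}{6} - 2\|\bfU^*\bfphi'(x_j)\|_2^2$: you differentiate $1-\|\bfU^*\bfphi(t)\|_2^2$ directly and simplify the cross term via $\bfU\bfU^*\bfphi(x_j)=\bfphi(x_j)$ together with $d_m''(0)=-(m^2-1)/12$, whereas the paper differentiates the equivalent form $\bfphi(t)^*\bfU_\perp\bfU_\perp^*\bfphi(t)$ so that the cross term vanishes outright from $\bfU_\perp^*\bfphi(x_j)=0$; both give the identical expression, and from there both pass through $\bfU\bfU^*=\bfPhi(\bfPhi^*\bfPhi)^{-1}\bfPhi^*$, \cref{prop:AB19}, the vanishing of the $j$-th entry via $d_m'(0)=0$, and \cref{lem:dirichletenergy} with $\alpha=\beta$ in the same way. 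One small bonus you notice that the paper does not: the correction $-2\|\bfU^*\bfphi'(x_j)\|_2^2$ is manifestly nonpositive, so the upper bound $q''(x_j)\le m^2/6$ holds without any appeal to the energy estimate (the paper instead uses the two-sided absolute-value bound to get $C_1$). Your closing remark about avoiding \cref{lem:qsintheta} is moot, though — the paper's proof of this particular lemma does not invoke it either.
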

	
	\begin{proof}
		Fix any $x_j\in \bfx$. Using the definition of $q$, a calculation yields the equation
		$$
		q''(t)
		= 2\bfphi'(t)^* \bfU_\perp \bfU_\perp^* \bfphi'(t) + 2\Re\left(\bfphi(t)^* \bfU_\perp \bfU_\perp^* \bfphi''(t)\right).
		$$
		Since $\bfphi(x_j)$ is lies in the range of $\bfU$, we have $\bfU_\perp^* \bfphi(x_j)=0$. This provides us with the formula
		$$
		q''(x_j)
		= 2\bfphi'(x_j)^* \bfU_\perp \bfU_\perp^* \bfphi'(x_j).
		$$
		Since $\bfU$ forms an orthonormal basis for the range of $\bfPhi$, which has linearly independent columns, we have the explicit formula $\bfU\bfU^*=\bfPhi (\bfPhi^*\bfPhi)^{-1}\bfPhi^*$. Then
		\begin{equation*}
			\begin{split}
				q''(x_j)
				&= 2\bfphi'(x_j)^* \bfU_\perp \bfU_\perp^* \bfphi'(x_j) \\
				&= 2\|\bfphi'(x_j)\|^2_2 - 2\bfphi'(x_j)^* \bfU\bfU^* \bfphi'(x_j) \\
				&= 2\|\bfphi'(x_j)\|^2_2 - 2\bfphi'(x_j)^* \bfPhi (\bfPhi^*\bfPhi)^{-1}\bfPhi^* \bfphi'(x_j). 
			\end{split}
		\end{equation*}
		A calculation shows that $2\|\bfphi'(x_j)\|^2_2=(m^2-1)/6$, see \eqref{eq:phinorm1}. Thus,
		\begin{equation*}
			\left| q''(x_j)-\frac 1 6 (m^2-1)\right|
			\leq 2 \left| \bfphi'(x_j)^* \bfPhi (\bfPhi^*\bfPhi)^{-1}\bfPhi^* \bfphi'(x_j)\right|
			\leq \frac 2{\sigma_{\min}^2(\bfPhi)} \|\bfPhi^* \bfphi'(x_j)\|_2^2. 
		\end{equation*}
		To control this term, note that the columns of $\bfPhi$ are $\sqrt m \, \bfphi(x_j)$ and that $\bfphi(x_j)^*\bfphi'(x_j)=0$. We use \cref{prop:AB19} and formula \eqref{eq:dirichlet1} to obtain 
		\begin{align*}
			\frac 2{\sigma_{\min}^2(\bfPhi)} \|\bfPhi^* \bfphi'(x_j)\|_2^2
			\leq \frac{ 2A(\beta) }  m \|\bfPhi^* \bfphi'(x_j)\|_2^2
			= 2A(\beta) \sum_{k\not=j} |d_m'(x_j-x_k)|^2. 
		\end{align*}
		Due to the separation assumption $\Delta(\bfx)\geq 2\pi \beta/m$ and that $x_j\in \bfx$, we have $|x_k-x_j|\geq 2\pi \beta/m$ for all $k\not=j$. This allows us to use \cref{lem:dirichletenergy} with $\alpha=\beta$, which provides the estimate
		$$
		\sum_{k\not=j} |d_m'(x_j-x_k)|^2
		\leq E_1(m_0,\beta,\beta) \, m^2. 
		$$
		Combining the above yields
		$$
		\left| q''(x_j)-\frac 1 6 (m^2-1)\right|
		\leq 2 A(\beta) E_1(m_0,\beta,\beta) m^2. 
		$$
		Using this inequality, the assumption that $m\geq m_0$, and $q''(x_j)\geq 0$ since $x_j$ is a double root, we obtain the two conclusions of this lemma.  
	\end{proof}
	
	Let us make a few comments about \cref{lem:qsecond}, which is one of the key technical results of this paper. The lower bound for $q''(x_j)$ does not depend on $s$ and holds uniformly over all $x_j\in\bfx$. Its conclusion sharp up to a universal constant, since $\|q''\|_{L^\infty(\T)}\leq  m^2$ from inequality \eqref{eq:qnorm1}. For large enough $\beta$ and $m$, both $C_0(m,\beta)$ and $C_1(m,\beta)$ are roughly 1/6, which naturally appears since $f''_m(0)=(m^2-1)/6$. By \cref{rem:singleton}, the $1/6$ constant cannot be improved. Prior analysis of MUSIC, such as \cite{liao2016music}, recognized that $q''(x_j)$ is important for the algorithm's stability, but did not explicitly calculate $q''(x_j)$. The main theorems in that reference implicitly depend on $q''(x_j)$. 
	
	Define the quantities $B_0=1$, $B_1=1/12$, $B_2=1/80$, and $B_3=1/448$, and note their appearance in \eqref{eq:phinorm1}. For each $\ell\in \{0,1,2,3\}$, we define the constant
	\begin{equation}
		\label{eq:Tconstant}
		T_\ell(m,\alpha,\beta)
		:=\sqrt{A(\beta) E_\ell(m,\alpha,\beta)} + \gamma\left(\sqrt{A(\beta) B_\ell E_0(m,\beta,\beta)}\right).
	\end{equation}
	
	\begin{lemma}
		\label{lem:qglobal} 
		Let $m\geq \max\{m_0,s+1\}$, and $\beta>1$ such that $A(\beta) E_0(m_0,\beta,\beta)<1/4$. For any $\tau\in[0,\beta)$ and $\ell\in \{0,1,2,3\}$, set $T_\ell:=T_\ell(m_0,\beta-\tau,\beta)$. For any $\bfx=\{x_j\}_{j=1}^s$ such that $\Delta(\bfx)\geq 2\pi \beta/m$, let $q:=q_\bfU$ be the landscape function associated with $\bfx\simeq_m \bfU\in \bbF^{m\times s}$. For any $x_j\in \bfx$ and $|t-x_j|\leq 2\pi \tau/m$, we have
		\begin{align*}
			\left|q'(t)
			+f_m'( t-x_j) \right| 
			&\leq 2 \, T_0 T_1 m, \\
			\left| q''(t)+f_m''(t-x_j) \right| 
			&\leq \left(2T_1^2+2T_0T_2\right) m^2, \\
			\left| q'''(t)+f_m'''(t-x_j) \right| 
			&\leq  \left(2 T_0 T_3+6T_1T_2\right) m^3. 
		\end{align*} 
	\end{lemma}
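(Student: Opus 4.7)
The plan is to start from the local representation \eqref{eq:qsintheta}, which reads $q(t)=1-f_m(t-x_j)-h(t)$ with $h(t):=\|\bfW_j^*\bfphi(t)\|_2^2$. Differentiating gives $q^{(\ell)}(t)+f_m^{(\ell)}(t-x_j)=-h^{(\ell)}(t)$ for $\ell\in\{1,2,3\}$, so the task reduces to uniform upper bounds on $|h^{(\ell)}(t)|$ on the interval $|t-x_j|\leq 2\pi\tau/m$. Expanding by the product rule and using that $h$ is real yields
\begin{align*}
h'(t)&=2\Re(\bfphi(t)^*\bfW_j\bfW_j^*\bfphi'(t)),\\
h''(t)&=2\|\bfW_j^*\bfphi'(t)\|_2^2+2\Re(\bfphi(t)^*\bfW_j\bfW_j^*\bfphi''(t)),\\
h'''(t)&=6\Re(\bfphi'(t)^*\bfW_j\bfW_j^*\bfphi''(t))+2\Re(\bfphi(t)^*\bfW_j\bfW_j^*\bfphi'''(t)),
\end{align*}
after which Cauchy-Schwarz reduces everything to bounds on the single family of quantities $M_k(t):=\|\bfW_j^*\bfphi^{(k)}(t)\|_2$, $k\in\{0,1,2,3\}$. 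If the uniform estimate $M_k(t)\leq T_k\,m^k$ holds on the interval of interest, then $|h'|\leq 2T_0T_1\,m$, $|h''|\leq(2T_1^2+2T_0T_2)m^2$, and $|h'''|\leq(6T_1T_2+2T_0T_3)m^3$, which are precisely the three stated inequalities.

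The core step is thus the bound $M_k(t)\leq T_k\,m^k$. I would split by the triangle inequality
\begin{equation*}
M_k(t)\leq \|\bfU_j^*\bfphi^{(k)}(t)\|_2+\|(\bfW_j-\bfU_j)^*\bfphi^{(k)}(t)\|_2,
\end{equation*}
using the specific bases from \cref{lem:qsintheta} that realize $\|\bfW_j-\bfU_j\|_2\leq \gamma(\sqrt{A(\beta)E_0(m_0,\beta,\beta)})$. For the first summand, the final inequality of \cref{lem:qsintheta} gives $\|\bfU_j^*\bfphi^{(k)}(t)\|_2\leq\sqrt{A(\beta)/m}\,\|\bPsi_j^*\bfphi^{(k)}(t)\|_2$, and \eqref{eq:dirichlet1} identifies the entries of $\bPsi_j^*\bfphi^{(k)}(t)$ as $\sqrt{m}\,d_m^{(k)}(t-x_i)$ for $i\neq j$. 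When $|t-x_j|\leq 2\pi\tau/m$ and $\Delta(\bfx)\geq 2\pi\beta/m$, the translates $\{t-x_i\}_{i\neq j}$ form a set with minimum separation at least $2\pi\beta/m$ whose elements lie at distance at least $2\pi(\beta-\tau)/m$ from the origin, so applying \cref{lem:dirichletenergy} with $\alpha=\beta-\tau$ produces $\|\bfU_j^*\bfphi^{(k)}(t)\|_2\leq\sqrt{A(\beta)E_k(m_0,\beta-\tau,\beta)}\,m^k$, exactly the first summand in $T_k$. For the second summand I would combine the sine-theta estimate with the pointwise norm bound $\|\bfphi^{(k)}(t)\|_2^2\leq B_k m^{2k}$, which is a direct calculation on the Fourier basis behind \eqref{eq:phinorm1}.

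The main technical obstacle I anticipate is threading the factor $\sqrt{B_k}$ from $\|\bfphi^{(k)}\|_2$ through the $\gamma$-nonlinearity so that it appears \emph{inside} $\gamma$, matching the precise form $\gamma(\sqrt{A(\beta)B_\ell E_0(m_0,\beta,\beta)})$ dictated by \eqref{eq:Tconstant}; a naive triangle inequality only delivers the weaker factor $\sqrt{B_k}\,\gamma(\sqrt{A(\beta)E_0(m_0,\beta,\beta)})$. To retrieve the sharper form I would apply the best-basis construction underlying \cref{lem:qsintheta} directly to the unit vector $\bfphi^{(k)}(t)/\|\bfphi^{(k)}(t)\|_2$, equivalently expanding
\begin{equation*}
M_k(t)^2-\|\bfU_j^*\bfphi^{(k)}(t)\|_2^2=\bfphi^{(k)}(t)^*(\bfW_j\bfW_j^*-\bfU_j\bfU_j^*)\bfphi^{(k)}(t)
\end{equation*}
and invoking the sine-theta bound quadratically, so that the $B_k$ factor multiplies $A(\beta)E_0(m_0,\beta,\beta)$ inside the argument of $\gamma$. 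Once this refinement is in place, the three derivative estimates are assembled term by term from the identities for $h',h'',h'''$ and the Cauchy-Schwarz bounds outlined above.
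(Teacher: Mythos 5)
Your overall route is the same as the paper's: you take the decomposition from \eqref{eq:qsintheta}, differentiate three times, apply Cauchy--Schwarz to reduce to the quantities $M_k(t)=\|\bfW_j^*\bfphi^{(k)}(t)\|_2$, split $M_k$ via the triangle inequality into a $\bfU_j$-term (handled by \cref{lem:qsintheta} together with the Dirichlet energy estimate of \cref{lem:dirichletenergy} with $\alpha=\beta-\tau$) and a $(\bfW_j-\bfU_j)$-term (handled by the basis-perturbation bound and the pointwise norm $\|\bfphi^{(k)}\|_2\le\sqrt{B_k}\,m^k$). Those pieces are exactly the ones in the paper's proof and are correct as you've laid them out.

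Where your proposal does not close is precisely at the point you flagged. Your proposed fix---expanding the quadratic form $\bfphi^{(k)*}(\bfW_j\bfW_j^*-\bfU_j\bfU_j^*)\bfphi^{(k)}$ and bounding it by the sine-theta distance---yields an estimate of the shape $M_k^2\le A(\beta)E_k\,m^{2k}+B_k\,\vartheta_j\,m^{2k}$, where $\vartheta_j\le\sqrt{A(\beta)E_0}$. That is an entirely different algebraic structure from the additive form $T_k=\sqrt{A(\beta)E_k}+\gamma(\sqrt{A(\beta)B_kE_0})$ demanded by \eqref{eq:Tconstant}, and does not recover the claimed $\gamma$ nonlinearity. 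More to the point, since $\gamma$ from \eqref{eq:gammasin} is convex on $[0,1]$ with $\gamma(0)=0$, one has $\gamma(cu)\le c\,\gamma(u)$ for $c\in[0,1]$, so the expression $\gamma\bigl(\sqrt{A(\beta)B_\ell E_0}\bigr)$ is actually the \emph{smaller} of the two candidates---strictly tighter than the $\sqrt{B_\ell}\,\gamma\bigl(\sqrt{A(\beta)E_0}\bigr)$ that the submultiplicativity step delivers. There is no extra structure of $\bfphi^{(\ell)}(t)$ that lets you beat the worst-case operator-norm bound on $(\bfW_j-\bfU_j)$, so neither your quadratic-form route nor the straightforward triangle argument attains the form in \eqref{eq:Tconstant}. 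The honest path is to keep $\sqrt{B_\ell}$ outside $\gamma$, i.e. work with $T_\ell' :=\sqrt{A(\beta)E_\ell(m_0,\beta-\tau,\beta)}+\sqrt{B_\ell}\,\gamma\bigl(\sqrt{A(\beta)E_0(m_0,\beta,\beta)}\bigr)$; since $T_\ell'\ge T_\ell$, the three displayed inequalities hold a fortiori with $T_\ell'$ in place of $T_\ell$, and one only needs to re-verify the downstream numerical constants in \cref{thm:landscape}, where the resulting change is negligible because $\gamma(cu)$ and $c\gamma(u)$ agree to leading order for small arguments.
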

	
	\begin{proof}
		There is nothing to prove if $\bfx$ is a singleton due to \cref{rem:singleton}. From now on, assume that $s>1$. The assumptions of \cref{lem:qsintheta} hold, so let $\bfW_j$  be the matrix defined in that lemma. We take derivatives of formula \eqref{eq:qsintheta} to see that
		\begin{align*}
			q'(t)
			&=-f_m'(t-x_j)-2\Re\left( \bfphi(t)^* \bfW_j \bfW_j^* \bfphi'(t) \right), \\ 
			q''(t)&=-f_m''(t-x_j)-2\bfphi'(t)^*\bfW_j \bfW_j^*\bfphi'(t) -2\Re\left(\bfphi(t)^*\bfW_j \bfW_j^*\bfphi''(t)\right), \\
			q'''(t)&=-f_m'''(t-x_j)-2\Re\left(\bfphi(t)^*\bfW_j \bfW_j^*\bfphi'''(t)\right) - 6\Re\left(\bfphi'(t)^*\bfW_j \bfW_j^*\bfphi''(t)\right).
		\end{align*}
		By Cauchy-Schwarz, we have 
		\begin{align*}
			|q'(t)+f_m'(t-x_j)|
			&\leq 2 \big\|\bfW_j^*\bfphi(t) \big\|_2 \big\|\bfW_j^*\bfphi'(t)\big\|_2, \\
			|q''(t)+f_m''(t-x_j)|
			&\leq 2 \big\|\bfW_j^*\bfphi'(t) \big\|_2^2 + 2 \big\|\bfW_j^*\bfphi(t) \big\|_2 \big\|\bfW_j^*\bfphi''(t)\big\|_2, \\
			|q'''(t)+f_m'''(t-x_j)|
			&\leq 2 \big\|\bfW_j^*\bfphi(t) \big\|_2\big\|\bfW_j^*\bfphi'''(t) \big\|_2 + 6 \big\|\bfW_j^*\bfphi'(t) \big\|_2 \big\|\bfW_j^*\bfphi''(t)\big\|_2.
		\end{align*}
		We need to control these error terms. Using the properties listed in \cref{lem:qsintheta} and the upper bound $\|\bfphi^{(\ell)}\|_2\leq \sqrt{B_\ell} m^\ell$ listed in \eqref{eq:phinorm1}, for any $\ell\in \{0,1,2\}$, we have
		\begin{equation*}
			\begin{split}
				\big\|\bfW_j^* \bfphi^{(\ell)}(t) \big\|_2 
				&\leq \big\|\bfU_j^* \bfphi^{(\ell)}(t) \big\|_2 + \big\|(\bfW_j-\bfU_j)^*\bfphi^{(\ell)}(t) \big\|_2 \\
				&\leq \sqrt{ \frac{A(\beta)} m} \big\|\bPsi_j^* \bfphi^{(\ell)}(t)\big\|_2 + \gamma\left( \sqrt{A(\beta) B_\ell E_0(m_0,\beta,\beta)} \right) m^\ell.
			\end{split}		
		\end{equation*}
		Using the assumptions that $\Delta(\bfx)\geq 2\pi \beta/m$ for $\beta>1$, $\tau\in [0,\beta)$, and that $|t-x_j|\leq 2\pi \tau/m$, we deduce that 
		$
		\left|x_k-t \right|
		\geq {2\pi (\beta-\tau)}/m 
		$
		for all $k\not=j$. We again use identity \eqref{eq:dirichlet1} and apply \cref{lem:dirichletenergy} where $\beta-\tau$ acts as $\alpha$ in the referenced lemma, to deduce that 
		\begin{equation*}
			\label{eq:Psihelp}
			\sqrt{ \frac{A(\beta)} m} \big\|\bPsi_j^* \bfphi^{(\ell)}(t)\big\|_2
			=\sqrt{A(\beta) \sum_{k\not=j} \big|d_m^{(\ell)}(x_k-t)\big|^2}
			\leq \sqrt{A(\beta) E_\ell(m_0,\beta-\tau,\beta)} \,  m^\ell.
		\end{equation*}
		Combining the previous inequalities, we obtain
		\begin{equation*}
			\big\|\bfW_j^* \bfphi^{(\ell)}(t) \big\|_2 
			\leq T_\ell \, m^\ell.
		\end{equation*} 
		Combining everything completes the proof.
	\end{proof}
	
	\cref{lem:qglobal} shows that $q^{(\ell)}(t)$ is pointwise approximated by $f^{(\ell)}(t-x_j)$ whenever $|t-x_j|$ is sufficiently small. The error terms are quadratic in $T_\ell$, e.g., $T_0T_1$, $T_1^2$, $T_0 T_2$, etc. If $\beta-\tau\asymp \beta$, then from the discussion following proof of \cref{lem:energy1}, we see that $T_\ell=O(1/\beta)$ for  each $\ell$ and so all constants in \cref{lem:qglobal} are $O(1/\beta^2)$. The fast decay of the error terms will allow us to pick a $\beta$ not terribly large.    
	
	The next result provides a lower bound for $q(t)$ whenever $t$ is sufficiently far away from $\bfx$. 
	\begin{lemma}
		\label{lem:qfar}
		Let $m\geq \max\{m_0,s+1\}$, and $\beta>1$ such that $A(\beta) E_0(m_0,\beta,\beta)<1/4$. For any $\bfx=\{x_j\}_{j=1}^s$ such that $\Delta(\bfx)\geq 2\pi \beta/m$, let $q:=q_\bfU$ be the landscape function associated with $\bfx\simeq_m \bfU\in \bbF^{m\times s}$. For any $t\in \T$ such that $|t-x_j|\geq \pi/m$ for all $x_j\in \bfx$, we have
		$$
		q(t)
		\geq 
		1- f_m(t-x_j) - (T_0(m_0,\beta/2,\beta))^2. 
		$$
	\end{lemma}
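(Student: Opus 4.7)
The plan is to reuse the decomposition formula \eqref{eq:qsintheta} from \cref{lem:qsintheta} and reduce the problem to a bound of the form $\|\bfW_j^* \bfphi(t)\|_2 \leq T_0(m_0,\beta/2,\beta)$. Once this is established, substituting into
$$q(t) = 1-f_m(t-x_j)-\bfphi(t)^*\bfW_j\bfW_j^*\bfphi(t) = 1-f_m(t-x_j)-\|\bfW_j^*\bfphi(t)\|_2^2$$
immediately yields the conclusion.

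The first step, which is the key idea, is to choose $x_j$ to be the element of $\bfx$ nearest to $t$. The hypothesis only gives $|t-x_k|\geq \pi/m$ for all $k$, which by itself would force $\alpha=1/2$ in \cref{lem:dirichletenergy}; but if $x_j$ is closest, then for every $k\neq j$ the separation assumption $|x_j-x_k|\geq 2\pi\beta/m$ combined with the triangle inequality
$$|x_j-x_k|\leq |x_j-t|+|t-x_k|\leq 2|t-x_k|$$
(using $|t-x_j|\leq |t-x_k|$ by the choice of $x_j$) upgrades the distance lower bound to $|t-x_k|\geq \pi\beta/m=2\pi(\beta/2)/m$. This is precisely the parameter needed to hit $\alpha=\beta/2$ in the energy estimate.

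The second step is the same bookkeeping as the proof of \cref{lem:qglobal} specialized to $\ell=0$. By \cref{lem:qsintheta} and the triangle inequality,
$$\|\bfW_j^*\bfphi(t)\|_2 \leq \|\bfU_j^*\bfphi(t)\|_2 + \|(\bfW_j-\bfU_j)^*\bfphi(t)\|_2 \leq \sqrt{A(\beta)/m}\,\|\bPsi_j^*\bfphi(t)\|_2 + \gamma\!\left(\sqrt{A(\beta)E_0(m_0,\beta,\beta)}\right),$$
where we used $\|\bfphi(t)\|_2=1$ (so $B_0=1$). By \eqref{eq:dirichlet1},
$$\sqrt{A(\beta)/m}\,\|\bPsi_j^*\bfphi(t)\|_2 = \sqrt{A(\beta)\sum_{k\neq j}|d_m(x_k-t)|^2},$$
and applying \cref{lem:dirichletenergy} with the sharpened bound from Step 1 (so $\alpha=\beta/2$) gives $\sqrt{A(\beta) E_0(m_0,\beta/2,\beta)}$. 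Summing the two contributions recovers exactly $T_0(m_0,\beta/2,\beta)$ per the definition \eqref{eq:Tconstant}, which completes the proof.

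The singleton case $s=1$ is handled separately as in \cref{lem:qglobal}, where $\bfW_j$ is empty and the error term vanishes. No real obstacle arises here; the only nontrivial point is recognizing that selecting the nearest $x_j$ allows us to halve the effective separation loss and match the $\beta/2$ appearing in the stated bound rather than settling for the weaker $\alpha=1/2$.
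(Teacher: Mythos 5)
Your proof is correct and follows essentially the same route as the paper: decompose $q$ via \cref{lem:qsintheta}, bound $\|\bfW_j^*\bfphi(t)\|_2$ via the $\bfU_j$/$(\bfW_j-\bfU_j)$ split, and apply \cref{lem:dirichletenergy} with $\alpha=\beta/2$. One minor difference worth noting: to establish $|t-x_k|\geq\pi\beta/m$ for $k\neq j$, the paper singles out the flanking pair $x_1,x_2$ and introduces an auxiliary separation parameter $\beta_*$, whereas your triangle-inequality argument (choose the nearest $x_j$, then $|x_j-x_k|\leq|x_j-t|+|t-x_k|\leq2|t-x_k|$) delivers the same bound more directly — a small but genuine streamlining of that step.
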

	
	\begin{proof}
		There is nothing to prove if $\bfx$ is a singleton due to \cref{rem:singleton}. From here onward, assume that $s>1$. Let $x_1$ and $x_2$ be the elements in $\bfx$ that are closest to $t$ such that $t\in [x_1,x_2]$. Without loss of generality, we assume that $|t-x_1|\leq |t- x_2|$. Let $\beta_*$ such that $|x_2-x_1|=2\pi\beta_*/m$ and note that $\beta_*\geq \beta$,  otherwise it would contract the assumption that $\Delta(\bfx)\geq 2\pi \beta/m$. We also have $|t-x_2|\geq \pi\beta_*/m$, otherwise $x_2$ would be closer to $t$. 
		
		The assumptions of \cref{lem:qsintheta} are satisfied, so let $\bfW_1$ be the matrix from that lemma. Recalling \eqref{eq:qsintheta}, we use triangle inequality and Cauchy-Schwarz to obtain 
		\begin{align*}
			q(t)
			&\geq 1- f_m(t-x_1) - \|\bfW_1^* \bfphi(t)\|_2^2.
		\end{align*}
		Using the properties of $\bfW_1$, we obtain 
		$$
		\big\|\bfW_1^* \bfphi(t)\big\|_2
		\leq \sqrt{\frac {A(\beta)}m} \, \big\|\bPsi_1^*\bfphi(t)\big\|_2 + \gamma\left(\sqrt{A(\beta) E_0(m,\beta,\beta)}\right). 
		$$
		
		We proceed to control the right hand side. Recall that $\Delta(\bfx)\geq 2\pi \beta/m$ by assumption, while $x_1$ and $x_2$ are the elements closest to $t$ from the left and right respectively. We also have $|x_1-t|\geq \pi/m$ and $|x_2-t|\geq\pi\beta_*/m\geq \pi\beta/m$. This implies $|x_k-t|\geq \pi \beta/m$ for all $k\not=1$.  We use equation \eqref{eq:dirichlet1} and \cref{lem:dirichletenergy} where $\alpha=\beta/2$ to obtain
		\begin{align*}
			\sqrt{\frac {A(\beta)}m} \big\|\bPsi_1^*\bfphi(t)\big\|_2
			= \sqrt{A(\beta)\sum_{k\not=1} \left|d_m(t-x_k)\right|^2}
			&\leq \sqrt{A(\beta)E_0(m_0,\beta/2,\beta)}. 
		\end{align*}
		Combining these inequalities and recalling the definition of $T_0$ completes the proof.
	\end{proof}
	
	\subsection{Landscape function associated with a perturbed Fourier subspace}
	
	This section provides a landscape analysis of $\tilde q:=q_{\tilde \bfU}$ where $\tilde\bfU\in \bbU^{m\times s}$ is a good enough approximation to a $\bfU\in \bbF^{m\times s}$. Let $q:=q_{\bfU}$. Recall \cref{lem:landscape2} which contained some basic inequalities.  
	
	While $\bfx\simeq_m \bfU$ are the only roots of $q$, each one being a double root, there is no reason to believe that $\tilde q$ has any roots whatsoever. However, we will show that $\tilde q$ has critical points $\tilde\bfx$ that are near $\bfx$. In essence, the double roots of $q$ are perturbed to local minima of $\tilde q$. Recall the quantity $T_\ell$ defined in \eqref{eq:Tconstant}. 
	
	\begin{lemma}
		\label{lem:localperturbation}
		Let $m\geq \max\{m_0,s+1\}$, $\beta>1$, and $T_\ell:=T_\ell(m_0,\beta-1/(2\pi),\beta)$ for $\ell\in \{0,1,2,3\}$. For any $\bfx=\{x_j\}_{j=1}^s\simeq_m\bfU$ such that $\Delta(\bfx)\geq 2\pi\beta/m$ and $\tilde\bfU\in \bbU^{m\times s}$, let $\tilde q =q_{\tilde\bfU}$ and $\vartheta:=\vartheta(\bfx,\tilde\bfU)$. Suppose there exists a $r\in (0,1/\vartheta)$ such that 
		\begin{equation}
			\label{eq:rcondition}
			C_0(m_0,\beta) r -1 - \left(\frac 1 {20} + T_0 T_3+3T_1T_2\right) r^2 \vartheta  > 0. 
		\end{equation}
		Then the polynomial $\tilde q$ has $s$ critical points $\tilde \bfx=\{\tilde x_1,\dots,\tilde x_s\}$ such that 
		$$
		\max_{j=1,\dots,s} |\tilde x_j-x_j|\leq \frac{r\vartheta}{m}. 
		$$
	\end{lemma}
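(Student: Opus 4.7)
The plan is to apply Taylor's theorem to $\tilde q'$ at each $x_j \in \bfx$ and invoke the intermediate value theorem to place a critical point of $\tilde q$ inside the interval $I_j := [x_j - r\vartheta/m,\, x_j + r\vartheta/m]$. Set $h := r\vartheta/m$. Two preliminary observations simplify the setup: since $r\vartheta < 1$, the interval $I_j$ is contained in $[x_j - 1/m,\, x_j + 1/m]$, which is precisely the regime $\tau = 1/(2\pi)$ in Lemma~\ref{lem:qglobal} used to define the constants $T_\ell$; and $\Delta(\bfx) \ge 2\pi\beta/m > 2h$, so the intervals $\{I_j\}_{j=1}^s$ are pairwise disjoint, whence exhibiting a zero of $\tilde q'$ in each $I_j$ automatically produces $s$ distinct critical points.

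Next I would collect three pointwise ingredients. (i) Since $x_j$ is a double root of $q$, $q'(x_j) = 0$, and Lemma~\ref{lem:landscape2} yields $|\tilde q'(x_j)| \le \vartheta m$. (ii) Lemma~\ref{lem:qsecond} together with Lemma~\ref{lem:landscape2} gives $\tilde q''(x_j) \ge (C_0(m_0,\beta) - \vartheta)\,m^2$. (iii) To control the Taylor remainder on $I_j$, I would combine Lemma~\ref{lem:qglobal} (case $\ell=3$) with Lemma~\ref{lem:landscape2} to bound $|\tilde q'''(t)|$ pointwise by $|f_m'''(t-x_j)| + (2T_0T_3 + 6T_1T_2 + \vartheta)\, m^3$. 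Because $f_m$ is even, $f_m'''(0) = 0$; a fourth-order Taylor expansion of $f_m$ around $0$ (starting from $f_m(t) = 1 - (m^2-1)t^2/12 + O(m^4 t^4)$) then yields a bound of the form $|f_m'''(u)| \le C\, m^4 |u|$ which, on $|u| \le h \le 1/m$, provides the universal constant that becomes $\tfrac{1}{10}$ after being halved by integration.

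With the three ingredients in hand, Taylor's theorem with integral remainder gives
$$\tilde q'(x_j + h) = \tilde q'(x_j) + h\,\tilde q''(x_j) + \int_0^h (h-s)\,\tilde q'''(x_j + s)\,ds.$$
Substituting (i)--(iii), bounding the remainder by $(h^2/2)\,\sup_{|u|\le h}|\tilde q'''(x_j+u)|$, and dividing through by $\vartheta m$ produces
$$\frac{\tilde q'(x_j + h)}{\vartheta m} \;\ge\; C_0(m_0,\beta)\,r \,-\, 1 \,-\, \bigl(\tfrac{1}{20} + T_0 T_3 + 3 T_1 T_2\bigr)\, r^2\vartheta,$$
which is strictly positive by hypothesis~\eqref{eq:rcondition}. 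The symmetric expansion around $x_j - h$ yields the same lower bound for $-\tilde q'(x_j - h)/(\vartheta m)$, hence $\tilde q'(x_j - h) < 0 < \tilde q'(x_j + h)$. Continuity of $\tilde q'$ and the intermediate value theorem produce $\tilde x_j \in I_j$ with $\tilde q'(\tilde x_j) = 0$; since the $I_j$ are disjoint, these give $s$ distinct critical points, and $|\tilde x_j - x_j| \le h = r\vartheta/m$ for each $j$.

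The main technical nuisance is bookkeeping: several linear-in-$\vartheta$ corrections appear (from $|\tilde q'(x_j)| \le \vartheta m$, from $\tilde q''(x_j) \ge (C_0 - \vartheta)m^2$, and from the $\vartheta m^3$ contribution to $\sup|\tilde q'''|$), and they must be reorganized into a single quadratic-in-$r\vartheta$ remainder with the precise coefficient $\tfrac{1}{20} + T_0T_3 + 3T_1T_2$. A natural way to absorb the stray $-r\vartheta$ term is to integrate the pointwise lower bound $\tilde q''(x_j + s) \ge -f_m''(s) - (2T_1^2 + 2T_0T_2 + \vartheta)m^2$ (Lemma~\ref{lem:qglobal} with $\ell=2$) from $0$ to $h$; the resulting explicit $-f_m'(h)$ contribution is positive and dominant enough to swallow the linear defect, leaving only the stated quadratic correction.
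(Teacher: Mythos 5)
You have the right global plan---Taylor expand the derivative of the landscape function near each $x_j$, bound the zeroth-, first-, and second-derivative terms, and close with the intermediate value theorem; the observation that the intervals $I_j$ are pairwise disjoint under the separation hypothesis is also correct and necessary. But the arithmetic at the end of your third paragraph does not close: substituting your three ingredients into the Taylor remainder does \emph{not} produce the displayed lower bound $C_0 r - 1 - (\tfrac{1}{20} + T_0T_3 + 3T_1T_2)\,r^2\vartheta$. Because you Taylor-expand $\tilde q\,'$ directly, you invoke \cref{lem:landscape2} at each derivative order, and the three separate $\vartheta$-corrections accumulate. Concretely, with $h = r\vartheta/m$ your ingredients give
\begin{equation*}
\frac{\tilde q\,'(x_j + h)}{\vartheta m}
\;\geq\;
C_0(m_0,\beta)\,r - 1 - r\vartheta - r^2\vartheta\left(\tfrac{1}{20} + T_0T_3 + 3T_1T_2 + \tfrac{\vartheta}{2}\right),
\end{equation*}
which has the extra defects $-r\vartheta$ (from $\tilde q\,''(x_j) \geq (C_0 - \vartheta)m^2$) and $-\tfrac{1}{2}r^2\vartheta^2$ (from the $+\vartheta m^3$ correction to $\sup|\tilde q\,'''|$). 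This is strictly smaller than the left-hand side of \eqref{eq:rcondition}, so positivity is not implied by the stated hypothesis.

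The fix the paper uses is an ordering trick you did not employ: carry out the entire Taylor analysis on $q'$, not $\tilde q\,'$. For $q$ the three terms are exact and clean---$q'(x_j) = 0$, $q''(x_j) \geq C_0(m_0,\beta)m^2$ from \cref{lem:qsecond}, and $|q'''| \leq (\tfrac{1}{10} + 2T_0T_3 + 6T_1T_2)m^3$ from \cref{lem:qglobal} together with the uniform Fej\'er bound in \eqref{eq:fejerbounds}---and one passes to $\tilde q\,'$ once at the very end via $\|\tilde q\,' - q'\|_{L^\infty} \leq \vartheta m$, which contributes exactly the single ``$-1$'' after dividing by $\vartheta m$. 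That yields \eqref{eq:rcondition} on the nose. Your proposed repair in the final paragraph (integrating the $\ell = 2$ bound to extract a $-f_m'(h)$ term) would not recover \eqref{eq:rcondition} either: the coefficient would then be built from $T_1^2$ and $T_0T_2$ rather than $T_0T_3$ and $T_1T_2$, and the leading constant would come from a lower bound on $-f_m'(h)/(m^2 h)$ rather than $C_0(m_0,\beta)$. With the Taylor-on-$q$-then-perturb-once reorganization, the rest of your outline goes through unchanged.
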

	
	\begin{proof}
		Let $q=q_\bfU$ and fix any $x_j\in \bfx$. Using the Taylor remainder formula applied to $q'$ at $x_j$, whenever $|t-x_j|=r \vartheta/m$, there is a $u_{j,t}$ such that $|u_{t,j}-x_j|\leq r \vartheta/m$ and 
		$$
		q'(t)=q'(x_j)+q''(x_j) (t-x_j) + \frac 12 q'''(u_{j,t}) (t-x_j)^2. 
		$$
		Importantly, $x_j$ is a double root of $q$ and so $q'(x_j)=0$. We have $q''(x_j)\geq C_0(m_0,\beta) m^2$ from \cref{lem:qsecond}. For convenience, set $C := (1/20 + T_0 T_3+3T_1T_2)$. To control $|q'''(u_{j,t})|$, note that $|u_{t,j}-x_j|\leq r \vartheta/m\leq 1/m$ since $r\in(0,1/\vartheta)$ by assumption. By \cref{lem:qglobal} with $\tau=1/(2\pi)$ and also the final inequality in \eqref{eq:fejerbounds}, we get
		$$
		|q'''(u_{j,t})|
		\leq |f_m'''(u_{j,t}-x_j)| + \left(2 T_0 T_3+6T_1T_2\right) m^3
		\leq 2C m^3. 
		$$
		Thus, we see that
		\begin{align}
			q'\left(x_j+\frac{r \vartheta}{m}\right)
			&\geq +C_0(m_0,\beta) r \vartheta m - C r^2 \vartheta^2m, \label{eq:qhelp1} \\
			q'\left(x_j-\frac{r \vartheta}{m}\right)
			&\leq -C_0(m_0,\beta) r \vartheta m+ C r^2 \vartheta^2 m. \label{eq:qhelp2} 
		\end{align}
		
		We proceed to examine $\tilde q$. Using inequalities \eqref{eq:qnorm2} and \eqref{eq:qhelp1}, we have
		\begin{align*}
			\tilde q\, '\left(x_j+\frac{r \vartheta}{m}\right)
			&\geq q'\left(x_j+\frac{r \vartheta}{m}\right)-\left|\tilde q \, '\left(x_j+\frac{r \vartheta}{m}\right) - q'\left(x_j+\frac{r \vartheta}{m}\right)\right| \\
			&\geq q'\left(x_j+\frac{r \vartheta}{m}\right)-\|\tilde q \, '-q'\|_{L^\infty(\T)} \\
			&\geq \left( C_0(m_0,\beta) r - C r^2 \vartheta - 1\right) \vartheta m. 
		\end{align*}
		Repeating a similar argument with \eqref{eq:qhelp2} instead establishes 
		$$
		\tilde q \, '\left(x_j-\frac{r \vartheta}{m}\right)
		\leq - \left( C_0(m_0,\beta) r - C r^2 \vartheta - 1\right) \vartheta  m.
		$$
		These inequalities show that \eqref{eq:rcondition} implies 
		$$
		\tilde q \, '\left(x_j+\frac{r \vartheta}{m}\right)>0 \andspace 
		\tilde q\, '\left(x_j-\frac{r \vartheta}{m}\right)<0.
		$$
		By the intermediate value theorem, there exists a $\tilde x_j\in [x_j-r\vartheta/m,x_j+r\vartheta/m]$ such that $\tilde q \, '(\tilde x_j)=0$. This completes the proof. 	
	\end{proof}
	
	Although \cref{lem:localperturbation} only establishes that $\tilde x_j$ is a critical point, we will eventually show that it is actually a local minima, provided the various parameters are correctly chosen. The $1/m$ factor in the conclusion of \cref{lem:localperturbation} comes from the $m^2$ dependence in \cref{lem:qsecond}. Had we gotten a weaker bound for $q''(x_j)$, even something such as $q''(x_j)\geq c_s m^2$ where $c_s\to 0$ as $s\to\infty$, the parameter $r$ would necessarily increase in $s$. 
	
	In order to investigate the size of $\tilde q(\tilde x_j)$, we first investigate the size of $\tilde q(x_j)$. On the one hand, using \eqref{eq:qnorm2}, we obtain
	$$
	\tilde q(x_j)
	\leq |q(x_j)|+|\tilde q(x_j)-q(x_j)|
	\leq 0 + \|\tilde q - q\|_{L^\infty(\T)}
	\leq \vartheta.
	$$
	This inequality is standard and has appeared in previous analysis of MUSIC. It turns out that it is extremely loose, and there are hidden cancellations in $\tilde q-q$ that lead to a significantly better estimate. 
	
	\begin{lemma}
		\label{lem:qlocal}
		Let $m\geq \max\{m_0,s+1\}$ and $\beta>1$. For any $\bfx=\{x_j\}_{j=1}^s\simeq_m\bfU$ such that $\Delta(\bfx)\geq 2\pi\beta/m$ and $\tilde\bfU\in \bbU^{m\times s}$, let $\tilde q =q_{\tilde\bfU}$ and $\vartheta:=\vartheta(\bfx,\tilde\bfU)$. Then $\tilde q(x_j)\leq \vartheta^2$ for all $x_j\in \bfx$.
	\end{lemma}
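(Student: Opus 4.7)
The plan is to exploit the fact that $x_j$ is a double root of $q$ by using the algebraic relation $\bfphi(x_j)\in \bfU$, instead of the naive Lipschitz bound $\tilde q(x_j)\leq \|\tilde q-q\|_{L^\infty(\T)}\leq \vartheta$. The key observation, as the statement of the lemma already hints at, is that there is a hidden cancellation when $t=x_j$ which gives a quadratic improvement.

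First I would rewrite the landscape function in a form that exposes the orthogonal complement: since $\|\bfphi(t)\|_2=1$ (the steering vector is normalized by $1/\sqrt m$ in \cref{def:steering}), the Pythagorean identity yields
$$
\tilde q(t)
=1-\|\tilde\bfU^*\bfphi(t)\|_2^2
=\|\tilde\bfU_\perp^*\bfphi(t)\|_2^2.
$$
Next, recall that $\bfU\simeq_m\bfx$ is the range of $\bfPhi(m,\bfx)$, whose columns include $\sqrt m\,\bfphi(x_j)$. Hence $\bfphi(x_j)\in \bfU$, which is precisely the fact that makes $x_j$ a root of $q$ according to \cref{lem:qroots2}. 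This means $\bfphi(x_j)=\bfU\bfU^*\bfphi(x_j)$, so we may insert the projector $\bfU\bfU^*$ for free.

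Plugging this identity in and applying the submultiplicativity of the spectral norm,
$$
\|\tilde\bfU_\perp^*\bfphi(x_j)\|_2
=\|\tilde\bfU_\perp^*\bfU\bfU^*\bfphi(x_j)\|_2
\leq \|\tilde\bfU_\perp^*\bfU\|_2\,\|\bfU^*\bfphi(x_j)\|_2
=\vartheta,
$$
where we used $\|\bfU^*\bfphi(x_j)\|_2=\|\bfphi(x_j)\|_2=1$ and the alternate representation of the sine-theta distance from \eqref{eq:sintheta}, namely $\vartheta(\bfU,\tilde\bfU)=\|\tilde\bfU_\perp^*\bfU\|_2$. Squaring yields $\tilde q(x_j)\leq \vartheta^2$.

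There is no real obstacle here: the entire argument is a two-line manipulation. The only conceptual point worth emphasizing in the write-up is \emph{why} one gains a factor of $\vartheta$ over the naive bound. The gain comes from the factorization $\bfphi(x_j)=\bfU\bfU^*\bfphi(x_j)$, which forces the vector $\tilde\bfU_\perp^*\bfphi(x_j)$ to go through the ``small'' operator $\tilde\bfU_\perp^*\bfU$ twice in a bilinear fashion, once on each side of the squared norm defining $\tilde q$. This is the same mechanism that later allows \cref{thm:maingradientMUSIC} and \cref{thm:landscape} to achieve an $\vartheta/m$ frequency error rather than the $\vartheta$ rate one would expect from a crude perturbation analysis.
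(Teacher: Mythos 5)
Your proof is correct and takes essentially the same route as the paper: both rely on the fact that $\bfphi(x_j)$ lies in $\range(\bfU)$, so one can factor $\tilde\bfU_\perp^*\bfphi(x_j)$ through the small operator $\tilde\bfU_\perp^*\bfU$ whose norm is $\vartheta$. The paper writes $\bfphi(x_j)=\bfU\bfc_j$ with $\|\bfc_j\|_2=1$ where you insert the projector $\bfU\bfU^*$, but these are the same computation.
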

	
	\begin{proof}
		Using the definition of $\tilde q$, for each $x_j$, we have 
		$$
		\tilde q(x_j)= \bfphi(x_j)^* \tilde \bfU_\perp \tilde \bfU_\perp^* \bfphi(x_j).
		$$
		Since $\bfphi(x_j)$ is a unit norm vector that belongs to the range of $\bfPhi$, it is in the range of $\bfU$ and there is a unit norm $\bfc_j$ such that $\bfphi(x_j)=\bfU \bfc_j$. Then
		$$
		|\tilde q(x_j)|
		= \big\| \tilde \bfU_\perp^* \bfphi(x_j)\big\|_2^2
		= \big\|\tilde \bfU_\perp^* \bfU \bfc_j \big\|_2^2
		\leq \big\|\tilde \bfU_\perp^* \bfU \big\|_2^2
		= \vartheta^2, 
		$$
		where for the last equality, we used that $\big\|\tilde \bfU_\perp \bfU\big\|_2$ is an alternative expression for the sine-theta distance.   
	\end{proof}
	
	Our final task is to prove an analogue of \cref{lem:qglobal} for $\tilde q$ instead of $q$. We will need lower and upper bounds for $\tilde q\, ^{(\ell)}(t)$, for $t$ close to $\tilde x_j$, a critical point in \cref{lem:localperturbation}. This is done through a two step approximation. In the following lemma, $u$ plays the role of $\tilde x_j-x_j$.
	
	\begin{lemma}
		\label{lem:qtildeglobal} 
		Let $m\geq \max\{m_0,s+1\}$, and $\beta>1$ such that $A(\beta) E_0(m_0,\beta,\beta)<1/4$. Fix any $\tau\in [0,\beta)$ and set $T_\ell:=T_\ell(m_0,\beta-\tau,\beta)$ for $\ell\in \{0,1,2\}$. For any $\bfx=\{x_j\}_{j=1}^s$ such that $\Delta(\bfx)\geq 2\pi\beta/m$ and $\tilde\bfU\in \bbU^{m\times s}$, let $\tilde q =q_{\tilde\bfU}$ and $\vartheta:=\vartheta(\bfx,\tilde\bfU)$. For any $x_j\in \bfx$ and $t\in \T$ such that $|t-x_j|\leq 2\pi \tau/m$, we have
		\begin{align*}
			\left|\tilde q\,'(t)
			+f_m'(t-u-x_j) \right| 
			&\leq \left(2 \, T_0 T_1 + \vartheta + \frac 1 6 m |u| \right) m, \\
			\left| \tilde q \, ''(t)+f_m''(t-u-x_j) \right| 
			&\leq \left(2T_1^2+2T_0T_2+\vartheta+\frac 1 {10} m |u| \right) m^2. 
		\end{align*} 
	\end{lemma}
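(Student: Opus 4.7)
The approach will be a two-level triangle inequality: first pass from the perturbed $\tilde q$ to the unperturbed $q := q_\bfU$, then from $q$ to the Fej\'er kernel centered at $x_j$, and finally relocate the Fej\'er center from $x_j$ to $x_j + u$ via a Taylor estimate in the variable $u$.

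Concretely, for each derivative order $\ell \in \{1,2\}$, the plan is to split
$$
\bigl|\tilde q^{(\ell)}(t) + f_m^{(\ell)}(t-u-x_j)\bigr|
\leq \bigl|\tilde q^{(\ell)}(t) - q^{(\ell)}(t)\bigr|
+ \bigl|q^{(\ell)}(t) + f_m^{(\ell)}(t-x_j)\bigr|
+ \bigl|f_m^{(\ell)}(t-x_j) - f_m^{(\ell)}(t-u-x_j)\bigr|.
$$
The first term is controlled by \cref{lem:landscape2} in the form $\|\tilde q^{(\ell)} - q^{(\ell)}\|_{L^\infty(\T)} \leq \vartheta m^\ell$, which produces the $\vartheta \cdot m^\ell$ summand in each inequality. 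The second term is exactly what \cref{lem:qglobal} supplies on the interval $|t-x_j| \leq 2\pi\tau/m$---whose hypotheses $\beta > 1$, $A(\beta) E_0(m_0,\beta,\beta) < 1/4$, and $\tau \in [0,\beta)$ coincide with what we have here---yielding the $2T_0 T_1 \cdot m$ and $(2T_1^2+2T_0T_2) \cdot m^2$ contributions.

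For the third term, the mean value theorem gives a bound of $|u| \cdot \|f_m^{(\ell+1)}\|_{L^\infty(\T)}$, and I would invoke the Fej\'er kernel derivative bounds recorded in \eqref{eq:fejerbounds}, namely $\|f_m''\|_{L^\infty(\T)} \leq m^2/6$ and $\|f_m'''\|_{L^\infty(\T)} \leq m^3/10$, to produce the $(m|u|/6)\cdot m$ and $(m|u|/10) \cdot m^2$ contributions respectively. Collecting the three pieces and factoring out $m^\ell$ from each produces exactly the stated inequalities; no hypothesis on the size of $|u|$ is needed because the Fej\'er bounds used in step (III) are global on $\T$, so the intermediate point produced by the mean value theorem is unconstrained.

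The only step requiring any real care is the global sup bound $\|f_m'''\|_{L^\infty(\T)} \leq m^3/10$: because $f_m$ is even, $f_m'''(0) = 0$, so this constant is not attained at the origin and cannot be read off by a one-line Taylor computation. It must be imported from the earlier Fej\'er kernel analysis compiled into \eqref{eq:fejerbounds}. Everything else in the proof is a clean chain of triangle inequalities plus one mean value step, invoking \cref{lem:landscape2} and \cref{lem:qglobal} off the shelf, so I do not foresee any further technical obstacles.
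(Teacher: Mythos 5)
Your proposal matches the paper's proof essentially verbatim: the same three-term triangle-inequality decomposition, with the first term bounded by \cref{lem:qglobal}, the second by \cref{lem:landscape2} (equation \eqref{eq:qnorm2}), and the third by the mean value theorem together with the Fej\'er derivative bounds from \eqref{eq:fejerbounds}. The only cosmetic difference is the order in which you list the terms, and your side remark that $\|f_m'''\|_{L^\infty}$ ``cannot be read off by a one-line Taylor computation'' misdescribes how \eqref{eq:fejerbounds} is actually derived (the paper bounds the sup norm directly from the Fourier series by triangle inequality, not via a Taylor expansion), but this does not affect the correctness of the argument.
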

	
	\begin{proof}
		Let $q=q_{\bfU}$ where $\bfx\simeq_m\bfU$. For each $\ell\in \{1,2\}$, we start with the inequality 
		\begin{align*}
			&|\tilde q\, ^{(\ell)}(t)+f_m^{(\ell)}(t-u-x_j)| \\
			&\leq \big|q\,^{(\ell)}(t)+f_m^{(\ell)}(t-x_j)\big| 
			+\big|\tilde q\, ^{(\ell)}(t)-q^{(\ell)}(t)\big| + |f_m^{(\ell)}(t-u-x_j)-f_m^{(\ell)}(t-x_j)|. 
		\end{align*}
		Since $|t-x_j|\leq 2\pi \tau/m$ by assumption, the first term can be controlled by \cref{lem:qglobal}. The second term can be upper bounded using \eqref{eq:qnorm2}. For the final term, we use the mean value theorem and the upper bounds in \eqref{eq:fejerbounds}. 
	\end{proof}
	
	\subsection{Some formulas and inequalities}
	
	Recall the Fourier series representation the Fej\'er kernel, 
	$$
	f_m(t) = \frac 1 m \sum_{k=-m+1}^{m-1} \left( 1-\frac{|k|}m \right) e^{ikt}.
	$$
	Taking derivatives and then applying triangle inequality, we see that
	\begin{equation}
		\label{eq:fejerbounds}
		\|f_m\|_{L^\infty(\T)}
		\leq 1, \quad
		\|f_m'\|_{L^\infty(\T)}
		\leq \frac {m}{3}, \quad 
		\|f_m''\|_{L^\infty(\T)}
		\leq \frac {m^2}{6}, \quad 
		\|f_m'''\|_{L^\infty(\T)}
		\leq \frac {m^3}{10}. 
	\end{equation}
	
	For convenience, we define $B_0:=1$, $B_1=1/12$, $B_2=1/80$, and $B_3=1/448$. For all $t\in \T$ and integer $m\geq 1$, we have
	\begin{equation}
		\begin{split}
			\|\bfphi(t)\|_2^2
			&=\frac 1 m \sum_{k\in I(m)} 1^2 
			= B_0, \\
			\|\bfphi'(t)\|^2_2
			&= \frac 1 m \sum_{k\in I(m)} k^2
			= \frac 1 {12} (m^2-1)
			\leq B_1 m^2, \\
			\|\bfphi''(t)\|^2_2
			&= \frac 1m \sum_{k\in I(m)} k^4 
			=\frac 1 {240} (3m^4-10m^2+7)
			\leq B_2m^4, \\
			\|\bfphi'''(t)\|^2_2
			&= \frac 1m \sum_{k\in I(m)} k^6 
			=\frac 1 {1344} (3m^6-21m^4+49m^2-31)
			\leq B_3 m^6.  
		\end{split}\label{eq:phinorm1}
	\end{equation}

	The previous two lemmas are expressed in terms of the Fej\'er kernel. Its behavior is analyzed in the following lemma.  
	
	\begin{lemma}
		\label{lem:fejer1}
		Let $m\geq m_0\geq 2$. For any $t\in \T$ such that $t\geq 2\pi\tau_0/m$  for some $\tau_0\in (0,1]$, we have
		\begin{equation*}
			f_m(t)
			\leq \max\left\{\frac 1 4, \, \frac{1}{m_0^2\sin^2(\pi\tau_0/m_0)} \right\}. 
		\end{equation*}
		For any $t\in [0,\pi]$, we have 
		\begin{align*}
			\sin\left( \frac {mt}{2}\right) \left(\frac 1 3 \left(1-m_0^{-2}\right) - \frac 1 {120} m^2 t^2 \right) m 
			&\leq -f_m'(t)
			\leq \frac 1 6 m^2 t,  \\
			\frac 1 6 \left(1-m_0^{-2}\right) m^2 - \frac{1}{30}m^4t^2
			&\leq -f_m''(t)
			\leq \frac 1 6.
		\end{align*}
	\end{lemma}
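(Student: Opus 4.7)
The strategy splits along the lemma's claims: a pointwise bound on $f_m$, handled via the product formula $f_m(t)=\sin^2(mt/2)/(m^2\sin^2(t/2))$, and derivative estimates, handled via the Fourier series $f_m(t)=1/m+(2/m)\sum_{k=1}^{m-1}(1-k/m)\cos(kt)$. For the first bound I use $|\sin(mt/2)|\leq 1$ and reduce to $t\in[2\pi\tau_0/m,\pi]$ by evenness and $2\pi$-periodicity (reading $t\geq 2\pi\tau_0/m$ as torus distance to $0$), obtaining $f_m(t)\leq 1/(m^2\sin^2(\pi\tau_0/m))$. The key monotonicity is that $m\mapsto m\sin(\pi\tau_0/m)$ is non-decreasing for integer $m\geq m_0\geq 2$ and $\tau_0\in(0,1]$, which follows from $\sin(x)/x$ being decreasing on $(0,\pi/2]$; combined with $\pi\tau_0/m_0\leq\pi/2$ this yields $f_m(t)\leq 1/(m_0^2\sin^2(\pi\tau_0/m_0))$, implying the advertised $\max$ bound.

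For the derivative upper bounds, term-by-term differentiation of the Fourier series yields $-f_m'(t)=(2/m)\sum(1-k/m)k\sin(kt)$ and $|f_m''(t)|\leq(2/m)\sum(1-k/m)k^2$. Using $\sin(kt)\leq kt$ on $[0,\pi]$ and the Faulhaber identity $\sum_{k=1}^{m-1}(1-k/m)k^2=m(m^2-1)/12$ gives $-f_m'(t)\leq (m^2-1)t/6\leq m^2t/6$ and $|f_m''(t)|\leq(m^2-1)/6\leq m^2/6$ (I read the stated ``$\leq 1/6$" for $-f_m''$ as a typo missing the factor $m^2$). For the lower bound on $-f_m''$, evenness of $f_m$ forces $f_m'''(0)=0$, so Taylor with integral remainder gives
\[
-f_m''(t)=-f_m''(0)-\int_0^t(t-u)\,f_m''''(u)\,du.
\]
Combining $-f_m''(0)=(m^2-1)/6\geq(1-m_0^{-2})m^2/6$ with the exact identity $(2/m)\sum(1-k/m)k^4=(2m^4-5m^2+3)/30\leq m^4/15$ bounds the remainder by $m^4t^2/30$, producing the claim.

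The hardest step is the $\sin(mt/2)$-factored lower bound on $-f_m'$. From $f_m=d_m^2$ and direct computation I extract the factorization $-f_m'(t)=\sin(mt/2)\,H(t)$ with
\[
H(t)=\frac{\sin(mt/2)\cos(t/2)-m\cos(mt/2)\sin(t/2)}{m^2\sin^3(t/2)}.
\]
Evenness of $-f_m'/\sin(mt/2)$ gives $H'(0)=0$, and Taylor expansion at the origin yields $H(0)=(m^2-1)/(3m)\geq(1-m_0^{-2})m/3$ for $m\geq m_0\geq 2$. The target then reduces to $H(t)\geq H(0)-m^3t^2/120$, which by the integral Taylor remainder is equivalent to the uniform pointwise bound $H''(u)\geq -m^3/60$ on $[0,\pi]$. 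At $u=0$ differentiating $-f_m''(t)=(m/2)\cos(mt/2)H(t)+\sin(mt/2)H'(t)$ twice more and evaluating at $0$ gives $H''(0)=-(m^2-1)(m^2-4)/(60m)\geq -m^3/60$, so the bound is tight there. Extending it uniformly to $[0,\pi]$ is the main technical obstacle; my plan is to substitute the product-to-sum identity $\sin(mt/2)\cos(t/2)-m\cos(mt/2)\sin(t/2)=\tfrac12[(m+1)\sin((m-1)t/2)-(m-1)\sin((m+1)t/2)]$ into $H$, differentiate twice, and control the resulting trigonometric combinations via Faulhaber-type sums $\sum(1-k/m)k^{2\ell}$ analogous to those used for $f_m''''$. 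This is bookkeeping rather than any conceptual novelty, but it is the heaviest part of the proof and is precisely what dictates the constant $1/120$ in the statement.
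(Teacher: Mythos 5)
Your handling of the first bound on $f_m$, the upper bound on $-f_m'$, and both bounds on $-f_m''$ is correct and coincides with the paper's argument (Taylor/Fourier truncation plus the same Faulhaber-type identities), and you are right that ``$-f_m''(t)\leq 1/6$'' in the statement is a typo for $m^2/6$, consistent with \eqref{eq:fejerbounds} and the paper's own derivation. You also have the correct factorization $-f_m'(t)=\sin(mt/2)\,H(t)$, the correct values $H(0)=(m^2-1)/(3m)$ and $H''(0)=-(m^2-1)(m^2-4)/(60m)$, and the right product-to-sum identity: the numerator of $H$ is $\tfrac12 g_m(t)$ with $g_m(t)=(m+1)\sin\big(\tfrac{(m-1)t}{2}\big)-(m-1)\sin\big(\tfrac{(m+1)t}{2}\big)$, which is precisely the paper's intermediate object.

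The gap is in the lower bound on $-f_m'$. Your plan reduces it to a uniform pointwise bound $H''(u)\geq -m^3/60$ on $[0,\pi]$, which you flag as the heaviest step but leave as a sketch. The paper never proves any bound on $H''$ and does not need to. It lower-bounds the trigonometric combination $g_m$ itself by the odd polynomial $\tfrac{m(m^2-1)}{12}t^3-\tfrac{m(m^4-1)}{480}t^5$ via two-sided truncation bounds on $\sin$ (equivalently, the first two coefficients of the alternating Taylor series of $g_m$), and then divides using only the scalar inequality $\sin(t/2)\leq t/2$. That single polynomial bound on the numerator is the source of the constant $1/120$, and it completely sidesteps differentiating the quotient $H=g_m/\big(2m^2\sin^3(t/2)\big)$. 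Your route, by contrast, would require controlling the removable singularity of $H$ at the origin together with its oscillations of amplitude $\Theta(m^3)$ elsewhere; that is not mere bookkeeping, and it is not how the paper proceeds. Two smaller points: ``equivalent'' is too strong for the $H''$ reduction (the uniform bound is sufficient, not necessary, for $H(t)\geq H(0)-m^3t^2/120$), and multiplying the lower bound on $H$ through by $\sin(mt/2)$ implicitly assumes $\sin(mt/2)\geq 0$; this restriction is shared with the paper's proof and is harmless in the landscape analysis, where $mt/2\in[\pi/6,2\pi/3]$, but it should be stated.
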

	
	\begin{proof}
		Using the inequality $|\sin(t/2)|\geq |t|/\pi$ which holds for $|t|\leq \pi$, we see that
		$$
		f_m(t)
		=\frac{\sin^2(mt/2)}{m^2 \sin^2(t/2)}
		\leq \frac{\pi^2}{m^2 |t|^2}.
		$$
		This shows that $f_m(t)\leq 1/4$ for all $|t|\geq \pi/m$. This finishes the estimate if $\tau_0\in [1/2, 1]$. For the case where $t\in [2\pi\tau_0/m,\pi/m]$ for $\tau_0\in (0,1/2)$, we use that $f_m$ is decreasing on $[0,2\pi/m]$ and also note that $\{m\sin(u/m)\}_{m=1}^\infty$ is non-decreasing in $m$ for any fixed $u\in [0,\pi]$. Since $m\geq m_0$ by assumption, this now implies 
		$$
		\max_{t \in [2\pi \tau_0/m,\pi/m]} f_m(t)
		= f_m\left(\frac {2\pi \tau_0} m\right)
		= \frac{\sin^2(\pi \tau_0)}{m^2\sin^2(\pi \tau_0/m)}
		\leq \frac{1}{m_0^2\sin^2(\pi\tau_0/m_0)}.
		$$
		This proves the first inequality. 
		
		We now concentrate on the first derivative estimates. A calculation and manipulation of trigonometric functions establishes the formula
		\begin{align*}
			-\frac 1 m \,  f_m'(t)
			&:= \frac{\sin(m t/2)}{2m^3\sin^3(t/2)} \, g_m(t), \\
			g_m(t)
			&:=  (m+1) \sin\left( \frac{m-1}2 t \right) - (m-1) \sin\left( \frac{m+1}2 t \right). 
		\end{align*}
		Using the observation that for $t\geq 0$, the power series of $\sin(t)$ truncated to degree $2k+1$ overestimates $\sin(t)$ when $k$ is even and underestimates it when $k$ is odd, we obtain
		\begin{align*}
			g_m(t)
			\geq \frac 1 {12} \left(1-m^{-2}\right) m^3t^3 - \frac 1 {480} \left(1-m^{-4}\right) m^5t^5 . 
		\end{align*} 
		Using this inequality, that $m\geq m_0$, and $\sin(t/2)\leq t/2$, we obtain 
		\begin{align*}
			-\frac 1 m \,  f_m'(t)
			&\geq \sin\left( \frac{m t}2\right) \left(\frac 1 3 \left(1-m_0^{-2}\right) - \frac 1 {120} m^2t^2 \right). 
		\end{align*}
		This proves the claimed lower bound for $-f'_m$. 
		
		A direct computation shows that the power series expansion of $f_m$ has the form
		\begin{equation*}
			f_m(t)
			=1-\frac 1 {12} (m^2-1) t^2 + \frac{1}{720} (2m^4-5m^2+3)t^4 - \cdots 
		\end{equation*}
		Differentiating each term at a time, we readily obtain power series expansion for $-f_m'$. In particular, we see that the $t^3$ term in $f_m'$ is negative, which implies
		$$
		f_m'(t)
		\leq \frac 1 6 (m^2-1)t
		\leq \frac  1 6 m^2 t.  
		$$
		This proves the claimed upper bound for $f_m'$. 
		
		Finally, we compute the power series expansion of $-f''_m$, which has a negative $t^2$ and positive $t^4$ term. Thus, we have 
		\begin{align*}
			-f_m''(t)
			&\leq \frac 1 6 \left(m^2-1\right)\leq \frac 1 6 m^2, \\
			-f_m''(t)
			&\geq \frac 1 6 \left(m^2-1\right) - \frac{1}{60} \left(2m^4-5m^2+3\right) t^2.
		\end{align*}
		Using that $m\geq m_0\geq 2$ completes the proof.  
	\end{proof}
	
	\subsection{Numerical evaluation of constants}
	\label{rem:numericalconstants}
	
	The lemmas in this section involve many complicated constants that depend on $E_\ell(m,\alpha,\beta)$, which is defined in equation \eqref{eq:energy}. We could have given simpler, but looser, upper bounds for $E_\ell(m,\alpha,\beta)$. However, this would result in stronger conditions on $\beta$ and $\vartheta$ compared to the ones found in the main theorems. For this reason, we have decided to present more accurate and complicated estimates, whose constants would need to be numerically computed. 
	
	It is imperative to note each energy constant $E_\ell(m,\alpha,\beta)$ can be numerically evaluated to machine precision given choices for $m$, $\alpha$, and $\beta$. The only potential issue with numerical evaluation of the integral in \eqref{eq:energy} is the singularity of $h_{m, \ell}$ at zero. However, the integrals are taken over the region $[2\pi(\alpha+\beta/2)/m,\pi]$ for fixed $\beta>1$ and $\alpha>0$ while $|h_{m, \ell}|$ is uniformly bounded in $m$ in this domain. Thus, any standard numerical integration method can be used to calculate $E_\ell$ up to arbitrary precision.
	
	The same reasoning extends to the other constants that depend on $E_\ell(m,\alpha,\beta)$, which include $C_\ell(m_0,\beta)$ and $T_\ell(m,\alpha,\beta)$ defined in \eqref{eq:cmbeta} and \eqref{eq:Tconstant} respectively. Again, we need to check that these expressions can be accurately computed. Both $C_0(m_0,\beta)$ and $C_1(m_0,\beta)$ contain a $E_1(m,\beta,\beta)$ term which is not problematic. In \cref{lem:qglobal}, the constant $T_\ell(m_0,\beta-\tau_1,\beta)$ is also fine since we will always use it for $\tau_1\leq 1$ and $\beta>1$. Likewise, \cref{lem:qfar} only has a $T_0(m_0,\beta/2,\beta)$ term.

	\section{Proofs of theorems}
	\label{sec:proofthm}

	\subsection{Proof of \cref{cor:deterministic}}
	\label{sec:deterministic}
	
	Let $s=|\bfx|$. Recall the quantity $\rho:=\rho(\bfx,\bfa,\bfeta)$ from \cref{def:subspaceerror2}. We first derive an upper bound for $\rho$. By factorization \eqref{eq:toeplitzfactorization} and \cref{prop:AB19} with $\beta=4$, we have 
	\begin{equation}
		\label{eq:Tyinequality}
		\sigma_s(\bfT)
		\geq a_{\min} \, \sigma_s^2(\bfPhi) 
		\geq \frac 3 4 a_{\min} \, m
	\end{equation}
	Using \cref{lem:noisepnorm} and assumption \eqref{eq:nsrp}, we have
	\begin{equation}
		\label{eq:rhohelp1}
		\rho = \frac{2\|T(\bfeta)\|_2}{\sigma_s(\bfT)} \leq \frac{16 m^{1-1/p} \|\bfeta\|_p}{3 a_{\min}\, m}
		\leq \min\left\{ \frac 1{100}, \,  \frac{16 \|\bfeta\|_p}{3 a_{\min}\, m^{1/p}}\right\}. 
	\end{equation}
	
	By \cref{lem:rho}, the Toeplitz estimator $\tilde\bfU$ is well-defined. To see that it is correctly computed, recall we assumed that $(\bfx,\bfa)\in \calS(8\pi/m,r_0, 10r_0)$. Using \cref{lem:sparsity} with $\beta=4$ and $r_1=10r_0$, and that $\gamma=21/400=0.0525$ and $\rho\leq 0.01$, we see that the first step of \cref{alg:gradMUSIC2} correctly identifies $s$. This shows that $\tilde\bfU$ is computed correctly. 
	
	Next, we upper bound $\vartheta:=\vartheta(\bfx,\tilde\bfU)$, see \cref{def:subspaceerror}. Also by \cref{lem:rho}, we have $\vartheta\leq \rho\leq 0.01$, so the assumptions of \cref{thm:mainMUSIC} hold. Using the theorem, inequality \eqref{eq:rhohelp1}, and number of iterations \eqref{eq:niter}, we see that 
	\begin{align*}
		\max_j |x_j-\hat x_j|
		&\leq \frac{7\vartheta} m+ \frac{77 \pi (0.839)^n}{m}
		\leq \frac{112 \|\bfeta\|_p}{3 a_{\min}\, m^{1+1/p}} + \frac{77 \pi (0.839)^n}{m}
		\leq \frac{55 \, \|\bfeta\|_p}{a_{\min}\, m^{1+1/p}}.
	\end{align*}
	
	For the amplitude error, assumption \eqref{eq:nsrp} and  the previous inequality imply that the frequency error is at most $2\pi \alpha/m$ where $\alpha=33/(640\pi)$. We use \cref{lem:amplitudes} for this choice of $\alpha$ and $\beta =4$. Combining this with \cref{lem:noisepnorm} and the frequency error bound that we just established, we have that
	\begin{align*}
		\max_j |a_j-\hat a_j|
		\leq \left( 1 + \sqrt{\frac{\beta+2}{\beta-2}} \right) \frac{\beta + 2}{\beta-2} \,  55 \frac{a_{\max}}{a_{\min}} \frac{\|\bfeta\|_p}{m^{1/p}} + \frac \beta {\beta -1} \frac{2\|\bfeta\|_p}{m^{1/p}}
		\leq 455 \frac{a_{\max}}{a_{\min}} \frac{\|\bfeta\|_p}{m^{1/p}}. 
	\end{align*}
	
	\subsection{Proof of \cref{cor:stochastic}}
	\label{sec:stochastic}
	
	For convenience, set $s=|\bfx|$. For any $t>1$, consider the events
	\begin{equation}
		\label{eq:events}
		\calA := \left\{ \frac{\|T(\bfeta)\|_2}{a_{\min} \, m} 
		\leq \frac{3}{800} \right\}, \andspace
		\calB_t := \left\{ \|T(\bfeta)\|_2 
		\leq t \sqrt{2 \tr(\bfSigma) \log(2m)} \right\}. 
	\end{equation}
	By \cref{lem:toeplitznorm}, there is a $c>0$ such that
	\begin{equation*}
		\P(\calA^c)
		\leq 2m \exp\left( -\frac{c a_{\min}^2  m^2}{\tr(\bfSigma)} \right),
		\andspace
		\P(\calB_t^c)
		\leq 2m^{1-t^2}.
		\label{eq:nsrstochastic}
	\end{equation*}
	Due to the union bound, the event $\calA\cap \calB_t$ occurs with probability at least \eqref{eq:nsrsampling}, which we assume holds for all subsequent parts of this proof. 
	
	We first derive an upper bound for $\rho:=\rho(\bfx,\bfa,\bfeta)$ given in \cref{def:subspaceerror2}. Using inequality \eqref{eq:Tyinequality} and \eqref{eq:events}, we have
	\begin{equation}
		\label{eq:rhohelp2}
		\rho = \frac{2\|T(\bfeta)\|_2}{\sigma_s(\bfT)}
		\leq \frac{8\|T(\bfeta)\|_2}{3a_{\min} \, m}
		\leq \min\left\{ \frac{1}{100}, \, \frac{2 t \sqrt{2\tr(\bfSigma)\log(2m)}}{a_{\min}\, m} \right\}. 
	\end{equation}
	
	By \cref{lem:rho}, the Toeplitz estimator $\tilde\bfU$ is well-defined. Recall we assumed that $(\bfx,\bfa)\in \calS(8\pi/m,r_0,10r_0)$. Using \cref{lem:sparsity} with $\beta=4$ and $r_1=10r_0$, and that $\gamma=0.0525$ and $\rho\leq 0.01$, we see that the number of frequencies is correctly detected in the first step of \cref{alg:gradMUSIC2}. This shows that $\tilde U$ is computed correctly. 
	
	Next, we upper bound $\vartheta:=\vartheta(\bfx,\tilde\bfU)$, see \cref{def:subspaceerror}. Using \cref{lem:rho} again, we have $\vartheta\leq \rho$. In particular, $\vartheta\leq 0.01$ so the assumptions of \cref{thm:maingradientMUSIC} hold. Using this theorem, inequality \eqref{eq:rhohelp2}, and bound for number of gradient iterations \eqref{eq:niter2}, we see that 
	\begin{align*}
		\max_j |x_j-\hat x_j|
		&\leq \frac{7\vartheta} m + \frac{77\pi (0.839)^n}{m}
		\lesssim \frac{1}{a_{\min}} \frac{t \sqrt{\tr(\bfSigma)\log(m)}}{m^2}.
	\end{align*}
	
	Note that \eqref{eq:nsrsampling} implicitly assumes that $a_{\min}^2 m^2/\tr(\bfSigma)$ can be made a sufficiently large constant of our choice. Hence, we can assume that the frequency error is no larger than $2\pi\alpha/m$ for some $\alpha\leq 2$. For the amplitude error, we use \cref{lem:amplitudes} for $\beta =4$. Combining this lemma with the frequency error bound and inequality \eqref{eq:rhohelp2}, we see that
	\begin{align*}
		\max_j |a_j-\hat a_j|
		&\lesssim \frac{a_{\max}}{a_{\min}} \frac{t \sqrt{\tr(\bfSigma)\log(m)}}m + \frac{t \sqrt{\tr(\bfSigma)\log(m)}}{a_{\min}\, m} \\
		&\lesssim \frac{a_{\max}}{a_{\min}} \frac{t \sqrt{\tr(\bfSigma)\log(m)}}{m}. 
	\end{align*}

	\subsection{Proof of \cref{thm:landscape} }
	\label{proof:landscape}
	
	For this proof, define the constant $c_1:=0.01$ so that $\vartheta\leq c_1$. We assumed that $\Delta(\bfx)\geq 8\pi/m=2\pi (4)/m$ and $m\geq 100$. At various points in this proof, we apply lemmas in \cref{sec:landscape} with $\beta=4$ and $m_0=100$. The energy constants $E_\ell$ in equation \eqref{eq:energy} and other quantities that depend on $E_\ell$ can be numerically computed up to machine precision, see the discussion in \cref{rem:numericalconstants}. Also recall the definition of $T_\ell(m,\alpha,\beta)$ in \eqref{eq:Tconstant}. Different choices of parameters will be selected in $T_\ell$ for each step of the proof. 
	
	Various constants that appear in this proof are computed numerically. We display the first three nonzero digits, rounded up or down, depending on whether that step proves an upper or lower bound, respectively. A script that computes these constants and verifies the lemmas' assumptions are included in the numerical software accompanying this paper. 
	\begin{enumerate}[(a)]
		\item 
		We start by using \cref{lem:localperturbation} where $r=7$. We numerically check that the condition in \eqref{eq:rcondition} is fulfilled. According to this lemma, $\tilde q$ has $s$ critical points $\tilde\bfx=\{\tilde x_1,\dots,\tilde x_s\}$ such that for each $j\in \{1,\dots,s\}$, we have
		\begin{equation}
			\label{eq:maxxpert}
			|\tilde x_j-x_j|
			\leq \frac{7\vartheta}{m}
			\leq \frac{7c_1}{m}. 
		\end{equation}
		\item 
		Fix any $t$ such that $|t-\tilde x_j|\leq \pi/(3m)$. We will use \cref{lem:qtildeglobal} where $u=\tilde x_j-x_j$. By inequality \eqref{eq:maxxpert}, we have $|u|\leq 7 c_1/m$ and $|t-x_j|\leq |t-\tilde x_j|+|x_j-\tilde x_j|\leq \pi/(3m)+7c_1/m$. So we use the referenced lemma with parameter $\tau = 1/6+7c_1/(2\pi)$. Hence, we obtain
		\begin{align*}
			\left| \tilde q \, ''(t) + f_m''(t-\tilde x_j) \right|
			&\leq \left(2T_1^2+2T_0T_2+\frac{17}{10} \vartheta\right) m^2. 
		\end{align*}
		Using the lower bound for $-f_m''$ in \cref{lem:fejer1}, we obtain the lower bound
		\begin{align*}
			\tilde q \, ''(t) 
			&\geq \left(\frac 1 6 \, (1-100^{-2}) - \frac{1}{30}m^2 (t-\tilde x_j)^2 - 2 T_1^2 - 2T_0T_2 - \frac{17}{10}c_1 \right)m^2. 
		\end{align*}
		The right hand side is a decreasing function of $(t-\tilde x_j)^2$, so its minimum is attained at its endpoints, so when $|t-\tilde x_j|=\pi/(3m)$. This shows that 
		$$
		\tilde q \, ''(t)
		\geq 0.0271 \, m^2. 
		$$ 
		For the upper bound on $\tilde q \, ''$, we use the upper bound for $-f_m''$ in \eqref{eq:fejerbounds} instead to see that
		\begin{align*}
			\tilde q \, ''(t) 
			\leq \left(\frac 1 6 + 2 T_1^2 + 2T_0T_2 + \frac{17}{10}c_1 \right) m^2 \leq 0.269 \, m^2. 
		\end{align*}
		
		\item 
		Part (b) immediately tells us that the critical point $\tilde x_j$ is also a local minimum of $\tilde q$. By \cref{lem:qlocal}, we have 
		$\tilde q(x_j)\leq \vartheta^2.$ Note that $x_j \in [\tilde x_j- \pi/(3m), \, \tilde x_j+\pi/(3m)]$ as well. Since $\tilde q$ is convex in this interval, $\tilde x_j$ is its only local minimum, so we have $\tilde  q(\tilde x_j)\leq \tilde q(x_j)$.
		\item 
		Since the Fej\'er kernel is even, we only need to prove the desired estimate for the case when $t-\tilde x_j \in [\pi/(3m), 4\pi/(3m)]$. We will use \cref{lem:qtildeglobal} where $u=\tilde x_j-x_j$. By inequality \eqref{eq:maxxpert}, we see that $|u|=|x_j-\tilde x_j|\leq 7c_1/m$ and $|t-x_j|\leq |t-\tilde x_j|+|\tilde x_j-x_j|\leq 4\pi/(3m)+7c_1/m$. Then we use the referenced lemma with parameter $\tau = 2/3+7/(2\pi)$ to obtain
		\begin{equation*}
			\left|\tilde q \, '(t) + f_m'(t-\tilde x_j)\right|
			\leq \left(2 \, T_0 T_1 + \frac {13} 6 \vartheta \right) m. 
		\end{equation*}
		Using the lower bound for $-f_m'$ in \cref{lem:fejer1}, we obtain the lower bound,
		\begin{align*}
			\tilde q \, '(t)
			&\geq \left(\sin\left(\frac {m(t-\tilde x_j)}2\right) \left(\frac 1 3 \left(1-100^{-2}\right) - \frac 1 {120} m^2 (t-\tilde x_j)^2 \right) - 2 T_0 T_1 - \frac{13 } 6c_1 \right) m.
		\end{align*}
		We next argue that the right side is positive whenever $t-\tilde x_j\in [\pi/(3m),4\pi/(3m)]$. A calculus argument shows that the function in $t-\tilde x_j$ is concave, so we just need to evaluate this estimate at the endpoints. Doing so, we see that 
		$$
		\tilde q \, '(t)
		\geq 0.0306 \, m. 
		$$
		
		\item 
		Fix any $t\in\T$ with $|t-\tilde x_j|\geq 4\pi/(3m)$ for all $j\in \{1,\dots,s\}$. Using inequality \eqref{eq:maxxpert}, we see that $|t-x_j|\geq |t-\tilde x_j|+|x_j+\tilde x_j|\geq 4\pi/(3m)-7c_1/m$ for all $j\in \{1,\dots,s\}$. Additionally, inequality \eqref{eq:qnorm2} tells us that $\tilde q(t)\geq q(t) - \vartheta\geq q(t)-c_1$. We use \cref{lem:qfar} and \cref{lem:fejer1} with parameter $\tau_0=2/3-7/(2\pi)$ to obtain 
		\begin{align*}
			\tilde q(t)
			&\geq 1 - \max\left\{\frac 1 4, \, \frac{1}{m_0^2\sin^2(\pi \tau_0/m_0)} \right\} - (T_0(m_0,\beta/2,\beta))^2 - c_1
			\geq 0.529. 
		\end{align*}
		
		\item 
		Since the Fej\'er kernel is even, we only need to prove the desired estimate for $t\in [\tilde x_j+\pi/(3m),\tilde x_j+4\pi/(3m)]$. This is a continuation of the argument in part (d), except we use the upper bound for $-f_m'$ \cref{lem:fejer1} instead. Doing so, we get
		\begin{align*}
			\tilde q \, '(t)
			\leq \left( \frac 1 6 m (t-\tilde x_j) + 2 T_0 T_1 + \frac{13}6 \vartheta \right) m.
		\end{align*}
		Next, using that $m(t-\tilde x_j)\geq \pi/3$, we see that 
		$$
		\tilde q \, '(t)
		\leq \left( \frac 1 6 + \frac 6 \pi T_0 T_1 + \frac{13} {2\pi} \vartheta \right) m^2 (t-\tilde x_j)
		\leq 0.292 \, m^2 |t-\tilde x_j|.  
		$$
	\end{enumerate}
	
	\subsection{Proof of \cref{thm:maingradientMUSIC}}
	\label{proof:maingradientMUSIC}
	
	For convenience, set $s:=|\bfx|$. The assumptions of \cref{thm:landscape} hold, so $\tilde q$ has $s$ local minima $\tilde \bfx$ satisfying the properties listed there. The assumptions of \cref{lem:grad} also hold since we assumed $\alpha=0.529$, $\vartheta\leq 0.01$, and $\mesh(G)\leq 1/(2m)$. By the referenced lemma, $A$ is the union of exactly $s$ nonempty disjoint clusters in $G$. Pick any representative $t_{j,0}\in A_j=A\cap B_j$, where the interval $B_j$ is defined in equation \eqref{eq:basin}. Let $\{t_{j,k}\}_{k=0}^\infty$ be the iterates produced by gradient descent \eqref{eq:gradient} with initial point $t_{j,0}$ and step size $h=6/m^2$. 
	
	For the analysis of gradient descent, we first make an additional assumption and return back to the general case later. Assume additionally that 
	\begin{equation*}
		t_{j,0}\in I_j:=\left[ \tilde x_j - \frac{\pi}{3m}, \, \tilde x_j + \frac{\pi}{3m}\right]. 	
	\end{equation*}
	By \cref{thm:landscape} part (b), the function $\tilde q$ is strictly convex on $I_j$ and enjoys the bounds 
	$$
	0.0271 \, m^2 \leq \tilde q \, ''(t)\leq 0.269 \, m^2 \forallspace t\in I_j.
	$$
	We apply a standard result for gradient descent on a smooth convex landscape, such as \cite[Thoerem 2.1.15]{nesterov2018lectures}, where $\mu=0.0271 \, m^2$ and $L=0.269 \, m^2$ in the referenced theorem. Then gradient descent with step size no greater than $2/(\mu+L)\leq  6.754/m^2$ (note we selected $h=6/m^2$) and initial guess $t_{j,0}\in I_j$ produces iterates $\{t_{j,k}\}_{k=0}^\infty$ that satisfy the inequality 
	\begin{equation*}
		|t_{j,k}-\tilde x_j|
		\leq \left(1-\frac{2h\mu L}{\mu+L}\right)^{k/2} |t_{j,0}-\tilde x_j|
		\leq (0.839)^k \, |t_{j,0}-\tilde x_j|.
	\end{equation*}
	In particular, since $t_{j,0}\in I_j$, we have
	\begin{equation}
		\label{eq:expconvergence}
		|t_{j,k}-\tilde x_j|
		\leq \frac{\pi (0.839)^k}{3m}.
	\end{equation}
	This shows that gradient descent converges to $\tilde x_j$ exponentially provided that $t_{j,0}\in I_j$. 
	
	It remains to consider the complement case where $t_{j,0}\in B_j\setminus I_j.$ We first prove that if $t_{j,k}\in B_j\setminus I_j$ for any $k\geq 1$, then $t_{j,k+1}\in B_j$. Since \cref{thm:landscape} part (d) and (f) provide anti-symmetric inequalities with opposite signs, we assume without loss of generality that $t_{j,k}\in B_j\setminus I_j$ with $t_{j,k}>\tilde x_j$. Using the referenced theorem part (d) and (f) and that $h=6/m^2$, we have
	\begin{equation}
		\label{eq:gradhelp1}
		\frac{0.1836}{m} \leq h \tilde q \, '(t_{j,k})\leq 1.752 \, (t_{j,k}-\tilde x_j). 
	\end{equation}
	Inserting \eqref{eq:gradhelp1} into the definition of gradient descent \eqref{eq:gradient}, and using that $t_{j,k}\in B_j\setminus I_j$ and $t_{j,k}>\tilde x_j$, we see that 
	\begin{align*}
		t_{k+1}-\tilde x_j 
		&\leq t_{j,k}-\tilde x_j - \frac{0.1836}m \\
		&\leq t_{j,k}-\tilde x_j - \frac{3(0.1836)}{4\pi} (t_{j,k}-\tilde x_j)
		\leq 0.956 \, (t_{j,k}-\tilde x_j), \\
		t_{k+1}-\tilde x_j 
		&\geq t_{j,k}-\tilde x_j - 1.752 \, (t_{j,k}-\tilde x_j)
		= -0.752 \, (t_{j,k}-\tilde x_j). 
	\end{align*}
	In particular, these inequalities imply that
	\begin{equation}
		\label{eq:iterates}
		|t_{j, k+1}-\tilde x_j|
		\leq 0.956 \, |t_{j,k}-\tilde x_j| \ifspace t_{j,k}\in B_j\setminus I_j.
	\end{equation}
	This proves that $t_{j,k+1}\in B_j$ if $t_{j,k}\in B_j\setminus I_j$, and completes the proof of this claim. 
	
	We next claim that it takes at most $31$ iterations to reach $I_j$ from $B_j\setminus I_j$. By the previous claim, all these iterates have to be in $B_j$. Suppose for the purpose of deriving a contradiction that $t_{j,0},\dots,t_{j,31}\in B_j\setminus I_j$. Iterating inequality \eqref{eq:iterates}, we see that
	$$
	|t_{j,31}-\tilde x_j|
	\leq (0.956)^{31} \, |t_{j,0}-\tilde x_j|
	\leq \frac{(0.956)^{31} 4\pi}{3 m}
	< \frac{\pi}{3m}. 
	$$
	This contradicts the original assumption that $t_{j,31}\in B_j\setminus I_j$. This completes the proof of this claim. 
	
	Now we are ready to complete the proof. Let $\hat x_j=t_{j,n}$ where $n\geq 31$. If $t_{j,0}\in B_j$, it requires at most 30 iterations to arrive in $I_j$. Using inequality \eqref{eq:expconvergence},  we have
	\begin{equation*}
		|\hat x_j-\tilde x_j|
		\leq \frac{\pi (0.839)^{n-31}} {3m}
		\leq \frac{77 \pi (0.839)^n}{m}.
	\end{equation*}
	Using \cref{thm:landscape} part (a) and triangle inequality, we have 
	\begin{align*}
		|\hat x_j-x_j|
		&\leq |\hat x_j-\tilde x_j|+|\tilde x_j-x_j| 
		\leq \frac{7\vartheta} m + \frac{77 \pi (0.839)^n}m.
	\end{align*}

	\subsection{Proof of \cref{thm:mainMUSIC}}
	\label{proof:mainMUSIC}
	
	For convenience, let $s:=|\bfx|$. The assumptions of \cref{thm:landscape} hold, so $\tilde q$ has $s$ many local minima $\tilde \bfx$ satisfying the properties listed there. Since $\mesh(G)\leq 2\pi/(3m)$ by assumption, consecutive elements in $G$ are at most $2\mesh(G)\leq 4\pi/(3m)$ apart. Since the interval
	\begin{equation}
		\label{eq:interval1}
		\left[\tilde x_j- \frac{4\pi}{3m}, \, \tilde x_j+\frac{4\pi}{3m}\right],
	\end{equation}
	has length $8/(3m)$, we see that $G$ has at least three elements in this interval. \cref{thm:landscape} part (b) and (d) imply that $\tilde q$ is decreasing and then increasing in the intervals
	$$
	\left[\tilde x_j- \frac{4\pi}{3m}, \, \tilde x_j\right] \andspace \left[\tilde x_j, \, \tilde x_j+\frac{4\pi}{3m}\right].
	$$ 
	Thus, $\tilde q$ evaluated on $G$ has exactly one discrete local minima contained in \eqref{eq:interval1}, which we denote by $\hat x_j$. See \cref{def:discretemin} for a definition of discrete local minimum. Note that $\hat x_j$ is selected as $\tilde x_j$ if $\tilde x_j\in G$, or $\hat x_j$ is selected as either the left or right closest neighbor to $\tilde x_j$ on $G$. Regardless of which element is chosen, we have the property that 
	\begin{equation}
		\label{eq:griderror}
		\max_{j=1,\dots,s} |\tilde x_j-\hat x_j|\leq \mesh(G). 
	\end{equation}
	
	We next enumerate the discrete local minima of $\tilde q$ on $G$ as 
	$$
	\hat x_1,\dots,\hat x_s, u_1,\dots, u_r.
	$$
	Here, we use the convention that if $r=0$, then $\hat x_1,\dots,\hat x_s$ are the only discrete local minima of $\tilde q$ on the grid. We will show that MUSIC always picks $\hat x_1,\dots,\hat x_s$. There is nothing to prove if $r=0$, so assume $r\geq 1$. We already established that $\hat x_j$ is the unique discrete local minimum  of $\tilde q$ in the interval \eqref{eq:interval1}. Consequently, for each $k\in \{1,\dots,r\}$, we have
	$$
	u_k\not\in \bigcup_{j=1}^s \left[\tilde x_j- \frac{4\pi}{3m}, \, \tilde x_j+\frac{4\pi}{3m}\right].
	$$
	This enables us to use \cref{thm:landscape} part (e) to see that $\tilde q(u_k) \geq 0.529$ for each $k$. On the other hand, \cref{thm:landscape} part (c) tells us that
	$\tilde q(\tilde x_j)
	\leq \vartheta^2\leq 10^{-4}
	$ for each $j$. 
	Using inequalities \eqref{eq:griderror} and \eqref{eq:qnorm1} together with the mean value theorem, we see that for all $j$,
	\begin{align*}
		\tilde q(\hat x_j)
		&\leq \tilde q(\tilde x_j) + |\tilde q(\hat x_j)-\tilde q(\tilde x_j)| \\
		&\leq \tilde q(\tilde x_j) + \|\tilde q\,'\|_{L^\infty(\T)}|\hat x_j-\tilde x_j| 
		\leq \vartheta^2 + \vartheta
		<0.529
		\leq \min_{k=1,\dots,r} \tilde q(u_k). 
	\end{align*}
	Thus, we have shown that $\hat x_1,\dots,\hat x_s$ are the $s$ smallest discrete local minima of $\tilde q$ on $G$. 
	
	Returning back to the selected local minima, using \cref{thm:landscape} part (a) and inequality \eqref{eq:griderror}, we see that the frequency error is
	$$
	|x_j-\hat x_j|
	\leq |x_j-\tilde x_j| + |\tilde x_j-\hat x_j|
	\leq \frac{7\vartheta}{m}+\mesh(G). 
	$$
	This proves the frequency error bound. 
	
	\subsection{Proof of \cref{thm:maingradMUSIC2}}
	\label{proof:maingradMUSIC2}
	
	In the proof of \cref{thm:maingradientMUSIC}, we showed that after at most 31 iterations of gradient descent, regardless of which representative $t_{j,0}$ is chosen, $t_{j,31}\in [\tilde x_j-4\pi/(3m),\tilde x_j+4\pi/(3m)]$. Since $n_j\geq N_{\min}\geq 31$ and $\hat x_j=t_{j,n_j}$ by definition, we have $|\hat x_j-\tilde x_j|\leq 4\pi/(3m)$. By the mean value theorem and that $\tilde x_j$ is a local minimum of $\tilde q$, there is a $\xi_j$ between $\tilde x_j$ and $\hat x_j$ such that 
	$$
	\tilde q\, '(\hat x_j)
	= \tilde q\, '(\tilde x_j) + \tilde q\, ''(\xi_j)(\hat x_j-\tilde x_j)
	=\tilde q\, ''(\xi_j)(\hat x_j-\tilde x_j).
	$$
	Since $|\hat x_j-\tilde x_j|\leq \pi/(3m)$, we can use \cref{thm:landscape} part (b) to control $\tilde q\, ''(\xi_j)$. Also using that $|\tilde q \, '(\hat x_j)|\leq \epsilon m$ by definition of $n_j$, we have 
	$$
	|\hat x_j-\tilde x_j| 
	=\frac{|\tilde q \, '(\hat x_j)|}{|\tilde q \, ''(\xi_j)|}
	=\frac{|\tilde q \, '(t_{j,n_j})|}{|\tilde q \, ''(\xi_j)|}
	\leq \frac{37\epsilon}{m}. 
	$$
	Using \cref{thm:landscape} part (a) and triangle inequality completes the proof. 
	
	\section{Proofs of lemmas}
	\label{sec:prooflemma}
	
	\subsection{Proof of \cref{lem:minimax}}
	\label{proof:minimax}
	The proof requires an abstract argument that holds for any $\calS$ and $\calN$. Similar techniques were used in \cite{batenkov2021super,li2021stable}. Suppose there are distinct pairs $(\bfx,\bfa),(\bfx',\bfa')\in \calS$, and $\bfeta,\bfeta' \in \calN$ such that 
	\begin{equation}
		\label{eq:data}
		\tilde\bfy(\bfx,\bfa,\bfeta)
		=\tilde\bfy(\bfx',\bfa',\bfeta').
	\end{equation}
	Here, we have two sets of parameters $(\bfx,\bfa)$ and $(\bfx',\bfa')$ that generate the same noisy data. Let $\psi=(\hat\bfx,\hat\bfa)$ be an arbitrary method given data \eqref{eq:data}. By definition of $X(\psi,\calS,\calN)$, we have
	\begin{align*}
		X(\psi,\calS,\calN)
		\geq \max\left\{ \|\hat \bfx-\bfx\|_\infty, \|\hat \bfx-\bfx'\|_\infty\right\}. 
	\end{align*}
	(Here, $\hat\bfx$ is sorted to best match $\bfx$ in the first expression, while it is sorted to best match $\bfx'$ in the second expression.) Using the triangle inequality and that $\psi$ is arbitrary, we have 
	\begin{equation}
		\label{eq:Fhelp}
		X_*(\calS,\calN)
		\geq \frac 1 2 \|\bfx-\bfx'\|_\infty. 
	\end{equation}
	Repeating the same argument for the amplitudes instead yields
	\begin{equation}
		\label{eq:Ahelp}
		A_*(\calS,\calN)
		\geq \frac 1 2 \|\bfa-\bfa'\|_\infty.
	\end{equation}
	
	With these inequalities at hand, we are ready to prove this lemma. Consider $(\bfx,\bfa)\in \calS$ where $x_1=0$ and $a_1=r_0$ while $|x_j|\geq 8\pi/m$ for each $j\not=1$; if $s=1$, then $(\bfx,\bfa)=(0,r_0)$. Pick $\delta:=\epsilon/(4r_0 m^{1+1/p})$, and by our assumption on $\epsilon$, we have $\delta\leq 4\pi/m$. 
	
	Set $\bfeta$ such that $\eta_k=-r_0+(r_0+\epsilon/(4m^{1/p}))e^{ik\delta}$ for each $k$. Using that $|1-e^{ik\delta}|\leq k \delta$ and choice of $\delta$, we see that
	\begin{align*}
		\|\bfeta\|_p
		&\leq (2m-1)^{1/p} \|\bfeta\|_\infty \\
		&\leq (2m-1)^{1/p} \left( r_0 \max_k |1-e^{ik\delta}| + \frac{\epsilon}{4m^{1/p}} \right) \\
		&\leq 2 r_0 \delta m^{1+1/p} + \frac{1}{2} \epsilon
		= \epsilon.
	\end{align*}
	Consider the alternative parameters $(\bfx',\bfa')=\{(\delta,r_0+\epsilon/(4m^{1/p}))\}\cup \{(x_j,a_j)\}_{j=2}^s$; again $(\bfx',\bfa')=(\delta,r_0+\epsilon/(4m^{1/p}))$ if $s=1$. Then we see that $\tilde\bfy(\bfx,\bfa,\bfeta)=\tilde\bfy(\bfx',\bfa',\bfzero)$, which verifies relationship \eqref{eq:data}. Since $\delta\leq 4\pi/m$ as well, the current ordering of $\bfx$ and $\bfx'$ are the ones that minimize their matching distance, and so the order of $\bfa$ and $\bfa'$ is the one used in the amplitude error too. 
	
	By construction, $\bfx$ and $\bfx'$ have the same entries except $x_1=0$ and $x_1'=\delta$. Using \eqref{eq:Fhelp} and choice of $\delta$, we see that
	$$
	X_*(\calS,\calN)
	\geq \frac 12 \|\bfx-\bfx'\|_\infty
	=\frac 12 |x_1-x_1'|
	=\frac{\epsilon}{8r_0 m^{1+1/p}}. 
	$$
	For the amplitudes, note that $\bfa$ and $\bfa'$ have the same entries except $a_1=r_0$ and $a_1'=r_0+\epsilon/(4m^{1/p})$. Using \eqref{eq:Ahelp} and choice of $\delta$, we see that
	$$
	A_*(\calS,\calN)
	\geq \frac 12 \|\bfa-\bfa'\|_\infty
	=\frac 12 |a_1-a_1'|
	=\frac{\epsilon}{8 m^{1/p}}. 
	$$
	
	\subsection{Proof of \cref{lem:landscape1}}
	\label{proof:landscape1}
	
	It follows immediately by definition that $q$ maps to $[0,1]$. Since $\bfW\in \bbU^{m\times s}$ with $m>s$, it is rank deficient. This implies $q$ is not identically zero. Let $\bfW_\perp$ be an orthonormal basis for the orthogonal complement of $\bfW$. Define $m-s$ trigonometric polynomials $q_1,\dots,q_{m-s}$ such that the Fourier coefficients of $q_k$ are supported in $\{0,\dots,m-1\}$ and are the entries of the $k$-th column of $\bfW_\perp$. By Parseval and that the columns of $\bfW_\perp$ are orthonormal, we see that $q_1,\dots,q_{m-s}$ are orthonormal as well. Thus, we have
	\begin{equation}
		\label{eq:qformulas3} 
		q(t)
		=\bfphi(t)^*\bfW_\perp \bfW_\perp^* \bfphi(t)
		=\sum_{k=1}^{m-s} |q_k(t)|^2.
	\end{equation}
	This is a polynomial sum-of-squares representation. Notice that $|q_k|^2=q_k \overline{q_k}$ has Fourier coefficients supported in $\{-m+1,\dots,m-1\}$, which shows $q\in \calT_m$. By Bernstein's inequality for trigonometric polynomials together with $\|q\|_{L^\infty(\T)}\leq 1$, we see that
	$$
	\|q^{(\ell)} \|_{L^\infty(\T)} \leq (m-1)^\ell \|q\|_{L^\infty(\T)}
	\leq m^\ell. 
	$$
	This completes the proof. 
	
	\subsection{Proof of \cref{lem:bijection}}
	\label{proof:bijection}
	
	Let us prove that there is a bijection between (a) and (b). The map from sets of cardinality $s$ to Fourier subspaces is surjective by definition of $\bbF^{m\times s}$. It remains to show that it is injective. Suppose for the sake of deriving a contradiction that there are distinct $\bfx$ and $\bfx'$ sets of cardinality $s$ that generate the same Fourier subspaces $\bfU$. By \cref{lem:qroots2}, its associated landscape function $q_\bfU$ has roots only at $\bfx$, yet it only has roots at $\bfx'$ as well, which is a contradiction.  
	
	Now we prove that there is a bijection between (b) and (c). By definition, $q$ is a surjective map from $\bbF^{m\times s}$ to $\calF_{m,s}$. It remains to show that it is injective. Suppose for the sake of deriving a contradiction that there are distinct $\bfU,\bfV\in \bbF^{m\times s}$ such that $q_{\bfU}=q_{\bfV}$. Since there is a bijection between (a) and (b), we identify $\bfU,\bfV$ with distinct sets $\bfx$ and $\bfy$ of cardinality $s$. According to \cref{lem:qroots2}, the landscape function $q_{\bfU}=q_{\bfV}$ vanishes only at $\bfx$, and only at $\bfy$, which is a contradiction.
	
	\subsection{Proof of \cref{lem:landscape2}}
	\label{proof:landscape2}
	
	By definition of landscape function and subspace error, we have
	\begin{equation*}
		\big\|q_{\bfV} -q_{\bfW}\big\|_{L^\infty(\T)}
		= \sup_{t\in \T} \, \big|\bfphi(t)^* \big(\bfV\bfV^*- \bfW \bfW^* \big) \bfphi(t) \big|
		\leq \big\|\bfV\bfV^*-\bfW\bfW^* \big\|_2
		= \vartheta. 
	\end{equation*}
	This proves the lemma for $\ell=0$. By \cref{lem:landscape1}, $q_{\bfV}-q_{\bfW}\in \calT_m$. By Bernstein's and the previous inequality, we have
	\begin{equation*}
		\big\|q_{\bfV}^{(\ell)} -q_{\bfW}^{(\ell)}\big\|_{L^\infty(\T)}
		= \big\| (q_{\bfV}-q_{\bfW})^{(\ell)}\big\|_{L^\infty(\T)}
		\leq (m-1)^\ell \big\|q_{\bfV}-q_{\bfW}\big\|_{L^\infty(\T)}
		\leq \vartheta m^\ell. 
	\end{equation*}
	This completes the proof.

	\subsection{Proof of \cref{lem:grad}}
	\label{proof:grad}
	
	\begin{enumerate}[(a)]
		\item 
		If $u\not \in B_j$ for all $j\in \{1,\dots,s\}$, then by \cref{thm:landscape} part (e), we have $\tilde q(u)\geq 0.529 \geq \alpha$, so $u$ is rejected. This proves that 
		$$
		\left(\bigcup_{j=1}^s B_j\right)^c \subset A^c,
		$$
		which is equivalent to the first statement. 
		\item 
		By \cref{thm:landscape} part (c), mean value theorem, and \cref{lem:landscape1}, for all $t\in \T$ such that $|t-\tilde x_j|< (\alpha-\vartheta^2)/m$, we have
		$$
		\tilde q(t) 
		\leq \tilde q(\tilde x_j) + |\tilde q(t)-\tilde q(\tilde x_j)|
		\leq \vartheta^2 + |t-\tilde x_j|m
		< \alpha. 
		$$
		This shows that
		$$
		\tilde q(t) < \alpha \forallspace t\in \bigcup_{j=1}^s \left( \tilde x_j- \frac {\alpha-\vartheta^2}{m} , \, \tilde x_j+ \frac {\alpha-\vartheta^2}{m} \right) \subset \bigcup_{j=1}^s B_j.
		$$
		Each open interval in this union has length $2(\alpha-\vartheta^2)/m$. On the other hand, the largest gap between consecutive elements of $G$ is at most $2\mesh(G)< 2(\alpha-\vartheta^2)/m$. Hence, for each $j$, we have
		$$
		G\cap \left( \tilde x_j- \frac {\alpha-\vartheta^2}{m} , \, \tilde x_j+ \frac {\alpha-\vartheta^2}{m} \right)
		\not=\emptyset. 
		$$
		This implies $A\cap B_j$ is nonempty.
		\item 
		Suppose for the purpose of deriving a contradiction that $A_j$ is not a cluster in $G$. Then there are distinct $v,v'\in A_j$ and $u\in G$ such that $u\not\in A_j$. Since $u$ is rejected while $v,v'$ are accepted, we have 
		\begin{equation*}
			\tilde q(v)<\alpha, \quad \tilde  q(v')<\alpha, \andspace \tilde q(u)\geq \alpha.
		\end{equation*}
		Since the arc connecting $v,v'$ also lies in $B_j$, we have that $u\in B_j$ as well. However, \cref{thm:landscape} part (b) and (d) tell us that $\tilde q$ is strictly decreasing on $[\tilde x_j-4\pi /(3m), \tilde x_j)$ and then strictly increasing on $(\tilde x_j,\tilde x_j+4\pi/(3m)]$. This yields the desired contradiction. 
	\end{enumerate}
	
	\subsection{Proof of \cref{lem:rho}}
	\label{proof:nsrrelationship2}
	
	Since $\tilde \bfT=\bfT+T(\bfeta)$, by Weyl's inequality, for each $k\in \{1,\dots,m\}$, we have
	\begin{align*}
		\big|\sigma_k(\tilde \bfT)-\sigma_k(\bfT)\big|
		\leq \|T(\bfeta)\|_2.  
	\end{align*}
	In particular, this implies
	$
	\sigma_s(\tilde \bfT)
	\geq \sigma_s(\bfT) - \|T(\bfeta)\|_2.
	$
	Since $\sigma_{s+1}(\bfT)=0$ from factorization \eqref{eq:toeplitzfactorization}, we also see that	
	$\sigma_{s+1}(\tilde \bfT)\leq \|T(\bfeta)\|_2$. Then we have
	$$
	\sigma_s(\tilde \bfT)
	-\sigma_{s+1}(\tilde \bfT)
	\geq \sigma_s(\bfT) - 2\|T(\bfeta)\|_2 
	=(1-\rho) \sigma_s(\bfT). 
	$$
	The right side is positive because by the assumption that $\rho\leq 1-1/\sqrt 2$ and $\sigma_s(\bfT)>0$. This proves that $\tilde \bfT$ has a uniquely defined $s$-dimensional leading left singular space. Let $\tilde \bfU$ be an orthonormal basis for this space. 
	
	Now we are in position to apply Wedin's sine-theta theorem \cite[Chapter V, Theorem 4.4]{stewart1990matrix}, which provides the bound
	\begin{align*}
		\vartheta(\bfU,\tilde\bfU)
		=\big\|\tilde \bfU \tilde \bfU^* - \bfU\bfU^*\big\|_2
		\leq \frac{\sqrt 2 \, \|T(\bfeta)\|_2}{\sigma_s(\tilde \bfT)}.
	\end{align*}
	Using the previous inequalities and that $\rho\leq 1-1/\sqrt 2$ yields
	\begin{align*}
		\vartheta(\bfU,\tilde\bfU)
		\leq \frac{\sqrt 2 \,}{(1-\rho)} \frac{ \|T(\bfeta)\|_2}{\sigma_s(\bfT)}
		\leq 2 \frac{ \|T(\bfeta)\|_2}{\sigma_s(\bfT)}
		= \rho.
	\end{align*}

	\subsection{Proof of \cref{lem:sparsity}}
	\label{proof:sparsity}
	
	By definition of $\rho$, we have $\|T(\bfeta)\|_2=\rho \sigma_s(\bfT)/2$. By Weyl's inequality (also see the proof of \cref{lem:rho}), we have
	\begin{align*}
		\sigma_1(\tilde \bfT) 
		&\leq \sigma_1(\bfT) + \|T(\bfeta)\|_2 
		=\sigma_1(\bfT) + \frac 1 2 \rho \sigma_s(\bfT)
		\leq \left( 1+ \frac 12 \rho\right) \sigma_1(\bfT), \\
		\sigma_s(\tilde \bfT) 
		&\geq \sigma_s(\bfT)-\|T(\bfeta)\|_2 
		=\left( 1-\frac 12 \rho\right) \sigma_s(\bfT), \\
		\sigma_{s+1}(\tilde \bfT) 
		&\leq \|T(\bfeta)\|_2
		=\frac 12 \rho \sigma_s(\bfT).    
	\end{align*}
	Note that \cref{prop:AB19} together with the assumption that $(\bfx,\bfa)\in \calS(2\pi\beta/m,r_0,r_1)$ and factorization \eqref{eq:toeplitzfactorization} imply 
	$$
	r_0 m (1-1/\beta) 
	\leq a_{\min} \sigma_s^2(\bfPhi)
	\leq \sigma_s(\bfT)
	\leq \sigma_1(\bfT)
	\leq a_{\max} \sigma_1^2(\bfPhi)
	\leq r_1 m (1+1/\beta). 
	$$
	On one hand, since $\rho\leq 1-1/\sqrt 2$, we have
	$$
	\frac{\sigma_s(\tilde \bfT)}{\sigma_1(\tilde \bfT)}
	\geq \frac{2-\rho}{2+\rho} \frac{\sigma_s(\bfT)}{\sigma_1(\bfT)}
	\geq \frac{2-\rho}{2+\rho} \frac {r_0} {r_1} \frac{\beta-1}{\beta+1}
	\geq \frac 7{10} \frac {r_0} {r_1} \frac{\beta-1}{\beta+1}. 
	$$    
	On the other hand, using both upper bounds on $\rho$, we have  
	$$
	\frac{\sigma_{s+1}(\tilde \bfT)}{\sigma_1(\tilde \bfT)}
	\leq \frac{\sigma_{s+1}(\tilde \bfT)}{\sigma_s(\tilde \bfT)}
	\leq \frac{\rho}{2-\rho}
	\leq \frac{\rho}{1+1/\sqrt 2}
	< \frac 7{10} \frac {r_0} {r_1} \frac{\beta-1}{\beta+1}. 
	$$
	This completes the proof.

	\subsection{Proof of \cref{lem:noisepnorm}}
	\label{proof:noisepnorm}
	
	We extend $\bfeta\in\C^{2m-1}$ to a sequence $\eta = \{\eta_k\}_{k\in \Z}$ via zero extension. Fix any vector $\bfu\in \C^{m}$ such that $\|\bfu\|_2=1$, which we also extend to a sequence $u =  \{u_k\}_{k\in \Z}$ via zero extension. Let $*$ be the convolution operator on $\Z$. For any $j\in \{-m,\dots,m-1\}$, we have
	$$
	(T(\bfeta) \bfu)_j 
	= \sum_{k\in\Z} \eta_{j-k} u_k
	= (\eta* u)_j. 
	$$
	By Young's convolution inequality (where the group here is $\Z$), we see that
	$$
	\|T(\bfeta) \bfu\|_2
	=\|\eta*u\|_{\ell^2}
	\leq \|\eta\|_{\ell^1} \|u\|_{\ell^2}
	= \|\bfeta\|_1 \|\bfu\|_2
	= \|\bfeta\|_1. 
	$$
	Since $\bfu$ was arbitrary, this shows that $\|T(\bfeta)\|_2\leq \|\bfeta\|_1$. Using H\"older's inequality, we see that
	$$
	\|\bfeta\|_1
	\leq (2m-1)^{1-1/p} \|\bfeta\|_p
	\leq 2 m^{1-1/p} \|\bfeta\|_p. 
	$$
	This completes the proof.

	\subsection{Proof of Lemma \ref{lem:toeplitznorm}}
	\label{sec:toeplitznorm}

	The proof is an application of a matrix concentration inequality \cite[Theorem 4.1.1]{tropp2015introduction}. Following the notation of the reference, let $\nu(\cdot)$ denote the variance statistic of a sum of random matrices. For each $\ell\in \{-m+1,\dots,m-1\}$, let $\bfA_\ell\in \R^{m\times m}$ such that it has all zeros except it has ones on the $\ell$-th diagonal; $(A_\ell)_{j,k} = 1$ if and only if $k-j=\ell$ and $(A_{\ell})_{j,k}=0$ otherwise. 
	
	We let $\{s_\ell^2\}_{\ell=-m+1}^{m-1}$ denote the diagonal entries of $\bfSigma$. Let $\bfw\sim\calN(\bfzero,\bfI)$ so that $\bfeta=\bfSigma^{1/2}\bfw$ and $\eta_\ell=s_\ell w_\ell$ for each $\ell$. Since $\bfT$ is constant on each diagonal, we have  
	$$
	T(\bfeta)
	= \sum_{\ell=-m+1}^{m-1} \eta_\ell \bfA_\ell
	= \sum_{\ell=-m+1}^{m-1} w_\ell (s_\ell \bfA_\ell). 
	$$
	
	We need to compute $\bfA_\ell \bfA_\ell^*$ and $\bfA_\ell^* \bfA_\ell$ in order to apply the referenced matrix concentration inequality. For $\ell \in \{0,\dots, m-1\}$, we see that $(\bfA_\ell \bfA_\ell^*)_{k,k} = 1$ for each $k\in \{0,\dots,m-1-\ell\}$ and all other entries are zero; for $\ell \in \{-m+1,\dots,0\}$, we see that $(\bfA_\ell \bfA_\ell^*)_{k,k} = 1$ for each $k\in \{m-1+\ell,\dots,m-1\}$ and all other entries are zero. From here, we see that $\sum_{\ell=-m+1}^{m-1}  s_\ell^2 \bfA_\ell \bfA_\ell^*$ is a diagonal matrix and whose $\ell$-th diagonal entry is $s^2_\ell+s^2_{\ell+1}+\ldots+s^2_{\ell+m-1}$. For $\bfA^*_\ell \bfA_\ell$, we get the same conclusion except with reverse indexing since $\bfA_\ell^*=\bfA_{-\ell}$.  
	
	Thus, the matrix variance statistic of $T(\bfeta)$ is
	\begin{align*}
		\nu(T(\bfeta)) 
		:=\left\|\sum_{\ell=-m+1}^{m-1}  s_\ell^2 \bfA_\ell \bfA_\ell^* \right\|_2 
		=\max_{k=-m+1,\ldots, 0} \left(s^2_k+s^2_{k+1}+\ldots+s^2_{k+m-1} \right)
		\leq \tr(\bfSigma). 
	\end{align*}
	Applying the referenced matrix concentration inequality completes the proof.
	
	\subsection{Proof of \cref{lem:amplitudes}}
	\label{proof:amplitudes}
	
	Simple calculations with \eqref{eq:hatAhelp} yield
	\begin{align*}
		\hat\bfA -\bfA 
		&= \big(\hat \bfPhi^+ \bfPhi \big) \bfA \big(\hat \bfPhi^+ \bfPhi - \bfI\big)^* + \big( \hat \bfPhi^+ \bfPhi - \bfI\big) \bfA + \hat \bfPhi^+ T(\bfeta) (\hat \bfPhi^+)^*. 
	\end{align*}
	From here, we  get the inequality
	\begin{equation}
		\label{eq:hatAhelp2}
		\big\|\hat\bfA -\bfA\big\|_2 
		\leq  \big(1+\big\|\hat \bfPhi^+ \bfPhi \big\|_2\big) \big\|\hat \bfPhi^+ \bfPhi - \bfI\big\|_2 \|\bfA\|_2 + \big\| \hat \bfPhi^+ \big\|_2^2 \, \|T(\bfeta)\|_2.
	\end{equation}
	It suffices to control all of the terms that appear here. We have 
	$$
	\big\|\hat \bfPhi^+ \bfPhi - \bfI \big\|_2 
	=\big\| \big(\hat \bfPhi^* \hat\bfPhi \big)^{-1} \hat\bfPhi^* \bfPhi - \bfI \big\|_2
	\leq \big\| (\hat \bfPhi^* \hat\bfPhi)^{-1} \big\|_2 \big\| \hat\bfPhi \big\|_2 \big\|\bfPhi - \hat \bfPhi \big\|_2.
	$$
	By our assumptions on $\bfx$ and $\hat\bfx$, we have $\Delta(\hat \bfx) \geq 2\pi (\beta-2\alpha)/m\geq \pi \beta/m$. From here, we use \cref{prop:AB19} to see that 
	\begin{align*}
		\big\|\hat \bfPhi^+ \bfPhi \big\|_2
		&\leq \frac{\sigma_{\max}(\bfPhi)}{\sigma_{\min}(\hat \bfPhi)} 
		\leq \sqrt{\frac{1+2/\beta}{1-2/\beta}}
		=\sqrt{\frac{\beta+2}{\beta-2}}, \\
		\big\| (\hat \bfPhi^* \hat\bfPhi)^{-1} \big\|_2 \big\| \hat\bfPhi \big\|_2
		&\leq \frac{\sigma_{\max}(\hat \bfPhi)}{\sigma_{\min}^2(\hat \bfPhi)}
		\leq \frac{\sqrt{1+2/\beta}}{1-2/\beta} \frac 1 {\sqrt m}.
	\end{align*}
	It remains to control $\big\|\bfPhi - \hat \bfPhi \big\|_2$, which we would like to upper bound independent of $s$. Recall that the columns of $\bfPhi$ are $\sqrt m \, \bfphi(x_j)$ for $j=1,\dots,s$. For any $t\in [0,1]$ and $j=1,\dots,s$, define 
	$
	x_{t,j}=x_j+t(\hat x_j-x_j). 
	$
	Also let $\bfx_t := \{x_{t,j}\}_{j=1}^s$ and $\bfM := \diag(0,i,\dots,i(m-1))$. A calculation shows that 
	$$
	\bfphi'(x_{t,j}) =\bfM \bfphi(x_{t,j}).
	$$
	Importantly, this formula holds for each $t$ and $j$, with the same matrix $\bfM$. By the fundamental theorem of calculus,
	$$
	\bfphi(\hat x_j)-\bfphi(x_j)
	= \int_0^1 \frac d {dt} (\bfphi(x_{t,j})) \, dt
	= \int_0^1 \bfM \bfphi(x_{t,j}) (\hat x_j-x_j)\, dt.	
	$$
	This formula now implies 
	\begin{equation}
		\label{eq:intformula}
		\hat\bfPhi-\bfPhi
		= \int_0^1 \bfM \bfPhi(m,\bfx_{t}) \diag(\hat\bfx-\bfx) \, dt.
	\end{equation}
	By a direct computation, we see that 
	$$
	\Delta(\bfx_t)
	\geq \Delta(\bfx) - 2t \|\hat \bfx-\bfx\|_\infty
	\geq \frac{2\pi (\beta-2\alpha)}m
	\geq \frac{\pi \beta}m. 	
	$$
	Note this bound is uniform in $t$. Utilizing \eqref{eq:intformula}, \cref{prop:AB19}, and that $\|\bfM\|_2\leq m$, we have
	\begin{equation*}
		\left\|\bfPhi-\hat\bfPhi \right\|_2
		\leq \|\bfM\|_2 \left(\sup_{t\in [0,1]} \|\bfPhi(m,\bfx_{t})\|_2 \right) \|\diag(\hat\bfx-\bfx)\|_2
		\leq \sqrt{1+ 2 /\beta} \, m^{3/2} \,  \|\hat\bfx-\bfx\|_\infty. 
	\end{equation*}
	Combining all of the above and inserting into \eqref{eq:hatAhelp2}, we see that 
	$$
	\big\| \hat\bfA -\bfA \big\|_2 
	\leq \left( 1 + \sqrt{\frac{\beta+2}{\beta-2}} \right) \frac{\beta + 2}{\beta-2} \, a_{\max} \, m  \|\hat\bfx-\bfx\|_\infty + \frac \beta {\beta -1 } \frac{\|T(\bfeta)\|_2} m. 
	$$

	The proof is complete once we note that for each $j=1,\dots,s$, we have
	$$
	|a_j-\hat a_j|
	= \big\|\bfe_j^* \big(\bfA-\hat \bfA \big) \bfe_j \big\|_2
	\leq \big\|\bfA -\hat\bfA \big\|_2. 
	$$
	
	\subsection{Proof of Lemma \ref{lem:toeplitznorm2}}
	\label{proof:toeplitznorm2}
	
	The proof is an application of a matrix concentration inequality \cite[Theorem 4.1.1]{tropp2015introduction} and a modification of the proof of \cref{lem:toeplitznorm}. 
	
	Let $\{s_\ell^2\}_{\ell=0}^{2m-1}$ denote the diagonal entries of $\bfSigma$. Let $\bfeta\in\calN(\bfzero,\bfSigma)$ so that $\bfeta=\bfSigma^{1/2}(\bfu+i\bfv)/\sqrt 2$, where $\bfu,\bfv$ are independent standard normal random vectors. Note that for each $\ell\in\{1,\dots,2m-1\}$,
	$$
	\zeta_\ell = \eta_\ell=\frac{1}{\sqrt 2} s_\ell u_\ell + i \frac{1}{\sqrt 2} s_\ell v_\ell, 
	\andspace \zeta_{-\ell} = \overline{\eta_\ell}=\frac{1}{\sqrt 2} s_\ell u_\ell - i \frac{1}{\sqrt 2} s_\ell v_\ell.
	$$
	
	For each $\ell\in\{1,\dots,2m-1\}$, we define the matrices $\bfA_\ell,\bfB_\ell\in \R^{2m\times 2m}$ in the following way. $\bfA_\ell$ has all zeros except it has ones on the $\ell$-th and $-\ell$-th diagonal; $(A_\ell)_{j,k} = 1$ if and only if $|k-j|=\ell$ and $(A_{\ell})_{j,k}=0$ otherwise. Let $\bfB_\ell\in \R^{2m\times 2m}$ such that it has all zeros except it has ones on the $\ell$-th and minus ones $-\ell$-th diagonal. To simplify notation, let $\bfA_0=\bfB_0=\bfI_{2m}$. Thus, we see that 
	$$
	T(\bfzeta)
	= \sum_{\ell=0}^{2m-1} u_\ell \left(\frac 1{\sqrt 2} s_\ell \bfA_\ell \right) + i \sum_{\ell=0}^{2m-1} v_\ell \left( \frac 1{\sqrt 2} s_\ell \bfB_\ell\right). 
	$$
	
	Call the first and second sums $\bfA$ and $i \bfB$ respectively. Since $\bfu$ and $\bfv$ are independent, they are a sum of independent and real random matrices. To control $\|T(\bfzeta)\|_2$, we control the spectral norm of the two sums individually via matrix concentration and then use the union bound. Notice that $\bfA_\ell \bfA_\ell^*$, $\bfA_\ell^* \bfA_\ell$, $\bfB_\ell \bfB_\ell^*$, and $\bfB_\ell \bfB_\ell^*$ are all diagonal and their diagonals are bounded by 2 in absolute value. Thus, we have
	\begin{align*}
		\max\{\nu(\bfA),\nu(\bfB)\} 
		\leq \sum_{\ell=0}^{2m-1} s_\ell^2
		= \tr(\bfSigma). 
	\end{align*}
	Applying the referenced matrix concentration inequality establishes that 
	$$
	\E \|\bfA\|_2 
	\leq \sqrt{2 \tr(\bfSigma) \log(4m)} 
	\andspace \P(\|\bfA\|_2\geq t)
	\leq 4m e^{-t^2/(2\tr(\bfSigma))}.
	$$
	We get the same upper bounds for $\bfB$ instead of $\bfA$. Using that $\bfu$ and $\bfv$ are independent, $\|T(\bfzeta)\|_2\leq \|\bfA\|_2+\|\bfB\|_2$, and the union bound completes the proof.

	\section*{Acknowledgments}
	
	W. Li is supported by NSF-DMS Award \#2309602 and a Cycle 55 PSC-CUNY award. W. Liao is supported by NSF-DMS Award \#2145167 and DOE Award \#SC0024348. The authors thank Dmitry Batenkov for helpful discussions and for pointing out some references. The authors thank anonymous reviewers for their feedback and suggestions. In particular, the authors are extremely grateful for a generous suggestion which lead to an improvement in \cref{lem:amplitudes}.
	
	\section*{Data Availability Statement}
	
	No new data were generated or analyzed in support of this review. 
	
	\bibliographystyle{plain}
	\bibliography{MUSICoptimalitybib}

\end{document}